\documentclass[11pt,english]{article}
\usepackage{amsmath,amssymb,amsthm}
\usepackage{multicol}
\usepackage[margin=1in]{geometry}
\usepackage{graphicx,color}
\usepackage{enumitem}
\usepackage{fullpage}
\usepackage[noblocks]{authblk}
\usepackage{tcolorbox}
\usepackage{babel}
\usepackage{wrapfig}
\usepackage{MnSymbol}
\usepackage{mdframed}
\usepackage{mathtools}
\usepackage{floatrow}
\usepackage{multirow}
\usepackage{palatino}

\usepackage[ruled,linesnumbered,vlined]{algorithm2e}
\usepackage{tablefootnote}
\usepackage{thm-restate}
\usepackage{bbding}
\usepackage{pifont}
\usepackage{nicefrac}

\usepackage{dsfont}

\newcounter{magicrownumbers}
\newcommand\rownumber{\stepcounter{magicrownumbers}\arabic{magicrownumbers}}

\definecolor{Darkblue}{rgb}{0,0,0.4}
\definecolor{Brown}{cmyk}{0,0.61,1.,0.60}
\definecolor{Purple}{cmyk}{0.45,0.86,0,0}
\definecolor{Darkgreen}{rgb}{0.133,0.543,0.133}

\usepackage[colorlinks,linkcolor=Darkblue,filecolor=blue,citecolor=blue,urlcolor=Darkblue,pagebackref]{hyperref}
\usepackage[nameinlink]{cleveref}

\usepackage[colorinlistoftodos,prependcaption,textsize=tiny]{todonotes}

\newcommand{\ddim}{\mbox{\rm ddim}}
\newcommand{\sddim}{\footnotesize \mbox{\rm ddim}}
\newcommand{\Texp}{\mathsf{Texp}}
\newcommand{\est}{{\rm est}}

\newcommand{\namedref}[2]{\hyperref[#2]{#1~\ref*{#2}}}
\newcommand{\propref}[1]{\hyperref[#1]{property~(\ref*{#1})}}

\newtheorem*{theorem*}{Theorem}
\newtheorem{theorem}{Theorem}
\newtheorem{lemma}{Lemma}
\newtheorem{definition}{Definition}
\newtheorem{claim}{Claim}
\newtheorem{observation}{Observation}
\newtheorem{corollary}{Corollary}
\newtheorem{proposition}{Proposition}

\newtheorem*{question*}{Question}

\newtheorem{remark}{Remark}

\newtheorem*{conjecture*}{Conjecture}

\newcommand{\eps}{\varepsilon}
\renewcommand{\epsilon}{\varepsilon}
\newcommand{\diam}{\mathrm{diam}}

\newcommand{\lca}{\mathrm{lca}}
\newcommand{\MST}{\mathrm{MST}}
\newcommand{\opt}{\mathrm{OPT}}
\newcommand{\inw}{\mathrm{INW}}
\newcommand{\HPI}{\mathrm{HPI}}

\newcommand{\cost}{\mathrm{Cost}}
\newcommand{\alg}{\mathrm{ALG}}
\newcommand{\vir}{\xi}
\newcommand{\Vir}{\Xi}
\newcommand{\N}{\mathbb{N}}
\newcommand{\R}{\mathbb{R}}
\newcommand{\Z}{\mathbb{Z}}

\newcommand{\old}[1]{{}}

\newcommand{\mst}{\mathrm{MST}}

\newcommand{\poly}{\mathrm{poly}}

\newcommand{\parent}{\mathrm{parent}}

\newcommand{\cC}{\mathcal{C}}
\newcommand{\cD}{\mathcal{D}}

\newcommand{\cF}{\mathcal{F}}

\newcommand{\cI}{\mathcal{I}}

\newcommand{\cM}{\mathcal{M}}

\newcommand{\cP}{\mathcal{P}}
\newcommand{\cQ}{\mathcal{Q}}

\newcommand{\E}{\mathbb{E}}
\newcommand{\etal}{{et al.\ \xspace}}

\hypersetup{
    colorlinks=true,
    linkcolor=teal,
    filecolor=violet,
    citecolor=violet,
    urlcolor=cyan,
    pdftitle={Overleaf Example},
    pdfpagemode=FullScreen,
}

\newcommand{\nonblind}[1]{}

\title{Online Duet between Metric Embeddings \\ and Minimum-Weight Perfect Matchings}

%\title{Metric Embedding and Min-Weight Perfect Matchings: Online Duet}

%Online Minimum-Weight Perfect Matchings with Recourse}
\date{}

%\nonblind{
\author{Sujoy Bhore\thanks{Department of Computer Science \& Engineering, Indian Institute of Technology Bombay, Mumbai, India.\\ Email: sujoy@cse.iitb.ac.in}
\quad
Arnold Filtser\thanks{Bar Ilan University, Ramat Gan, Israel. Email: arnold273@gmail.com. This research was supported by the Israel Science Foundation (grant No. 1042/22).}
\quad
Csaba D. T\'oth\thanks{Department of Mathematics, California State University Northridge, Los Angeles, CA; and Department of Computer Science, Tufts University, Medford, MA, USA. Email: csaba.toth@csun.edu. Research on this paper was supported, in part, by the NSF award DMS-2154347. }
}
%}

%\blind{\author{Anonymous}}
%\author{\vspace{-3\baselineskip}} %hide authors and affiliations for double-blind review

\begin{document}
\maketitle

\begin{abstract}
Low-distortional metric embeddings are a crucial component in the modern algorithmic toolkit. In an online metric embedding, points arrive sequentially and the goal is to embed them into a simple space irrevocably, while minimizing the distortion.
Our first result is a deterministic online embedding of a general metric into Euclidean space with distortion $O(\log n)\cdot\min\{\sqrt{\log\Phi},\sqrt{n}\}$ (or, $O(d)\cdot\min\{\sqrt{\log\Phi},\sqrt{n}\}$ if the metric has doubling dimension $d$), solving affirmatively a conjecture by Newman and Rabinovich (2020), and quadratically improving the dependence on the aspect ratio $\Phi$ from Indyk et al.\ (2010).
Our second result is a stochastic embedding of a metric space into trees with expected distortion $O(d\cdot \log\Phi)$, generalizing previous results (Indyk et al.\ (2010), Bartal et al.\ (2020)).

Next, we study the problem of \emph{online minimum-weight perfect matching} (\emph{MWPM}).
Here a sequence of $2n$ points $s_1,\ldots s_{2n}$ in a metric space arrive in pairs, and one has to maintain a perfect matching on the first $2i$ points $S_i=\{s_1,\ldots s_{2i}\}$. We allow recourse (as otherwise the order of arrival determines the matching). The goal is to return a perfect matching that approximates the \emph{minimum-weight} perfect matching on $S_i$, while minimizing the recourse.
Online matchings are among the most studied online problems, however, there is no previous work on online MWPM. One potential reason for this is that online MWPM is drastically non-monotone, which makes online optimization highly challenging.
Our third result is a randomized algorithm with competitive ratio $O(d\cdot \log \Phi)$ and recourse $O(\log \Phi)$ against an oblivious adversary, this result is obtained via our new stochastic online embedding.
Our fourth result is a deterministic algorithm that works against an adaptive adversary, using $O(\log^2 n)$ recourse, and maintains a matching of total weight at most $O(\log n)$ times the weight of the MST, i.e., a matching of lightness $O(\log n)$.
We complement our upper bounds with a strategy for an oblivious adversary that, with recourse $r$, establishes a lower bound of $\Omega(\frac{\log n}{r \log r})$ for both competitive ratio as well as lightness.
\end{abstract}

\newpage

\setcounter{tocdepth}{2}
	\tableofcontents

\newpage	
\setcounter{page}{1}

\section{Introduction}\label{sec:intro}

The traditional model of algorithms design solves problems where the entire input is given in advance.
In contrast, online algorithms work under conditions of uncertainty, gradually receiving an input sequence $\sigma = \sigma_1,\sigma_2,\ldots, \sigma_n$ (where $\sigma_i$ is presented at step $i$).
The algorithm has to serve them in the order of occurrence, where the decisions are irrevocable, and without prior knowledge of subsequent terms of the input.
The objective is to optimize the total cost paid on the entire sequence $\sigma$. The performance of an online algorithm $\alg$ is measured using competitive analysis, where $\alg$ is compared to an optimal offline algorithm that knows the entire sequence in advance and can provide the solution with optimum cost~\cite[Ch.~1]{BY98}.
The two most central adversarial models in online algorithms are adaptive and oblivious.
In the \emph{adaptive adversary} model, the sequence of arriving points is determined ``on the fly'', and may depend on the previous decisions
made by the algorithm. This is a restrictive model, and in particular, randomization is not helpful in this model.
An algorithm is \emph{$k$-competitive against an adaptive adversary} if, for every sequence $\sigma$ of requests, the cost of the algorithm is at most $k$ times the optimal offline solution.\footnote{This paper is focused on minimization problems. In a maximization problem, an algorithm is $k$-competitive if for every sequence $\sigma$, the cost of the algorithm is at least a $k$ fraction of the optimal offline solution.\label{foot:maxProblem}}
An \emph{oblivious adversary} assumes that the input sequence is determined in advance (however, unknown to the algorithm). Here randomization can be useful.
A randomized online algorithm is \emph{$k$-competitive against an oblivious adversary} if, for every sequence $\sigma$ of requests, the expected cost of the algorithm is at most $k$ times the optimal offline solution.$^{\ref{foot:maxProblem}}$

Online problems arise in various areas of computer science, such as scheduling, network optimization, data structures, resource management in operating systems, etc.; see~\cite{buchbinder2009design, KarpVV90, albers2003online, boyar2017online}. Some preeminent examples of online problems are
$k$-server~\cite{BubeckCR23}, job scheduling~\cite{lattanzi2020online}, routing~\cite{aspnes1997line}, load balancing~\cite{azar2005line}, among many others.

One of the most fundamental and well-studied problems in the online algorithms world is online matching. Starting with the seminal paper by Karp, Vazirani, and Vazirani~\cite{KarpVV90}, a large body of work on ``online matchings'' is devoted to the \emph{online bipartite matching (server-client model)} problem, where one side of the bipartite graph (\emph{servers}) is fixed and the vertices of the other side (\emph{clients}) are revealed one at a time. In the classical variant~\cite{KarpVV90}, the objective is to maintain a maximum matching (not necessarily a perfect matching) in an unweighted graph. Since then, numerous variants of this problem have been studied, see e.g. \cite{AS22,buchbinder2009design, devanur2009adwords, gamlath2019online, goel2008online, kalyanasundaram1998line, mehta2007adwords, shmoys1995scheduling}. In a metric variant of the problem, the servers and clients are points in a metric space, clients arrive one-by-one, and the objective is to maintain a minimum-cost matching of clients to servers~\cite{BansalBGN14,GKS20,GuptaL12,KhullerMV94,KalyanasundaramP93,megow2020online,MeyersonNP06,Raghvendra18}.

\smallskip\noindent\textbf{Online Minimum-Weight Perfect Matching.}
In this paper, we study online \emph{minimum-weight} perfect matchings (MWPM) in metric spaces.
Points $s_1,\ldots s_{2n}$ arrive sequentially from a metric space $(X,d_X)$ (unknown in advance). For each new point $s_i$, we are given the distances to all previous points: $\{d_X(s_i,s_j)\}_{j=1}^{i-1}$.
\footnote{Equivalently, there is an underlying complete weighted graph $G=(V,E,w)$ with weight respecting triangle inequality. For each new vertex, we receive the edges to all previously arrived vertices.}
Denote by $S_i=\{s_1,\ldots, s_{2i}\}$ the set of the first $2i$ points. The goal is to maintain a perfect matching $M_i$ on $S_i$ such that the difference between $M_i$ and $M_{i+1}$ is bounded by a constant or a polylogarithmic function of $n$, and the weight of $M_i$ is as small as possible.

A standard \emph{online} algorithm can add edges to the matching, but the decisions are irrevocable, and therefore no edge is ever deleted. In this setting, the matching is completely determined by the order of points: $M=\{\{s_{2i-1},s_{2i}\}: i=1,\ldots, n\}$, and the weight of $M$ may be arbitrarily far from the optimum; see an example in~\Cref{fig:examples}~(a).  For this reason, we allow \emph{recourse} $r$: the online algorithm has to maintain a perfect matching on $S_i$, and in each step, it can delete up to $r$ edges. Our primary focus is the trade-off between recourse and the weight of the matching $M_i$.

\begin{figure}[t]
	\centering
	\includegraphics[width=1\textwidth]{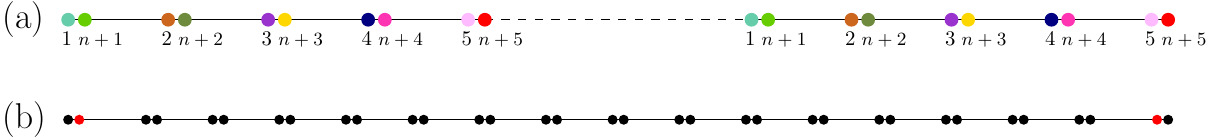}
	\caption{\small $(a)$ Example where the weight of an online matching is arbitrarily far form optimum assuming irrevocable decisions (i.e., no recourse). The metric is the real line. We first receive the pairs $\{i,W+i\}_{i=1}^{n}$, and then the pairs $\{i+\eps,W+i+\eps\}_{i=1}^{n}$, for sufficiently small $\eps$ and large $W$. The weight of the online perfect matching (specified in the illustration) is $2n\cdot W$, while the cost of the optimal perfect matching is $2\eps\cdot n$.\newline
	$(b)$ Example of the drastic non-monotonicity of minimum-weight perfect matching. The metric is the real line, where each point $\{1,2\dots,2n\}$ appears twice, while the points $\{0,2n+1\}$ appear once. Then the weight of a perfect matching is $2n+1$. After introducing the pair $\{0,2n+1\}$ (red in the figure), the weight of the perfect matching drops to $0$.}
	\label{fig:examples}
\end{figure}

Surprisingly, even though matchings are one of the most meticulously studied  online problems,
essentially no previous results were known for the online minimum-weight perfect matching problem with recourse.\footnote{The related problem of online minimum-weight perfect matching with delays \cite{AzarCK17,DU23,EmekKW16} has been studied previously; see \Cref{sec:related-work}.}
Note that the server-client model based online algorithms are significantly different from ours, and in general are not helpful for our problem.
One natural difficulty is that in contrast to other classical optimization problems, e.g., MST or TSP, the minimum weight of perfect matching is drastically non-monotone:\footnote{Due to the triangle inequality, the cost of an optimal TSP tour can only increase as new points arrive. The cost of an MST could decrease after additional points arrive, but this could happen up to at most a factor of $2$ (due to the fact that the MST is a $2$ approximation of the minimum Steiner tree).} it can decrease from a large weight to $0$ after introducing a single new pair! (See \Cref{fig:examples} (b) for an illustration.)
This non-monotonicity is the major bottleneck for maintaining a good approximation with limited recourse. We further discuss related classical online optimization problems in \Cref{sec:related-work}.

For online MST, for example, Gu \etal  \cite{GuG016} achieve a competitive ratio of $2^{O(k)}$ with a single recourse for every $k$ new points (that is fractional recourse, on average).
However, for MWPM we show (\Cref{pp:one}) that there is no competitive online algorithm if we are allowed to use a single recourse per vertex pair, which already holds for a sequence of $8$ points on a real line, even if the sequence is known in advance.
That is to say, online MWPM is a much more challenging problem than online MST.

\smallskip\noindent\textbf{Online Metric Embeddings.}
Low-distortion metric embeddings are a crucial component in the modern
algorithmic toolkit with applications in approximation algorithms~\cite{LLR95}, distributed algorithms~\cite{KKMPT12}, online algorithms~\cite{BBMN15}, and many more. A metric embedding is a map  $f: X \to Y$ between the points of two metric spaces $(X,d_X)$ and $(Y,d_Y)$. The \emph{contraction} and
\emph{expansion} of the map $f$ are the
largest $\alpha,\beta>0$,
%smallest $\rho,t \geq 1$,
respectively, such that for every pair $x,y\in X$,
\[
%\rho^{-1}\cdot d_{X}(x,y)\leq d_{Y}(f(x),f(y))\leq t\cdot d_{X}(x,y)~.
\alpha\cdot d_{X}(x,y)\leq d_{Y}(f(x),f(y))\leq \beta^{-1}\cdot d_{X}(x,y)~.
\]
The \emph{distortion} of the map is then %$\rho \cdot t$.
$\alpha\cdot \beta$.
The embedding is \emph{non-contractive} (\emph{non-expansive}) if $\rho=1$ ($t=1$). Such an embedding is also called \emph{dominating}.
Arguably, in the TCS community, the two most celebrated metric embeddings are the following: (1) every $n$-point metric space embeds into Euclidean space $\ell_2$ with distortion $O(\log n)$ \cite{Bourgain85}, and (2)  every $n$-point metric space stochastically embeds into a distribution over dominating tree metrics (in fact ultrametrics\footnote{An ultrametric is a metric space with the strong triangle inequality: $\forall x,y,z,~ d_X(x,y)\le\max\{d_X(x,z),d_X(z,y)\}$. In particular, ultrametric can be represented as the shortest path metric of a tree graph. See \Cref{def:HST}.\label{foot:Ultrametric}}) with expected distortion $O(\log n)$  \cite{FRT04} (see also \cite{Bartal96,Bartal98,Bartal04}).
Specifically, there is a distribution $\cD$ over pairs $(f,U)$, where $U$ is an ultrametric, and $f:X\rightarrow U$ is a dominating embedding, such that for all $x,y\in X$, we have
$\mathbb{E}_{(f,U)\sim{\cal D}}\left[d_{U}(f(x),f(y)\right]\le O(\log n)\cdot d_{X}(x,y)$.

While \cite{Bourgain85,FRT04} enjoined tremendous success and have numerous applications, they require to know the
metric space $(X,d_X)$ in advance, and hence cannot be used in an online
fashion where the points are revealed one by one. In this paper, we first focus on online metric embeddings:
\begin{definition}[Online Embedding]\label{def:OE}
   An \emph{online embedding} of a sequence of points $x_1,\dots,x_k$ from a metric space $(X,d_X)$ into a metric space $(Y,d_Y)$ is a sequence of embeddings $f_1,\dots,f_k$ such that for every $i$, $f_i$ is a map from $\{x_1,\dots,x_i\}$ to $Y$, and $f_{i+1}$ extends $f_i$ (i.e., $f_i(x_j)=f_{i+1}(x_j)$ for $j\le i$).
    The embedding has \emph{expansion} $\alpha=\max_{i,j\le k}\frac{d_Y(f_k(x_i),f_k(x_j))}{d_X(x_i,x_j)}$  and \emph{contraction} $\beta=\max_{i,j\le k}\frac{d_{X}(x_{i},x_{j})}{d_{Y}(f_{k}(x_{i}),f_{k}(x_{j}))}$.
    The \emph{distortion} of the online embedding is $\alpha\cdot\beta$.
    If $\beta\le1$, we say that the embedding is \emph{dominating}.
	
    A \emph{stochastic online embedding} is a distribution $\cD$ over online embeddings.
   It has expected expansion  $\alpha=\max_{i,j\le k}\frac{\E[d_Y(f_k(x_i),f_k(x_j))]}{d_X(x_i,x_j)}$ and expected contraction $\beta=\max_{i,j\le k}\frac{d_X(x_i,x_j)}{\E[d_Y(f_k(x_i),f_k(x_j))]}$.
	A stochastic embedding is \emph{dominating} (or \emph{non-contractive}) if all the online embeddings in the support of the distribution are dominating (note that it is possible that $\beta\le 1$ and yet the embedding is not dominating).
	The expected distortion of a dominating stochastic embedding is $\alpha$.
	Similarly, a stochastic embedding is \emph{non-expansive} if, for every pair $x,y\in X$ and embedding $f$ from the distribution, $d_M(f(x_i),f(x_j))\le d_X(x_i,x_j)$.
	The expected distortion of a non-expansive stochastic embedding is $\beta$.
	In an online embedding algorithm, the embedding $f_i$ depends only on $\{x_1,\dots,x_i\}$.
\end{definition}

For stochastic online embeddings, the sequence of points should be fixed in advance (but unknown to the algorithm). That is, a deterministic online embedding can be used against an adaptive adversary, while stochastic online embedding can be used only against an oblivious adversary.

Indyk, Magen, Sidiropoulos, and Zouzias \cite{IMSZ10} (see also \cite{ERW10}) observed that Bartal's original embedding~\cite{Bartal96} can be used in an online fashion to produce a dominating stochastic embedding into trees (ultrametrics) with expected distortion $O(\log n\cdot\log\Phi)$,
where $\Phi=\frac{\max_{x,y\in X}d_X(x,y)}{\min_{x,y\in X}d_X(x,y)}$ is the \emph{aspect ratio} (a.k.a.\ \emph{spread}). Their original embedding had the caveat that the number of metric points $n$ and the aspect ratio $\Phi$ must be known in advance. Later, Bartal, Fandina, and Umboh~\cite{BFU20} removed these restrictions. They also provided an $\Omega\left(\frac{\log n\cdot\log\Phi}{\log\log n}\right)$ lower bound, showing this distortion to be tight up to second order terms.
For the case where the input metric space $(X,d_X)$ has doubling dimension\footnote{A metric $(X, d)$ has doubling dimension $\ddim$ if every ball of radius $2r$ can be covered by $2^{{\tiny {\rm ddim}}}$ balls of radius $r$.\label{foot:doubling}} $\ddim$ (known in advance), Indyk \etal \cite{IMSZ10} constructed a dominating stochastic online embedding into ultrametrics with expected distortion $2^{O(\sddim)}\cdot\log\Phi$.

In an attempt to construct an online version of Bourgain's embedding \cite{Bourgain85}, Indyk \etal \cite{IMSZ10} constructed online dominating stochastic embedding of an arbitrary $n$-point metric space into the Euclidean space $\ell_2$ with expected distortion $O(\log n\cdot\sqrt{\log\Phi})$ (again, $n$ and $\Phi$ must be known in advance), or with expected distortion $2^{O(\sddim)}\cdot\log\Phi$ for the case where the metric space has doubling dimension $\ddim$ (known in advance).
Newman and Rabinovich \cite{NR20} showed that every deterministic embedding into $\ell_2$ must have distortion $\Omega(\min\{\sqrt{n},\sqrt{\log\Phi}\})$. They conjectured\footnote{The conjecture appears in the full \href{https://arxiv.org/abs/2303.15945}{arXiv version}
	 (in the conference version, it is stated as an open problem).} that a similar upper bound holds:
\begin{conjecture*}[\cite{NR20}]
	Every sequence of $n$ points in a metric space received in an online fashion can be deterministically embedded into Euclidean space $\ell_2$ with distortion $\poly(n)$.
\end{conjecture*}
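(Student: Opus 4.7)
The plan is to prove the stronger theorem stated in the abstract — a deterministic online embedding into $\ell_2$ with distortion $O(\log n)\cdot\min\{\sqrt{\log\Phi},\sqrt{n}\}$ — which is polynomial in $n$ and hence settles the conjecture. As points arrive, I would maintain for each dyadic scale $2^i$ a landmark set $N_i$ governed by a \emph{strict} net rule: $x_k$ is admitted to $N_i$ iff $\min_{j<k} d_X(x_j,x_k)\ge 2^i$. For each $z\in N_i$, I attach a coordinate $\phi_{i,z}(x) = \max\{0,\,2^i - d_X(x,z)\}$ in a fresh direction of $\ell_2$, and the embedding $f(x)$ stacks all such coordinates.

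The online extension property falls out for free from this strict rule: at the moment $x_k$ enters $N_i$, the new coordinate $(i,x_k)$ satisfies $\phi_{i,x_k}(x_j)=0$ for every $j<k$ (because $d_X(x_j,x_k)\ge 2^i$), so the embeddings of old points are not modified when new coordinates are added. Meanwhile, setting $\phi_{i,z}(x_k)$ for each previously created landmark $z$ only fills in $x_k$'s value in already-existing coordinates, which is allowed by the definition.

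For the distortion, the expansion analysis uses that each $\phi_{i,z}$ is $1$-Lipschitz, so the per-coordinate contribution to $\|f(x_j)-f(x_k)\|_2^2$ is at most $d_X(x_j,x_k)^2$. Only landmarks within $2^i$ of at least one of $x_j,x_k$ contribute, and because $N_i$ is $2^i$-separated, a packing argument combined with a Bourgain-style scale-by-scale accounting bounds the contribution of each scale by $O(\log^2 n)\cdot d_X(x_j,x_k)^2$. Summing over the $O(\min\{\log\Phi,n\})$ active scales in $\ell_2$ yields expansion $O(\log n)\cdot\sqrt{\min\{\log\Phi,n\}}$. For the contraction, I would show that for every pair $(x_j,x_k)$ at distance $d\approx 2^i$, the online net at some scale near $i$ contains a landmark $z$ that is close to one endpoint (within $\approx 2^{i-2}$) but far from the other (at distance $\ge 2^{i-1}$), producing a coordinate gap of $\Omega(d)$ and contraction $O(1)$.

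The main obstacle is the deterministic contraction: prior stochastic online embeddings (Indyk et al., Bartal-Fandina-Umboh) obtain such separation in expectation via random partitions, while here every pair must be separated simultaneously by a single deterministic net hierarchy. The crucial insight is that the strict net rule already yields a rich-enough landmark structure — informally, either one of $x_j,x_k$ is itself a landmark at a scale close to $\log_2 d_X(x_j,x_k)$, or a suitable landmark was inherited when a nearby earlier point was admitted — so that the separating coordinate exists without randomization. Establishing this rigorously for all pairs is the technical heart of the argument; once done, the overall distortion $O(\log n)\cdot\min\{\sqrt{\log\Phi},\sqrt{n}\}$ is $\poly(n)$ and affirms the conjecture, quadratically improving the $\Phi$-dependence from Indyk et al.
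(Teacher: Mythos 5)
Your approach is genuinely different from the paper's. The paper builds a \emph{random} online embedding from online padded decompositions, observes that the expected squared distances $D_{j,k}=\E\|f(x_j)-f(x_k)\|_2^2$ are deterministically computable and form a Euclidean distance matrix, and then reconstructs a deterministic point set realizing this matrix via online Gram--Schmidt orthogonalization in an $L_2$ function space (the ``layer of abstraction''). You instead attempt a direct deterministic construction from nets and ball coordinates. Unfortunately, this direct route has two concrete gaps, and they are not fixable by more careful bookkeeping --- they reflect exactly the tension the paper's reconstruction trick was introduced to break.

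First, the expansion diverges. The point $x_1$ is in $N_i$ for every $i$, and once $2^i\geq\max\{d_X(x_j,x_1),d_X(x_k,x_1)\}$ both arguments fall inside the ball, so $|\phi_{i,x_1}(x_j)-\phi_{i,x_1}(x_k)|=|d_X(x_j,x_1)-d_X(x_k,x_1)|$, a nonzero constant independent of $i$. Summing over the infinitely many high scales gives $\|f(x_j)-f(x_k)\|_2=\infty$ whenever $x_j$ and $x_k$ are at different distances from $x_1$, so your ``$O(\min\{\log\Phi,n\})$ active scales'' claim is false without some mechanism to kill large scales (the paper achieves this by forcing the coefficient $\alpha_1=0$ on the cluster of $x_1$ at every scale, so the whole-space cluster contributes nothing). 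Second, the contraction claim is not merely the ``technical heart'' but actually fails. Your strict admission rule ($x_k\in N_i$ iff $\min_{j<k}d_X(x_j,x_k)\geq 2^i$) is what makes online extension trivial, but it destroys the covering property of $N_i$: on a path metric $x_1,x_2,\dots$ revealed in order with $\rho_m=1$ for all $m\geq 2$ (e.g., the integers, or a two-leg star with unit edges revealed outward from the center), you get $N_i=\{x_1\}$ for every $i\geq 1$. For a pair $(u,v)$ equidistant from $x_1$ and at distance $D$, no landmark at any scale $\geq 1$ separates them, so even after fixing the expansion the only contribution is $O(1)$ from scale $0$, giving contraction $\Omega(D)=\Omega(\Phi)$ --- far worse than any $\poly(n)$ bound. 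The general tension is that online extension requires $\phi_{i,x_k}(x_j)=0$ for all $j<k$, i.e.\ new landmarks must be $2^i$-far from \emph{everything} prior, while contraction requires $N_i$ to be $O(2^i)$-covering; a single deterministic net cannot do both. The paper sidesteps this by letting the decomposition be a stochastic ball-growing process (which covers), and recovering deterministic $\ell_2$ points from the expected distance matrix (which never requires touching old coordinates).
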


\subsection{Our Results}
The results in this paper are twofold: We study both online minimum-weight perfect matchings and online metric embeddings. The connections between them are illustrated in \Cref{fig:concepts}.
\begin{figure}[t]
	\centering
	\begin{picture}(325,170)
		\put(0,0){\includegraphics[width=0.7\textwidth]{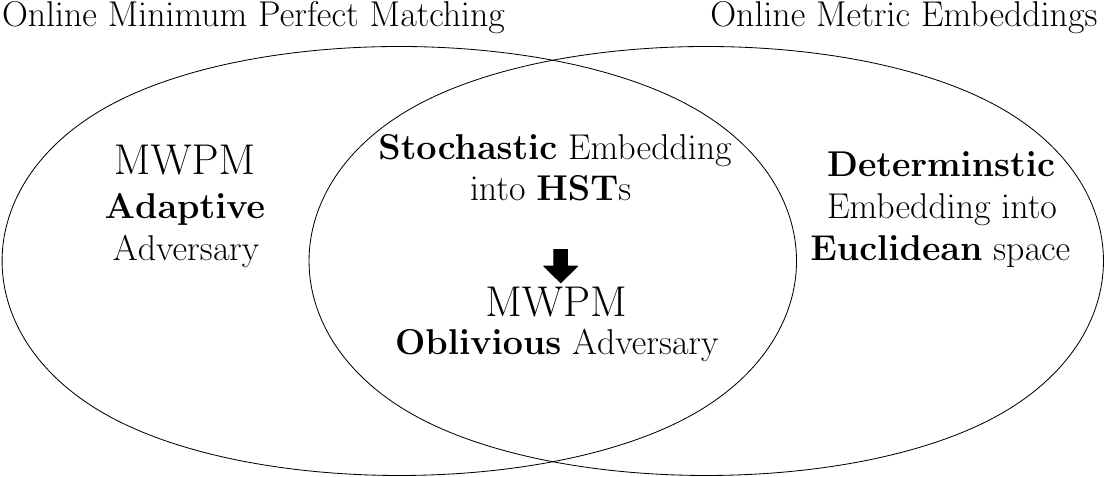}}
		\put(29,52){\small(\Cref{thm:Tree})}
		\put(122,72){\small(\Cref{thm:DoublingOnlineEmbedding,thm:EuclidineOnlineEmbeddingIntoHST})}
		\put(121,25){\small(\Cref{thm:ObliviousGeneralMetric,thm:ObliviousEuclideanMetric})}
		\put(238,52){\small(\Cref{thm:DoublingOnlineEmbeddingtoEuclidean,thm:LBEuclideanDoubling})}
	\end{picture}
	\caption{\footnotesize{A ``duet'' between online metric embeddings and minimum-weight perfect matchings: a Venn diagram of the relationship between the various results in this paper.}}
	\label{fig:concepts}
\end{figure}

\subsubsection{Metric Embeddings}

\begin{table}[]	
	\begin{tabular}{|l|l|c|l|l|l|l|}
		\hline
					&Input space          		 & Host Space  & Distortion                                 & Reference                                  & {\small Deter?}  & Pri.Kno.    \\ \hline
		\rownumber. & General              		 & \multirow{5}{*}{{\LARGE$\ell_2$}}    & $O(\log n\cdot \sqrt{\log\Phi})$      		& \cite{IMSZ10}                           	 &         &$n,\Phi$        \\ \cline{1-2} \cline{4-7}
		\rownumber. & 		General              &     & $\Omega(\min\{\sqrt{n},\sqrt{\log\Phi}\})$ & \cite{NR20}                                & yes     &                \\ \cline{1-2} \cline{4-7}
		\rownumber. & 		Doubling             &    & $2^{O(\sddim)}\cdot\log\Phi$               & \cite{IMSZ10}                              & 		   &$\ddim,\Phi$     \\ \cline{1-2} \cline{4-7}
		\rownumber. & 		Doubling             &     & $O(\ddim\cdot\sqrt{\log\Phi})$             & \Cref{thm:DoublingOnlineEmbeddingtoEuclidean} & yes  &none             \\ \cline{1-2} \cline{4-7}
		\rownumber. & 		Constant $\ddim$     &    & $\Omega(\sqrt{\log\Phi})$                  & \Cref{thm:LBEuclideanDoubling}             & 		   &				\\ \hline
		\rownumber. & 		General              & \multirow{6}{*}{\begin{tabular}[c]{@{}c@{}}{\Large HST}\\ \\ (ultrametric)\end{tabular}} & $O(\log n\cdot \log\Phi)$                  & \cite{IMSZ10,BFU20}                        &   	   &none             \\ \cline{1-2} \cline{4-7}
		\rownumber. & 		General              &  & $\tilde{\Omega}(\log n\cdot \log\Phi)$     & \cite{BFU20}                               &         &                 \\ \cline{1-2} \cline{4-7}
		\rownumber. & 		Doubling             &  & $2^{O(\sddim)}\cdot \log\Phi$           	& \cite{IMSZ10}                              &         &$\ddim,\Phi$     \\ \cline{1-2} \cline{4-7}
		\rownumber. & 		Doubling             &  & $O(\ddim\cdot \log\Phi)$                   & \Cref{thm:DoublingOnlineEmbedding}         &  	   &none            \\ \cline{1-2} \cline{4-7}
		\rownumber. & 		$(\R^d,\|\cdot\|_2)$ &  & $O(\sqrt{d}\cdot \log\Phi)$                & \Cref{thm:EuclidineOnlineEmbeddingIntoHST} &   	   &none             \\ \cline{1-2} \cline{4-7}
		\rownumber. & 		$\R$                 &  & $\Omega(\min\{n,\log\Phi\})$               & \cite{IMSZ10}                              &         &                  \\ \hline		
		\rownumber. & 		General              & \multirow{2}{*}{{\large Tree}}        & $2^{n-1}$                                  & \cite{NR20}                                & yes     &none             \\ \cline{1-2} \cline{4-7}
		\rownumber. & 		General              &         & $\Omega(2^{\frac n2})$                     & \cite{NR20}                                & yes     &                   \\ \hline
	\end{tabular}
	\caption{\small{Summary of new and previous result on online metric embeddings. Doubling stands for metric space with doubling dimension $\ddim$.   The column ``Deter?'' indicates whether the embedding is deterministic (in particular works against adaptive adversary), or the guarantee is only in expectation (in particular works only against oblivious adversary). The column ``Pri.Kno''  indicates what prior knowledge is required by the embedding (applicable only for the upper bounds).
	}}
	\label{tab:OnlineEmbedding}
\end{table}

Our results on online metric embeddings are summarized in \Cref{tab:OnlineEmbedding}.
Our first result is a deterministic online embedding with distortion $O(\ddim)\cdot\min\{\sqrt{\log\Phi},\sqrt{n}\}$ into Euclidean space $\ell_2$ where the input metric has doubling dimension $\ddim$. No prior knowledge (of any of $n$, $\Phi$, and $\ddim$) is required. As every $n$-point metric space has doubling dimension $O(\log n)$, this result simultaneously: (1) proves the conjecture by Newman and Rabinovich~\cite{NR20} (with an upper bound of $O(\sqrt{n}\log n)$ ); (2) matches the lower bound up to second order terms; (3) exponentially improves the dependence on $\ddim$ compared to \cite{IMSZ10}; (4) quadratically improves the dependence on $\Phi$; (5) gives a deterministic distortion guarantee instead of expected distortion; and (6) removes the requirement to know $\ddim$ and $\Phi$ in advance.
In fact, the quadratic improvement in the dependence on $\Phi$ answers an open question in \cite{IMSZ10} (see Remark~2 in \cite{IMSZ10}).

\begin{restatable}[]{theorem}{OnlineEmbedingIntoEuclidean}
	\label{thm:DoublingOnlineEmbeddingtoEuclidean}
	For a sequence of metric points $x_1,\ldots ,x_n$ arriving in an online fashion, there is a deterministic online embedding into Euclidean space $\ell_2$ with distortion $O(\ddim)\cdot\min\{\sqrt{\log\Phi},\sqrt{n}\}$.
	Here $\Phi$ is the aspect ratio, and $\ddim$ is the doubling dimension of $\{x_1,\ldots , x_n\}$. No prior knowledge is required.
\end{restatable}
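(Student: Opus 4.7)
The plan is to imitate, in an online manner, an Assouad-style embedding of a doubling metric into $\ell_2$, assembling $O(\log\Phi)$ scales via $\ell_2$-aggregation so that the dependence on the aspect ratio is $\sqrt{\log\Phi}$ rather than $\log\Phi$. First, for every dyadic scale $2^j$, I maintain online a $2^j$-net $N_j$ of the points seen so far: upon arrival of $x_i$, append $x_i$ to $N_j$ iff no previously added point of $N_j$ lies within distance $2^j$. By the doubling property, any ball of radius $O(2^j)$ contains at most $2^{O(\ddim)}$ points of $N_j$, so I can greedily assign colors $\chi_j\colon N_j\to[K]$ with $K=O(\ddim)$ such that any two net points of the same color at scale $j$ are at distance at least $4\cdot 2^j$; the greedy choice is well-defined online because only finitely many already-colored neighbors must be avoided.

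For each scale $j$ and each color $c\in[K]$ I create one coordinate $\phi_{j,c}$, yielding a sub-embedding $\Phi_j\colon X\to\R^{O(\ddim)}$ at scale $j$. When a point $y$ arrives, its value $\phi_{j,c}(y)=\min\{d(y,N_j^{(c)}),\,2^j\}$ is computed against the color-$c$ portion $N_j^{(c)}$ of the scale-$j$ net present at that moment, and then frozen forever. A standard Assouad-type argument shows that for any pair with $d(x,y)\in[2^j,2^{j+1})$, some color class at a scale $j'$ near $j$ separates $x$ from $y$, giving $|\phi_{j',c}(x)-\phi_{j',c}(y)|=\Omega(2^j/\ddim)$, and hence single-scale contraction at most $O(\ddim)$.

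I then concatenate $(\Phi_j)_j$ over all ``active'' scales into $f=\bigoplus_j \Phi_j$. Each $\Phi_j$ is $O(1)$-Lipschitz (as a truncated distance function, summed over $O(\ddim)$ coordinates), so the squared expansion summed over $O(\log\Phi)$ scales is $O(\ddim^2\cdot\log\Phi)$, giving expansion $O(\ddim)\sqrt{\log\Phi}$ and total distortion $O(\ddim)\sqrt{\log\Phi}$. For the $\sqrt{n}$ bound, I observe that among the $n$ arriving points there are only $O(n)$ dyadic scales at which the net $N_j$ ever changes, and a pair $(x,y)$ is only meaningfully affected by scales lying in a window around $\log d(x,y)$; restricting the $\ell_2$-aggregation to the (at most) $O(n)$ relevant scales replaces $\log\Phi$ by $n$. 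No prior knowledge of $n$, $\Phi$, or $\ddim$ is needed: nets, colors, and coordinates are all created reactively as points arrive, and the scale range is determined solely by the observed distances.

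The main technical obstacle is the freezing step. Later additions to $N_j^{(c)}$ can shrink the true $d(y,N_j^{(c)})$ strictly below the frozen value $\phi_{j,c}(y)$, so the sub-embedding $\Phi_j$ overestimates distances for early-arriving points and could in principle spoil non-expansion or destroy the separation argument for a pair that arrives much later. The crux of the proof is an inductive monotonicity lemma exploiting the doubling property: by the time both endpoints of a pair $(x,y)$ have been revealed, the frozen coordinates of each endpoint have been computed against a net that is already ``dense enough'' in the relevant radius, so the overestimation costs at most a constant factor per scale and the Assouad-style separation still holds up to an $O(\ddim)$ loss. Quantifying this monotonicity carefully—while ensuring that the greedy online coloring at every scale remains compatible across arrivals—is the main work, and is the source of the factor $O(\ddim)$ in the final distortion bound.
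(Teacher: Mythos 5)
Your approach is genuinely different from the paper's: you attempt an online Assouad-style embedding (nets, bounded coloring, capped distance-to-color-class coordinates, $\ell_2$-aggregation over scales), whereas the paper builds a \emph{random} padded decomposition at each scale via ball-growing with truncated-exponential radii, defines a random one-coordinate map as a signed ``paddedness'' (distance to the cluster boundary, defined with respect to the entire ambient space including unseen points), and then derandomizes by computing the matrix $D_{j,q}=\E\|f(x_j)-f(x_q)\|_2^2$ exactly and realizing it via Gram--Schmidt. That last derandomization step --- turning an embedding with distortion only in expectation into a deterministic one by observing that the squared expected distances form a Euclidean distance matrix over $L_2$ functions --- is the paper's key new idea, and it has no counterpart in your proposal.

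The gap in your proposal is exactly the step you flag as ``the main technical obstacle'', and I do not believe it can be fixed within your framework. Concretely, suppose $x$ arrives and is not inserted into $N_j$ because some existing net point $w$ has $d(x,w)=\rho$ with $\rho$ just below $2^j$; suppose no color-$c$ net point is yet within $2^j$ of $x$, so $\phi_{j,c}(x)$ is frozen at $2^j$. A later point $z$ may be inserted into $N_j$ with color $c$ as long as it is at distance $>2^j$ from every existing net point; triangle inequality only forces $d(x,z)>2^j-\rho$, which can be made arbitrarily small by taking $\rho$ close to $2^j$. If a point $y$ then arrives at tiny distance $\delta$ from $x$ (after $z$), you get $\phi_{j,c}(y)\approx d(y,z)=O(2^j-\rho+\delta)$, so $|\phi_{j,c}(x)-\phi_{j,c}(y)|\approx 2^j\gg\delta=d(x,y)$. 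This destroys non-expansion at scale $j$ --- not by a constant factor but by an unbounded one --- and the doubling property does nothing to forbid this configuration. Your proposed ``inductive monotonicity lemma'' would have to assert that by the time $x$ arrives the net is already dense enough near $x$; but $x$ can perfectly well be the first point seen in its neighborhood. The paper sidesteps precisely this issue: its paddedness $\partial_i(x_j)$ lower-bounds the distance from $x_j$ to \emph{any} point outside its cluster, including points that have not arrived yet (its Claim on paddedness), and it proves the single-coordinate Lipschitz property holds unconditionally. As a secondary issue, your claim that $\Phi_j$ is ``$O(1)$-Lipschitz'' is not right either: with $K=O(\ddim)$ coordinates each $1$-Lipschitz, $\Phi_j$ is $O(\sqrt{\ddim})$-Lipschitz, which (even granting the contraction analysis) would give distortion $O(\ddim^{3/2}\sqrt{\log\Phi})$ rather than the claimed $O(\ddim\sqrt{\log\Phi})$.
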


Our second result is a lower bound showing that for constant doubling dimension, \Cref{thm:DoublingOnlineEmbeddingtoEuclidean} is tight. Our lower bound holds even for non-expansive stochastic online embeddings with expected distortion guarantees. This generalizes~\cite{NR20}, where it was shown that there is a family of $n$-point metric spaces with aspect ratio $\Phi=2^{\Omega(n)}$ such that every deterministic embedding into $\ell_2$ requires distortion $\Omega(\sqrt{\log \Phi})=\Omega(\sqrt{n})$ (however their family does not have bounded doubling dimension).%

\begin{restatable}[]{theorem}{OnlinegHSTLB}
	\label{thm:LBEuclideanDoubling}
 \begin{enumerate}[label=(\alph*)]
    \item For every %$n\in\N$, and
    deterministic online embedding algorithm into $\ell_2$, there is a family $\mathcal{M}=\{M_n\}_{n\ge 1}$ of metric spaces such that $M_n$ is an $O(n)$-point metric space with aspect ratio $\Phi_n=4^n$, and uniformly constant doubling dimension which constitutes a shortest path metric of a series-parallel (in particular, planar) graph, such that the deterministic online embedding has distortion  $\Omega(\sqrt{\log \Phi_n})=\Omega(\sqrt{n})$.
    \item The lower bound in part~(a) also holds for non-expansive stochastic embeddings. Specifically, for every $n\in\N$, there is a family of $O(n)$-point metric spaces with uniformly constant doubling dimension and aspect ratio $\Phi_n=4^n$ such that every non-expansive stochastic embedding into $\ell_2$ must have expected distortion (that is, expected contraction) at least $\Omega(\sqrt{\log \Phi_n})=\Omega(\sqrt{n})$ w.r.t.\ this family.
 \end{enumerate}
\end{restatable}

Our third result is a dominating stochastic embedding into ultrametrics with expected distortion $O(\ddim\cdot\log \Phi)$. This is a generalization of \cite{BFU20} (as every metric space has doubling dimension $O(\log n)$), and an exponential improvement in the dependence on $\ddim$ compared with \cite{IMSZ10}.

\begin{restatable}[]{theorem}{OnlineEmbedingHST}
	\label{thm:DoublingOnlineEmbedding}
	Given a sequence of metric points $x_1,x_2,\dots$ arriving in an online fashion, there is a dominating stochastic metric embedding into ultrametrics (2-HSTs) with expected distortion $O(\ddim\cdot\log\Phi)$, where $\ddim$ and $\Phi$ are the doubling dimension and the aspect ratio of the metric space. No prior knowledge is required.
\end{restatable}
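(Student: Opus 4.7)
The plan is to adapt the classical offline hierarchical tree embedding for doubling metrics to the online regime, following the template of \cite{BFU20} but exploiting the bounded-growth property of doubling spaces to shave the $O(\log n)$ factor down to $O(\ddim)$. The output is a 2-HST built from a laminar chain of online random partitions.

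First I would maintain an online $2^i$-net $N_i$ for every integer scale $i$: when a new point $x_j$ arrives, it is added to $N_i$ iff $d_X(x_j, N_i) > 2^i$. This greedy rule yields a valid $2^i$-net online without any prior knowledge, and only finitely many scales per point (between the $\log$ of the smallest and largest distances seen so far) are relevant, so the hierarchy can be maintained lazily. Next, for each scale $i$ I would define an online random partition $\cP_i$ into clusters of diameter $O(2^i)$, so that the laminar refinement chain $\cP_0 \preceq \cP_1 \preceq \cdots$ directly defines the 2-HST. To obtain $O(\ddim)$-padding rather than the $2^{O(\ddim)}$ bound of \cite{IMSZ10}, I use two independent sources of randomness: a uniform radius $R_i \in [2^{i+1}, 2^{i+2}]$ sampled once per scale, and an i.i.d.\ label $\ell(y) \in [0,1]$ assigned to each net point $y$ at the moment of its creation. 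When $x_j$ arrives, its scale-$i$ cluster is defined to be that of $\arg\min_{y} \ell(y)$ taken over net points $y \in N_i$ that (a) existed at or before $x_j$'s arrival and (b) lie within distance $R_i$ of $x_j$; otherwise $x_j$ seeds a new scale-$i$ cluster. Past assignments are never revisited, so the construction is genuinely online.

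The principal technical step, and the main obstacle, is proving the \emph{padding lemma}: for any two arrived points $u,v$ with $d_X(u,v) = \delta$ and any scale $i$, the probability that $u$ and $v$ lie in different scale-$i$ clusters is $O(\ddim \cdot \delta / 2^i)$. The doubling property gives $|N_i \cap B(u, O(2^i))| \le 2^{O(\ddim)}$, and sorting these $k = 2^{O(\ddim)}$ candidate cutters by $\ell$-value, then integrating over the random radius $R_i$, reduces the separation probability to a harmonic-type sum $\sum_{j=1}^{k} \frac{\delta/2^i}{j} = O(\ddim)\cdot \delta/2^i$, instead of the naive union-bound value $k \cdot \delta/2^i = 2^{O(\ddim)} \cdot \delta/2^i$. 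The subtle issue is the ``past-only'' asymmetry: the candidate set for $u$ depends on which centers existed at $u$'s arrival time, and likewise for $v$. I would handle this by showing that the centers in $N_i$ created between $u$'s and $v$'s arrivals either dominate $u$'s chosen center (allowing a coupling back to the symmetric case) or can be charged, via a covering argument across adjacent net scales, to an $O(\ddim)$ additive slack.

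Finally, converting padding to distortion is standard. Letting $X_i$ be the indicator that $u, v$ are separated at scale $i$ and $i^* = \max\{i : X_i = 1\}$, we have $d_{\text{HST}}(u,v) = \Theta(2^{i^*})$ and therefore $\mathbb{E}[d_{\text{HST}}(u,v)] \le O(1) \cdot \sum_i 2^i \Pr[X_i = 1]$. The terms with $2^i \le \delta$ sum geometrically to $O(\delta)$, and the terms with $2^i > \delta$ contribute $\sum_{2^i > \delta} 2^i \cdot O(\ddim \cdot \delta/2^i) = O(\ddim \log \Phi) \cdot \delta$, giving the advertised expected distortion $O(\ddim \cdot \log \Phi)$. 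The dominating direction is automatic after rescaling by the cluster-diameter constant: if $u, v$ share a scale-$i$ cluster then $\delta \le O(2^i) \le d_{\text{HST}}(u,v)$, so the embedding is non-contractive.
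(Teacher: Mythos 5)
Your online net maintenance and the final padding-to-distortion conversion are both standard and match the paper's outline, but the padding lemma — which you correctly flag as the crux — does not go through, and the workaround you sketch for the ``past-only'' asymmetry does not exist. The CKR-style priority (each net point carries an i.i.d.\ label $\ell(y)$, and a point is claimed by the min-label center among those within $R_i$) is fundamentally incompatible with your restriction to centers that \emph{existed at the query point's arrival time}. Concrete failure: let $u,v$ satisfy $d(u,v)=\delta\ll 2^i$. At $u$'s arrival there is exactly one eligible center $y_0$ with $d(u,y_0)\le R_i$, so $u$ is assigned to $y_0$. Afterward a new net point $y_1$ joins $N_i$ with $d(v,y_1)\le R_i$ (and $d(u,y_1)\le R_i+\delta$), and then $v$ arrives. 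Your rule sends $v$ to $y_1$'s cluster iff $\ell(y_1)<\ell(y_0)$, which happens with probability $\tfrac12$ regardless of $\delta$, while $u$ stays with $y_0$ because past assignments are frozen. So $\Pr[u,v\text{ separated at scale }i]\ge \tfrac12$ for arbitrarily small $\delta$, destroying the claimed $O(\ddim\cdot\delta/2^i)$ padding. The event is not geometrically local: $y_1$ can be well inside $B(u,R_i)$, so there is no ``dominance'' coupling, and no charging to an additive $O(\ddim)$ slack — a later-arriving center can win the label tournament with constant probability and split nearby points.

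The paper sidesteps this by replacing the label priority with \emph{arrival-order} priority, i.e., Bartal's ball-growing: each net point $x_q$ draws its own radius $r_q\sim\Texp_{[1,2]}(\lambda_q)$ with $\lambda_q=\Theta(\ddim_q)$, and a point joins the earliest-arrived net point whose ball covers it and hasn't claimed it already. With this rule a later net point can never outrank an earlier one, so an assignment made when a point arrives remains correct under all future insertions — the construction is online ``for free.'' The $O(\ddim)$-padding then comes not from a sorted-by-label harmonic sum but from the memoryless property of the truncated exponential together with the packing bound $|N_i\cap B(\cdot,O(\Delta))|\le 2^{O(\ddim)}$, which the exponential tail with rate $\Theta(\ddim)$ precisely cancels. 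If you switch your priority from labels to arrival order and replace the single shared uniform $R_i$ with per-center truncated-exponential radii of rate $\Theta(\ddim)$, you essentially recover the paper's construction; the CKR route, as the paper explicitly notes, appears to be ``crucially global and centralized'' and does not admit an online implementation.
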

\begin{remark}\label{rem:PriorDistortionEuclidean}
	{\rm	In fact, for a pair of points $\{x_j,x_{k}\}$ where $j<k$, the expected distortion guarantee provided by \Cref{thm:DoublingOnlineEmbedding} is $O(\ddim_j)\cdot\log\Phi_j$, where $\ddim_j$ and $\Phi_j$ are the doubling dimension and aspect ratio of the metric space induced by the prefix $\{x_1,\dots,x_j\}$.
		This is also known as prioritized distortion.
		See \cite{EFN18,BFN19-scaling,FGK20,EN22} for further details on prioritized distortion.}
\end{remark}

If the points arrive from  Euclidean $d$-dimensional space, we obtain expected distortion $O(\sqrt{d}\cdot\log\Phi)$, which is a quadratic improvement in the dependence on the dimension (see \Cref{thm:EuclidineOnlineEmbeddingIntoHST}).

Bartal \etal \cite{BFU20} used their dominating stochastic online embedding to design competitive online algorithms for certain network design problems. Surprisingly, they showed that in many cases the dependence on the aspect ratio can be avoided.
One can improve some parameters by pluging in our \Cref{thm:DoublingOnlineEmbedding,thm:EuclidineOnlineEmbeddingIntoHST} into their framework.
One example is the Subadditive Constrained Forest problem \cite{GW95}, where  we can improve the competitive ratio from $O(\log^2 k)$ to $O(\ddim\cdot\log k)$, (or $O(\sqrt{d}\cdot\log k)$ for points in Euclidean $d$-space). See \Cref{sec:Network-Design-Problems} for further discussion.

\subsubsection{Minimum Weight Perfect Matchings}

Our results on online minimum-weight perfect matchings
are summarized in \Cref{table:1}.

\begin{table}[ht]
\begin{tabular}{|l|l|l|l|l|l|}
\hline
Adversary  & Metric & Recourse & Approx.\ ratio & Approx.\ type & Reference \\ \hline

Adaptive  & General  &   $O(\log^2 n)$     &     $O(\log n)$   &  lightness   &  \Cref{thm:Tree} \\ \hline

Oblivious   & General    &   $O(\min\{\log^3 n,\log\Phi\})$   &  $O(\ddim\cdot\log\Phi)$              & comp.\ ratio     & \Cref{thm:ObliviousGeneralMetric}    \\ \hline

Oblivious   & $(\R^d,\|.\|_2)$    & $O(\min\{\log^3 n,\log\Phi\})$   &  $O(\sqrt{d}\cdot \log\Phi)$              & comp.\ ratio     & \Cref{thm:ObliviousEuclideanMetric}    \\ \hline

Oblivious   & 2-HST  & $O(\min\{\log^3 n,\log\Phi\})$   &  $O(1) $              & comp.\ ratio     & \Cref{lem:2HSTAppprox}    \\ \hline

Oblivious  & General     &   $r$     &     $\Omega(\frac{\log n}{r \log r})$   &  \begin{tabular}[c]{@{}l@{}}comp. ratio\\ \& lightness\end{tabular} &  \Cref{thm:LowerBound} \\ \hline

Oblivious   & General    &   $1$     &     $\infty$  &  comp.\ ratio &  \Cref{pp:one} \\ \hline

\end{tabular}
\caption{\small Summary of our online algorithms and lower bounds for online minimum weight perfect matching with recourse. The input is $n$ points in a metric space with aspect ratio $\Phi$.\label{table:1}}
\end{table}

In \Cref{sec:comp}, we design a randomized algorithm against an oblivious adversary that maintains a perfect matching with competitive ratio $O(\ddim\cdot \log \Phi)$ and  recourse $O(\log \Phi)$.

\begin{restatable}[]{theorem}{ObliviousGeneralMetricUB}\label{thm:ObliviousGeneralMetric}
 There is a randomized algorithm that, for any sequence of metric points $x_1,\dots,x_{2n}$ revealed by an oblivious adversary in an online fashion with aspect ratio $\Phi$ %=\frac{\max_{i,j}d_X(x_i,x_j)}{\min_{i,j}d_X(x_i,x_j)}$
 and doubling dimension $\ddim$ (both unknown in advance), maintains a perfect matching of expected competitive ratio $O(\ddim\cdot\log\Phi)$ with recourse
 $O(\log\Phi)$. Alternatively, the recourse can be bounded by $O(\log^3 n)$.
\end{restatable}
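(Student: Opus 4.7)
The plan is to reduce to the 2-HST case by composing the stochastic online embedding of \Cref{thm:DoublingOnlineEmbedding} with the 2-HST matching algorithm of \Cref{lem:2HSTAppprox}. When $x_i$ arrives, the embedding produces, in an online fashion, an image $f(x_i)$ in a random 2-HST $T$ that is dominating and satisfies $\E[d_T(f(x),f(y))]\le O(\ddim\cdot\log\Phi)\cdot d_X(x,y)$ pairwise. We feed $f(x_1),\dots,f(x_{2i})$ into the 2-HST algorithm and output, in $X$, the preimage of the matching it maintains on $T$; each edge swap on $T$ pulls back to exactly one edge swap on $X$.

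For the competitive ratio, let $M^*_i$ be a fixed optimal offline matching on $S_i$ in $(X,d_X)$ of weight $\opt_i$, and let $M_i$ be our algorithm's matching. Chaining the dominating inequality $w_X(M_i)\le w_T(f(M_i))$, the $O(1)$-competitiveness of the 2-HST algorithm $w_T(f(M_i))\le O(1)\cdot w_T(f(M^*_i))$, and then taking expectation using that $M^*_i$ is fixed against an oblivious adversary,
\[
\E[w_X(M_i)]\;\le\; O(1)\cdot\E[w_T(f(M^*_i))]\;\le\; O(\ddim\cdot\log\Phi)\cdot\opt_i,
\]
where the last step uses linearity of expectation applied to the pairwise distortion bound over the $i$ edges of $M^*_i$.

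For the recourse, a standard dyadic-scale HST construction produces a tree whose aspect ratio $\Phi_T$ is at most $\Phi$ times a $\poly(\ddim,\log\Phi)$ blow-up, so $\log\Phi_T = O(\log\Phi)$; hence the $O(\min\{\log^3 n,\log\Phi_T\})$ recourse guarantee of \Cref{lem:2HSTAppprox} pulls back to $O(\min\{\log^3 n,\log\Phi\})$ recourse in $X$.

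The main obstacle in this plan is not technical depth but the standard subtlety that one cannot apply the expected-distortion bound directly to $w_T(f(M_i))$, because $M_i$ itself depends on the random embedding $f$ and is not a fixed edge set. The key device, enabled by the dominating property combined with the $O(1)$-competitiveness on $T$, is to bypass $M_i$ and bound $w_X(M_i)$ in terms of $w_T(f(M^*_i))$, whose expectation is controlled pairwise against the adversary's fixed input.
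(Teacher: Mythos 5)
Your proposal takes essentially the same approach as the paper: compose the stochastic online embedding of \Cref{thm:DoublingOnlineEmbedding} with a matching algorithm on the resulting ultrametric, and bound the competitive ratio by comparing $w_X(M_i)$, via the dominating property and the $O(1)$-competitiveness on $T$, to the weight of the image $f(M^*_i)$ of the \emph{fixed} optimal matching, whose expectation is then controlled pairwise—exactly the device the paper uses to circumvent the dependence of $M_i$ on the random $f$. One minor citation inaccuracy: the $O(\log\Phi)$ recourse bound does not come from \Cref{lem:2HSTAppprox} (whose statement only guarantees $O(\log^3 n)$) but from the exact inward matching of \Cref{lem:ultrametricOptimal}, whose recourse is $O(h)$ where $h=O(\log\Phi)$ by \Cref{obs:DiameterHSTbound}; the theorem's two alternative recourse bounds correspond to choosing between these two lemmas.
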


Moreover, we show that the competitive ratio can be further improved to $O(\sqrt{d}\cdot \log \Phi)$ if the input points are from Euclidean $d$-space (see \Cref{thm:ObliviousEuclideanMetric}).

\old{
\begin{restatable}[]{theorem}{ObliviousEuclideanMetricUB}
\label{thm:ObliviousEuclideanMetric}
	There is a randomized algorithm such that, given a sequence of points $x_1,\dots,x_{2n}$ in $(\R^d, \|.\|_2)$ revealed by an oblivious adversary in an online fashion with aspect ratio $\Phi$ (unknown in advance), maintains a perfect matching of expected competitive ratio $O(\sqrt{d}\cdot\log\Phi)$ with recourse $O(\log\Phi)$.
	Alternatively, the recourse can be bounded by $O(\log^3 n)$.
\end{restatable}
}

Note that, every $n$-point metric space has doubling dimension $O(\log n)$. For example, for the shortest path metric of unweighted graphs, \Cref{thm:ObliviousGeneralMetric} provides a competitive ratio of $O(\log^2n)$.

\Cref{thm:ObliviousGeneralMetric,thm:ObliviousEuclideanMetric} are proven by a reduction to \emph{hierarchically well-separated tree} (\emph{HST}, a.k.a. ultrametric, see \Cref{def:HST}) via \Cref{thm:DoublingOnlineEmbedding,thm:EuclidineOnlineEmbeddingIntoHST}, respectively.
For an HST of height $h$,
we can maintain a minimum-weight perfect matching (i.e., an optimum matching) using recourse $O(h)$ (see \Cref{lem:inward,lem:ultrametricOptimal}). Using heavy-path decomposition, we can also maintain a $O(1)$-approximate minimum-weight  matching with recourse  $O(\log^3 n)$ (see \Cref{lem:2HSTAppprox}).

Next, we establish a lower bound using points on the real line (with linear aspect ratio) such that the recourse times the competitive ratio must be $\tilde{\Omega}(\log n)$.
Note that for metric spaces with polynomial aspect ratio, our \Cref{thm:ObliviousGeneralMetric,thm:ObliviousEuclideanMetric} are tight up to a quadratic factor. \footnote{As we try to optimize simultaneously both the competitive ratio and the recourse, it is natural to define a new parameter called \emph{performance}, which equals competitive ratio times recourse. Thus for metric space with polynomial aspect ratio and constant $\ddim$, our \Cref{thm:ObliviousGeneralMetric} has performance $O(\log^2n)$, while by \Cref{thm:LowerBound}, the performance is at least $\min_{r}\left\{ r,\frac{\log n}{\log r}\right\} =\Omega(\frac{\log n}{\log\log n})$. Thus, ignoring second order terms, \Cref{thm:ObliviousGeneralMetric} is tight up to a quadratic factor.}

\begin{restatable}[]{theorem}{MainNewLowerBoundTheorem}
\label{thm:LowerBound}
For every $r\geq 2$, every online algorithm for minimum-weight perfect matching problem with recourse $r$, even for $n$ points in the real line, has competitive ratio $\Omega\left( \frac{\log n}{r\cdot \log r}\right)$ against an oblivious adversary. Furthermore $r$ can depend on $n$.
\end{restatable}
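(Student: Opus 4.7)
My plan is to construct an explicit oblivious adversarial sequence of $2n$ points on the real line, organized hierarchically in $k=\Theta(\log n/\log r)$ epochs, and then amortize over these epochs to show that any algorithm with recourse~$r$ has competitive ratio $\Omega(\log n/(r\log r))$.

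For the construction, I would arrange the $2n$ points into a recursive ``$r$-ary cluster'' hierarchy with geometric scales $\delta_0>\delta_1>\cdots>\delta_k$ where $\delta_\ell/\delta_{\ell+1}\approx r$. A level-$\ell$ cluster consists of $r$ sibling level-$(\ell{+}1)$ clusters separated by gaps of width~$\delta_\ell$. The adversary then reveals pairs in $k$ epochs from coarse to fine: in epoch~$\ell$, roughly $n/r^\ell$ new pairs are revealed, each pair drawn from two sibling level-$\ell$ clusters lying inside a common level-$(\ell{-}1)$ cluster. Positions are chosen so that after epoch~$\ell$ the (essentially unique) optimum matching on revealed points pairs every point with a partner at scale $\Theta(\delta_\ell)$.

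The analysis rests on the combinatorial claim that the epoch-$\ell$ optimum differs from the epoch-$(\ell{-}1)$ optimum on $\Omega(n/r^{\ell-1})$ edges, while only $n/r^\ell$ pair arrivals occur during epoch~$\ell$; hence the algorithm has at most $r\cdot n/r^\ell=n/r^{\ell-1}$ recourse operations in the whole epoch, essentially all of which are consumed just to catch up. At any moment during epoch~$\ell$, the algorithm therefore carries $\Omega(n/r^{\ell-1})$ ``wrong'' edges, each contributing an excess of $\Omega(\delta_{\ell-1})$ to its matching weight. Summing this per-step excess over all $k$ epochs and dividing by the overall $\opt$ yields the claimed $\Omega(\log n/(r\log r))$ ratio. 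For randomized algorithms, Yao's minimax principle applies: randomizing which siblings are paired at each epoch reduces the task to lower-bounding deterministic algorithms on a distribution of hard inputs.

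\textbf{Main obstacle.} The delicate point is ruling out that the algorithm anticipates the hierarchical structure and maintains a ``hedging'' matching that is simultaneously close to several candidate optima. To address this, the construction should randomize the sibling assignment at each epoch so that, from the algorithm's perspective, the identity of the correct partner for each revealed point is unpredictable until its epoch begins; an averaging argument then forces $\Omega(n/r^{\ell-1})$ of the algorithm's edges to disagree with the random epoch-$\ell$ optimum in expectation. A secondary technical issue is the precise tuning of the scales $\delta_\ell$ so that each epoch's excess cost is $\Theta(\opt/(r\log r))$ rather than a weaker $\Theta(\opt/r^2)$; this likely requires the optimum at the end of each epoch to be comparable (not geometrically smaller) to the final optimum, so that excesses from every epoch accumulate additively against a single normalization.
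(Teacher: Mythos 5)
Your plan shares the paper's high-level shape (a hierarchical line construction with $\Theta(\log n/\log r)$ stages, with fewer arrivals per stage so that future recourse cannot undo past damage), but as written it has gaps at exactly the places where the proof has to do real work. First, ``summing per-step excess over all $k$ epochs'' does not by itself bound the competitive ratio; you must show the bad edges from different epochs are \emph{disjoint} and \emph{survive} to the final matching. The paper's proof of \Cref{thm:LB} does this explicitly: from each round $i$ it extracts a set $E(i)$ of $\geq q^{k-i}/10$ edges of length $\geq q^i/(4r)$, observes that only $q^{k-(i+1)}$ points arrive afterward (so at most $\tfrac{r}{2}q^{k-(i+1)}=q^{k-i}/20 < |E(i)|$ of these can ever be deleted), and then peels off pairwise disjoint surviving subsets $E''(i)$ via an exponential-decay argument. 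Second, and more fundamentally, ``$\Omega(n/r^{\ell-1})$ wrong edges, each contributing $\Omega(\delta_{\ell-1})$'' is precisely the step that requires proof: wrong edges need not be long, and an algorithm can hedge by carrying wrong-but-short edges. The paper's good-interval argument (Case~2 in the proof of \Cref{thm:LB}) is what converts ``behind on recourse'' into ``must create a new long edge'': around each newly arrived point $p$ a good interval $A_p$ forces the alternating path of $M_{i-1}\triangle M_i$ starting at $p$ to cross distance $\Omega(q^i)$ in $\leq 2r$ hops before leaving $A_p$, so one of the newly inserted edges has length $\geq q^i/(4r)$. Your proposal reasons only about disagreement with the evolving optimum and randomizes sibling assignments, which is exactly the hedging scenario you flag but do not rule out.

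There is also a calibration inconsistency that should flag an error. With $\delta_\ell\approx\delta_0/r^\ell$, your per-epoch excess $(n/r^{\ell-1})\cdot\Omega(\delta_{\ell-1})=\Omega(n\delta_0/r^{2(\ell-1)})$ decays geometrically, so the sum is dominated by $\ell=1$ and is $\Theta(n\delta_0)$; dividing by a final $\opt\approx n\delta_k=n\delta_0/r^k$ would yield a competitive ratio $\Theta(r^k)=\Theta(n)$, contradicting the paper's $O(\ddim\cdot\log\Phi)$-competitive algorithm with polylogarithmic recourse on the line (\Cref{thm:ObliviousGeneralMetric} with $\ddim=O(1)$, $\Phi=\poly(n)$). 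So your claimed per-epoch edge structure cannot actually hold against every algorithm. The paper avoids this by keeping $\opt$ essentially \emph{constant}: the final set is $\{1,\ldots,q^k\}$, the optimum always pairs near-consecutive points with total weight $\Theta(\diam)$, and the spacing of newly arriving points \emph{grows} over rounds (from $1$ to $q^i$), so the forced edge-lengths grow while their counts shrink, giving a constant contribution per round. Finally, the key device you are missing is the adaptive two-case strategy: skip round $i$ if $M_{i-1}$ already contains $\geq q^{k-i}/10$ edges of length $\geq q^i/(4r)$, otherwise reveal $Q_i\setminus Q_{i+1}$; the oblivious version (\Cref{thm:LB2}) tosses a fair coin to decide whether to present round $i$ at all, so the correct case occurs with probability $\tfrac12$ regardless of how the algorithm hedges, and linearity of expectation finishes the argument. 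Randomizing sibling assignments does not emulate this skip-or-present dichotomy, which is what defeats hedging.
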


Finally, we design a deterministic algorithm against an adaptive adversary that maintains a perfect matching of weight $O(\log n)\cdot \cost(\MST)$ with recourse $O(\log^2 n)$ in any metric space (\Cref{thm:Tree}).
The \emph{lightness} of weighted graph $G$ on a point set is the ratio $\frac{\cost(G)}{\cost(\MST)}$ of the weight of $G$ to the weight of an MST, and it is a popular measure in network optimization. The lightness of a perfect matching may be arbitrarily close to zero, and it is always at most one\footnote{The weight of an MST is at least as large as the weight of the minimum perfect matching. Indeed, following the approach of Christofides algorithm, double each edge of the MST to obtain an Euler cycles $C$ of weight $2\,\cost(\MST)$. Let $x_1,\dots,x_{2n}$ be the order of the points in the order of their first occurrence along the tour. The two matchings $M_1=(x_1,x_2),\dots,(x_{2n-1},x_{2n})$ and $M_2=(x_2,x_3),\dots,(x_{2n-2},x_{2n-1}),(x_{2n},x_1)$ combined have weight at most $\cost(C)=2\,\cost(\MST)$. In particular, the weight of the minimum perfect matching is at most $\cost(\MST)$.}. However, for $2n$ uniformly random points in a unit cube $[0,1]^d$ for constant $d\in \mathbb{N}$, for example, the expected minimum weight of a perfect matching is proportional to the maximum weight of an $\MST$~\cite{SupowitRP83,SteeleS89}.
Interestingly, in our lower bound (\Cref{thm:LowerBound})
the weight of the perfect matching is $\Theta(\diam)=\Theta(\MST)$. Thus, in particular, it implies that the product of the recourse and the lightness of any oblivious algorithm is $\tilde{\Omega}(\log n)$. Hence, our algorithm against an adaptive adversary is comparable to the best possible oblivious algorithm w.r.t.\ the lightness parameter.

\begin{restatable}[]{theorem}{MatchingGeneralMetrics}
\label{thm:Tree}
For a sequence of points in a metric space $(X,d)$, we can maintain a perfect matching of weight $O(\log |S_i|)\cdot\cost(\MST(S_i))$ using recourse $O(\log^2 |S_i|)$ where $S_i$ is the set of the first $2i$ points.
\end{restatable}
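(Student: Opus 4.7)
The plan is to maintain a deterministic online tree $T_i$ on $S_i$ whose total weight is $O(\log|S_i|)\cdot\cost(\MST(S_i))$ and whose height is $O(\log|S_i|)$, and then to maintain a minimum-weight perfect matching on $T_i$ using a tree-matching routine whose recourse is bounded by $O(\log^2 |S_i|)$.

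\textbf{Tree construction.} I would build $T_i$ from a greedy hierarchical net decomposition. For every scale $2^j$ I maintain a net $N_j \subseteq S_i$ whose points are pairwise at distance at least $2^j$ and which $2^j$-covers $S_i$. The nets are nested ($N_{j+1}\subseteq N_j$) and can be updated deterministically online: when a new point $s$ arrives, I scan from the finest to the coarsest scale and add $s$ to $N_j$ iff no existing net point at that scale lies within $2^j$ of $s$. The tree $T_i$ has one internal node per (point, level) pair in the net hierarchy, connecting level-$j$ net points to their nearest level-$(j+1)$ net representative. By the packing property of nets, the total weight of level-$j$ edges is $O(|N_j|\cdot 2^j)$, while any spanning tree, in particular the MST, has weight at least $\Omega(|N_j|\cdot 2^j)$ since it must connect $|N_j|$ points pairwise separated by $2^j$. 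Hence each active level contributes at most $O(\cost(\MST(S_i)))$ to $\cost(T_i)$. To obtain the $O(\log|S_i|)$ factor rather than the naive $O(\log\Phi)$, I would collapse ``trivial'' scales on which the net is unchanged and group the remaining non-trivial scales into $O(\log|S_i|)$ bands via a charging argument that ties each active band to a constant-factor change in MST weight.

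\textbf{Matching maintenance.} For a tree of height $h$, \Cref{lem:inward,lem:ultrametricOptimal} produce an optimum matching on an HST using recourse $O(h)$ per insertion, and \Cref{lem:2HSTAppprox} gives an $O(1)$-approximation with recourse $O(\log^3 n)$ via heavy-path decomposition. For the tree $T_i$ described above (of height $O(\log|S_i|)$), I would adapt the inward-matching routine: when a new pair of leaves arrives, I propagate the change along its root path of length $O(\log|S_i|)$; at each ancestor, the re-balancing of previously-matched descendants costs $O(\log|S_i|)$ additional edge swaps, giving total recourse $O(\log^2|S_i|)$ per insertion. By the classical Euler-tour pairing argument, the weight of the resulting matching is at most $\cost(T_i) = O(\log|S_i|)\cdot\cost(\MST(S_i))$.

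\textbf{Main obstacle.} The most delicate step is cutting the raw $O(\log\Phi)$ net levels down to $O(\log|S_i|)$ effective levels while keeping the induced matching near-optimal. A priori, the aspect ratio $\Phi$ may be exponentially larger than $|S_i|$, so naively summing per-level contributions yields only lightness $O(\log\Phi)$. One must identify which scales are ``trivial'' in the sense that contracting them leaves the matching cost unchanged (e.g., scales at which the net is a singleton, or consecutive scales on which no new net point is born), and then charge the remaining non-trivial scales against structural events that can be bounded by $O(\log|S_i|)$. Equally delicate is ensuring that a single insertion triggers at most $O(\log^2|S_i|)$ edge swaps in the matching even when it causes a cascade of net-hierarchy updates across several levels; this will require a careful amortized analysis that couples the structure of the inward matching with the monotonic growth of the nets.
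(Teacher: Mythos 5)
Your proposal takes a genuinely different route from the paper, but it has gaps that prevent it from establishing the claimed bounds.

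The paper's proof is a composition of three ingredients: (1) the online spanning tree of Gu \etal~\cite{GuG016}, which maintains a tree $T_i$ of weight $O(\cost(\MST(S_i)))$ with $O(1)$ recourse per point; (2) an online Euler-tour/Hamilton-path maintenance argument showing that each edge change in $T_i$ incurs only $O(1)$ changes in the Hamilton path $\mathcal{P}(T_i)$; and (3) the line data structure of \Cref{sec:line} (\Cref{thm:1D}/\Cref{thm:laminar}), which maintains a \emph{laminar} near-perfect matching of \emph{depth} $O(\log n)$ on a dynamic line with recourse $O(\log^2 n)$ per insertion/deletion. The weight bound follows because a depth-$O(\log n)$ laminar matching on a path of total length $O(\cost(\MST))$ has weight $O(\log n)\cdot\cost(\MST)$. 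The crucial point is that the line data structure's recourse depends only on $\log n$, regardless of the tree's height or the aspect ratio.

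Your approach --- building a hierarchical net tree and maintaining an inward matching on it --- runs into the very obstacle you identify in your last paragraph, and I don't see how to resolve it within your framework. The net tree you describe has height $\Theta(\log\Phi)$, and $\Phi$ can be exponentially larger than $|S_i|$, so the inward-matching recourse of \Cref{lem:ultrametricOptimal} (which is $O(\text{height})$) is $O(\log\Phi)$, not $O(\log|S_i|)$. Collapsing trivial levels where the net is unchanged removes weight-zero edges but does not shorten the root path traversed by an insertion: a newly arriving point may join $\Theta(\log\Phi)$ net levels, forcing $\Theta(\log\Phi)$ new tree nodes and a root path of that length, so the inward-matching update cost cannot be bounded by $O(\log^2|S_i|)$ this way. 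The ``group into $O(\log|S_i|)$ bands via a charging argument'' step is precisely the hard part, and no concrete mechanism is given; a charging argument can bound the \emph{weight} of the tree, but it does not bound the \emph{height}, and it is the height that drives the inward-matching recourse. Reducing $O(\log\Phi)$ to $O(\log n)$ here is exactly what the paper's heavy-path decomposition (for HSTs) or the line data structure (for the general-metric case) is designed to achieve, and there is no off-the-shelf substitute. A secondary concern is that \Cref{lem:inward} and \Cref{lem:ultrametricOptimal} are stated and proved for HSTs, where the weight of a matched pair is the LCA label; your net tree is a general metric tree, so at minimum one must re-prove the relevant claims there (the weight bound $\cost(M^{\inw})\le\cost(T)$ does hold for a general tree since each tree edge is crossed by at most one matching edge, but this needs to be argued explicitly rather than imported from the HST lemmas). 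Finally, the missing component is the core data structure of \Cref{sec:line} --- the laminar matching with virtual edges and invariants (I1)--(I4) --- which is what actually delivers the $O(\log^2 n)$ recourse bound with a $\Phi$-independent guarantee; your proposal does not contain an analogue of it.
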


\subsection{Technical Ideas}\label{subsec:techIdeas}

\smallskip\noindent\textbf{Online Padded Decompositions and Online Embedding into HST.}
A \emph{$(\Delta,\beta)$-padded decomposition} of a metric space $(X,d_X)$ is a random partition of $X$ into clusters of diameter at most $\Delta$ such that for a ball $B=B_X(v,r)$ of radius $r$ centered at $v$, the probability that the points of $B$ are split between different clusters is at most $\beta\cdot\frac{r}{\Delta}$. A metric space is \emph{$\beta$-decomposable} if it admits a $(\Delta,\beta)$-padded decomposition for every $\Delta>0$.
Building on previous work~\cite{Bartal96,Rao99,KLMN04}, the main ingredient in all our metric embeddings (in particular the deterministic \Cref{thm:DoublingOnlineEmbeddingtoEuclidean}) are padded decompositions.
Every $n$-point metric space is $O(\log n)$-decomposable \cite{Bartal96}, while every metric space with doubling dimension $\ddim$ is $O(\ddim)$-decomposable \cite{GKL03,Fil19Approx}.
Roughly speaking, Bartal's padded decomposition \cite{Bartal96} works using a ball growing technique: Take an arbitrary order over the metric points $x_1,\dots,x_n$, sample radii $R_1,\dots,R_n$ from exponential distribution with parameter $\Theta(\log n)$, and successively construct clusters  $C_i=B_X(x_i,R_i\cdot\Delta)\setminus\bigcup_{j<i}C_j$. In words, there are $n$ clusters centered in the points of $X$. Each point $y$ joins the first cluster $C_i$ such that $d_X(y,x_i)\le R_i\cdot\Delta$.
With high probability, it holds that $\max{R_i}\le \frac12$ and thus the diameter of all the resulting clusters is bounded by $\Delta$.
Further, consider a ball $B=B_X(v,r)$, and suppose that $u\in B$ is the first point to join a cluster $C_i$. Then $R_i\ge \frac{d_X(x_i,u)}{\Delta}$.
By the triangle inequality,
$R_i\ge \frac{d_X(x_i,u)}{\Delta}+2r$ will imply that all the points in $B$ will join $C_i$.
By the memoryless property of the exponential distribution,
the probability that  $R_i<\frac{d_X(x_i,u)}{\Delta}+2r$ is at most $O(\log n)\cdot\frac{2r}{\Delta}$.

An HST (hierarchically separated tree, see \Cref{def:HST}) is in essence just a hierarchical partition.
Bartal's embedding into HST then works by creating separating decompositions for all possible distance scales $\Delta_i=2^i$, $i\in\Z$. That is, each $i$-level cluster is partitioned into $(i-1)$-level clusters using the padded decomposition described above. Let $i$ be the first scale such that $u$ and $v$ were separated, the distance between them in the HST will be $2^{i+1}$.  Thus the expected distance between $u$ and $v$ is bounded by\footnote{Bartal \cite{Bartal96} obtains expected distortion $O(\log^2 n)$ by contracting all pairs at distance at most $\frac{\Delta_i}{\poly(n)}$ before preforming the decomposition at scale $i$. The effect of this contraction on pairs at distance $\tilde{\Theta}(\Delta_i)$ is negligible, while the contraction ensures that $u,v$ have nonzero probability of being separated in only $O(\log n)$ different scales.}
\begin{align}
\sum_{i\ge\log d_{X}(u,v)}2^{i+1}\cdot\Pr[u \mbox{ {\rm and } }v\mbox{ {\rm are separated at level }}i]
&\le\sum_{i\ge\log d_{X}(u,v)}2^{i+1}\cdot O(\log n)\cdot\frac{d_{X}(u,v)}{2^{i}}\nonumber\\
&\le O(\log\Phi\cdot\log n)~.\label{eq:HSTexplain}
\end{align}
Indyk \etal \cite{IMSZ10} observed that, given metric points $x_1,\dots,x_n$ in an online fashion, we can still preform Bartal's padded decomposition, since the order of the points were arbitrary. However, \cite{IMSZ10} required prior knowledge of $n$ and the aspect ratio $\Phi$ (in order to determine the parameter of the exponential distribution, and the relevant scales).
Later, Bartal \etal \cite{BFU20} observed that sampling $R_j$ using exponential distribution with parameter $O(\log j)$ will still return a padded decomposition, and thus removed the requirement to know $n$ in advance.
To remove the requirement to know $\Phi$ in advance, \cite{BFU20} simply forced $R_1$ to be $\Omega(1)$, and thus ensuring that all the partitions above a certain threshold are trivial.

Gupta, Krauthgamer, and Lee \cite{GKL03} showed that every metric space with doubling dimension $\ddim$ is $O(\ddim)$-decomposable. Their decomposition follows the approach from \cite{CKR04}, which samples a global permutation to decide where to cluster each point. Later, Filtser \cite{Fil19Approx} used the random shifts clustering algorithm of Miller, Peng, and Xu \cite{MPX13} to obtain a similar decomposition with strong diameter guarantee.\footnote{Given an edge-weighted graph $G=(V,E,w)$ and a cluster $C\subseteq V$, the \emph{(weak) diameter} of $C$ is $\max_{u,v\in C}d_G(u,v)$ the maximum pairwise distance in $C$ w.r.t.\ the shortest path metric of $G$. The \emph{strong diameter} of $C$ is $\max_{u,v\in C}d_{G[C]}(u,v)$ the maximum pairwise distance in $C$ w.r.t.\ the shortest path metric of the induced subgraph $G[C]$.} However, both these decompositions are crucially global and centralized, and it is impossible to execute them in an online fashion.
Indyk \etal \cite{IMSZ10} studied online embeddings of doubling metrics into ultrametrics. However, lacking good padded decompositions for doubling  spaces, they ended up using a similar partition based approach, which lead to an expected distortion $2^{O(\sddim)}\cdot\log\Phi$.

We show that one can construct a padded decomposition with padding parameter $O(\ddim)$ using the ball growing approach of Bartal \cite{Bartal96}: Sample the radii using an exponential distribution with parameter $O(\ddim)$.
This is crucial, as such a decomposition can be executed in an online fashion.
Furthermore, one does not need to know the doubling dimension in advance. It is enough to use the doubling dimension of the metric space induced on the points seen so far.
Interestingly, even if the doubling dimension eventually will turn out to be $O(\log n)$, the decomposition will have the optimal $O(\log n)$ parameter.
Using these decomopsitions, we construct an HST in an online fashion and obtain \Cref{thm:DoublingOnlineEmbedding} by replacing the $O(\log n)$ factor in inequality \eqref{eq:HSTexplain} by $O(\ddim)$.

\smallskip\noindent\textbf{Online Deterministic Embedding into Euclidean Space.}
Bourgain's \cite{Bourgain85} optimal embedding into Euclidean space with distortion $O(\log n)$ is a Fr\'{e}chet type embedding. Specifically, it samples subsets uniformly with different densities, and sets each coordinate to be equal to the distance to a certain sampled subset.
This is a global, centralized approach that cannot be executed in an online fashion.
In contrast, Rao's \cite{Rao99} classic embedding of the shortest path metric of planar graphs into $\ell_2$ with distortion $O(\sqrt{\log n})$ is based on padded decompositions.
Indyk \etal \cite{IMSZ10} followed
the padded decomposition based approach of Rao \cite{Rao99}.
Roughly speaking, Rao's approach is to create padded decompositions $\cP$ for all possible distance scales, where we have a distinct coordinate for each partition $\cP$.
For every cluster $C\in\cP$, assign a random coefficient $\alpha_C\in\{\pm1\}$. Finally, for every $x\in C$, assign value $\alpha_C\cdot h_C(x)$, where $h_C(x)$ is the distance from $x$ to the ``boundary'' of $C$. One can show that the distance in every coordinate is never expanding, while for two points $x,y$, in the scale $d_X(x,y)$, with constant probability, $x,y$ will be separated, and $x$ will be at least $\Omega(\frac{d_X(x,y)}{\log n})$ away from the boundary of its cluster---thus we will get some contribution to the distance.
By sampling many such decompositions, one can get concentration, and thus an embedding with distortion $O(\log n\cdot\log\Phi)$ against an oblivious adversary.
Indyk \etal did not suggest any way to cope with an adaptive adversary.
Newman and Rabinovich \cite{NR20} provided an $\Omega(\sqrt{n})$ lower bound for such an embedding, and conjectured that $\poly(n)$ distortion should always suffice. However, they did not suggest any way to achieve it.

Next Indyk \etal \cite{IMSZ10} moved to embedding doubling spaces. Lacking  good online padded decompositions, they observed that an isometric embedding of ultrametric into $\ell_2$ can be maintained in an online fashion, and thus getting expected distortion $2^{O(\sddim)}\cdot\log\Phi$ against an oblivious adversary.
Plugging in our new padded decomposition into Rao's approach, one can get (worst case w.h.p.) distortion $O(\ddim\cdot\sqrt{\log\Phi})$ against an oblivious adversary. But how can one construct deterministic embedding to cope with an adaptive adversary?

Our solution is to create a ``layer of abstraction''. The previous ideas provide an embedding with expected distortion $O(\ddim\cdot\sqrt{\log\Phi})$.
Specifically, we get expansion $O(\sqrt{\log\Phi})$ in the worst case, and contraction $O(\ddim)$ in expectation.
The only randomness is over the choice of radii in the ball growing that creates the padded decompositions (and some additional boolean parameters). Given a new point $x_i$, the expected squared distance $\E_f[\|f(x_i)-f(x_j)\|_2^2]$ can be computed exactly, as it only depends on the points that have arrived so far, with no randomness involved. In a sense, instead of mapping a metric point $x_i$ into a vector in $\ell_2$, we map it into a well-defined function $f_i:(r_1,r_2,\dots,r_i)\rightarrow\ell_2$. These functions are in the function space $L_2$, and the distance
\[
    \|f_{i}-f_{j}\|_{2}=\left(\int_{r_{1},\dots,r_{i}}\left\Vert f_{i}(r_{1},\dots,r_{i})-f_{j}(r_{1},\dots,r_{j})\right\Vert _{2}^{2}\right)^{\frac{1}{2}}
\]
equals the expected distance by the random metric embedding.
However, we want to return vectors and not complicated functions. In fact, the only required information is the $L_2$ distance between these functions, which define an Euclidean distance matrix. Given such a matrix, one can find a set of vectors implementing it (which is unique up to rotation and translation). Furthermore, these vectors can be efficiently and deterministically computed in an online fashion!

\smallskip\noindent\textbf{Online Minimum-Weight Perfect Matchings in Metric Spaces.}
Given a sequence of metric points $s_1,\ldots , s_{2i}$ in an online fashion from an unknown metric space, we use the online embedding algorithm to embed them into an ultrametric or the real line (with some distortion), and maintain a matching with recourse on the embedded points. If we can maintain a good approximation for the online MWPM in an ultametric (or in $\mathbb{R}$), then we can maintain the same approximation ratio, with the distortion of the embedding as an overhead, for the online MWPM problem.

\smallskip\noindent\textbf{Optimal Matchings on Trees: Inward Matchings.}
An ultrametric is represented by a hierarchically well-separated tree $T$, a rooted tree with exponentially decaying edge weights. As points arrive in an online fashion, the online embedding algorithm may successively add new leaves to $T$; and the points are embedded in the leaves of $T$. We show that a simple greedy matching is optimal (i.e., has minimum-weight) in an ultrametric, and can easily be updated with recourse proportional to the height of $T$ (\Cref{lem:inward,lem:ultrametricOptimal}). Specifically, an \emph{inward matching}, introduced here, maintains the invariant that the points in each subtree induce a near-perfect matching. When a pair of new points arrive, we can restore this property by traversing the shortest paths between the corresponding nodes in $T$. Consequently, an inward matching can be maintained with recourse $O(h)$, where $h$ is the height of $T$. We have $h\leq O(\log \Phi_U)$, where $\Phi_U$ is the aspect ratio of the ultrametric $U$, which in turn is bounded by $O(\log \Phi_X)$, the aspect ratio of the metric space $X$ induced by the input points seen so far.

\smallskip\noindent\textbf{Heavy-Path Decomposition on HSTs.}
Bartal \etal \cite{BFU20} showed that the distortion of any online metric embedding algorithm into trees depends on the aspect ratio $\Phi$, in particular, the factor $\log \Phi$ in our distortion bounds in Theorems~\ref{thm:DoublingOnlineEmbedding}--\ref{thm:EuclidineOnlineEmbeddingIntoHST} is unavoidable. It is unclear whether any dependence on $\Phi$ is necessary for the bounds for the online MWPM problem. We can eliminate the dependence on $\Phi$ for the recourse, while maintaining the same approximation guarantee (which, however, still depends on the distortion, hence on $\Phi$).

Instead of an optimal matching on the HST $T$, we maintain an 2-approximate minimum-weight perfect matching. We use the classical \emph{heavy-path decomposition} of the tree $T$, due to Sleator and Tarjan~\cite{SleatorT83}, which is a partition of the vertices into subsets that each induce a path (\emph{heavy path}); the key property is that every path in $T$ intersects only $O(\log n)$ heavy paths, regardless of the height of $h$. The heavy path decomposition can be maintained dynamically with $O(\log n)$ split-merge operations over the paths.

We relax the definition of inward matchings such that at most one edge of the marching can pass between any two adjacent heavy paths, but we impose only mild conditions within each heavy path. A  charging scheme shows that the relaxed inward matching is a 2-approximation of the MWPM (\Cref{lem:2HSTAppprox}). On each heavy path, we maintain a matching designed for points on a real line, with $O(\log^2 n)$ depth\footnote{The \emph{depth} of a matching on $n$ points in $\mathbb{R}$ (or a path) is the maximum number of pairwise overlapping edges.}, which supports split-merge operations in $O(\log^2 n)$ changes in the matching (i.e., recourse).
Overall, we can maintain a 2-approximate minimum-weight matching on $T$ with worst-case recourse $O(\log^3 n)$.

\smallskip\noindent\textbf{Minimum-Weight Matching on a Real Line: Reduction to Depth.}
We reduce the online MWPM problem to a purely combinatorial setting. For a set of edges $E$ on a finite set $S\subset \mathbb{R}$, we say that $E$ is \emph{laminar} if there are no two edges $a_1b_1$ and $a_2b_2$ such that $a_1<a_2<b_1<b_2$ (i.e., no two \emph{interleaving} or \emph{crossing} edges). Containment defines a partial order: We say that $a_1b_1\preceq a_2b_2$ (resp., $a_1b_1\prec a_2b_2$) if the interval $a_2b_2$ contains (resp., properly contains) the interval $a_1b_1$. The Hasse diagram of a laminar set $E$ of edges is a forest of rooted trees $F(E)$ on $E$, where a directed edge $(a_1b_1,a_2b_2)$ in $F(E)$ means that $a_2b_2$ is the shortest interval that strictly contains $a_1b_1$. The \emph{depth} of $E$ is the depth of the forest $F(E)$; equivalently, the depth of $E$ is the maximum number of pairwise overlapping edges in $E$.
Based on this, we show that for a dynamic point set $S$ on $\mathbb{R}$, one can maintain a laminar near-perfect matching of depth $O(\log n)$ such that it modifies (adds or deletes) at most $O(\log^2 n)$ edges in each step.

Importantly, the laminar property and the depth of the matching depend only on the order of the points in $S$, and the real coordinates do not matter. While it is not difficult to maintain a laminar near-perfect matching. However, controlling the depth is challenging. We introduce the notion of \emph{virtual edges}, which is the key technical tool for maintaining logarithmic depth.
We maintain a set of invariants that ensure that, for a nested sequence of edges yields a nested sequence of virtual edges with the additional
property that they have exponentially increasing lengths. We argue that if a near-perfect matching with virtual edges satisfies the invariants then the depth of such matching is logarithmic.

We reduce the case of general metric space to a line metric using a result by Gu \etal \cite{GuG016}:
Given a sequence of metric points, one can maintain a spanning tree of weight $O(\cost(\MST))$ in an online fashion (insertion only), with constant recourse per point.
We maintain an Euler tour\footnote{A DFS traversal of a tree $T$ (starting from an arbitrary root) defines an \emph{Euler tour} $\mathcal{E}(T)$ that traverses every edge of $T$ precisely twice (once in each direction).} $\mathcal{E}(T)$ for the tree $T$ produced by their algorithm.
The Euler tour $\mathcal{E}(T)$ induces a Hamilton path $\mathcal{P}(T)$ (according to the order of first appearance). Intuitively, we treat the metric points as if they were points on a line ordered according to $\mathcal{P}(T)$.
We show that each edge deletion and each edge insertion in the forest $F$ incurs $O(1)$ edge insertions or deletions in the tour $\mathcal{E}(T)$ and path $\mathcal{P}(T)$, a very limited change! Thus we can use our data structure for the line (\Cref{sec:line}) to maintain a near-perfect matching for the points w.r.t.\  $\mathcal{P}(T)$.
As the total weight of this path is $O(\cost(\MST))$, we obtain a perfect matching of weight $O(\log n)\cdot\cost(\MST)$ using recourse $O(\log^2 n)$.

\smallskip\noindent\textbf{Lower Bounds for Competitive Ratio and Recourse on Minimum-Weight Matchings.}
For integers $r\geq 2$ and $n\geq 10\, r$, we show (\Cref{thm:LB}) that an adaptive adversary can construct a sequence $S_n$ of integer points on the real line such that any deterministic online perfect matching algorithm with recourse $r$ per arrival maintains a perfect matching of weight $\Omega\left( \frac{\diam(S_n)\,\log n}{r\, \log r}\right)$. The adversary presents points in $k=\Theta\left( \frac{\log n}{r\, \log r}\right)$ rounds. In round~0, the points are consecutive integers. In subsequent rounds, the number of points decreases exponentially, but the spacing between them increases. If the algorithm matches new points among themselves in every round, the weight of the resulting matching would be $\Omega(k\cdot \opt)$. The weight could be improved with recourse, however, the number of new points rapidly decreases, and they do not generate enough recourse to make amends: We show that the weight increases by $\Omega(\opt)$ in every round. There is one twist in the adversarial strategy, which makes it adaptive: If the matching $M_{i-1}$ at the beginning of round~$i$ could possibly ``absorb'' the points of round~$i$ (in the sense that the weight would not increase by $\Omega(\opt)$), then we show that $M_{i-1}$ already contains many long edges and $\cost(M_{i-1})\geq \Omega(i\cdot \opt)$: In this case, the adversary can simply skip the next round.

In fact, this lower bound construction extends to oblivious adversaries by skipping some of the rounds randomly. Moreover, since, for a set of points in the real line, the minimum weight of a perfect matching is trivially bounded by the diameter of the point set, we conclude that the lightness and competitive ratio of any online algorithm with constant recourse is $\Omega(\log n)$; and an $O(1)$-competitive algorithm would require recourse at least  $r=\Omega(\log n/\log \log n)$.

\subsection{Related Work}\label{sec:related-work}

\noindent\sloppy\textbf{Online Minimum-Weight Perfect Matching with Delays.}
Similarly to our model, Emek et al.~\cite{EmekKW16} considered online minimum-weight perfect matchings in a metric space. However, they allow \emph{delays} instead of recourse: The decisions of the online matching algorithm are irrevocable, but may be delayed, incurring a time penalty of $t_i$ if a point $s_i$ remains unmatched for $t_i$ units of time. The objective is to minimize the \emph{sum} of the weight and all time penalties. Emek et al.~\cite{EmekKW16} show that a randomized algorithm (against an oblivious adversary) can achieve a competitive ratio $O(\log^2 n+\log \Phi)$ in this model, where
$\Phi$ is the aspect ratio of the metric space (which can be unbounded as a function of $n$). Later, Azar et al.~\cite{AzarCK17}, improved the competitive ratio to $O(\log n)$. Ashlagi et al.~\cite{AshlagiACCGKMWW17} studied the bipartite version of this problem; and Mari et al.~\cite{MariPRS23} considered the stochastic version of this problem where the input requests follow Poisson arrival process.
Recently, Deryckere and Umboh~\cite{DU23} initiated the study of online problems with set delay, where the delay cost at any given time is an arbitrary function of the set of pending requests.
However, time penalties cannot be directly compared to the recourse model. An advantage of the recourse model is that it allows us to maintain perfect matching explicitly at all times, as opposed to the delay model where some points might remain unmatched at every step.

Online algorithms with recourse have been studied extensively over the years; see~\cite{IM91,MegowSVW16, GuG016, GGKKS22, BGW21, BHR19, GKS20}. The question is, given the power of hindsight, how much one can improve the solution of an online algorithm~\cite{GuG016}.

\smallskip\noindent\textbf{Online MST with Recourse.} In the online minimum spanning tree (MST) problem, points in a metric space arrive one by one, and we need to connect each new point to a previous point to maintain a spanning tree. Without recourse, Imase and Waxman~\cite{IM91} showed that a natural greedy
algorithm is $O(\log n)$-competitive, and this bound is the best possible (see also~\cite{AlonAzar93, DumitrescuT07}). They also showed how to maintain a $2$-competitive tree with recourse $O(n^{3/2})$ over the first $n$ arrivals for every $n$. Therefore, the amortized budget, i.e., the average number of swaps per arrival, is $O(\sqrt{n})$. Later, Megow et al.~\cite{MegowSVW16} substantially improved this result. They gave an algorithm with a constant amortized budget bound. In a breakthrough, Gu et al.~\cite{GuG016} showed that one can maintain a spanning tree of weight $O(\cost(\MST))$ with recourse $O(1)$ per point.

When new points arrive in the online model, the weight of the MST may decrease. However, it cannot decrease by a factor more than $\frac12$. Indeed, the decrease is bounded by the \emph{Steiner ratio}~\cite{GilbertPollak}, which is the infimum of the ratio between the weight of a Steiner tree and the MST for a finite point set, and is at least $\frac12$ in any metric space. Similarly, in the online traveling salesman  problem (TSP), where the length of the optimal TSP tour increases monotonically as new points arrive, one can maintain an $O(1)$-competitive solution with constant recourse (see \Cref{sec:related-work}).

\smallskip\noindent\textbf{Online TSP.} In the online traveling salesman problem (TSP), points of a metric space arrive one by one, and we need to maintain a traveling salesman tour (or path) including the new point. Rosenkrantz et al.~\cite{RosenkrantzSL77} showed that a natural greedy algorithm with \emph{one} recourse per point insertion (replacing one edge by two new edges) is $O(\log n)$ competitive, and there is a lower bound of $\Omega(\log n/\log \log n)$ even in Euclidean plane~\cite{Azar94, BafnaKP94}. However, as an Euler tour around an MST 2-approximates the weight of a TSP tour, the online MST algorithm by Gu et al.~\cite{GuG016} immediately yields a $O(1)$-competitive algorithm with recourse $O(1)$.

Kalyanasundaram and Pruhs~\cite{kalyanasundaram1994constructing} studied a variant of online TSP, where new cities are revealed locally during the traversal of a tour (i.e., an arrival at a city reveals any adjacent cities that must also be visited). Jaillet and Lu~\cite{jaillet2011online} studied the online TSP with service flexibility, where they introduced a sound theoretical model to incorporate “yes-no” decisions on which requests to serve, together with an online strategy to visit the accepted requests.

\smallskip\noindent\textbf{Greedy Matchings.}
Online algorithms with or without recourse often make greedy choices.
For the online MWPM, the following greedy approach with constant recourse seems intuitive: Suppose points $p_1$ and $p_2$ arrive when our current matching on $S_i$ is $M_i$. Then we find a closest neighbor for $p_1$ and $p_1$, resp., say $a_1$ and $a_2$ in $S_{i+1}$, delete any current edges $a_1b_1,a_2b_2\in M_i$, and add all edges of a minimum-weight matching on $\{a_1,a_2,b_1,b_2,p_1,p_2\}$ to the matching.

An online greedy approach would, at best, ``approximate'' an offline greedy solution. The \emph{offline greedy} algorithm successively adds an edge $ab$ between the closest pair of vertices and removes both $a$ and $b$ from further consideration. Reingold and Tarjan~\cite{ReingoldT81} showed, however, that the greedy algorithm on $2n$ points in a metric space achieves an $O(n^{\log \frac32})$-approximation, where $\log\frac32\approx 0.58496$, and this bound is the best possible already on the real line. Frieze, Mc{D}iarmid, and Reed~\cite{FriezeMR90} later showed that for integers $S_n=\{1,2,\ldots , 2n\}\subset \mathbb{R}$, the offline greedy algorithm returns an $O(\log n)$-approximation, and this bound is tight (if ties are broken arbitrarily when multiple point pairs attain the minimum distance).

\smallskip\noindent\textbf{Metric Embeddings.} There is a vast literature on metric embeddings that we will not attempt to cover here. We refer to the extended book chapter \cite{MatEmbedding13}, and some of the recent papers for an overview \cite{AFGN22,FGK20,Fil21,FL21}. See also the recent \href{https://hackmd.io/@3S70qBUwTR6_CErLY2dm4A/SJfp46KGi}{FOCS22 workshop}.
In the context of online embeddings, embeddings into low dimensional $\ell_1$, $\ell_2$, and $\ell_\infty$ normed spaces were studied \cite{IMSZ10,NR20}. In particular, every tree metric admits an isometric (with distortion $1$) online embedding into $\ell_1$ \cite{NR20}.
A significant part of the metric embeddings literature is concerned with the embedding of topologically restricted metric spaces, such as planar graphs, minor free graphs, and graphs with bounded treewidth/pathwidth \cite{Rao99,KLMN04,Fil20,FKS19,CFKL20,FL22}. However, at present, these embeddings do not have online counterparts. The reason is perhaps the lack of a good online version of a padded decompositions for such spaces~\cite{KPR93,FT03,AGGNT19,Fil19Approx}. Designing online padded decompositions for such spaces is a fascinating open problem.

\section{Preliminaries}
\paragraph{Ultrametrics.}
An ultrametric $\left(X,d\right)$ is a metric space satisfying a strong form of the triangle inequality, that is, for all $x,y,z\in X$,
$d(x,z)\le\max\left\{ d(x,y),d(y,z)\right\}$. A related notion is a $k$-hierarchically well-separated tree ($k$-HST).

\begin{definition}[$k$-HST]\label{def:HST}
	A metric $(X,d_X)$ is a \emph{$k$-hierarchically well-separated tree} (\emph{$k$-HST}) if there exists a bijection $\varphi$ from $X$ to leaves of a rooted tree $T$ in which:
	\begin{enumerate}%[noitemsep]
		\item each node $v\in T$ is associated with a label $\Gamma_{v}$ such that $\Gamma_{v} = 0$ if $v$ is a leaf, and $\Gamma(v)\geq k\Gamma(u)$ if $v$ is an internal node and $u$ is any child of $v$;
		\item $d_X(x,y) = \Gamma(\lca(\varphi(x),\varphi(y)))$ where $\lca(u,v)$ is the least common ancestor of any two given nodes $u,v$ in $T$.
	\end{enumerate}
\end{definition}

It is well known that any ultrametric is a $1$-HST, and any $k$-HST is an ultrametric (see \cite{BartalLMN04}).
Note that a $1$-HST induces a laminar partition of the metric space.
For an internal node $v\in T$, denote by $X_v$ the set of leaves in the subtree rooted at $v$.
We will use the terms ultrametric and $1$-HST interchangeably throughout the paper.

\paragraph{Doubling Dimension.}  The doubling dimension of a metric space is a measure of its local ``growth rate''.
A metric space $(X,d)$ has doubling constant $\lambda$ if for every $x\in X$ and radius
$r>0$, the ball $B(x,2r)$ can be covered by $\lambda$ balls of radius $r$. The doubling dimension is defined as $\ddim=\log_2\lambda$. A $d$-dimensional $\ell_p$ space has $\ddim=\Theta(d)$, and every $n$ point metric has $\ddim=O(\log n)$.
Even though it is NP-hard to compute the doubling dimension of a metric space \cite{GK13}, in polynomial time, one can compute an $O(1)$-approximation \cite[Theorem~9.1]{HM06}.
The following lemma gives the standard packing property of doubling metrics (see, e.g., \cite[Proposition~1.1]{GKL03}.).
\begin{lemma}[Packing Property] \label{lem:doubling_packing}
	Let $(X,d)$ be a metric space  with doubling dimension $\ddim$.
	If $S \subseteq X$ is a subset of points with minimum interpoint distance $r$ that is contained in a ball of radius $R$, then
	$|S| \le 2^{\ddim\cdot\left\lceil \log\frac{2R}{r}\right\rceil}$ .
\end{lemma}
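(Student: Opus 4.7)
The plan is to apply the doubling property iteratively to cover the ambient ball $B(x,R)\supseteq S$ by balls of small enough radius that each can contain at most one point of $S$.

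First, I would record the consequence of iterating the definition: since any ball of radius $2\rho$ is covered by $2^{\ddim}$ balls of radius $\rho$, induction on $k$ gives that any ball of radius $R$ is covered by $2^{\ddim\cdot k}$ balls of radius $R/2^{k}$, for every integer $k\ge 0$. This is the one substantive step, and it is immediate from the definition by re-applying the covering bound to each of the smaller balls produced at the previous level.

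Next I would choose $k$ so that the covering balls are too small to contain two points of $S$. Concretely, set
\[
k=\left\lceil \log\tfrac{2R}{r}\right\rceil,
\]
so that $R/2^{k}\le r/2$. Any ball of radius at most $r/2$ has diameter at most $r$, hence can contain at most one point of $S$ (since any two points of $S$ are at distance at least $r$).

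Finally, since $S\subseteq B(x,R)$ and $B(x,R)$ is covered by $2^{\ddim\cdot k}$ balls of radius $R/2^{k}\le r/2$, a pigeonhole argument yields
\[
|S|\le 2^{\ddim\cdot k}=2^{\ddim\cdot\lceil \log(2R/r)\rceil},
\]
as claimed. There is no real obstacle here; the only mild care needed is the choice of $k$ ensuring $R/2^k\le r/2$ (rather than $\le r$), which is why the logarithm carries the factor $2R/r$ and not $R/r$.
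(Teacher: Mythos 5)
Your proof is correct and is the standard argument; the paper does not prove this lemma itself but cites it from \cite{GKL03} (Proposition~1.1), whose proof is essentially the one you wrote: iterate the doubling covering $\lceil\log(2R/r)\rceil$ times down to balls of radius at most $r/2$ and apply pigeonhole against the $r$-separation of $S$.

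One small point worth flagging: you conclude that a ball of radius $\le r/2$ ``can contain at most one point of $S$'' from the fact that points of $S$ are at distance \emph{at least} $r$. Strictly read, two points at distance exactly $r$ could both lie in a closed ball of radius $r/2$, so this step needs either strict separation ($>r$) or a strictly smaller covering radius. This is resolved by convention: \cite{GKL03} states the packing bound with strict separation $d(y_1,y_2)>\alpha$, and the lemma is applied in this paper to nets whose pairwise distances satisfy $d_X(x,y)>2^i$ (strictly, from Lemma~\ref{lem:nets}), so the argument goes through unchanged. The rest of your reasoning is airtight.
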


\section{Online Padded Decompositions}

\subsection{Online Net Construction}

Hierarchical nets in metric spaces were adapted to dynamic point sets in the early 2000s~\cite{GaoGN06,KrauthgamerL04}. In particular, in the semi-dynamic (insert-only) setting, these data structures only insert (but never delete) points into each net, so they can be used in online algorithms (with irrevocable decisions). For the sake of completeness, we show how to construct hierarchical nets online.

Given a metric space $(X,d_X)$, a \emph{$\Delta$-net} $N\subseteq X$ is a set of points at pairwise distance at least $\Delta$, such that for every point $x\in X$ there is a net point within distance $\Delta$. Given a sequence of points in an online fashion, one can easily construct a $\Delta$-net using a greedy algorithm (add $x_j$ to $N$, if $N$ does not contain any net point at distance at most $\Delta$ from $x$).
For our purpose, we will need nets for all possible distance scales.
Storing greedy nets for an $n$-point metric space for all possible distance scales might require large space (as nets in consecutive scales might differ greatly).
Instead, we will create a nested sequence of nets for all the distance scales simultaneously, in an online fashion.

\begin{lemma}\label{lem:nets}
	There is an algorithm that,
	for a sequence of points in a metric space $(X,d_X)$ arriving in an online fashion, maintains a nested sequence of nets $\dots\supseteq N^{(j)}_{-2}\supseteq N^{(j)}_{-1}\supseteq N^{(j)}_{0}\supseteq N^{(j)}_{1}\supseteq N^{(j)}_{2}\supseteq\dots$,
	where $N^{(j)}_i$ denote the $i$'th net after the algorithm has seen $x_1,\dots,x_j$.
	Furthermore, it holds that:	
	\begin{enumerate}\itemsep 0pt
		\item The algorithm is online, that is, once a point joins $N_{i}$, it remains there forever.
		\item For every $i,j$, we have $N^{(j)}_i\subseteq N^{(j+1)}_i$.
		\item For every $x,y\in N^{(j)}_{i}$, we have $d_{X}(x,y)>2^{i}$.
		\item For every $x_i$ and every $i$, there is $y\in N^{(j)}_{i}$ such that $d_{X}(x,y)<2^{i+1}$.
	\end{enumerate}
\end{lemma}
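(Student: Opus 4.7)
The plan is to run a greedy insert-only rule that decides, for each new point $x_j$, the highest scale it should reach. When $x_j$ arrives I would compute
\[
i^{*}(x_j) \;=\; \min\bigl\{\,i\in\Z \;:\; \exists\, y\in N_i^{(j-1)} \text{ with } d_X(y,x_j)\le 2^i\,\bigr\},
\]
with the convention $\min\emptyset=+\infty$, and then insert $x_j$ into every $N_i$ with $i<i^{*}(x_j)$, and into no $N_i$ with $i\ge i^{*}(x_j)$. Points are never removed, so claims (1) and (2) are immediate. The very first point $x_1$ sees every net empty, so $i^{*}(x_1)=+\infty$ and $x_1$ joins every net; consequently $x_1\in N_i^{(k-1)}$ for all $k\ge 2$ and all $i$, and $i^{*}(x_k)$ is a finite integer (bounded above because $d_X(x_1,x_k)\le 2^i$ once $i$ is large enough, and bounded below because $d_X(y,x_k)>0$ for $y\ne x_k$). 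The nesting $N_{i+1}^{(j)}\subseteq N_i^{(j)}$ is then immediate: $i+1<i^{*}(x_j)$ forces $i<i^{*}(x_j)$.

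Next I would verify the separation (3). Fix $i$ and take two distinct points $x_j,x_k\in N_i$ with $j<k$. Since $x_k$ was added to $N_i$, we have $i<i^{*}(x_k)$, and the very definition of $i^{*}(x_k)$ then says that \emph{no} element of $N_i^{(k-1)}$ lies within distance $\le 2^i$ of $x_k$; as $x_j\in N_i^{(k-1)}$, this yields $d_X(x_j,x_k)>2^i$.

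The main work is the covering property (4), which I expect to be the only delicate step. Fix a revealed $x_k$ and a scale $i$, and fix any current time $j\ge k$. If $i<i^{*}(x_k)$ then $x_k\in N_i^{(j)}$ itself and there is nothing to prove, so assume $i\ge i^{*}(x_k)$ and induct on $i$. The base case $i=i^{*}(x_k)$ is handed to us by the very definition: some $y\in N_{i^{*}}^{(k-1)}\subseteq N_{i^{*}}^{(j)}$ satisfies $d_X(y,x_k)\le 2^{i^{*}}<2^{i^{*}+1}$. For the inductive step, suppose $y^{(i)}\in N_i^{(j)}$ with $d_X(y^{(i)},x_k)<2^{i+1}$. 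If $y^{(i)}\in N_{i+1}^{(j)}$ we are done since $2^{i+1}<2^{i+2}$; otherwise $y^{(i)}\in N_i\setminus N_{i+1}$ forces $i^{*}(y^{(i)})=i+1$, so at the time $y^{(i)}$ was inserted there existed a witness $z\in N_{i+1}$ with $d_X(z,y^{(i)})\le 2^{i+1}$. Then $z\in N_{i+1}^{(j)}$ by monotonicity, and the triangle inequality gives $d_X(z,x_k)\le 2^{i+1}+d_X(y^{(i)},x_k)<2^{i+2}$, closing the induction. The only conceptual risk is that the greedy rule drops $x_j$ at all scales $\ge i^{*}(x_j)$, which \emph{a priori} could leave high scales uncovered; but the observation above shows that a witness can only fail to rise from level $i$ to level $i+1$ when level $i+1$ already contains some $z$ covering $y^{(i)}$ within $2^{i+1}$, which is exactly the slack needed to propagate coverage upward by the triangle inequality.
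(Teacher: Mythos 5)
Your construction (insert $x_j$ into all nets $N_i$ with $i$ below the first scale at which some existing net point lies within distance $2^i$) is exactly the paper's rule, with your $i^{*}(x_j)$ equal to the paper's $\tilde{i}+1$, and your induction proving property~(4)---propagating a witness upward one level at a time, finding a witness at level $i+1$ for the already-placed witness at level $i$ and applying the triangle inequality---is the same argument the paper uses. The proof is correct and essentially identical to the paper's.
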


\begin{proof}
	The construction is incremental: We need to show how to maintain all required properties after the arrival of each point. Consider the sequence $x_1,x_2,\ldots$ of metric points. The first point, $x_{1}$, belongs to every net. Denote by $N_{i}^{(j)}$ the $i$'th-net after we
	have seen $x_{1},\dots,x_{j}$.
	Next $x_{j+1}$ arrives. Let $\tilde{i}$
	be the maximum number such that $d_{X}(x_{j+1},N_{i}^{(j)})>2^{i}$
	for every $i\le\tilde{i}$.
	In other words, $d_X(x_{j+1},N_{\tilde{i}+1}^{j})\le 2^{\tilde{i}+1}$, and for every $i\le\tilde{i}$ there is no net point in $N_{i}^{(j)}$ at distance $2^i$ from $x_{j+1}$.
	We set
	\[
	N_{i}^{(j+1)}=\begin{cases}
		N_{i}^{(j)} & i>\tilde{i} ,\\
		N_{i}^{(j)}\cup\{x_{j+1}\} & i\le\tilde{i} .
	\end{cases}
	\]
	That is, we added $x_{j+1}$ to all the nets $N_{\tilde{i}}^{(j+1)},N_{\tilde{i}-1}^{(j+1)},N_{\tilde{i}-2}^{(j+1)},\dots$ .
	Clearly, the nets are nested at every point in time (and could be
	stored with $O(n)$ machine words). In addition, by definition, for
	every $x,y\in N_{i}^{(j+1)}$, $d_{X}(x,y)>2^{i}$. We prove the final property of the nets by induction on $j$ and $i$. Consider $x_{j+1}$,
	for $i\le\tilde{i}$ it holds that $d_{X}(x_{j+1},N_{i}^{(j+1)})=0<2^{i+1}$.
	For $i=\tilde{i}+1$, it holds that $d_{X}(x_{j+1},N_{i}^{(j+1)})=d_{X}(x_{j+1},N_{i}^{(j)})\le2^{i}<2^{i+1}$.
	Assume that the hypothesis holds for $i$, and consider $i+1$. There is
	a point $x_{j'}\in N_{i}^{(j+1)}$ such that $d_{X}(x_{j+1},x_{j'})<2^{i+1}$.
	If $x_{j'}\in N_{i+1}^{(j+1)}$, then clearly $d_{X}(x_{j+1},N_{i+1}^{(j+1)})\le d_{X}(x_{j+1},x_{j'})<2^{i+1}<2^{i+2}$.
	Else, $N_{i}^{(j')}$is the maximal net $x_{j'}$ belongs to, and hence
	there is a point $x_{j''}\in N_{i+1}^{(j')}$ for which $d_{X}(x_{j'},x_{j''})<2^{i+1}$.
	We conclude
	\begin{align*}
		d_{X}(x_{j+1},N_{i+1}^{(j+1)}) & \le d_{X}(x_{j+1},x_{j''})\le d_{X}(x_{j+1},x_{j'})+d_{X}(x_{j'},x_{j''})<2^{i+1}+2^{i+1}=2^{i+2}~,
	\end{align*}
	as required.
\end{proof}

\subsection{Online Low Diameter Decompositions}
This subsection is devoted to constructing low diameter decomposition for doubling spaces:
\begin{theorem}\label{thm:LDDonline}
	Consider a sequence of metric points $x_1,x_2,\dots$ arriving in an online fashion, given by an oblivious adversary. Given a parameter $\Delta>0$, there is an online algorithm sampling a partition $\cP$ of $X=\{x_1,x_2,\dots\}$ in an online fashion (i.e. once a points joins a cluster it remains there forever) such that:
	\begin{enumerate}
		\item Every cluster has diameter at most $\Delta$.
		\item For every $x_j$ and $R>0$, the probability that the points of the ball $B_X(x_j,R)$ belong to different clusters is bounded by $O(\ddim_j)\cdot\frac{R}{\Delta}$, where $\ddim_j$ is the doubling dimension of the metric space induced by $x_1,\dots,x_j$ (not required to be known in advance).
	\end{enumerate}
	In addition, the ball $B(x_1,\frac{\Delta}{4})$ is guaranteed to be fully contained in a single cluster.
\end{theorem}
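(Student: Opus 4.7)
The plan is to run an online version of Bartal's classical ball-growing decomposition, with the key modification that the exponential rate used to sample each center's radius is tuned to the doubling dimension of the observed prefix rather than to $\log n$. When $x_j$ arrives, the algorithm samples $R_j$ from a rate-$\lambda_j$ exponential distribution truncated at $1/2$, where $\lambda_j = c\cdot \ddim_j$ for an absolute constant $c$; recall that $\ddim_j$ can be $O(1)$-approximated in polynomial time by \cite[Thm.~9.1]{HM06}. A floor on $\lambda_1$ (or resampling) secures $R_1\in[1/4,1/2]$ for the home-ball condition. Upon arrival of $x_k$, the algorithm scans $x_1,x_2,\dots,x_{k-1}$ in arrival order and places $x_k$ in the first cluster $C_{j^*}$ whose growth ball $B_X(x_{j^*},R_{j^*}\Delta)$ contains $x_k$; if no such $j^*$ exists, $x_k$ opens a new cluster $C_k$ with the freshly sampled $R_k$. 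Cluster memberships are irrevocable, so the algorithm is online. The truncation gives $C_j\subseteq B_X(x_j,\Delta/2)$, hence $\diam(C_j)\le\Delta$; and $R_1\ge 1/4$ ensures every arriving $z\in B_X(x_1,\Delta/4)$ passes the test for $j=1$ and joins $C_1$.

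For the padding bound, fix $B=B_X(x_j,R)$ and sort the candidate centers by distance to $B$: $y_1,y_2,\dots$ with $\delta_i\Delta := d_X(y_i,B)$ non-decreasing. Let $\cE_i$ be the event that $y_i$ is the first center in this order to reach $B$, i.e., $R_{y_i}\ge \delta_i$ while $R_{y_\ell}<\delta_\ell$ for all $\ell<i$. The $\cE_i$'s are disjoint, and $\bigcup_i\cE_i$ holds with probability~$1$ because $x_j\in B$ is itself a candidate with $\delta=0$. Under $\cE_i$, the ball $B$ is cut precisely when $R_{y_i}<\delta_i+2R/\Delta$; by memorylessness of the exponential combined with independence of the $R_{y_\ell}$'s across distinct centers,
\[
\Pr[B\text{ cut}\mid \cE_i]\;\le\; \lambda_{y_i}\cdot \tfrac{2R}{\Delta}
\]
in the interior regime $\delta_i+2R/\Delta\le 1/2$. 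Summing $\Pr[\cE_i]\cdot\Pr[B\text{ cut}\mid\cE_i]$ against $\sum_i\Pr[\cE_i]\le 1$ yields $\Pr[B\text{ cut}]\le \max_i\lambda_{y_i}\cdot 2R/\Delta$ in the clean case.

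To replace $\max_i\lambda_{y_i}$ by $O(\ddim_j)$, note that only centers with $\delta_i\le 1/2$ can contribute, so the relevant $y_i$'s lie in $B_X(x_j,R+\Delta/2)$. For such a center $y_i=x_k$ with $k\le j$, monotonicity of $\ddim_\cdot$ gives $\lambda_{y_i}\le c\cdot\ddim_j$ directly. For $k>j$ one couples $\lambda_k$ down to $c\cdot\ddim_j$---a smaller rate makes $R_{y_i}$ more diffuse, which in the interior regime only weakens the bound on $\Pr[B\text{ cut}\mid\cE_i]$ in a controlled way. The boundary centers with $\delta_i\in[1/2-2R/\Delta,1/2]$, where the memorylessness identity degrades, have $\Pr[\cE_i]$ exponentially suppressed (they require the atypical event $R_{y_i}\ge\delta_i\approx 1/2$); additionally, \Cref{lem:doubling_packing} caps the number of such boundary centers in $B_X(x_j,\Delta)$ by $2^{O(\ddim_j)}$, so their aggregate contribution absorbs into an additional $O(\ddim_j)\cdot R/\Delta$ term.

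The main obstacle is precisely this last reconciliation: the adaptive rate $\lambda_j$ and the truncation at $1/2$ interact nontrivially in the boundary regime, and a naive application of memorylessness fails there. Handling it requires a careful dual accounting---coupling arguments in the interior and packing bounds on the boundary---to ensure that the padding depends only on $\ddim_j$ and not on rates carried by later arrivals. Once this is made rigorous, the overall bound $\Pr[B\text{ cut}]\le O(\ddim_j)\cdot R/\Delta$ follows from the Bartal-style ball-growing analysis, completing the proof.
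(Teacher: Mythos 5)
Your high-level plan (online ball-growing with doubling-dimension-tuned exponential rates, analyzed via a ``first center to reach $B$'' event decomposition) is in the same family as the paper's proof, but two structural choices the paper makes are exactly what close the holes you flag, and without them your argument cannot be completed.

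\textbf{Centers must be a net, not arbitrary arrivals.} You let cluster centers be the uncovered arrivals and sample $R_j\sim \mathrm{Exp}(\lambda_j)$ truncated only from above at $1/2$. Since the exponential puts mass near $0$, radii can be arbitrarily small, so the set of cluster centers has no minimum pairwise distance. Your later appeal to Lemma~\ref{lem:doubling_packing} to bound ``boundary centers in $B_X(x_j,\Delta)$ by $2^{O(\ddim_j)}$'' is therefore unjustified --- the packing lemma needs a lower bound on pairwise distance among the candidate centers, and your construction provides none. Moreover, the candidate centers are a random set (whether $x_k$ becomes a center depends on earlier radii), which contaminates the definition of your events $\cE_i$ with an extra layer of conditioning that you do not handle. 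The paper avoids both issues by first maintaining a deterministic online nested net $N$ (Lemma~\ref{lem:nets}) with pairwise separation $\Delta/8$ and coverage $\Delta/4$, and letting only net points grow balls. Packing then cleanly bounds the number of net points in the relevant region by $2^{O(\ddim)}$, which is what absorbs the truncation-correction terms.

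\textbf{Radii need a lower bound, which is what keeps late arrivals out of the analysis.} The paper samples $r_q\sim\Texp_{[1,2]}(\lambda_q)$ and uses radius $\frac{\Delta}{4}r_q\in[\Delta/4,\Delta/2]$. The lower bound $\Delta/4$ matches the net coverage radius $\Delta/4$, which guarantees that every arriving point is immediately covered by a net point that has already arrived. Consequently, when you analyze a ball $B_X(x_k,R)$, the ``first center to reach $B$'' (the paper's $\cF_j$) is guaranteed to occur for some net point $x_j$ with $j\le k$, and since $\lambda_j$ is monotone non-decreasing, all relevant rates are $\le \lambda_k = O(\ddim_k)$. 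This is precisely what makes the final bound depend on $\ddim_j$ (the dimension of the prefix) rather than on rates of later arrivals. Your proposal has no lower bound on $R_j$, so a point can arrive, be covered by nobody, and open a tiny cluster; the first center to reach your ball $B$ could then arrive much later with a much larger $\lambda$. Your ``couple $\lambda_k$ down to $c\cdot\ddim_j$'' suggestion does not work: decreasing the rate makes the memorylessness upper bound $\lambda\cdot 2R/\Delta$ smaller, but the actual cut probability under the true (larger) rate does not decrease, so the coupling goes in the wrong direction. Finally, the non-memorylessness correction for a $[\theta_1,\theta_2]$-truncated exponential is not confined to a narrow boundary slab: in the paper's calculation the term $\frac{1}{e^{\lambda_j}-1}$ appears for \emph{every} net point, and it is the packing bound (available only because centers form a net) that sums these corrections to $O(1)$ times the main term. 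So the gap you correctly identify at the end of your write-up is not a technicality to be patched by ``dual accounting'' --- it requires the net construction and the $[\Delta/4,\Delta/2]$ radius window, which are the actual missing ingredients.
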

Property~2 above is called the \emph{padding property}.
\paragraph{Truncated Exponential Distributions.} Similarly to previous clustering algorithms (e.g. \cite{Bartal96,ABN11, AGGNT19, Fil19Approx,BFU20}), we will use a truncated exponential distribution. That is, exponential distribution conditioned on the event that the outcome lies in a certain interval.
The \emph{$[\theta_1,\theta_2]$-truncated exponential distribution} with
parameter $\lambda$ is denoted by $\Texp_{[\theta_1, \theta_2]}(\lambda)$, and
the density function is:
$f(y)= \frac{ \lambda\, e^{-\lambda\cdot y} }{e^{-\lambda \cdot \theta_1} - e^{-\lambda
		\cdot \theta_2}}$, for $y \in [\theta_1, \theta_2]$.

Points arrive one by one $x_1,\dots,x_j,\dots$ in an online fashion, and for each new point the distances to previous points are revealed. We will maintain a net $N$ with minimum pairwise distance at least $\frac\Delta8$ such that every point has a net point at distance at most $\frac\Delta4$.
The set $N$ will increase monotonically (once a point joins $N$, it will remain there forever). Such a set can be constructed greedily, see~\Cref{lem:nets}.
%Denote by $N^j$ the state of the net $N$ after the arrival of the point $x_j$.

We will maintain an estimate $\est^j_{\sddim}$ of the doubling dimension of the metric space induced by $\{x_1,\dots,x_j\}$. There is a polynomial-time algorithm, providing a constant approximation of the doubling dimension \cite{HM06}.
%Let $c_{\sddim}$ be the constant from the \cite{HM06} approximation algorithm.
Here $\est^j_{\sddim}$ will be a positive integer for every $j$. \footnote{By definition, every metric space with a single point has doubling dimension $0$, while every metric space with two different points has doubling dimension $1$. We ignore the issue of doubling dimension $0$, in any case, our algorithm will provide distortion (worst case) $O(1)$ w.r.t.\ $x_1$.}
In addition, we will ensure that the estimates are monotonically non-decreasing. That is, $\est^j_{\sddim}$ will be the maximum estimate returned by the \cite{HM06} algorithm on any prefix.
Note that it is possible that the doubling dimension of a sub-metric $\{x_1,\dots,x_k\}$ is larger than the doubling dimension of the metric itself $\{x_1,\dots,x_n\}$. However, the the doubling dimensions can decrease by at most a constant factor (for details, see e.g.\ the comment after Definition~3.2 in \cite{GK13}). We can (implicitly) blow up the \cite{HM06} estimate by this constant factor, and conclude the following:
\begin{claim}\label{clm:ddimEst}
	For every input prefix $\{x_1,\dots,x_k\}$ defining a sub-metric of doubling dimension $\ddim_k$, it holds that $\ddim_k\le\est^k_{\sddim}\le O(\ddim_k)$.
\end{claim}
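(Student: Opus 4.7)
The plan is to verify the two desired inequalities separately using two pre-existing ingredients: (i) the Har-Peled--Mendel algorithm~\cite{HM06} supplies, in polynomial time, an $O(1)$-approximation of the doubling dimension of any finite input metric; and (ii) the fact recalled from~\cite{GK13} (comment after Definition~3.2) that when one passes from a metric to a sub-metric, the doubling dimension can grow by at most a constant factor, or equivalently, enlarging the point set can only shrink the doubling dimension by a constant factor. Nothing else is needed — the statement is a bookkeeping claim about how the running maximum of HM06 calls behaves.

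First I would fix notation. Let $\alpha_j$ denote the estimate returned by HM06 on the sub-metric $\{x_1,\dots,x_j\}$. By rescaling $\alpha_j$ by a suitable absolute constant — this is precisely the parenthetical ``implicit blow-up by a constant factor'' mentioned in the excerpt — I may assume the rescaled estimate is \emph{one-sided}: $\ddim_j \le \alpha_j \le c_1 \ddim_j$ for some absolute $c_1 \ge 1$. The running estimate is then $\est^j_{\sddim} := \max_{i \le j} \alpha_i$, which is monotone non-decreasing in $j$ by construction, matching the property announced in the paragraph preceding the claim.

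For the lower bound $\ddim_k \le \est^k_{\sddim}$, I would simply take the $i = k$ term of the running maximum: $\est^k_{\sddim} \ge \alpha_k \ge \ddim_k$. For the upper bound, fix any $i \le k$. Since $\{x_1,\dots,x_i\} \subseteq \{x_1,\dots,x_k\}$, the sub-metric fact from~\cite{GK13} supplies an absolute constant $c_2$ with $\ddim_i \le c_2 \ddim_k$. Combined with the one-sided HM06 bound, $\alpha_i \le c_1 \ddim_i \le c_1 c_2 \ddim_k$. Taking the maximum over $i \le k$ gives $\est^k_{\sddim} \le c_1 c_2 \ddim_k = O(\ddim_k)$, which completes the proof.

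I do not expect a real obstacle. The only point needing care is the direction of the sub-metric inequality: it is the prefix dimensions $\ddim_i$ (for $i \le k$) that must be bounded by the current $\ddim_k$, not the reverse, and this is exactly the direction supplied by~\cite{GK13}. The second mild care is to verify that after rescaling, the running maximum genuinely dominates $\ddim_k$ at every time step $k$, which it does because the $i=k$ term alone is already an upper bound on $\ddim_k$.
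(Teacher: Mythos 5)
Your proof is correct and matches the paper's intended argument: the paper gives no separate formal proof of this claim, stating it as an immediate consequence of the preceding discussion (the constant-factor estimate of \cite{HM06}, the running maximum over prefixes, and the fact from \cite{GK13} that a sub-metric's doubling dimension exceeds that of the ambient metric by at most a constant factor), and your write-up simply makes that bookkeeping explicit in both directions. The one small expository discrepancy is that the paper attributes the ``implicit blow-up'' to the \cite{GK13} sub-metric constant, whereas you use it to make the \cite{HM06} estimate one-sided; both constants are absorbed into the $O(\cdot)$ in the upper bound, so this choice does not affect correctness.
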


For every newly arriving point $x_j$ joining $N$, let $\lambda_j=4\cdot\est^j_{\sddim}$. We will sample a radius parameter $r_j\sim  \Texp_{[1,2]}(\lambda_j)$.
Note that all the sampled radii for all metric points are in $[1,2]$, while the parameter $\lambda_j$ of the truncated exponential distribution is monotonically non-decreasing.
We define a partition $\cP^j$ as follows: $P^j_{1}=B_X(x_1,\frac{\Delta}{4}\cdot r_1)$ is all the previously revealed points at distance at most $\frac{\Delta}{4}\cdot r_1$ from $x_1$. Generally,
\begin{equation}
	P^j_{q}=\begin{cases}
		\emptyset & \mbox{\rm if } x_{q}\notin N,\\
		B_{X}(x_{q},\frac{\Delta}{4}\cdot r_{q})\setminus\bigcup_{q'<q}P^j_{q'} &  \mbox{\rm if } x_{q}\in N.
	\end{cases}\label{eq:PartitionDef}
\end{equation}
In words, each net point $x_q\in N$ creates a cluster which is the ball of radius $\frac{\Delta}{4}\cdot r_{q}$ around $x_q$, minus all the previously created clusters.
Note that the cluster center $x_q$ does not necessarily belong to the cluster $P^j_{q}$ it defines.
Furthermore, the clusters defined w.r.t.\ the entire metric space $(X,d_X)$, even though so far we have seen only $j$ points. In particular, if a point joins a cluster, it will stay there forever.
The diameter of each cluster is at most $2\cdot \frac\Delta4\cdot\max_q r_q\leq 2\cdot \frac\Delta4\cdot2=\Delta$.
The ball of radius $\frac\Delta4$ around $x_1$ is fully contained in the cluster $P_1$.
Finally, every (revealed) point $x_j$ belong to some cluster in $\cP^j$. Indeed, this is as $N$ contains a point $x_q$ at distance $\frac{\Delta}{4}$ from $x_j$, thus $x_j$ will join $P^j_{q}$ if it has not joined any previous cluster.

Next, we prove the padding property. Consider a point $x_{k}$, and parameter $R>0$ such that $R\le \frac{\Delta}{2\cdot\lambda_k}$,
and let $B=B_X(x_k,R)$.
Note that for larger parameter $R$ there is nothing to prove, as clearly the probability that the points in $B$ belong to different clusters is bounded by $O(\ddim_k)\cdot\frac{R}{\Delta}\ge 1$.

Denote by $\cQ_j$ the event that $\frac{\Delta}{4}\cdot r_j\ge d_X(x_j,x_k)-R$. Note that if $\cQ_j$ has not occurred, then no point of $B$ will join $P_j$.
Denote by $\cF_{j}=\cQ_{j}\cap\bigcap_{j'<j}\overline{\cQ_{j'}}$
the event that $j$ is the first index such that $\cQ_{j}$ occurred.
Denote by $\mathcal{C}_j$ the event that $\mathcal{F}_j$ occurred and $\frac{\Delta}{4}\cdot r_{j}<d_{X}(x_{j},x_{k})+R$. That is, for every $j'<j$, $\cQ_{j'}$ has not occurred, and $\frac{\Delta}{4}\cdot r_{j}\in\left[d_{X}(x_{j},x_{k})-R,d_{X}(x_{j},x_{k})+R\right)$.
First, we show that if none of the events $\{\cC_j\}_j$ occurred, then $B$ is contained in a single cluster.
\begin{claim}\label{clm:CuttingEventsUnion}
	Denote by $\Xi$ the event that the points of the ball $B$ belong to different clusters. Then $\Xi\subseteq\bigcup_{j\ge1}\cC_j$.
\end{claim}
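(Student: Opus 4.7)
The plan is to show that whenever the ball $B$ ends up split across distinct clusters (the event $\Xi$), some ``borderline'' event $\cC_j$ must have occurred. First I will reinterpret the events $\cQ_j$ geometrically. The condition $\tfrac{\Delta}{4}\cdot r_j < d_X(x_j,x_k)-R$ (i.e.\ $\overline{\cQ_j}$) is exactly the statement that $B_X(x_j,\tfrac{\Delta}{4}\cdot r_j)\cap B=\emptyset$, so the cluster $P_j$ cannot capture any point of $B$. Symmetrically, if $\tfrac{\Delta}{4}\cdot r_j\ge d_X(x_j,x_k)+R$ then by the triangle inequality $B\subseteq B_X(x_j,\tfrac{\Delta}{4}\cdot r_j)$, so every point of $B$ still unclustered will join $P_j$. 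These two observations put $\cC_j$ precisely in the ``boundary regime'' where the cluster $P_j$ may capture some, but not necessarily all, points of $B$.

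Next I will identify the critical index. Every revealed point of $B$ eventually belongs to some cluster, and each cluster is contained in its defining ball $B_X(x_j,\tfrac{\Delta}{4}\cdot r_j)$, so at least one event $\cQ_j$ (for some net point $x_j\in N$) must hold. Let $j^\star$ be the smallest such $j$; by definition $\cF_{j^\star}$ holds, so it suffices to show that, assuming $\Xi$, the event $\cC_{j^\star}$ also holds.

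Assume $\Xi$ and, toward contradiction, that $\cC_{j^\star}$ fails. Since $\cQ_{j^\star}$ holds, this failure forces $\tfrac{\Delta}{4}\cdot r_{j^\star}\ge d_X(x_{j^\star},x_k)+R$, hence by the geometric translation above $B\subseteq B_X(x_{j^\star},\tfrac{\Delta}{4}\cdot r_{j^\star})$. By the minimality of $j^\star$, every prior event $\cQ_{j'}$ (for $j'<j^\star$) fails, so $B$ is disjoint from each $B_X(x_{j'},\tfrac{\Delta}{4}\cdot r_{j'})$ and therefore from $\bigcup_{j'<j^\star}P_{j'}$. Plugging both facts into the definition \eqref{eq:PartitionDef}, namely $P_{j^\star}=B_X(x_{j^\star},\tfrac{\Delta}{4}\cdot r_{j^\star})\setminus\bigcup_{j'<j^\star}P_{j'}$, yields $B\subseteq P_{j^\star}$, contradicting $\Xi$.

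The only delicate point, and where I expect to spend the most care, is bookkeeping: making sure the indices range only over net points (so that $r_j$ and $P_j$ are defined), confirming that every revealed point of $B$ really does lie in some $P_j$ so that $j^\star$ in Step~2 exists, and checking that the set subtraction in the definition of $P_{j^\star}$ is harmless once $B$ has been shown to avoid every earlier defining ball.
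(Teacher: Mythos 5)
Your proof is correct and takes essentially the same route as the paper: you argue by contradiction from $\Xi$ while the paper proves the contrapositive (if no $\cC_j$ occurs then $B$ lies in a single cluster), but both identify the critical index $j^\star$ as the one realizing $\cF_{j^\star}$, use the triangle inequality to show $B$ misses all earlier clusters, and use $\cF_{j^\star}\cap\overline{\cC_{j^\star}}$ to place $B$ inside $B_X(x_{j^\star},\tfrac{\Delta}{4}r_{j^\star})$. One minor overclaim: $\overline{\cQ_j}$ is not \emph{exactly} equivalent to $B_X(x_j,\tfrac{\Delta}{4}r_j)\cap B=\emptyset$ in a general (possibly sparse) metric space — only the implication $\overline{\cQ_j}\Rightarrow$ disjointness holds — but you only ever use that direction and its contrapositive, so the argument is unaffected.
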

\begin{proof}
	Assume that none of the events $\{\cC_j\}_j$ has occurred. We want to show that all the points in $B$ belong to a single cluster.
	Let $j$ be the index such that $\cF_j$ occurred. For every $j'<j$, it holds that $\frac{\Delta}{4}\cdot r_{j'}<d_{X}(x_{j'},x_{k})-R$. In particular, for every $y\in B$, we have
	$d_{X}(x_{j'},y)\ge d_{X}(x_{j'},x_{k})-d_{X}(x_{k},y)\ge d_{X}(x_{j'},x_{k})-R>\frac{\Delta}{4}\cdot r_{j'}$. It follows that $B\cap P_{j'}=\emptyset$.
	As $\cF_j$ occurred but $\cC_j$ did not, then it holds that $\frac{\Delta}{4}\cdot r_{j}\ge d_{X}(x_{j},x_{k})+R$. In particular, for every $y\in B$, $d_{X}(x_{j},y)\le d_{X}(x_{j},x_{k})+d_{X}(x_{k},y)\le d_{X}(x_{j},x_{k})+R\le\frac{\Delta}{4}\cdot r_{j}$. It follows that
	$y\in P_{j}=B_{X}(x_{j},\frac{\Delta}{4}\cdot r_{j})\setminus\bigcup_{j'<j}P_{j'}$, and thus $B\subseteq P_j$. Hence $\Xi$ has not occurred, as required.	
\end{proof}

Let $N'=\{x_i\in N : i\le k\mbox{ and } d_X(x_i,x_k)\le \frac{\Delta}{2}+R\}$ be all the net points arriving before $x_k$, at distance at most $\frac{\Delta}{2}+R$ from $x_k$.
Since $N$ is a net, there is a point $x_q\in N$  such that $q\leq k$ and $d_X(x_i,x_k)\leq \frac{\Delta}{4}$; then we have $x_q\in N'$ and in particular $N'$ is nonempty.
Note that $\cF_j$ will necessarily occur for some $x_j\in N'$.
Indeed, if $x_{k}\notin \bigcup_{j<q}P_{j}$, then $x_{k}$ will join $P_{q}$.
Furthermore, for every net point $x_q\in \{x_1,\dots,x_k\}\setminus N'$ it holds that $\frac{\Delta}{4}\cdot r_{q}\le\frac{\Delta}{2}<d_{X}(x_{q},x_{k})-R$, and thus $\cF_q$ will not occur.
It follows that for every $x_j\notin N'$, $\mathcal{F}_j=\mathcal{C}_j=\emptyset$.

\begin{claim}
	Let $\gamma=\frac{8R}{\Delta}$. Then for every $j$, we have
	$\Pr\left[\mathcal{C}_{j}\right]\le\left(1-e^{-\lambda_j\cdot\gamma}\right)\left(\Pr\left[\mathcal{F}_{j}\right]+\frac{1}{e^{\lambda_j}-1}\right)$.
\end{claim}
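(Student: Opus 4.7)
Let $\alpha = \frac{4(d_X(x_j, x_k) - R)}{\Delta}$ so that $\cQ_j = \{r_j \geq \alpha\}$ and $\cC_j = \cF_j \cap \{r_j < \alpha + \gamma\}$. Setting $\cG_j = \bigcap_{j' < j} \overline{\cQ_{j'}}$ (the event that no earlier radius triggered $\cQ$), we have $\cF_j = \cG_j \cap \{r_j \geq \alpha\}$ and $\cC_j = \cG_j \cap \{r_j \in [\alpha, \alpha + \gamma)\}$. Since the radii are sampled mutually independently and $\cG_j$ depends only on $\{r_{j'}\}_{j' < j}$, it is independent of $r_j$; hence
\[
\Pr[\cF_j] = \Pr[\cG_j] \cdot \Pr[r_j \geq \alpha], \qquad \Pr[\cC_j] = \Pr[\cG_j] \cdot \Pr[r_j \in [\alpha, \alpha+\gamma)].
\]
The plan is to establish the one-variable inequality
\[
\Pr[r_j \in [\alpha, \alpha+\gamma)] \;\leq\; (1-e^{-\lambda_j \gamma})\left(\Pr[r_j \geq \alpha] + \frac{1}{e^{\lambda_j}-1}\right) \qquad (\star)
\]
under $r_j \sim \Texp_{[1,2]}(\lambda_j)$, and then multiply $(\star)$ by $\Pr[\cG_j]$, bounding $\Pr[\cG_j] \leq 1$ on the additive term only, to recover the claim.

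The heart of $(\star)$ is the central regime $1 \leq \alpha \leq \alpha + \gamma \leq 2$, where the truncated exponential density $f(y) = \frac{\lambda_j e^{-\lambda_j y}}{e^{-\lambda_j} - e^{-2\lambda_j}}$ yields
\[
\Pr[r_j \in [\alpha, \alpha + \gamma)] = \frac{e^{-\lambda_j \alpha}(1 - e^{-\lambda_j \gamma})}{e^{-\lambda_j} - e^{-2\lambda_j}}, \qquad \Pr[r_j \geq \alpha] = \frac{e^{-\lambda_j \alpha} - e^{-2\lambda_j}}{e^{-\lambda_j} - e^{-2\lambda_j}}.
\]
Splitting $e^{-\lambda_j \alpha} = (e^{-\lambda_j \alpha} - e^{-2\lambda_j}) + e^{-2\lambda_j}$ in the first numerator and using the identity $\frac{e^{-2\lambda_j}}{e^{-\lambda_j} - e^{-2\lambda_j}} = \frac{1}{e^{\lambda_j} - 1}$ turns $(\star)$ into an \emph{equality} in this range. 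This is the one calculation carrying the whole claim; the extra additive $\frac{1}{e^{\lambda_j}-1}$ is exactly the mass of an untruncated exponential past $y=2$, normalized by $\Pr[r_j \in [1,2]]$, so it compensates for the failure of the memoryless property under truncation.

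It remains to check $(\star)$ in the truncation-boundary regimes, which is the main piece of bookkeeping but routine. If $\alpha \geq 2$ both sides vanish; if $\alpha + \gamma \leq 1$ the left side vanishes; if $\alpha \leq 1 \leq \alpha + \gamma \leq 2$ then $\Pr[r_j \geq \alpha] = 1$ and $\Pr[r_j \in [\alpha, \alpha+\gamma)] = \frac{1 - e^{-\lambda_j(\alpha+\gamma-1)}}{1 - e^{-\lambda_j}}$, so $(\star)$ reduces to $\alpha \leq 1$; if $1 \leq \alpha \leq 2 \leq \alpha + \gamma$ then $\Pr[r_j \in [\alpha, \alpha+\gamma)] = \Pr[r_j \geq \alpha] = \frac{e^{\lambda_j(2-\alpha)}-1}{e^{\lambda_j}-1}$, and $(\star)$ reduces to $2 - \alpha \leq \gamma$; finally, if $\alpha \leq 1$ and $\alpha + \gamma > 2$ (forcing $\gamma > 1$), the right side of $(\star)$ equals $\frac{1 - e^{-\lambda_j \gamma}}{1 - e^{-\lambda_j}} \geq 1 \geq \Pr[r_j \in [\alpha, \alpha+\gamma)]$. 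Each case reduces to a one-line exponential inequality.
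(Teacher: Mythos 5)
Your proof is correct and takes essentially the same approach as the paper: reduce to a one-variable statement about $r_j\sim\Texp_{[1,2]}(\lambda_j)$ via independence from the earlier radii, compute the truncated-exponential probabilities, and split off the $e^{-2\lambda_j}$ normalizer to produce the additive $\frac{1}{e^{\lambda_j}-1}$ term (the paper consolidates your boundary cases by setting $\tilde\rho=\max\{\rho,1\}$ and using $x_j\in N'\Rightarrow\rho\le 2$, but the accounting is identical). One small slip: when $\alpha\geq 2$ only the left side of $(\star)$ vanishes --- the right side still carries the positive term $(1-e^{-\lambda_j\gamma})/(e^{\lambda_j}-1)$ --- though the inequality of course still holds trivially.
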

\begin{proof}
	If one of the events $\{\cQ_{j'}\}_{j'<j}$ has occurred, then $\mathcal{C}_j=\emptyset$ and we are done. We thus can assume that none of them occurred.
	Similarly, we can assume that $x_j\in N'$ (as otherwise $\Pr[\mathcal{C}_j]=0$).
	Let $\rho=\frac{4}{\Delta}\cdot\left(d_{X}(x_{j},x_{k})-R\right)$.
	Note that $\rho$ is the minimal value such that if $r_j\ge\rho$, then $\cF_j$ will occur.
	As $x_j\in N'$, necessarily $\rho\le\frac{4}{\Delta}\cdot\left(\frac{\Delta}{2}+R-R\right)=2$.
	Set $\tilde{\rho}=\max\{\rho,1\}$.
	It holds that
	\[
	\Pr\left[\mathcal{F}_{j}\right]=\Pr\left[r_{j}\ge\rho\right]=\Pr\left[r_{j}\ge\tilde{\rho}\right]=\int_{\tilde{\rho}}^{2}\frac{\lambda_{j}\cdot e^{-\lambda_{j}\cdot y}}{e^{-\lambda_{j}}-e^{-2\lambda_{j}}}dy=\frac{e^{-\lambda_{j}\cdot\tilde{\rho}}-e^{-2\lambda_{j}}}{e^{-2\lambda_{j}}-e^{-2\lambda_{j}}}~.
	\]
	Event $\cC_j$ occurs if and only if $r_{j}\ge\rho$, and $r_{j}<\frac{4}{\Delta}\cdot\left(d_{X}(x_{j},x_{k})+R\right)=\rho+\frac{4}{\Delta}\cdot2R=\rho+\gamma$.
	It follows that
	\begin{align*}
		\Pr\left[\mathcal{C}_{j}\right]=\Pr\left[\rho\le r_{j}<\rho+\gamma\right] & \le\Pr\left[\tilde{\rho}\le r_{j}<\tilde{\rho}+\gamma\right]\\
		& =\int_{\tilde{\rho}}^{\min\left\{ 2,\tilde{\rho}+\gamma\right\} }\frac{\lambda_{j}\cdot e^{-\lambda_{j}\cdot y}}{e^{-\lambda_{j}}-e^{-2\cdot\lambda_{j}}}dy\\
		& \le\frac{e^{-\lambda_{j}\cdot\tilde{\rho}}-e^{-\lambda_{j}\cdot(\tilde{\rho}+\gamma)}}{e^{-\lambda_{j}}-e^{-2\cdot\lambda_{j}}}\\
		& =\left(1-e^{-\lambda_{j}\cdot\gamma}\right)\cdot\frac{e^{-\lambda_{j}\cdot\tilde{\rho}}}{e^{-\lambda_{j}}-e^{-2\cdot\lambda_{j}}}\\
		& =\left(1-e^{-\lambda_{j}\cdot\gamma}\right)\cdot\left(\Pr\left[\mathcal{F}_{j}\right]+\frac{e^{-2\cdot\lambda_{j}}}{e^{-\lambda_{j}}-e^{-2\cdot\lambda_{j}}}\right)\\
		& =\left(1-e^{-\lambda_{j}\cdot\gamma}\right)\cdot\left(\Pr\left[\mathcal{F}_{j}\right]+\frac{1}{e^{\lambda_{j}}-1}\right)~.
		\qedhere\end{align*}
\end{proof}

The probability that at least one of the events $\{\cC_j\}_{j}$ occurred is thus
\begin{align}
	\Pr\left[\bigcup_{j}\mathcal{C}_{j}\right]=\sum_{x_{j}\in N_{v}}\Pr\left[\mathcal{C}_{j}\right] & \le\sum_{x_{j}\in N'}\left(1-e^{-\lambda_{j}\cdot\gamma}\right)\cdot\left(\Pr\left[\mathcal{F}_{j}\right]+\frac{1}{e^{\lambda_{j}}-1}\right)\nonumber\\
	& \le\left(1-e^{-\lambda_{k}\cdot\gamma}\right)\cdot\left(1+\sum_{x_{j}\in N'}\frac{1}{e^{\lambda_{j}}-1}\right)~,\label{eq:CutScale-i}
\end{align}
where the second inequality holds as $\lambda_j$ is monotonically non-decreasing in $j$, and the events $\{{\cal{F}}_j\}_{x_j\in N'}$ are mutually disjoint.
%Need to show $\sum_{x_{j}\in N'}\frac{1}{e^{\lambda_{j}}-1}\le e^{-\frac{1}{2}}$.
Denote by ${\cal N}_{s}=\left\{x_j\in N' : \est^j_{\sddim}=s\right\}$ the subset of net points in
$N'$ that used $s$ as their doubling dimension estimate.
By the packing property (\Cref{lem:doubling_packing}), as all the points in ${\cal N}_{s}$ have pairwise distances at least $\frac{\Delta}{8}$, contained in a ball of radius $\frac{\Delta}{4}\cdot2+R<\Delta$ centered at $x_q$, and lie in a space with doubling dimension at most $\ddim_j\le\est^j_{\sddim}=s$, it holds that
\[
|{\cal N}_{s}|
\le 2^{s\cdot\left\lceil \log\frac{2\Delta}{\nicefrac{\Delta}{8}}\right\rceil }
=2^{4s}=e^{s\cdot\ln(16)}.
\]
Denote by $\tilde{\lambda}_s=4s$ the $\Texp$ parameter used by all the points in ${\cal N}_{s}$.
We have
\[
\sum_{x_{j}\in N'}\frac{1}{e^{\lambda_{j}}-1}=\sum_{s\ge1}\sum_{x_{j}\in{\cal N}_{s}}\frac{1}{e^{\tilde{\lambda}_{s}}-1}=\sum_{s\ge1}\frac{|{\cal N}_{s}|}{e^{\tilde{\lambda}_{s}}-1}\le\sum_{s\ge1}\frac{e^{s\cdot\ln(16)}}{e^{\tilde{\lambda}_{s}}-1}~.
\]
Note that
\[
\frac{e^{s\cdot\ln(16)}}{e^{\tilde{\lambda}_{s}}-1}=\frac{e^{-s}\cdot e^{s\cdot\ln(16e)}}{e^{\tilde{\lambda}_{s}}-1}\overset{(*)}{<}\frac{e^{-s}\cdot(e^{\tilde{\lambda}_{s}}-1)}{e^{\tilde{\lambda}_{s}}-1}=e^{-s}~,
\]
where the inequality $^{(*)}$ holds as $\tilde{\lambda}_{s}=4s$.
It follows that,
\[
\sum_{x_{j}\in N'}\frac{1}{e^{\lambda_{j}}-1}
\le\sum_{s\ge1}e^{-s}
=\frac{1}{e-1}< e^{-\frac12}
\le e^{-\lambda_{k}\cdot\gamma}.
\]
Continuing from \Cref{eq:CutScale-i}, and using \Cref{clm:ddimEst}, we have
\begin{align}
	\Pr\left[\bigcup_{j}\mathcal{C}_{j}\right]
	&=\sum_{x_{j}\in N_{v}}\Pr\left[\mathcal{C}_{j}\right]
	\le\left(1-e^{-\lambda_{k}\cdot\gamma}\right)\cdot\left(1+e^{-\lambda_{k}\cdot\gamma}\right)\nonumber\\
	& =1-e^{-2\gamma\cdot\lambda_{k}}\le2\gamma\cdot\lambda_{k}=\lambda_{k}\cdot\frac{8R}{\Delta}=O(\ddim_k)\cdot \frac{R}{\Delta}~.\label{eq:UBcutProb}
\end{align}
\Cref{thm:LDDonline} now follows by \Cref{clm:CuttingEventsUnion}.

\subsection{Online Low Diameter Decompositions for Euclidean Space}
This subsection is devoted to maintaining low-diameter decomposition for Euclidean spaces:
\begin{theorem}\label{thm:LDDonlineEuclidean}
	Consider a sequence of points $x_1,x_2\dots\in\R^d$ arriving in an online fashion, given by an oblivious adversary. Given a parameter $\Delta>0$, there is an online algorithm sampling a partition $\cP$ of $\R^d$ in an online fashion (i.e., once a points joins a cluster it remains there forever) such that:
	\begin{enumerate}
		\item Every cluster has diameter at most $\Delta$ w.r.t.\ $\ell_2$.
		\item For every pair of points $\{x_j,x_k\}$, the probability that $x_j$ and $x_k$ belong to different clusters is bounded by $O(\sqrt{d})\cdot\frac{\|x_j-x_k\|_2}{\Delta}$.
	\end{enumerate}
	In addition, the ball $B(x_1,\frac{\Delta}{4})$ is guaranteed to be fully contained in a single cluster.
\end{theorem}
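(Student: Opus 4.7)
The plan is to mirror the construction of \Cref{thm:LDDonline} essentially verbatim, changing only the parameter of the truncated exponential distribution and the final packing estimate. We greedily maintain an online $(\Delta/8)$-net $N$ via \Cref{lem:nets}; for each net point $x_q$ that appears, we sample a scaling $r_q\sim\Texp_{[1,2]}(\lambda)$, now with a \emph{dimension-independent-of-$\ddim$} parameter $\lambda=\Theta(\sqrt d)$ (in the general case, the doubling dimension of $\R^d$ would force $\lambda=\Theta(d)$); and define clusters $P_q=B(x_q,\tfrac{\Delta}{4}r_q)\setminus\bigcup_{q'<q}P_{q'}$. The three structural properties inherited from \Cref{thm:LDDonline} are immediate: monotonicity (a point never leaves its cluster), the diameter bound $\le 2\cdot\tfrac{\Delta}{4}\cdot\max r_q\le\Delta$ in $\ell_2$, and the containment $B(x_1,\tfrac{\Delta}{4})\subseteq P_1$ since $r_1\ge 1$.

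The probabilistic guarantee, however, is now about \emph{pair}-separation rather than ball-separation. Fix a pair $\{x_j,x_k\}$ with $R=\|x_j-x_k\|_2\le\Delta/(2\lambda)$ (otherwise the stated bound is trivial). Following the template of \Cref{thm:LDDonline}, introduce the events $\cF_q$ ``$q$ is the first net index such that $B(x_q,\tfrac{\Delta}{4}r_q)$ reaches at least one of $x_j,x_k$'' and $\cC_q$ ``$\cF_q$ holds and that ball contains exactly one of $x_j,x_k$''. The memoryless property of the truncated exponential gives the same key inequality
\[
\Pr[\cC_q]\le\bigl(1-e^{-\lambda\gamma}\bigr)\Bigl(\Pr[\cF_q]+\tfrac{1}{e^{\lambda}-1}\Bigr),
\]
with $\gamma=O(R/\Delta)$, where the window of length $\gamma$ captures the range of $r_q$-values for which the sphere of radius $\tfrac{\Delta}{4}r_q$ around $x_q$ crosses the segment $x_j x_k$ (its length is at most $|d(x_q,x_j)-d(x_q,x_k)|\le R$ by the reverse triangle inequality). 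Since the events $\{\cF_q\}$ are mutually exclusive, summing yields
\[
\Pr[\,x_j,x_k\text{ are separated}\,]\le\bigl(1-e^{-\lambda\gamma}\bigr)\Bigl(1+\tfrac{|N'|}{e^{\lambda}-1}\Bigr),
\]
where $N'$ denotes the set of contributing net centres.

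The decisive step is to control $|N'|/(e^{\lambda}-1)$ without paying $\lambda=\Omega(d)$. The Euclidean refinement is that a net centre $x_q$ can contribute to $\cC_q$ only if its ball boundary actually crosses the segment $x_jx_k$, which forces $x_q$ into a thin ``band'' around the perpendicular bisector of that segment; equivalently, the angular coordinate of $x_q$ relative to the direction $x_j-x_k$ is constrained to lie in a spherical cap of normalised measure $O(1/\sqrt d)$ on $S^{d-1}$ (by the standard Gaussian/spherical-cap estimate in $\R^d$). Combining this angular constraint with the ambient packing bound (\Cref{lem:doubling_packing} applied with $\ddim=\Theta(d)$) collapses the effective count from $e^{O(d)}$ to $e^{O(\sqrt d)}$. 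Plugging $\lambda=\Theta(\sqrt d)$ into the analysis of \Cref{thm:LDDonline}---where we replace the doubling-dimension buckets $\{\cN_s\}$ by a single bucket governed by the above angular restriction---bounds $|N'|/(e^\lambda-1)$ by a constant, and the Taylor estimate $1-e^{-\lambda\gamma}\le\lambda\gamma$ gives the desired pair-separation probability $O(\sqrt d)\cdot R/\Delta$.

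The main obstacle is precisely the Euclidean-specific packing-plus-angular lemma in the last paragraph: one must convert the codimension-one ``crosses the segment'' constraint into an $e^{O(\sqrt d)}$ bound on the number of viable net centres, rather than the $e^{O(d)}$ bound that a naive packing gives. Once this geometric estimate is in hand, the rest of the proof is essentially a line-by-line rewrite of that of \Cref{thm:LDDonline} with $\ddim$ replaced by $\sqrt d$.
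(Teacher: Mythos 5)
Your proposal departs substantially from the paper's proof, and it has a genuine gap at exactly the point you flag as ``the main obstacle.'' The paper does not try to salvage the ball-growing construction of \Cref{thm:LDDonline} in the Euclidean setting at all. Instead, it treats the Charikar--Chekuri--Goel--Guha--Plotkin decomposition~\cite{CCGGP98} as a black box: one samples centers $y_1,y_2,\ldots$ uniformly at random from $\R^d$ (not from a net) and sets $C_i=B_2(y_i,\Delta/2)\setminus\bigcup_{j<i}B_2(y_j,\Delta/2)$. The separation probability for a pair $x_j,x_k$ is exactly the ratio of the volume of the symmetric difference of $B_2(x_j,\Delta/2)$ and $B_2(x_k,\Delta/2)$ to the volume of their union, and the $O(\sqrt d)$ comes out of that volume computation. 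The paper's only new contribution in the Euclidean proof is the $B(x_1,\Delta/4)$ guarantee, obtained by pre-carving out a ball of random radius $r\sim \text{Unif}[\Delta/4,\Delta/2]$ around $x_1$ and applying \cite{CCGGP98} on the complement; a union bound then combines the two separation events.

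The gap in your plan is that the spherical-cap estimate you invoke has no analogue when the centers are deterministic net points rather than uniformly random points of $\R^d$. In \cite{CCGGP98}, the $O(1/\sqrt d)$ fraction arises by integrating the ``separating shell'' over the random location of the first center that touches the pair; the randomness over the \emph{position} of centers is essential. In the ball-growing construction, the centers are exactly the online net $N$, which is determined by the adversary's point sequence, and the only randomness is over the radii $r_q$. Therefore every net point in $N'$ contributes an additive $\frac{1}{e^{\lambda}-1}$ in the inequality $\Pr[\cC_q]\le(1-e^{-\lambda\gamma})(\Pr[\cF_q]+\frac{1}{e^{\lambda}-1})$, regardless of where it sits relative to the perpendicular bisector; the window-length argument controls $\gamma_q$ but not this correction term, and the event $\cF_q$ (being first to reach one of $x_j,x_k$) can occur for any net point at distance $O(\Delta)$. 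Since $|N'|$ can be $2^{\Theta(d)}$ (the net is a $\Delta/8$-net in a $\Theta(\Delta)$-ball of $\R^d$, and the adversary can populate it densely), setting $\lambda=\Theta(\sqrt d)$ makes $\sum_{x_q\in N'}\frac{1}{e^{\lambda}-1}\approx 2^{\Theta(d)}e^{-\Theta(\sqrt d)}$ exponentially large, and the final bound becomes vacuous. The ``packing-plus-angular lemma'' you would need is therefore not just unproved but, as stated, false for this construction: the count of viable net centers really is $e^{\Theta(d)}$, and there is no angular restriction that thins it out, because the adversary controls the geometry of $N'$. Switching to random centers, as the paper does, is what makes the $\sqrt d$ possible.
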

\begin{proof}
	Charikar \etal \cite{CCGGP98} constructed a partition of $\R^d$ into clusters of diameter $\Delta$ such that for every pair of points $u,v\in \R^d$, the probability that $u$ and $v$ belong to different clusters is bounded by $O(\sqrt{d})\cdot\frac{\|x_j-x_k\|_2}{\Delta}$.
	This decomposition works by simply picking points $y_1,y_2,\dots,$ u.a.r. and setting the clusters to be $C_i=B_2(y_i,\frac\Delta2)\setminus\bigcup_{j<i}B_2(y_j,\frac\Delta2)$.
	The crux of this decomposition is the fact that the ratio between the volumes of the union and the intersection of two balls is bounded by
	\[
	\frac{{\rm Vol}\left(B_{2}(y_{i},\frac{\Delta}{2})\cap B_{2}(y_{j},\frac{\Delta}{2})\right)}{{\rm Vol}\left(B_{2}(y_{i},\frac{\Delta}{2})\cup B_{2}(y_{j},\frac{\Delta}{2})\right)}\ge1-2\sqrt{d}\cdot\frac{\|y_{i}-y_{j}\|_{2}}{\Delta}
	~.
	\]
	The points $y_i$ and $y_j$ will be clustered together if and only if the first center to be chosen from $B_{2}(y_{i},\frac{\Delta}{2})\cup B_{2}(y_{j},\frac{\Delta}{2})$ belongs to $B_{2}(y_{i},\frac{\Delta}{2})\cap B_{2}(y_{j},\frac{\Delta}{2})$. This probability equals to the ratio between the volumes as above.
	
	This decomposition was later used to create locality sensitive hashing \cite{AI08}, locality-sensitive ordering \cite{Fil23}, and consistent hashing \cite{CFJKVY22} (a.k.a.\ sparse partitions).
	In particular, it is known \cite{CFJKVY22} that a partition of the entire space $\R^d$ (without any bounding box) can be sampled using only $\poly(d)$ space (that is, a function that given a point returns its cluster center).
	
	Our contribution here is the last point guaranteeing that the ball $B_2(x_1,\frac{\Delta}{4})$ will belong to a single cluster. This is obtained by a slight modification of \cite{CCGGP98}. We treat the \cite{CCGGP98} decomposition as a black box.
	The partition is created as follows: sample a radius $r\in[\frac\Delta4,\frac\Delta2]$ uniformly at random.
	All the points in $B_2(x_1,r)$ belong to a single cluster. The remaining points are partitioned into clusters according to \cite{CCGGP98}. Formally, let $\cP$ be the partition created by \cite{CCGGP98}. Our partition is $B_2(x_1,r)\cup\left\{P\setminus B_2(x_1,r) :  P\in\cP \right\}$.
	
	Clearly, every cluster has diameter at most $\Delta$, and the ball $B(x_1,\frac{\Delta}{4})$ is guaranteed to be fully contained in a single cluster.
	It remains to bound the probability that two points belong to different clusters. Consider two points $x_j,x_k$, and suppose w.l.o.g\ that $\|x_1-x_j\|_2\le \|x_1-x_k\|_2$.
	The points $x_j,x_k$ will belong to different clusters only if either only one of them belong to $B(x_1,r)$, of if they are separated by \cite{CCGGP98}.
	By the union bound and the triangle inequality, it follows that:
	\begin{align*}
		\Pr\left[x_{j},x_{k}\text{ are separated}\right] & \le\Pr\left[r\in\left[\|x_{1}-x_{j}\|_{2},\|x_{1}-x_{k}\|_{2}\right)\right]+\Pr\left[x_{j},x_{k}\text{ are separated by \cite{CCGGP98}}\right]\\
		& \le\frac{\|x_{1}-x_{k}\|_{2}-\|x_{1}-x_{j}\|_{2}}{\nicefrac{\Delta}{4}}+2\sqrt{d}\cdot\frac{\|x_{j}-x_{k}\|_{2}}{\Delta}=O(\sqrt{d})\cdot\frac{\|x_{j}-x_{k}\|_{2}}{\Delta}~.
		\qedhere\end{align*}
\end{proof}

\section{Online dominating Stochastic Embedding into Ultrametrics:\\
	Proofs of \Cref{thm:DoublingOnlineEmbedding} and \Cref{thm:EuclidineOnlineEmbeddingIntoHST}}\label{subsec:OnlineToHST}
We restate \Cref{thm:DoublingOnlineEmbedding} for convenience.
\OnlineEmbedingHST*
\begin{proof}[Proof of \Cref{thm:DoublingOnlineEmbedding}]
	For every $i\in\Z$, set $\Delta_i=2^i$.
	Using \Cref{lem:nets}, we maintain a nested sequence of nets $\dots\subseteq N_{-2}\subseteq N_{-1}\subseteq N_{0}\subseteq N_{1}\subseteq N_{2}\subseteq\dots$ in an online fashion, where $N_i$ has minimum pairwise distance $\frac{\Delta_i}{8}=2^{i-3}$, and such that every point has a net point at distance at most $\frac{\Delta_i}{4}=2^{i-2}$.
	Following  \Cref{thm:LDDonline}, we will maintain an estimate of the doubling dimension, and use it to sample parameters  $r_j\sim  \Texp_{[1,2]}(\lambda_j)$ for every newly arriving point $x_j$.
	For every $i\in N$, following \Cref{thm:LDDonline} and using the sampled radii and nets, we will obtain a low diameter decomposition $\cP_i$.
	Note that we use a single radii $r_j$ for all distance scales.
   Furthermore, all the nets we use are nested.
   Hence storing all the net-information and the radii will take us only $O(n)$ words. As a result, we obtain partitions for all possible scales $\{\cP_{i}\}_{i\in \Z}$. It is important that we assume infinitely many partitions, as we do not know the aspect ratio in advance. However, as all the points in $B_X(x_1,\frac\Delta4)$ are contained in a single cluster (for every scale $\Delta$), starting from some scale and up, we will have only a single cluster containing all points. Similarly, due to the diameter bound, starting from some scale and down, our partition will be into singletons.

	An ultrametric is defined by a hierarchical (laminar) partition. However, even though the partitions $\{\cP_{i}\}_{i\in \Z}$ are highly correlated, they are not necessarily nested.
	Finally, we will force these partitions to be nested in the natural way to obtain laminar partitions $\{\tilde{\cP}_{i}\}_{i\in \Z}$.
	Formally, $x,y\in X$ will belong to the same cluster of $\tilde{\cP}_{i}$ if and only if they belong to the same cluster in both $\cP_{i}$ and $\tilde{\cP}_{i+1}$. In particular, $\cP_{i}$ is a refinement of $\tilde{\cP}_{i}$, and thus the diameter of each cluster in $\tilde{\cP}_{i}$ is bounded by $\Delta_i$ as well.
	Clearly $\{\tilde{\cP}_{i}\}_{i\in \Z}$ are laminar, and naturally define an ultrametric $U$, where every cluster $C\in \tilde{\cP}_{i}$ is associated with an internal node with label $\Delta_i$.
	This finishes the construction of the ultrametric. Note that storing the nested nets, sampled radii, and the resulting ultrametric all take $O(n)$ space.
	
	Once a point $x_j$ joins a cluster in  $\tilde{\cP}_i$, it will remain there forever. It follows that no pairwise distance is ever changed, and hence our embedding is online.
	Further, consider a pair of points $x_{j},x_{j'}$ for which we defined $d_U(x_j,x_{j'})=2^{i}$. It holds that both $x_{j}$ and $x_{j'}$ belong to the same $i$-level cluster in $\tilde{\cP}_{i}$, and hence also in $\cP_{i}$.
	As each such cluster has diameter $2^i$, $d_X(x_{j},x_{j'})\le 2^i=d_U(x_{j},x_{j'})$, the resulting embedding is dominating. It remains to show the expected distortion (that is expected expansion) property.
	
	\sloppy Consider a pair of points $x_{k},x_{k'}$ where $k<k'$.
	Denote by $\Psi_i$ the event that $x_{k},x_{k'}$ are separated in $\cP_i$.
	Note that the event that  $x_{k},x_{k'}$ are separated in $\tilde{\cP}_i$ is $\bigcup_{i'>i}\Psi_{i'}$.
	Then $d_U(x_{k},x_{k'})=\Delta_{i+1}=2^{i+1}$, where $i$ is the maximal index such that $\Psi_i$ holds (this is, $\lca(x_{k},x_{k'})$ will be associated with an $i+1$ level cluster).
	Let $i_{\max}=\left\lceil \log\left(4\cdot\max\left\{ d_{X}(x_{1},x_{k}),d_{X}(x_{1},x_{k'})\right\} \right)\right\rceil$. By \Cref{thm:LDDonline}, for every $i\ge i_{\max}$ all the points at distance $\frac{\Delta_i}{4}\ge \frac{\Delta_{i_{\max}}}{4}\ge\max\{d_{X}(x_{1},x_{k}),d_{X}(x_{1},x_{k'})\}$ from $x_1$ are contained in a single cluster. In particular $x_{k},x_{k'}$ will be contained in the same cluster, and thus $\Psi_i$ cannot occur.
	
	Denote by $c$ a constant such that \Cref{thm:LDDonline} bounds the probability that the points of the ball $B_X(x_k,R)$ belong to a different cluster in $\cP_i$ by $c\cdot\ddim_k\cdot\frac{R}{2^i}$.
	If the ball of radius $d_X(x_k,x_{k'})$ around $x_k$ is contained in a single $\cP_i$ cluster, then it must hold that $x_k,x_{k'}$ belong to a single cluster, and thus $\Psi_i$ did not occur.
	by  \Cref{thm:LDDonline} it follows that $\Pr[\Psi_i]\le c\cdot\ddim_k\cdot\frac{d_X(x_k,x_{k'})}{2^i}$.
	Denote $i_{\min}=\left\lceil \log\left(c\cdot\ddim_{k}\cdot d_{X}(x_{k},x_{k'})\right)\right\rceil$.
	It holds that
	\begin{align*}
		\mathbb{E}\left[d_{U}(x_{k},x_{k'})\right] & =\sum_{i}\Pr\left[\Psi_{i}\text{ and }\overline{\cup_{i'\ge i}\Psi_{i'}}\right]\cdot2^{i+1}\\
		& \le\sum_{i<i_{\min}}\Pr\left[\Psi_{i}\text{ and }\overline{\cup_{i'\ge i}\Psi_{i'}}\right]\cdot2^{i+1}+\sum_{i=i_{\min}}^{i_{\max}-1}\Pr\left[\Psi_{i}\right]\cdot2^{i+1}\\
		& \le2^{i_{\min}+2}+\sum_{i=i_{\min}}^{i_{\max}-1}c\cdot\ddim_{k}\cdot\frac{d_{X}(x_{k},x_{k'})}{2^{i}}\cdot2^{i+1}\\
		& =4\cdot2^{i_{\min}}+2c\cdot\ddim_{k}\cdot(i_{\max}-i_{\min})\cdot d_{X}(x_{k},x_{k'})\\
		& =O\left(\ddim_{k}\cdot\log\Phi\right)\cdot d_{X}(x_{k},x_{k'})~,
	\end{align*}
	where the last equality holds as $i_{\max}-i_{\min}\le1+\log\left(\frac{4\cdot\max\{d_{X}(x_{1},x_{k}),d_{X}(x_{1},x_{k'})\}}{c\cdot\sddim_{k}\cdot d_{X}(x_{k},x_{k'})}\right)\le O(\log\Phi)$,
	where $\Phi$ is the aspect ratio.
\end{proof}

Consider the case where all the input points are from the Euclidean space $\R^d$.
Following the exact same lines as in the proof of \Cref{thm:DoublingOnlineEmbedding}, where we replace \Cref{thm:LDDonline} by \Cref{thm:LDDonlineEuclidean}, we obtain a quadratic improvement in the dependence on the dimension. For space considerations note that it is enough to store the partition from \Cref{thm:LDDonlineEuclidean} only for a single scale $\Delta=1$. This is as the only argument in the proof combining different scales is the union bound. Hence we can use the same partition (up to scaling) for all distance scales. We conclude that the ultrametric can be computed, and maintained using only $\poly(n,d)$ space.

\begin{restatable}[]{theorem}{OnlineEmbedingEuclideanToHST}
	\label{thm:EuclidineOnlineEmbeddingIntoHST}
	Given a sequence of points $x_1,x_2,\dots$ in Euclidean $d$-space $(\R^d,\|\cdot\|_2)$ arriving in an online fashion, there is a dominating stochastic metric embedding into ultrametrics  (2-HSTs) with expected distortion $O(\sqrt{d}\cdot\log\Phi)$, where $\Phi$ is the aspect ratio (unknown in advance).	
\end{restatable}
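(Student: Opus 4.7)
The plan is to mirror the proof of \Cref{thm:DoublingOnlineEmbedding} step by step, substituting the doubling low-diameter decomposition (\Cref{thm:LDDonline}) with its Euclidean counterpart (\Cref{thm:LDDonlineEuclidean}). For every scale $\Delta_i = 2^i$ with $i \in \Z$, I will invoke \Cref{thm:LDDonlineEuclidean} to obtain an online partition $\cP_i$ of $\R^d$ into clusters of $\ell_2$-diameter at most $\Delta_i$, centered so that $B_2(x_1,\Delta_i/4)$ lies in a single cluster. Then, exactly as before, I enforce a laminar refinement $\{\tilde\cP_i\}_{i \in \Z}$ by declaring that $x,y$ lie in the same cluster of $\tilde\cP_i$ iff they lie in the same cluster of both $\cP_i$ and $\tilde\cP_{i+1}$. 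This laminar family defines a $2$-HST $U$ in the standard way: an internal node corresponding to a cluster of $\tilde\cP_i$ receives label $\Delta_i$. Since once a point joins a cluster it stays there forever, the embedding is online, and since each cluster at level $i$ has $\ell_2$-diameter $\le \Delta_i$, the embedding is dominating.

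For the expected distortion, I repeat the case analysis of \Cref{thm:DoublingOnlineEmbedding} for a fixed pair $x_k, x_{k'}$. Let $\Psi_i$ be the event that $x_k, x_{k'}$ are separated in $\cP_i$. Above a top scale $i_{\max} = \lceil \log(4 \cdot \max\{\|x_1 - x_k\|_2, \|x_1 - x_{k'}\|_2\})\rceil$, both points lie in $B_2(x_1,\Delta_i/4)$ and are therefore co-clustered, so $\Pr[\Psi_i] = 0$. For smaller $i$, \Cref{thm:LDDonlineEuclidean} gives $\Pr[\Psi_i] \le c\sqrt{d}\cdot \|x_k - x_{k'}\|_2 / 2^i$ for a universal constant $c$. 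Setting $i_{\min} = \lceil \log(c\sqrt{d}\cdot \|x_k - x_{k'}\|_2)\rceil$ and splitting the sum at $i_{\min}$,
\begin{align*}
\E[d_U(x_k,x_{k'})]
&\le 2^{i_{\min}+2} + \sum_{i=i_{\min}}^{i_{\max}-1} \Pr[\Psi_i]\cdot 2^{i+1} \\
&\le 4\cdot 2^{i_{\min}} + 2c\sqrt{d}\cdot (i_{\max}-i_{\min})\cdot \|x_k - x_{k'}\|_2 \\
&= O(\sqrt{d}\cdot \log \Phi)\cdot \|x_k - x_{k'}\|_2,
\end{align*}
using $i_{\max} - i_{\min} = O(\log \Phi)$ as in the original argument.

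For the space bound, I will exploit the scale invariance of the Euclidean decomposition. Unlike the doubling case, the partition of \Cref{thm:LDDonlineEuclidean} is obtained by sampling centers $y_1,y_2,\ldots$ uniformly in $\R^d$ together with a single radius $r$; all the analysis (including the union bound across scales) only requires marginals at each scale, so I can sample one such partition at scale $\Delta = 1$ and obtain the partition at any other scale $\Delta_i$ by scaling the same centers and radius by $2^i$. This avoids having to maintain independent partitions per level and lets the whole data structure be stored implicitly in $\poly(n,d)$ space, as indicated in the discussion preceding the theorem.

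The only genuinely new point compared to \Cref{thm:DoublingOnlineEmbedding} is replacing $\ddim_k$ by $\sqrt{d}$ in the padding bound; everything else is routine. The most delicate issue I anticipate is making precise that one \emph{single} sampled decomposition can be reused across scales without invalidating the per-scale padding guarantee, but this is exactly the property of the Charikar et al.\ decomposition used in \Cref{thm:LDDonlineEuclidean}: since the bound on $\Pr[\Psi_i]$ is obtained scale-by-scale and then summed via a union bound over a $\log\Phi$-size range, correlating the randomness across scales does not affect the final calculation.
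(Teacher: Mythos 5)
Your proposal follows exactly the paper's approach: the paper proves this theorem by a one-paragraph remark immediately following \Cref{thm:DoublingOnlineEmbedding}, instructing the reader to re-run that proof verbatim with \Cref{thm:LDDonlineEuclidean} in place of \Cref{thm:LDDonline}, and noting the same space-saving trick of sampling a single partition at $\Delta=1$ and rescaling it to all other scales (valid because the multi-scale analysis only uses per-scale marginal bounds and a union bound, so correlated randomness across scales is harmless). Your write-up fills in the same calculation with $\sqrt{d}$ replacing $\ddim$ and reproduces the laminar-refinement and dominance arguments correctly, so it matches the paper.
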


\begin{remark}\label{rem:HST}{\rm
    \Cref{thm:DoublingOnlineEmbedding} and \Cref{thm:EuclidineOnlineEmbeddingIntoHST} provide dominating stochastic embeddings into ultrametrics for a sequence of metric points arriving in an online fashion. A finite ultrametric is equivalent to an HST, and the HST can be covered from the metric. Since applications (e.g., online matchings) use the HST representation, we describe how the topology of the HST changes as new points arrive. The proof of \Cref{thm:DoublingOnlineEmbedding} describes a randomized online algorithm that maintains a laminar family of partitions $\{\tilde{\cP}_{i}\}_{i\in \Z}$ of the metric points $\{x_1,\ldots , x_i\}$. The partitions yields an infinite tree $\mathcal{T}$. We obtain an HST $T_i$ from $\mathcal{T}$ by suppressing all nodes of degree two, except for the top-level node of descending chains, which become leaves.

    When a new point $x_{i+1}$ arrives, a new descending chain is attached to a node in $\mathcal{T}$. This means that for all $i<j$, the HST $T_i$ is the tree induced by the leaves corresponding to $\{x_1,\ldots , x_i\}$ in $T_j$. Conversely, for all $i\geq 1$, one can construct $T_{i+1}$ from $T_i$ by one of the following operations: (1) Create a new root and attach $T_i$ and a new leaf $\ell_{i+1}$ corresponding to $x_{i+1}$ to the root; (2) attach a new leaf $\ell_{i+1}$ corresponding to $x_{i+1}$ to an existing node of $T_i$; (3) subdivide an edge of $T_i$ and attach a new leaf $\ell_{i+1}$ to the subdivision node; (4) attach two new leaves to a leaf $\ell_i$ of $T_i$, one corresponds to $x_{i+1}$ and the other to the point $x_j$ previously embedded at $\ell_i$.
}\end{remark}

\section{Online Embedding into Euclidean Space: Proof of \Cref{thm:DoublingOnlineEmbeddingtoEuclidean}}
This section is devoted to proving \Cref{thm:DoublingOnlineEmbeddingtoEuclidean}.
We refer to \Cref{subsec:techIdeas} for intuition regarding the proof.
We restate the theorem for convenience.
\OnlineEmbedingIntoEuclidean*

Let $X=\{x_1,x_2,\dots\}$ be the metric points according to the order of arrival.
We will begin by describing a random embedding $f:X\rightarrow\ell_2$ where each index $i\in\Z$ will have an associated coordinate. Here $f_i(x_j)$ is a random variable responsible for the scale $2^i$. Specifically,
$f(x_1)$ will equal $0$ in all the coordinates, and in general, $f(x_j)$ will also equal $0$ for all but $O(j)$ coordinates.
Our goal is to eventually obtain a deterministic embedding.
This will be achieved by letting $D_{j,k}=\E\left[\left\|f(x_j)-f(x_k)\right\|_2^2\right]$. This represents real Euclidean distances (squared), from which a deterministic embedding can be reconstructed.

Using \Cref{lem:nets}, we maintain a nested sequence of nets $\{N_i\}_{i\in\Z}$ in an online fashion.
For every $i\in\Z$, following \Cref{thm:LDDonline}, we create partitions $\{\cP_i\}_{i\in\Z}$. These partitions are not necessarily laminar.
%Recall that each cluster $P^i_j\in\cP_i$ is centered at vertex (net point) $x_j$.
Specifically, we maintain an estimate $\est_{\sddim}^j$ of the doubling dimension  of the prefix $\{x_1,\dots,x_j\}$, and for every arriving point $x_j$ we sample a radius $r_{j}\in\Texp_{[1,2]}(\lambda_j)$. The $i$'th partition is then defined to be $$\cP_{i}=\left\{ P_{q}^{i}=\left(B_{X}\left(x_{q},\frac{\Delta_i}{4}\cdot r_{q}\right)\setminus\bigcup_{q'<q}P_{q'}^{i}\right)\right\}_{x_{q}\in N_{i}},$$ where $\Delta_i=2^i$.
In addition,  we will sample independent boolean parameters $\alpha_{j}\in \{0,1\}$, getting each of the two values with probability $\frac12$.
These parameters will be used to ``zero'' out some clusters, which will ensure that each vector $f(x_j)$ is nonzero only in linearly many coordinates.
Specifically, for every cluster $P^i_q$, we define parameters $\alpha_{q,i}$ as follows. Let $\alpha_{1,i}=0$ for every $i$ (deterministically).
In general, for $q>1$, let $\tilde{i}_q$ be the maximum index such that $x_q\in N_{\tilde{i}_q}$ (recall that $x_q\in N_{i}$ for $i\le\tilde{i}_q$). We then set
\[
\alpha_{q,i}=\begin{cases}
	\alpha_{q} & \mbox{\rm for } i\ge\tilde{i}_q-2,\\
	0 &  \mbox{\rm for } i\le\tilde{i}_q-3.
\end{cases}
\]
Intuitively,
for $i>\tilde{i}_q$, $\alpha_{q,i}$ will be irrelevant (as $x_q\notin N_i$ and thus $P_{q}^{i}=\emptyset$), for the scales $\{\tilde{i}_q-2,\tilde{i}_q-1,\tilde{i}_q\}$ it will be either $0$ or $1$, randomly, and for all the scales bellow $\tilde{i}_q-3$ it will be ``zeroed'' out.

\begin{figure}[t]
	\centering
	\includegraphics[width=0.6\textwidth]{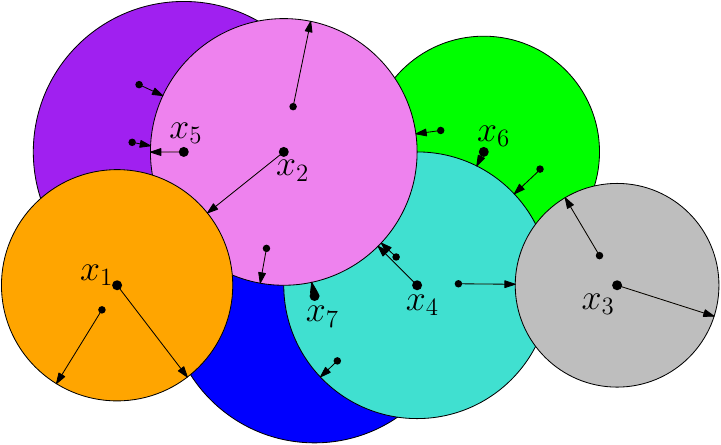}
	\caption{\small{ Illustration of the partition of a single scale $\Delta=2^i$ and the paddedness of each point. The source metric here $(X,d_X)$ is induced by the Euclidean plane. The net points $N_i$, and cluster centers are $\{x_1,\dots,x_7\}$, where their respective clusters $P_1^i,\dots,P_7^i$ are colored orange, violet, grey, turquoise, purple, green, and blue. The rest of the metric points are represented by smaller unnamed dots. The \emph{paddedness} is the distance to the boundary of the cluster each point belongs to. In the figure, the paddedness of each point is equal to the length of the dart originating from it.}}
	\label{fig:paddedness}
\end{figure}

The coordinate $f_{i}(x_j)$ will be created using the partition $\cP_{i}$, the boolean variables $\{\alpha_{j,i}\}_{x_j\in N_i}$, and the \emph{paddedness} parameter which we define next.
Consider a point $x_j$, and the partition $\cP_{i}$, where $x_j\in P^{i}_q$ joins the cluster centered at $x_q\in N_i$. The \emph{paddedness} of $x_j$ represents the smallest possible distance from $x_j$ to a point outside of the cluster $P^{i}_q$, including points that might arrive in the future. Let $N_{i}^{q}=N_{i}\cap\{x_{1},\dots,x_{q}\}$ be all the net points arriving before $x_q$ (the center of the cluster to which $x_j$ joined). Then the paddedness is set to be
\[
\partial_{i}(x_{j})\coloneqq\min_{x_{k}\in N_{i}^{q}}\left|r_{k}\cdot\frac{\Delta_{i}}{4}-d_{X}(x_{k},x_{j})\right|~;
\]
see \Cref{fig:paddedness} for an illustration.
Note that as $x_j$ joined the cluster centered at $x_q\in N_i$,  it holds that $d_{X}(x_{q},x_{j})\le r_{q}\cdot\frac{\Delta_{i}}{4}$. As $r_q\le2$, it follows that
$\partial_{i}(x_{j})\le\left|r_{q}\cdot\frac{\Delta_{i}}{4}-d_{X}(x_{q},x_{j})\right|\le r_{q}\cdot\frac{\Delta_{i}}{4}\le\frac{\Delta_{i}}{2}$.
In addition, note that the paddedness is defined in the same way for all points in $P^{i}_q$, regardless of their arriving time (index).  In a sense, this ensures that the paddedness is ``continuous'' in each cluster. We will actually prove something stronger. In particular, \Cref{clm:BoundCoordinate} below shows that the paddedness is a Lipschitz function (w.r.t.\ all the points).

The $i$'th coordinate $f_{i}(x_{j})$ is defined to be:
\[
f_{i}(x_{j})\coloneqq\alpha_{q,i}\cdot\partial_{i}(x_{j})
\]
That is, $x_j$ is either sent to its paddedness parameter, or set to $0$, where this decision is consistent in each cluster. Note also that all the points in the cluster centered at $x_1$ are always sent to $0$ (as $\alpha_{1}=0$). This ensures that $f_{i}(x_{j})=0$ for every index $i$ such that $\frac{\Delta_i}{4}\ge d_X(x_1,x_j)$ (equivalently, for all $i\ge \log d_X(x_1,x_j)+2$ ).
In \Cref{clm:paddedness} and \Cref{clm:BoundCoordinate} below, we prove that $f_i:X\rightarrow\R_{\ge0}$ is Lipschitz (regardless of the random choices).

\begin{claim}\label{clm:paddedness}
	For every $i\in\Z$ and $x_{j}\in P_q^{i}\in\cP_i$, it holds that $\partial_{i}(x_{j})\le\min_{x_{j'}\notin P_q^{i}}d_X(x_j,x_{j'})$, where the minimum is taken over all the points $x_{j'}\in X\setminus P^i_q$ (including points $x_{j'}$ arriving after $x_j$).
\end{claim}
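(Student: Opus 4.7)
\medskip

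\noindent\textbf{Proof plan for Claim~\ref{clm:paddedness}.} The plan is to fix $x_j \in P_q^i$ and an arbitrary $x_{j'} \notin P_q^i$, and to produce a single net point $x_k \in N_i^q$ whose contribution $\bigl|r_k \tfrac{\Delta_i}{4} - d_X(x_k, x_j)\bigr|$ to the minimum defining $\partial_i(x_j)$ is already at most $d_X(x_j, x_{j'})$. Taking the minimum over all such net points then yields the claim. The proof will split into two cases depending on \emph{why} $x_{j'}$ failed to join $P_q^i$, as dictated by the definition~\eqref{eq:PartitionDef} of the partition.

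\noindent\textbf{Case 1: $x_{j'}$ lies inside $B_X(x_q, r_q\tfrac{\Delta_i}{4})$ but was absorbed by some earlier cluster.} Then there is a net point $x_k \in N_i$ with $k < q$ (hence $x_k \in N_i^q$) such that $x_{j'} \in P_k^i$, i.e., $d_X(x_k, x_{j'}) \le r_k \tfrac{\Delta_i}{4}$. On the other hand, $x_j$ was \emph{not} absorbed by $x_k$, so necessarily $d_X(x_k, x_j) > r_k \tfrac{\Delta_i}{4}$. The triangle inequality then gives
\[
    d_X(x_j, x_{j'}) \ge d_X(x_k, x_j) - d_X(x_k, x_{j'}) \ge d_X(x_k, x_j) - r_k \tfrac{\Delta_i}{4} = \bigl|r_k \tfrac{\Delta_i}{4} - d_X(x_k, x_j)\bigr| \ge \partial_i(x_j).
\]

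\noindent\textbf{Case 2: $x_{j'}$ lies outside $B_X(x_q, r_q\tfrac{\Delta_i}{4})$.} This also covers the situation in which $x_{j'}$ arrives after $x_q$ and ends up in some cluster $P_k^i$ with $k > q$, since such an $x_{j'}$ was in particular not claimed by $x_q$. Here we pick the witness $x_k := x_q$, which belongs to $N_i^q$. Since $x_j \in P_q^i$ we have $d_X(x_q, x_j) \le r_q \tfrac{\Delta_i}{4}$, while $d_X(x_q, x_{j'}) > r_q \tfrac{\Delta_i}{4}$, and again the triangle inequality gives
\[
    d_X(x_j, x_{j'}) \ge d_X(x_q, x_{j'}) - d_X(x_q, x_j) \ge r_q \tfrac{\Delta_i}{4} - d_X(x_q, x_j) = \bigl|r_q \tfrac{\Delta_i}{4} - d_X(x_q, x_j)\bigr| \ge \partial_i(x_j).
\]

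\noindent\textbf{Wrap-up.} The two cases are exhaustive, since the only way a point fails to join $P_q^i$ is either to have been claimed by a strictly earlier center (Case~1) or to lie outside the defining ball of $x_q$ (Case~2). In both cases we exhibit $x_k \in N_i^q$ with $\bigl|r_k \tfrac{\Delta_i}{4} - d_X(x_k, x_j)\bigr| \le d_X(x_j, x_{j'})$, so $\partial_i(x_j) \le d_X(x_j, x_{j'})$, as required. I do not foresee a genuine obstacle here; the only subtlety to handle carefully is that $x_{j'}$ may be a \emph{future} point (not yet arrived at time $j$), but the definition of $\partial_i(x_j)$ uses only the past net points $N_i^q$ together with the fixed radii $r_k$, and the bound above relies solely on the triangle inequality in $(X, d_X)$, so future arrivals cause no issue.
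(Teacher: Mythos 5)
Your proof is correct and takes essentially the same approach as the paper: in both arguments one exhibits a single witness net point in $N_i^q$ whose contribution to the minimum defining $\partial_i(x_j)$ is already at most $d_X(x_j,x_{j'})$, via the triangle inequality. The only cosmetic difference is the case split: the paper partitions on whether $x_{j'}\in P^i_{q'}$ has $q'<q$ or $q'>q$ (using $x_{q'}$ as witness in the first case, $x_q$ in the second), while you split on whether $x_{j'}$ lies inside or outside $B_X(x_q, r_q\tfrac{\Delta_i}{4})$; for $x_{j'}\in P^i_{q'}$ with $q'<q$ that lies outside $x_q$'s ball you therefore use $x_q$ as witness rather than $x_{q'}$, but both choices yield the stated bound.
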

\begin{proof}
	Consider $x_{j'}\in P_{q'}^{i}$ for $q'\ne q$. We distinguish between two cases:
	\begin{itemize}
		\item If $q'<q$, as $x_{j}\notin P_{q'}^{i}$ and $x_{j'}\in P_{q'}^{i}$ it
		holds that $d_{X}(x_{q'},x_{j'})\le r_{q'}\cdot\frac{\Delta_{i}}{4}<d_{X}(x_{q'},x_{j})$, where the second inequality holds as $x_j$ ``considered'' joining $P_{q'}^i$.
		Thus %In particular
		\[
		\partial_{i}(x_{j})\le d_{X}(x_{q'},x_{j})-r_{q'}\cdot\frac{\Delta_{i}}{4}\le d_{X}(x_{q'},x_{j})-d_{X}(x_{q'},x_{j'})\le d_{X}(x_{j},x_{j'})~.
		\]
		\item Else $q'>q$, as $x_{j}\in P_{q}^{i}$ and $x_{j'}\notin P_{q}^{i}$ it holds
		that $d_{X}(x_{q},x_{j})\le r_{q}\cdot\frac{\Delta_{i}}{4}<d_{X}(x_{q},x_{j'})$, where the second inequality holds as $x_{j'}$ ``considered'' joining $P_{q}^i$.
		Thus %In particular
		\[
		\partial_{i}(x_{j})\le r_{q}\cdot\frac{\Delta_{i}}{4}-d_{X}(x_{q},x_{j})\le d_{X}(x_{q},x_{j'})-d_{X}(x_{q},x_{j})\le d_{X}(x_{j},x_{j'})~.
		\qedhere
        \]
	\end{itemize}
\end{proof}

\begin{claim}\label{clm:BoundCoordinate}
	For every $i\in\Z$, and $x_{j},x_{j'}$, it holds that $|f_{i}(x_{j})-f_{i}(x_{j'})|\le d_X(x_{j},x_{j'})$.
\end{claim}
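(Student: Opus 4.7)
The plan is to split the argument into two cases based on whether $x_j$ and $x_{j'}$ belong to the same cluster of $\cP_i$ or to different clusters, exploiting in each case a different feature of the definition $f_i(x) = \alpha_{q,i}\cdot\partial_i(x)$.

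\textbf{Same-cluster case.} Suppose $x_j, x_{j'} \in P_q^i$. Then the multiplier $\alpha_{q,i}$ is identical for both, so $|f_i(x_j) - f_i(x_{j'})| = \alpha_{q,i}\,|\partial_i(x_j) - \partial_i(x_{j'})| \le |\partial_i(x_j) - \partial_i(x_{j'})|$ since $\alpha_{q,i}\in\{0,1\}$. I would then show that $\partial_i$ is $1$-Lipschitz on $X$: for every fixed $x_k \in N_i^q$ and fixed radius $r_k$, the map $x \mapsto |r_k\cdot\Delta_i/4 - d_X(x_k,x)|$ is $1$-Lipschitz by the reverse triangle inequality, and the pointwise minimum of $1$-Lipschitz functions is $1$-Lipschitz. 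Hence $|\partial_i(x_j) - \partial_i(x_{j'})| \le d_X(x_j,x_{j'})$, settling this case.

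\textbf{Different-cluster case.} Suppose $x_j \in P_q^i$ and $x_{j'} \in P_{q'}^i$ with $q\ne q'$. Here the two $\alpha$-values may differ, so a Lipschitz argument on $\partial_i$ alone is not enough; instead I invoke \Cref{clm:paddedness}. Since $x_{j'} \notin P_q^i$, the claim gives $\partial_i(x_j) \le d_X(x_j, x_{j'})$, and symmetrically $\partial_i(x_{j'}) \le d_X(x_j, x_{j'})$. Therefore, using $\alpha_{q,i},\alpha_{q',i}\in\{0,1\}$ together with the bound $|a - b| \le \max\{a,b\}$ valid for nonnegative $a,b$,
\[
|f_i(x_j) - f_i(x_{j'})| = |\alpha_{q,i}\,\partial_i(x_j) - \alpha_{q',i}\,\partial_i(x_{j'})| \le \max\{\partial_i(x_j),\partial_i(x_{j'})\} \le d_X(x_j,x_{j'}).
\]

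\textbf{Obstacles.} There is really no hard step; the entire content of the claim is already packaged into \Cref{clm:paddedness} (for the cross-cluster case) and into the elementary Lipschitz property of $x\mapsto d_X(x_k,x)$ (for the same-cluster case). The only subtle point is that one cannot handle both cases uniformly by a naive Lipschitz bound on $f_i$, since $\alpha_{q,i}$ is piecewise constant and can jump between adjacent clusters; \Cref{clm:paddedness} is exactly what compensates for those jumps, because the paddedness at either endpoint of a cross-cluster pair is already dominated by the interpoint distance. Note that the bound holds deterministically, for every realization of the random choices, which is what will later allow us to pass to the deterministic $L_2$-distance matrix.
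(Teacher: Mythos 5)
Your proof is correct. The cross-cluster case is handled identically to the paper: both invoke \Cref{clm:paddedness} and use $|a-b|\le\max\{a,b\}$ for $a,b\ge0$ together with $\alpha\in\{0,1\}$. For the same-cluster case, however, your argument is a genuine (and welcome) simplification of what the paper does. The paper, after reducing to bounding $\partial_i(x_j)-\partial_i(x_{j'})$, explicitly identifies the net point $x_s\in N_i^q$ realizing the minimum for $x_{j'}$ and then splits into sub-cases $s\ne q$ and $s=q$ in order to resolve the sign of each absolute value $|r_s\Delta_i/4 - d_X(x_s,\cdot)|$ using the clustering rule (whether $x_j,x_{j'}$ ``considered'' joining $P_s^i$). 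You sidestep all of this by noting that, for points in a common cluster $P_q^i$, the minimizing index ranges over the \emph{same} set $N_i^q$, and each competitor $x\mapsto|r_k\Delta_i/4 - d_X(x_k,x)|$ is $1$-Lipschitz (composition of $d_X(x_k,\cdot)$ with $t\mapsto|c-t|$), hence so is their pointwise minimum $\partial_i$ on $P_q^i$. This removes the need to track the sign inside the absolute value and collapses the paper's two sub-cases into one line, while yielding exactly the same bound. Both routes are deterministic (valid for every realization of the radii and the booleans), which, as you note, is what matters for the later $L_2$-distance-matrix step.
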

\begin{proof}
	Let $q,q'\in N_i$ such that $x_{j}\in P_{q}^{i}$ and $x_{j'}\in P_{q'}^{i}$. Suppose first that $q\ne q'$, that is, $x_j$ and $x_{j'}$ belong to different clusters in $\cP_{i}$. Then by \Cref{clm:paddedness}, we have
	\[
	\left|f_{i}(x_{j})-f_{i}(x_{j'})\right|=\left|\alpha_{q,i}\cdot\partial_{i}(x_{j})-\alpha_{q',i}\cdot\partial_{i}(x_{j'})\right|\le\max\left\{ \partial_{i}(x_{j}),\partial_{i}(x_{j'})\right\} \le d_{X}(x_{j},x_{j'})~.
	\]
	Else, $q=q'$, and thus $x_j$ and $x_{j'}$ belong to the same cluster $P_{q}^{i}\in\cP_{i}$.
	In particular $N_{i}^{q}=N_{i}^{q'}$.
	Suppose w.l.o.g.\ that $\partial_{i}(x_{j})\ge\partial_{i}(x_{j'})$.
	It holds that
	\begin{align*}
		\left|f_{i}(x_{j})-f_{i}(x_{j'})\right| & =\left|\alpha_{q,i}\cdot\partial_{i}(x_{j})-\alpha_{q,i}\cdot\partial_{i}(x_{j'})\right|\le\left|\partial_{i}(x_{j})-\partial_{i}(x_{j'})\right|=\partial_{i}(x_{j})-\partial_{i}(x_{j'})~.
	\end{align*}
	Let $x_{s}\in N_{i}^{q}$ be the net point realizing the minimum in
	the definition of paddedness of $x_{j'}$, that is, $\partial_{i}(x_{j'})=\left|r_{s}\cdot\frac{\Delta_{i}}{4}-d_{X}(x_{s},x_{j'})\right|$.
	Suppose first that $s\ne q$.
	Since $s\in N_i^q$, then $s\le q$. Hence both $x_j$ and $x_{j'}$ had the opportunity to join $P^i_s$, but joined $P_{q}^{i}$. It follows that
	$d_{X}(x_{s},x_{j}),d_{X}(x_{s},x_{j'})>r_{s}\cdot\frac{\Delta_{i}}{4}$. It holds
	that
	\begin{align*}
		\partial_{i}(x_{j})-\partial_{i}(x_{j'}) & =\min_{x_{k}\in N_{i}^{q}}\left|r_{k}\cdot\frac{\Delta_{i}}{4}-d_{X}(x_{k},x_{j})\right|-\left(d_{X}(x_{s},x_{j'})-r_{s}\cdot\frac{\Delta_{i}}{4}\right)\\
		& \le\left(d_{X}(x_{s},x_{j})-r_{s}\cdot\frac{\Delta_{i}}{4}\right)-\left(d_{X}(x_{s},x_{j'})-r_{s}\cdot\frac{\Delta_{i}}{4}\right)\le d_{X}(x_{j},x_{j'})~.
	\end{align*}
	Otherwise, if $s=q$, it holds that
	\begin{align*}
		\partial_{i}(x_{j})-\partial_{i}(x_{j'}) & =\min_{x_{k}\in N_{i}^{q}}\left|r_{k}\cdot\frac{\Delta_{i}}{4}-d_{X}(x_{k},x_{j})\right|-\left(r_{q}\cdot\frac{\Delta_{i}}{4}-d_{X}(x_{q},x_{j'})\right)\\
		& \le\left(r_{q}\cdot\frac{\Delta_{i}}{4}-d_{X}(x_{q},x_{j})\right)-\left(r_{q}\cdot\frac{\Delta_{i}}{4}-d_{X}(x_{q},x_{j'})\right)\le d_{X}(x_{j},x_{j'})~.
	\end{align*}
	In both cases, we have shown that $\left|f_{i}(x_{j})-f_{i}(x_{j'})\right|\le d_{X}(x_{j},x_{j'})$, as required.
\end{proof}

In the next claim we show that with high enough probability, the paddedness will be of a ``significant'' size (which will be crucial for the lower bound side of our proof).

\begin{claim}\label{clm:paddingProb}
	For every $i\in\Z$ and $x_{j}$, let $\ddim_j$ be the doubling dimension of the metric space induced by the prefix $\{x_1,\dots,x_j\}$.
	There exists a universal constant $c$ such that $\Pr[\partial_{i}(x_{j})\ge c\cdot\frac{\Delta_i}{\sddim_j}]\ge\frac78$.
\end{claim}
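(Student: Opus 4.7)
The plan is to reduce the event $\{\partial_i(x_j)<\gamma\Delta_i\}$ to a union of bad events on the sampled radii $r_k$ and then to mimic the union-bound calculation from the proof of \Cref{thm:LDDonline}. First I will observe that $x_j\in P_{q^*}^i$ for some $q^*\leq j$ (since when $x_j$ arrives, either $x_j\in N_i$ and is its own center, or some earlier net point's ball already catches $x_j$), so $N_i^{q^*}\subseteq\{x_1,\dots,x_j\}$, and by monotonicity of $\est_{\sddim}$ together with \Cref{clm:ddimEst} every relevant net point $x_k$ satisfies $\lambda_k=4\,\est^k_{\sddim}\leq O(\ddim_j)$. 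Setting $a_k:=4d_X(x_k,x_j)/\Delta_i$, a direct inspection of the definitions shows that $\partial_i(x_j)\geq\gamma\Delta_i$ iff (a)~$r_{q^*}\geq a_{q^*}+4\gamma$ and (b)~$r_k\leq a_k-4\gamma$ for every $x_k\in N_i^{q^*}$ with $k<q^*$. The failure event $\{\partial_i(x_j)<\gamma\Delta_i\}$ thus splits into (I)~$r_{q^*}\in[a_{q^*},a_{q^*}+4\gamma)$, and (II)~some $x_k\in N_i^{q^*}$ with $k<q^*$ has $r_k\in(a_k-4\gamma,a_k)$. I will prove $\Pr[(\text{I})]$ and $\Pr[(\text{II})]$ are each $O(\gamma\,\ddim_j)$.

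For case~(I), let $\cF_q=\{q^*=q\}$; these events are pairwise disjoint, so $\sum_q\Pr[\cF_q]\leq 1$. Using independence of the $r_q$'s and the arithmetic identity
\[
\frac{e^{-\lambda_q a_q}}{e^{-\lambda_q}-e^{-2\lambda_q}}\;=\;\Pr[r_q\geq a_q]\;+\;\frac{1}{e^{\lambda_q}-1},
\]
which is precisely the step used inside the proof of \Cref{thm:LDDonline}, one obtains
\[
\Pr\bigl[\cF_q\cap\{r_q\in[a_q,a_q+4\gamma)\}\bigr]\;\leq\;\bigl(1-e^{-4\gamma\lambda_q}\bigr)\!\left(\Pr[\cF_q]+\frac{1}{e^{\lambda_q}-1}\right)\;\leq\;4\gamma\lambda_q\!\left(\Pr[\cF_q]+\frac{1}{e^{\lambda_q}-1}\right).
\]
Summing over $q$, the first piece contributes $\sum_q\lambda_q\Pr[\cF_q]\leq\max_q\lambda_q\cdot\sum_q\Pr[\cF_q]=O(\ddim_j)$. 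For the second piece I group relevant net points by $s=\est^q_{\sddim}$ and use the packing argument from the proof of \Cref{thm:LDDonline}: pairwise distance $\geq\Delta_i/8$, contained in a ball of radius $O(\Delta_i)$, inside a prefix of doubling dimension $\leq s$, so their count is at most $2^{O(s)}$. Then $\sum_q\lambda_q/(e^{\lambda_q}-1)\leq\sum_{s\geq 1}4s\cdot 2^{O(s)}/(e^{4s}-1)=O(1)$, exactly as there. Hence $\Pr[(\text{I})]\leq O(\gamma\,\ddim_j)$.

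Case~(II) is handled by an analogous union bound over $x_k\in N_i$ with $k\leq j$. The event $\{k<q^*\}\cap\{r_k\in(a_k-4\gamma,a_k)\}$ requires $r_{k'}<a_{k'}$ for every $k'<k$ (which is independent of $r_k$), and the identity applied to the interval $(a_k-4\gamma,a_k)$ (using $e^x-1\leq 2x$ for $x\in[0,1]$) yields a bound of the form $O(\gamma\lambda_k)\bigl(\Pr[\cF_k]+1/(e^{\lambda_k}-1)\bigr)$. Summing and repeating the disjointness-plus-packing calculation gives $\Pr[(\text{II})]\leq O(\gamma\,\ddim_j)$.

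Putting things together, $\Pr[\partial_i(x_j)<\gamma\Delta_i]\leq O(\gamma\,\ddim_j)$, and choosing $\gamma=c/\ddim_j$ for a sufficiently small universal constant $c$ pushes this below $1/8$, which is the claim. The main technical obstacle I anticipate is handling the boundary regimes $a_q>2-4\gamma$ (where the interval $[a_q,a_q+4\gamma)$ pokes past the truncation endpoint~$2$, making $\cF_q$ itself have tiny probability) and $a_q<1$ (where $r_q\geq a_q$ is automatic); in both edge cases, the additive $1/(e^{\lambda_q}-1)$ term in the key inequality absorbs the irregularity and the global union bound still converges.
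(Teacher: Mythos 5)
Your proof is correct and takes essentially the same route as the paper, though you re-derive the key estimate rather than citing it. The paper's proof of this claim is very short: it observes that if \emph{none} of the events $\{\cC_q\}_q$ from the proof of \Cref{thm:LDDonline} (with $R=\Theta(\Delta_i/\ddim_j)$ and with $x_j$ playing the role of the ball center) occurs, then $\partial_i(x_j)>R$, and then directly cites the already-established bound $\Pr[\bigcup_q\cC_q]=O(\ddim_j)\cdot R/\Delta_i$ from (\ref{eq:UBcutProb}). Your failure event $\text{(I)}\cup\text{(II)}$ is exactly this union $\bigcup_q\cC_q$: a moment's thought shows that ``(a) and (b)'' is equivalent to ``none of the $\cC_q$ occur'' (your $q^*$ coincides with the first index whose $\cQ$-event fires, precisely when (a) holds, and (b) forces all earlier radii to land strictly below the shifted threshold). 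So you could shorten considerably by making that observation and invoking (\ref{eq:UBcutProb}), rather than splitting into (I)/(II) and re-running the truncated-exponential and packing computations from scratch. One small caution on the edge cases you flag at the end: in the regime $1<a_k\le 1+4\gamma$ for case (II), the $1/(e^{\lambda_k}-1)$ term alone does not save you---what makes the sum converge is that in this regime $\Pr[r_k\ge a_k]=\Omega(1)$ (since $\lambda_k\ge 4$ and $4\gamma\lambda_k\le 1$ after your choice of $c$), so $p_{<k}=O(\Pr[\cF_k])$ and the disjointness of the $\cF_k$'s still gives the needed $O(1)$ bound on $\sum_k p_{<k}$ restricted to that regime. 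The paper sidesteps this entirely by treating the full width-$2R$ window $\cC_q$ at once with the single lower truncation $\tilde\rho=\max\{\rho,1\}$.
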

	By \Cref{thm:LDDonline}, the probability that the points in the ball $B_X(x_j,\Theta(\frac{\Delta_i}{\sddim}))$ belong to different clusters is at most $\frac18$ (for an appropriate constant inside the $\Theta$ in the definition of $R$).
	Intuitively, this should be enough to prove the claim.
	Unfortunately, for a formal proof, we will need to inspect the proof of \Cref{thm:LDDonline} more closely.
\begin{proof}
	Recall the proof of \Cref{thm:LDDonline}. Set $R=\Theta(\frac{\Delta_i}{\sddim})$. Using the terminology used in the proof of \Cref{thm:LDDonline},
	 $\cC_q$ is the event that for every $k<q$, we have $\frac{\Delta}{4}\cdot r_{k}<d_{X}(x_{j},x_{k})-R$ and $d_{X}(x_{j},x_{q})-R\le\frac{\Delta}{4}\cdot r_{q}<d_{X}(x_{j},x_{q})+R$.
	Observe that if none of the events $\{\cC_q\}_q$ occurred, then $\partial_{i}(x_{j})> R$.
	Indeed, in this case, suppose $x_j\in P^j_q$.
	For $k<q$, as $\cC_k$ has not occurred,  $d_{X}(x_{k},x_{j})-r_{k}\cdot\frac{\Delta_{i}}{4}>\left(r_{k}\cdot\frac{\Delta_{i}}{4}+R\right)-r_{k}\cdot\frac{\Delta_{i}}{4}=R$. From the other hand, as $\cC_q$ has not  occurred, $r_{q}\cdot\frac{\Delta_{i}}{4}-d_{X}(x_{q},x_{j})>r_{q}\cdot\frac{\Delta_{i}}{4}-\left(\frac{\Delta}{4}\cdot r_{q}-R\right)=R$.
	It follows that $\partial_{i}(x_{j})\coloneqq\min_{x_{k}\in N_{i}^{q}}\left|r_{k}\cdot\frac{\Delta_{i}}{4}-d_{X}(x_{k},x_{j})\right|> R$.
	
	By \Cref{eq:UBcutProb}, it holds that $\Pr\left[\bigcup_{j}\mathcal{C}_{j}\right]=O(\ddim_k)\cdot \frac{R}{\Delta_i}\le\frac18$, where the last equation holds for an appropriate constant in the definition of $R$.
\end{proof}

\Cref{clm:BoundCoordinate} and \Cref{clm:paddingProb} are already enough to prove that our embedding has expected distortion $O(\ddim\cdot\log\Phi)$. However, as $\Phi$ may be unbounded, we will also bound the distortion by a function of $n$.

\begin{claim}\label{clm:nonZeroCoordinates}	
	For each $j$, $f(x_j)$ is nonzero in at most $3j$ coordinates.
\end{claim}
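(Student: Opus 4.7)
The plan is to track the necessary conditions for $f_i(x_j) \neq 0$ and count them. Recall $f_i(x_j) = \alpha_{q_i,i} \cdot \partial_i(x_j)$, where $q_i$ denotes the index such that $x_j \in P_{q_i}^i \in \cP_i$. Hence a necessary condition for $f_i(x_j) \neq 0$ is $\alpha_{q_i,i} \neq 0$. I would first unpack the definition of $\alpha_{q,i}$ to derive constraints on the pairs $(q,i)$ that can yield $\alpha_{q,i} \neq 0$.

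By construction, $\alpha_{1,i} = 0$ for every $i$, and for $q > 1$ we have $\alpha_{q,i} = 0$ whenever $i \leq \tilde i_q - 3$, where $\tilde i_q$ is the maximum index with $x_q \in N_{\tilde i_q}$. Therefore $\alpha_{q_i,i} \neq 0$ forces both $q_i \neq 1$ and $i \geq \tilde i_{q_i} - 2$. On the other hand, since $x_{q_i}$ is the center of the cluster $P_{q_i}^i$, we must have $x_{q_i} \in N_i$, i.e.\ $i \leq \tilde i_{q_i}$. Combining these two bounds on $i$ gives
\[
i \in \{\tilde i_{q_i}-2,\, \tilde i_{q_i}-1,\, \tilde i_{q_i}\},
\]
so for each fixed value $q$, at most three scales $i$ can contribute a nonzero coordinate $f_i(x_j)$.

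It remains to bound the number of distinct values taken by $q_i$ as $i$ ranges over $\mathbb{Z}$. I would argue that $q_i \leq j$: when $x_j$ is processed, it joins the cluster $P^i_{q_i}$ centered at an already-existing net point $x_{q_i} \in N_i^{(j)}$, and once this decision is made it is irrevocable. Combined with $q_i \neq 1$, we get $q_i \in \{2,\dots,j\}$, giving at most $j-1$ distinct values of $q_i$. Multiplying by the factor of $3$ from the previous paragraph yields the bound $3(j-1) < 3j$ nonzero coordinates for $f(x_j)$, which is what was claimed. The argument is essentially bookkeeping: the main subtlety is making sure that both inequalities $\tilde i_{q_i} - 2 \leq i \leq \tilde i_{q_i}$ are correctly extracted, and that the cluster center of $x_j$ at any scale must lie among the first $j$ input points.
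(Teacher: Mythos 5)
Your proof is correct and follows essentially the same approach as the paper: both arguments observe that for a fixed cluster center $x_q$, the constraint $x_q\in N_i$ forces $i\le\tilde i_q$, while $\alpha_{q,i}\neq 0$ forces $i\ge\tilde i_q-2$, leaving at most three scales per center, and then sum over the at most $j$ possible centers. The only cosmetic difference is that you start from the coordinate $i$ and deduce the window for $q_i$, whereas the paper fixes $q$ first and counts scales; you also note $q\neq 1$ to squeeze the bound to $3(j-1)$, which the paper does not bother with.
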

\begin{proof}
	Fix $q\le j$, and let $\cI_q=\left\{ i\in\Z : x_{j}\in P_{q}^{i}\text{ and }\alpha_{q,i}=1\right\}$ be the set of scales $i$ such that $x_j$ belongs to the cluster $P_q^i$ centered at $x_q$ where $\alpha_{q,i}\ne0$. We first argue that $|\cI_q|\le3$.
	
	Following the definition of $\alpha_{q,i}$, recall that $\tilde{i}_q$ is the maximum scale such that $x_q\in N_{\tilde{i}_q}$.
	For every $i>\tilde{i}_q$, as $x_j$ joins a cluster centered in a net point, clearly $x_{j}\notin P_{q}^{i}$.
	For $i\le\tilde{i}_q-3$, by definition $\alpha_{q,i}=0$. Thus there are exactly three scales in which $x_j$ could potentially join $P_{q}^{i}$, where $\alpha_{q,i}$ could possibly equal $1$.
	
	The claim now follows, as the subset of coordinates $i$ where $f_i(x_j)\ne 0$ is a subset of $\bigcup_{q\le j}\cI_q$.
\end{proof}

The next lemma bounds the expected distortion.
\begin{lemma}\label{lem:EuclideanEmbeddingSingleSample}
	Consider two points $x_j,x_q$, and let $\ddim$ be the doubling dimension of the metric space $(X,d_X)$.
	It holds that
	\begin{itemize}
		\item $\|f(x_j)-f(x_{q})\|^2_2= O\left(\min\left\{\log\Phi,n\right\}\right)\cdot d^2_X(x_j,x_{q})$, regardless of random choices; and
		\item  $\E\left[\|f(x_j)-f(x_{q})\|^2_2\right]=\Omega(\frac{1}{\sddim^2})\cdot d^2_X(x_j,x_{q})$.
	\end{itemize}
\end{lemma}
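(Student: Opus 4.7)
The plan is to prove the two bounds separately. For the upper bound I would combine the per-coordinate $1$-Lipschitz property from \Cref{clm:BoundCoordinate} with the trivial coordinate bound $|f_i(x)|\le \Delta_i/2$ and the nonzero-coordinate count from \Cref{clm:nonZeroCoordinates}; for the lower bound I would isolate a single scale $i^*\approx \log d_X(x_j,x_q)$ that, by itself, contributes $\Omega(d_X^2/\ddim^2)$ in expectation.

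First I would dispose of the $O(n)\cdot d_X^2$ side of the upper bound: by \Cref{clm:nonZeroCoordinates}, at most $3(j+q)\le 6n$ coordinates of $f(x_j)-f(x_q)$ are nonzero, and each is at most $d_X(x_j,x_q)$ in absolute value by \Cref{clm:BoundCoordinate}, so the squared norm is at most $O(n)\cdot d_X^2(x_j,x_q)$. For the $O(\log\Phi)\cdot d_X^2$ side, I would partition the coordinate indices into three ranges. For very large $i$ with $\Delta_i\ge 4\max\{d_X(x_1,x_j),d_X(x_1,x_q)\}$, both points lie in $B(x_1,\Delta_i/4)\subseteq P^i_1$, and the deterministic choice $\alpha_{1,i}=0$ forces $f_i(x_j)=f_i(x_q)=0$. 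For very small $i$ with $\Delta_i\le d_X(x_j,x_q)$, the trivial bound $|f_i(x)|\le \Delta_i/2$ gives a per-coordinate contribution $\le \Delta_i^2$, and these sum over a geometric sequence to $O(d_X^2(x_j,x_q))$. The remaining $O(\log\Phi)$ middle coordinates each contribute at most $d_X^2(x_j,x_q)$ via Lipschitz, giving $O(\log\Phi)\cdot d_X^2(x_j,x_q)$ in total.

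For the lower bound I would fix $i^*$ so that $\Delta_{i^*}\in (d_X(x_j,x_q)/4,\,d_X(x_j,x_q)/2]$. At this scale every cluster has diameter at most $\Delta_{i^*}<d_X(x_j,x_q)$, which forces $x_j$ and $x_q$ into distinct clusters $P^{i^*}_a$ and $P^{i^*}_b$ with $a\ne b$. By \Cref{clm:paddingProb}, with probability at least $\tfrac{7}{8}$ the padding satisfies $\partial_{i^*}(x_j)\ge c\Delta_{i^*}/\ddim=\Omega(d_X(x_j,x_q)/\ddim)$. Conditional on this, the coefficients $\alpha_{a,i^*}$ and $\alpha_{b,i^*}$ attached to the two distinct clusters are independent random bits (when both are in their ``random'' regime), so with probability $\Omega(1)$ we have $\alpha_{a,i^*}=1$ and $\alpha_{b,i^*}=0$, whence $|f_{i^*}(x_j)-f_{i^*}(x_q)|=\partial_{i^*}(x_j)=\Omega(d_X(x_j,x_q)/\ddim)$. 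This single coordinate already contributes $\Omega(d_X^2(x_j,x_q)/\ddim^2)$ to $\E[\|f(x_j)-f(x_q)\|_2^2]$, as required.

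The hardest step will be handling the possibility that $\alpha_{a,i^*}$ or $\alpha_{b,i^*}$ is deterministically zeroed by the rule $\alpha_{q,i}=0$ for $i\le \tilde i_q-3$. If both cluster centers have $\tilde i$ much larger than $i^*$, the naive single-scale argument breaks. My plan is to choose the scale adaptively rather than absolutely: pick $i^*$ from the window $\{\tilde i_a-2,\tilde i_a-1,\tilde i_a\}$ of the cluster center $x_a$ containing $x_j$ (so that $\alpha_{a,\cdot}$ is genuinely random there), and verify, using the net properties from \Cref{lem:nets}, that a scale in this window can be arranged to simultaneously (a) force $x_j,x_q$ into distinct clusters and (b) satisfy the padding lower bound. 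Coordinating the $\tilde i$ parameters with the cluster-separation and padding regimes is the single step that genuinely requires care.
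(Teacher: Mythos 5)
Your upper-bound argument is correct and matches the paper's proof almost verbatim: the three-range decomposition (large $i$ zeroed by $\alpha_{1,i}=0$, small $i$ geometrically summable via the trivial bound $|f_i|\le \Delta_i/2$, middle $O(\log\Phi)$ coordinates bounded by Lipschitz), plus the $O(n)$ bound from \Cref{clm:nonZeroCoordinates}, is exactly what the paper does.

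The lower-bound proposal, however, has a genuine gap precisely in the step you flag as ``the single step that genuinely requires care.'' Your proposed fix --- choose $i^*$ adaptively from the window $\{\tilde i_a-2,\tilde i_a-1,\tilde i_a\}$ of the cluster center $x_a$ containing $x_j$ --- is circular: which center $x_a$ captures $x_j$ depends on the scale $i^*$ (and on the random radii), so you cannot choose $i^*$ as a function of $\tilde i_a$. Moreover, a scale in that window need not lie near $\log d_X(x_j,x_q)$, so separation is not guaranteed. The paper's resolution is both simpler and non-adaptive: fix $i^*=\lfloor\log(d_X(x_j,x_q)/2)\rfloor$ as you do, and then observe that the ``both coefficients deterministically zero'' case is \emph{vacuous}. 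If both cluster centers $x_{s_j},x_{s_q}$ had $\tilde i_{s_j},\tilde i_{s_q}\ge i^*+3$, they would both lie in $N_{i^*+3}$ and hence be at distance $>2^{i^*+3}$; but by the triangle inequality and the radius bound $r\le 2$ one has $d_X(x_{s_j},x_{s_q})\le \tfrac{\Delta_{i^*}}{2}+4\Delta_{i^*}+\tfrac{\Delta_{i^*}}{2}=5\Delta_{i^*}<2^{i^*+3}$, a contradiction. So at least one $\alpha$ is live, which is all you need.

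Two smaller points. First, your asymmetric event ``$\alpha_a=1$ and $\alpha_b=0$'' only conditions on $\partial_{i^*}(x_j)$ being large, and it fails when $\alpha_a$ is the deterministically zeroed one and $\alpha_b$ is live; you would need a case split. The paper sidesteps this by using the symmetric event $\alpha_{s_j,i^*}\ne\alpha_{s_q,i^*}$ (which has probability exactly $\tfrac12$ in every non-vacuous case) together with \emph{both} padding events $\partial_{i^*}(x_j),\partial_{i^*}(x_q)\ge c\Delta_{i^*}/\ddim$, then bounds the contribution by $\min\{\partial_{i^*}(x_j),\partial_{i^*}(x_q)\}$. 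Second, conditioning on padding events and then invoking independence of the $\alpha$'s is fine (the $\alpha$'s are sampled independently of the radii), but the cleaner bookkeeping is a union bound over the three bad events, giving $\Pr[\Psi]\ge 1-\tfrac12-\tfrac18-\tfrac18=\tfrac14$, as in the paper.
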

\begin{proof}
	We begin by proving the upper bound w.r.t.\ $\log \Phi$.
	Let
 $$i_{\max}=\left\lceil \log\left(4\cdot\max\left\{ d_{X}(x_{1},x_{j}),d_{X}(x_{1},x_{q})\right\} \right)\right\rceil.$$
 By \Cref{thm:LDDonline}, for every $i\ge i_{\max}$ all the points at distance $\frac{\Delta_i}{4}\ge\max\{d_{X}(x_{1},x_{j}),d_{X}(x_{1},x_{q})\}$ from $x_1$ are contained in the cluster $P_1^i$. In particular $x_{j},x_{q}\in P_1^i$. As $\alpha_{1,i}=0$ by definition, it follows that $f_{i}(x_j)=f_{i}(x_{q})=0$. Thus the contribution of all these coordinates is $0$.
	
Let $i_{\min}=\left\lceil \log d_{X}(x_{j},x_{q})\right\rceil$. Now \Cref{clm:BoundCoordinate} yields
    \begin{align}
		\left\Vert f(x_{j})-f(x_{q})\right\Vert _{2}^{2} & =\sum_{i\le i_{\min}}|f_{i}(x_{j})-f_{i}(x_{q})|^{2}+\sum_{i=i_{\min}+1}^{i_{\max}}|f_{i}(x_{j})-f_{i}(x_{q})|^{2}\nonumber\\
		& \le\sum_{i\le i_{\min}}\left(\frac{\Delta_{i}}{2}\right)^{2}+\sum_{i=i_{\min}+1}^{i_{\max}}\left(d_{X}(x_{j},x_{q})\right)^{2}\nonumber\\
		& =O(1)\cdot2^{2i_{\min}}+\left(i_{\max}-i_{\min}\right)\cdot\left(d_{X}(x_{j},x_{q})\right)^{2}\nonumber\\
		& =O(\log\Phi)\cdot\left(d_{X}(x_{j},x_{q})\right)^{2}~,\label{eq:first}
	\end{align}
	where the last equality holds as $i_{\max}-i_{\min}\le\log\left(\frac{4\cdot\max\{d_{X}(x_{1},x_{j}),d_{X}(x_{1},x_{q})\}}{d_{X}(x_{j},x_{q})}\right)+1\le O(\log\Phi)$.
	It follows that $\left\Vert f(x_{j})-f(x_{q})\right\Vert_{2}^2\le O(\log\Phi)\cdot d^2_X(x_j,x_{q})$.
	
	Next we prove the second part of the upper bound. Let $\cI\subset \Z$ be the set of scales where either $f_i(x_j)$ or $f_i(x_q)$ is nonzero.  \Cref{clm:nonZeroCoordinates} yields $|\cI|\le(j+q)\cdot3<6n$. Then
   \Cref{clm:BoundCoordinate} implies that
	\begin{equation}\label{eq:second}
	\left\Vert f(x_{j})-f(x_{q})\right\Vert _{2}^{2}=\sum_{i\in\cI}\left|f_{i}(x_{j})-f_{i}(x_{q})\right|^{2}\le\sum_{i\in\cI}\left(d_{X}(x_{j},x_{q})\right)^{2}\le6 n\cdot\left(d_{X}(x_{j},x_{q})\right)^{2}~.
	\end{equation}
	The combination of \cref{eq:first,eq:second} yields $\|f(x_j)-f(x_{q})\|^2_2= O\left(\min\left\{\log\Phi,n\right\}\right)\cdot d^2_X(x_j,x_{q})$.

	Finally, we prove the lower bound. 	Let
 \[
 i^*=\left\lfloor \log\frac{d_{X}(x_{j},x_{q})}{2}\right\rfloor~.
 \]
 Note that $\Delta_{i^{*}}=2^{i^{*}}>2^{\log\frac{d_{X}(x_{j},x_{q})}{2}-1}$,
	or $d_{X}(x_{j},x_{q})<4\cdot\Delta_{i^{*}}$, and $\Delta_{i^{*}}\le2^{ \log\frac{d_{X}(x_{j},x_{q})}{2} }=\frac{d_{X}(x_{j},x_{q})}{2}$.
	As the partition $\cP_i$ is $\Delta_{i^{*}}$-bounded, $x_q$ and $x_j$ will belong to different clusters.
	Let $x_{s_j},x_{s_q}\in N_{i^*}$  be the net points such that $x_j\in P^{i^*}_{s_j}$ and $x_{q}\in P^{i^*}_{s_q}$.
	
	Let $\Psi_\alpha$ be the event that $\alpha_{s_j,i^*}\ne \alpha_{s_q,i^*}$.
	We argue that $\Pr[\Psi_\alpha]=\frac12$, regardless of the centers $x_{s_j}$ and $x_{s_q}$ (which are determined by the choice of the radii).
	Recall that $\alpha_{s_q,i^*}$ (resp., $\alpha_{s_j,i^*}$) equals $1$ with probability $\frac12$ iff $i^*>\tilde{i}_q-3$ (resp., $i^*>\tilde{i}_j-3$).
	If $i^*>\max\{\tilde{i}_q,\tilde{i}_j\}-3$, then clearly $\Pr[\Psi_\alpha]=\frac12$. If (w.l.o.g.)
	$\tilde{i}_q-3<	i^*\le \tilde{i}_j-3$, then $\alpha_{s_j,i^*}=0$ while $\alpha_{s_q,i^*}$ equals $1$ with probability $\frac12$, as required.
	The only problematic case is when $i^*\le\min\{\tilde{i}_q,\tilde{i}_j\}-3$. In that case, however, we have $x_{s_{q}},x_{s_{j}}\in N_{i^{*}+3}$, while the triangle inequality yields
	\begin{align*}
		d_{X}(x_{s_{j}},x_{s_{q}}) & \le d_{X}(x_{s_{j}},x_{j})+d_{X}(x_{j},x_{q})+d_{X}(x_{q},x_{s_{q}})\\
		& \le\frac{\Delta_{i^{*}}}{2}+4\Delta_{i^{*}}+\frac{\Delta_{i^{*}}}{2}=5\Delta_{i^{*}}<2^{i^{*}+3}~,
	\end{align*}
	which contradicts the assumption that $N_{i^{*}+3}$ is a $2^{i^{*}+3}$-net.
	
	Let $\Psi_j$ (resp., $\Psi_q$) be the event that $\partial_{i}(x_{j})\ge c\cdot\frac{\Delta_{i}}{\sddim}$ (resp., $\partial_{i}(x_{q})\ge c\cdot\frac{\Delta_{i}}{\sddim}$), where $c$ is the constant from \Cref{clm:paddingProb}. Then by \Cref{clm:paddingProb}, $\Pr[\Psi_j],\Pr[\Psi_q]\ge\frac78$. Let $\Psi$ be the event that all three events $\Psi_\alpha,\Psi_j,\Psi_q$ occur simultaneously. Then by the union bound $\Pr\left[\overline{\Psi}\right]\le\Pr\left[\overline{\Psi}_{\alpha}\right]+\Pr\left[\overline{\Psi}_{j}\right]+\Pr\left[\overline{\Psi}_{q}\right]\le\frac{1}{2}+\frac{1}{8}+\frac{1}{8}=\frac{3}{4}$,
	implying that $\Pr\left[\Psi\right]\ge\frac{1}{4}$.
	
	If the event $\Psi$ indeed occurred, then
	\begin{align*}
		\left\Vert f(x_{j})-f(x_{q})\right\Vert _{2} & \ge\left|f_{i^{*}}(x_{j})-f_{i^{*}}(x_{q})\right|=\left|\alpha_{s_{j}}\cdot\partial_{i^{*}}(x_{j})-\alpha_{s_{q}}\cdot\partial_{i^{*}}(x_{q})\right|\\
		& \ge\min\left\{ \partial_{i^{*}}(x_{j}),\partial_{i^{*}}(x_{q})\right\} \ge c\cdot\frac{\Delta_{i^{*}}}{\sddim}
        =\Omega\left(\frac{1}{\sddim}\right)\cdot d_{X}(x_{j},x_{q})~.
	\end{align*}
	We conclude that
	\[
	\E\left[\|f(x_{j})-f(x_{q})\|_{2}^{2}\right]
     \ge\E\left[\|f(x_{j})-f(x_{q})\|_{2}^{2}\Big\mid\Psi\right]\cdot\Pr\left[\Psi\right]
     \ge\Omega\left(\frac{1}{\sddim^{2}}\right)\cdot d_{X}^{2}(x_{j},x_{q})~,
	\]
	as required.
\end{proof}

For every $j,q\in[n]$, set $D_{j,q}=\E\left[\|f(x_j)-f(x_{q})\|^2_2\right]$.
\begin{claim}\label{clm:EuclideanMatrix}
	 $\{D_{j,q}\}_{j,q\in[n]}$ is an Euclidean distance matrix. That is, there are some points $y_1,\dots,y_n\in\ell_2$ of some arbitrary dimension (at most $n-1$) such that $D_{j,q}=\|y_j-y_q\|^2_2$.
\end{claim}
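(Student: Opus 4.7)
The plan is to realize the random vectors $f(x_1),\dots,f(x_n)$ as elements of a single (real) Hilbert space, under an inner product for which the induced squared distance is exactly $D_{j,q}$. Let $(\Omega,\mu)$ be the underlying probability space on which the radii $\{r_j\}$ and the booleans $\{\alpha_j\}$ are sampled. Consider the Hilbert space
\[
\mathcal{H}=L_2\bigl(\Omega,\mu;\ell_2\bigr)=\Bigl\{F:\Omega\to\ell_2 \;\Big|\; \E\bigl[\|F\|_{\ell_2}^2\bigr]<\infty\Bigr\},
\]
equipped with the inner product $\langle F,G\rangle_{\mathcal{H}}=\E\bigl[\langle F,G\rangle_{\ell_2}\bigr]$. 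Each random vector $f(x_j)$ is an element of $\mathcal{H}$: by \Cref{lem:EuclideanEmbeddingSingleSample}, $\|f(x_j)-f(x_1)\|_{\ell_2}^2\leq O(\min\{\log\Phi,n\})\cdot d_X(x_1,x_j)^2$ deterministically, and $f(x_1)\equiv 0$, so $\E[\|f(x_j)\|_{\ell_2}^2]<\infty$.

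By construction, the squared $\mathcal{H}$-distance between two such random vectors is exactly
\[
\|f(x_j)-f(x_q)\|_{\mathcal{H}}^2=\E\bigl[\|f(x_j)-f(x_q)\|_{\ell_2}^2\bigr]=D_{j,q}.
\]
Thus the finite set $\{f(x_j)\}_{j=1}^n\subset\mathcal{H}$ realizes the matrix $\{D_{j,q}\}$ as a squared distance matrix inside a Hilbert space.

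To conclude, it suffices to note that any finite set of $n$ points in a Hilbert space lies inside an affine subspace of dimension at most $n-1$, which is linearly isometric to Euclidean space $\R^{n-1}\subset\ell_2$. Fixing an orthonormal basis of the linear span of $\{f(x_j)-f(x_1)\}_{j=2}^n$ inside $\mathcal{H}$ and expressing each $f(x_j)-f(x_1)$ in this basis yields vectors $y_1,\dots,y_n\in\R^{n-1}$ (set $y_1=0$) with $\|y_j-y_q\|_2^2=\|f(x_j)-f(x_q)\|_{\mathcal{H}}^2=D_{j,q}$, as required. There is no real obstacle here; the only small point to check is the square-integrability of each $f(x_j)$, which is immediate from the deterministic upper bound already established.
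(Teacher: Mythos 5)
Your proof is correct and takes essentially the same route as the paper: view each $f(x_j)$ as a vector-valued random variable (equivalently, an $L_2$ function over the probability space of the radii and booleans) so that $D_{j,q}$ is the squared Hilbert-space distance, then embed the $n$ resulting points isometrically into $\R^{n-1}$. The only (welcome) addition is your explicit check of square-integrability via the deterministic bound in \Cref{lem:EuclideanEmbeddingSingleSample}, which the paper leaves implicit.
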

\begin{proof}
	For every point $x_j$, define a function $g_j:[1,2]^n\times\{0,1\}^n\rightarrow\ell_2$ as follows: given $\{r_j\}_{j\in [n]}\in [1,2]^n$ and $\{\alpha_j\}_{j\in [n]}\in \{0,1\}^n$, $g_j\left(\{r_j\}_{j\in [n]},\{\alpha_j\}_{j\in [n]}\right)$ equals to $f(x_j)$ as defined above. It holds that
	\begin{align*}
		D_{j,q} & =\E\left[\|f(x_{j})-f(x_{q})\|_{2}^{2}\right]\\
		& =\int_{\{r_{k}\}_{k},\{\alpha_{k}\}_{k}}\left\Vert g_{j}(\{r_{k}\}_{k},\{\alpha_{k}\}_{k})-g_{q}(\{r_{k}\}_{k},\{\alpha_{k}\}_{k})\right\Vert _{2}^{2}dr_{1}\dots dr_{n}d\alpha_{1}\dots d\alpha_{n}\\
		& =\|g_{j}-g_{q}\|_{2}^{2}~,
	\end{align*}
	where the measure used in the integration is according to the truncated exponential distribution used to sample the radii.
	Thus we obtain a Hilbert space over functions, and $\{D_{j,q}\}_{j,q\in[n]}$ represents the squared distances between functions in this space. As any $n$ points in every Hilbert space can be isometrically embedded into Euclidean $\R^{n-1}$ space, the claim follows.
\end{proof}

The values $D_{j,q}$ can be computed (e.g., using conditional expectation). From the Euclidean distance matrix $\{D_{j,q}\}_{j,q\in[n]}$, one can compute points $y_1,\dots,y_n\in\ell_2$ in \Cref{clm:EuclideanMatrix} using orthgonalization (e.g., compute an orthonormal basis using the Gram-Schmidt process, and express $x_1,\dots,x_n$ in that basis).
Furthermore, the embedding is extendable in an online fashion (as the Gram-Schmidt process successively computes orthonormal bases for the subspaces spanned by prefixes $y_1,\ldots, y_i$, for $i=1,\ldots , n$).
Our embedding is naturally defined: The point $x_j$ is mapped to $y_j\in \ell_2$.
After embedding $x_1,\ldots ,x_{n-1}$ to $y_1,\dots,y_{n-1}\in \ell_2$, given a new metric point $x_n$, the squared distances $\{D_{j,n}\}_{j\in[n]}$ are now defined, and a point $y_{n}\in\ell_2$ can be computed using orthogonalization. The distortion guarantee holds by \Cref{lem:EuclideanEmbeddingSingleSample}.

\section{Lower Bound for Distortion in Online Euclidean Embedding}
\label{sec:LB:distortion}

In a classic lower bound construction, Newman and Rabinovich~\cite{NR02} showed that there exists an $n$-vertex planar graph $G$ (for arbitrarily large $n$), such that every embedding of $G$ into $\ell_2$ will suffer from distortion $\Omega(\sqrt{\log n})$. The example they used is the diamond graph. However, the diamond graph has a large doubling dimension.
Later, Gupta, Krauthgamer, and Lee \cite{GKL03} used a similar Laakso graph \cite{Laakso02}, to obtain a metric family with uniformly constant doubling dimension, and showed that it requires distortion  $\Omega(\sqrt{\log n})$ in order to be embedded into $\ell_2$.
Recently, Newman and Rabinovich \cite{NR20}, used similar arguments to \cite{NR02} (and the same diamond graphs) to show that there is a family of metric spaces such that every online embedding into $\ell_2$ requires distortion $\Omega(\sqrt{\log\Phi})$.
In this section, we similarly adapt the construction from  \cite{GKL03} to exhibit a family of metric spaces, all with uniformly constant doubling dimension, such that every deterministic online embedding into $\ell_2$ requires distortion $\Omega(\sqrt{\log \Phi})$.

The lower bound from \cite{NR20} is against a deterministic online embedding algorithm. Our lower bound here is stronger, as it also holds against a non-expansive
random online embedding (that succeeds on all vertex pairs with positive constant probability).

\OnlinegHSTLB*
Note that our \Cref{thm:DoublingOnlineEmbeddingtoEuclidean} on this family $\cal{M}_k$ will guarantee expansion $O(\sqrt{\log \Phi})$ and contraction $\Omega(1)$.
As this is an embedding into $\ell_2$, given the family $\cM$ in advance, we can scale and obtain a non-contractive embedding with expansion $O(\sqrt{\log \Phi})$.
Thus \Cref{thm:LBEuclideanDoubling} implies that our \Cref{thm:DoublingOnlineEmbeddingtoEuclidean} is tight in its dependence on the aspect ratio $\Phi$, as well as its dependence on the cardinally $n$. Further, an advance knowledge of $n$ or $\Phi$, or even the fact that the metric is a shortest path metric of a series-parallel graph, would not help.
\begin{proof}[Proof of \Cref{thm:LBEuclideanDoubling}]
  The Laakso graph is a family of graphs $\{G_n\}_{n\ge0}$. The graph $G_0$ consists of a single edge of weight $1$; $G_1$ consists of a $4$-cycle with two additional leafs, all of edge weight $4^{-1}$, illustrated in \Cref{fig:Laakso}. In general, the level $k$ Laakso graph, $G_k$, is obtained from $G_{k-1}$ by replacing each edge in $G_{k-1}$ by a copy of $G_1$, where all edges in $G_{k}$ have weight $4^{-k}$. One can embed isometrically (w.r.t.\ shortest path distance) the vertices of $G_{k-1}$ into $G_k$. The Laakso graphs are planar (in fact, series-parallel) with constant doubling dimension \cite{LP01} (see also \cite{Laakso00,Laakso02}). Nonetheless, it is known \cite{GKL03} that every embedding of the $k$-Laakso graph $G_k$ into $\ell_2$ has distortion at least $\sqrt{k}$. As $G_k$ contains $n<6^k$ vertices, it is an example of a doubling metric for which every embedding into $\ell_2$ has distortion $\Omega(\sqrt{\log n})$ (and this bound is tight \cite{GKL03,KLMN04}).
	
	\begin{figure}[t]
		\centering
		\includegraphics[width=.84\textwidth]{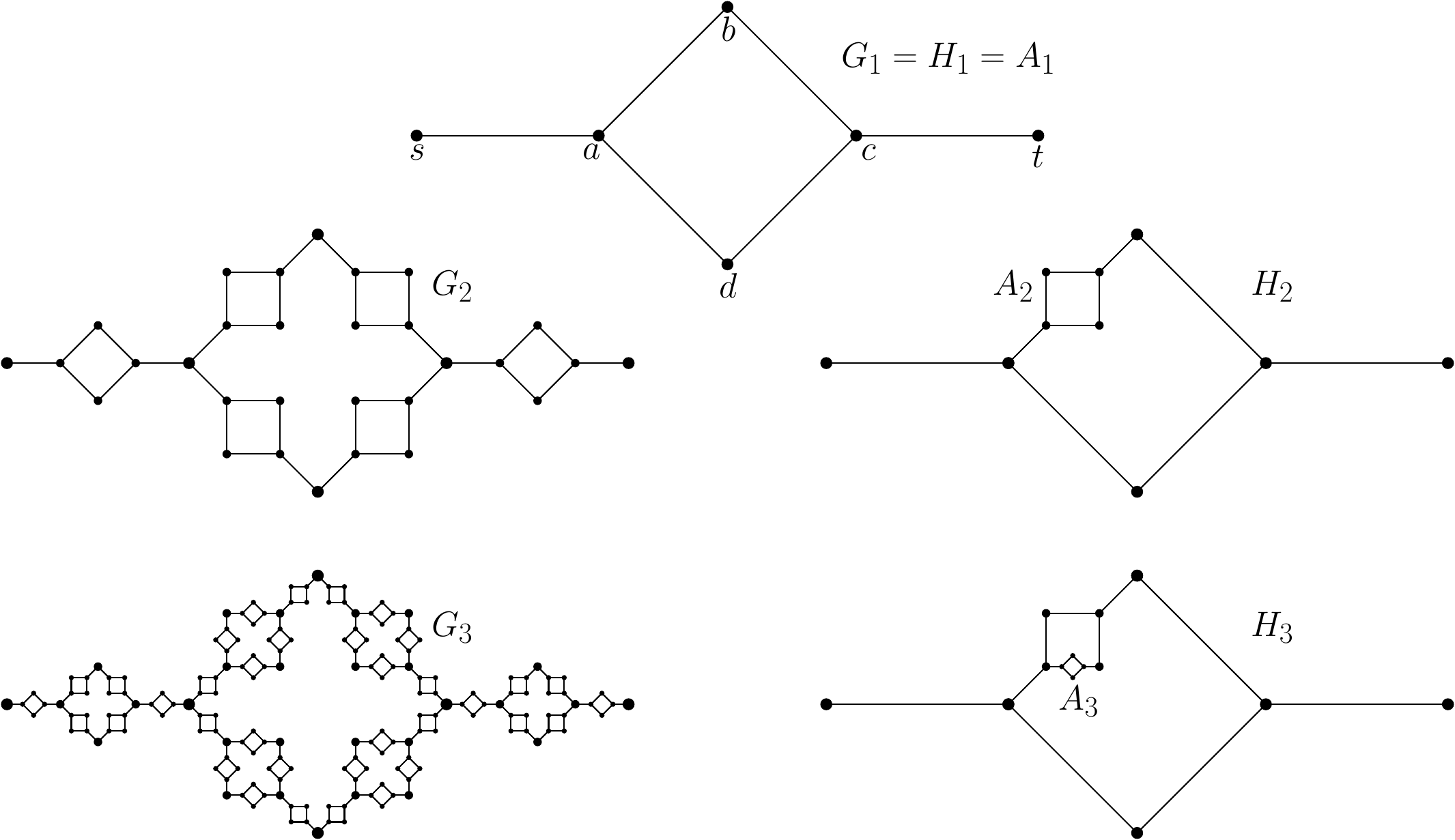}
		\caption{\footnotesize{The Laakso Graph. On the left represented the Laakso graphs $G_1, G_2, G_3$. On the right is our version, where only a single edge is replaced.
		}}
		\label{fig:Laakso}
	\end{figure}
	
	We construct a family of graphs, denoted $\{H_k\}_{k\ge0}$, constructed in a similar manner to the Laakso graphs, however in the recursive step, instead of replacing every edge with a copy of $G_1$, we replace only a single random edge with a copy of $G_1$. The choice of which edge to replace will be done adaptively w.r.t. the embedding produced by the online algorithm.
	It will hold that $H_k$ can be embedded isometrically into $G_k$.
	More formally, fix a deterministic online embedding algorithm, and let $f$ be the embedding function. We assume that $f$ is dominating, and will give a lower bound for its expansion.
	We first feed the online algorithm with $H_0=G_0$ and $H_1=G_1$, which are the same as the Laakso graphs. In the graph $G_k$, we will have a single copy of $G_1$ with weights $4^{-k}$, we will denote this copy by $A_k$, and its vertices by $s_k,a_k,b_k,c_k,d_k,t_k$ according the labeling in \Cref{fig:Laakso}.
	To obtain $G_{k+1}$, we will choose an edge $\{x_{k},y_{k}\}\in\left\{ \{s_{k},a_{k}\},\{a_{k},b_{k}\},\{b_{k},c_{k}\},\{c_{k},d_{k}\},\{d_{k},a_{k}\},\{c_{k},t_{k}\}\right\}$ maximizing $\|f(x_k)-f(y_k)\|_2$.
	We delete the edge $\{x_k,y_k\}$ (which has weight $4^{-k}$), and add a copy of $G_1$ with weights $4^{-(k+1)}$ (this copy is denoted $A_{k+1}$), by identifying $x_k$ with $s_{k+1}$, and $y_k$ with $t_{k+1}$.
	Note that the distances between vertices formerly in $H_k$ remain unchanged.
	Next we will feed the online algorithm with the points in $H_{k+1}\setminus H_{k}$. Then we will choose $\{x_{k+1},y_{k+1}\}\in A_k$ (maximizing the expansion).
  The graph family $\{H_k\}_{k\ge0}$
	is now well defined. See \Cref{fig:Laakso} for illustration.
	Note also that every graph $H_k$ is series-parallel (in particular planar), and could be embedded isometrically into $G_k$. It follows that the shortest path metric in every $H_k$ has a constant doubling dimension (as it is a sub-metric of a metric space with a constant doubling dimension). The number of vertices in $H_k$ is $4k+2$, while the aspect ratio is $\Phi=4^k$.

	For simplicity of notation, for a vertex $v\in H_k$, we denote by $\vec{v}=f(v)$ the embedding of $v$ into $\ell_2$.
	Denote by $\rho_k$ the expansion of the pair $x_k,y_k$, that is, $\rho_{k}=\frac{\|\vec{x}_{k}-\vec{y}_{k}\|}{d_{H_{k}}(x_{k},y_{k})}$.
	We argue by induction that for every $k\geq 0$, we have
	\begin{equation}\label{eq:InductionHypothesis}
    \rho_{k}\ge\sqrt{\frac{k+4}{4}}=\Omega(\sqrt{\log\Phi})=\Omega(\sqrt{|H_{k}|})~.
	\end{equation}
	Note that once the induction is complete, the deterministic component of the theorem will follow.
	
	The base case $k=0$ of the induction is immediate as for every non-contractive embedding $f$, for the first two points, $x_0$ and $y_0$ it holds that $\rho_{0}=\frac{\left\Vert \vec{x_{0}}-\vec{y_{0}}\right\Vert _{2}}{d_{H_{0}}(x_{0},y_{0})}\ge1=\sqrt{\frac{0+4}{4}}$.
	Suppose that \Cref{eq:InductionHypothesis} holds for the graph $H_{k-1}$, that is, $\rho_{k-1}=\frac{\|\vec{x}_{k-1}-\vec{y}_{k-1}\|_{2}}{d_{H_{k}}(x_{k-1},y_{k-1})}\ge\sqrt{\frac{k+3}{4}}$). We prove \Cref{eq:InductionHypothesis} for $H_k$.
	Recall that in $H_k$ we replace $\{x_{k-1},y_{k-1}\}$  with the copy $A_k$ with the vertices $s_k,a_k,b_k,c_k,d_k,t_k$. We will use the following
	inequality, which states that for every $\vec{s},\vec{t},\vec{a},\vec{b},\vec{c},\vec{d}\in\ell_{2}$,
	\begin{equation}
		\|\vec{s}-\vec{t}\|_{2}^{2}+\|\vec{b}-\vec{d}\|_{2}^{2}\le4\cdot\left(\|\vec{a}-\vec{s}\|_{2}^{2}+\|\vec{c}-\vec{t}\|_{2}^{2}\right)+2\cdot\left(\|\vec{a}-\vec{b}\|_{2}^{2}+\|\vec{b}-\vec{c}\|_{2}^{2}+\|\vec{c}-\vec{d}\|_{2}^{2}+\|\vec{d}-\vec{a}\|_{2}^{2}\right)~.\label{eq:parallelogramExtended}
	\end{equation}
	Inequality (\ref{eq:parallelogramExtended}) was previously used in \cite{GKL03} for a similar purpose. However, as we were unable to find a reference for the proof, we add a formal proof in \Cref{sec:ParallelogramProof}.
	Using inequality (\ref{eq:parallelogramExtended}),
	the fact that every edge in $A_k$ has weight $4^{-k}$,
	and the assumption that every edge in $A_k$ has expansion at most $\rho_k$, it follows that
	\begin{align}
		& \|\vec{s}_{k}-\vec{t}_{k}\|_{2}^{2}+\|\vec{b}_{k}-\vec{d}_{k}\|_{2}^{2}\nonumber \\
		& \quad\le4\cdot\left(\|\vec{a}_{k}-\vec{s}_{k}\|_{2}^{2}+\|\vec{c}_{k}-\vec{t}_{k}\|_{2}^{2}\right)+2\cdot\left(\|\vec{a}_{k}-\vec{b}_{k}\|_{2}^{2}+\|\vec{b}_{k}-\vec{c}_{k}\|_{2}^{2}+\|\vec{c}_{k}-\vec{d}_{k}\|_{2}^{2}+\|\vec{d}_{k}-\vec{a}_{k}\|_{2}^{2}\right)~.\nonumber \\
		& \quad\le16\cdot\left(\rho_{k}\cdot4^{-k}\right)^{2}\label{eq:poincare}
	\end{align}
	Using the induction hypothesis, and the assumption that the embedding $f$ is non-contractive,
	\begin{align}
		\|\vec{s}_{k}-\vec{t}_{k}\|_{2}^{2}+\|\vec{b}_{k}-\vec{d}_{k}\|_{2}^{2} & \ge\rho_{k-1}^{2}\cdot d_{H_{k}}(s_{k},t_{k})^{2}+d_{H_{k}}(b_{k},d_{k})^{2}\nonumber \\
		& \ge\frac{k+3}{4}\cdot4^{-2(k-1)}+(2\cdot4^{-k})^{2}=(k+4)\cdot4^{-2k+1}~.\label{eq:DistortionLB}
	\end{align}
	Combining inequalities \Cref{eq:poincare,eq:DistortionLB}, %
   it follows that
	\[
	\rho_{k}^{2}\ge\frac{4^{2k}}{16}\cdot(k+4)\cdot4^{-2k+1}=\frac{k+4}{4}~,
	\]
	and in particular, $\rho_k\ge\sqrt{\frac{k+4}{4}}$, as required.
    This proves part~(a) of \Cref{thm:LBEuclideanDoubling}

	Next, prove part~(b) of \Cref{thm:LBEuclideanDoubling}, extending the lower bound to online non-expanding stochastic embeddings. Let $\cM_n$ be the family of metrics that could possibly be produced by the process described above. This is an $(n-1)$-depth recursive process, with $6$ different options in each step (according to the choice of edge replaced by a copy of $G_1$). Thus $\cM_n$ contains $6^{n-1}$ metrics, all containing $O(n)$ points, with aspect ratio $4^{n}$, and uniformly constant doubling dimension (which constitute the shortest path metric of a series-parallel graph).
	Suppose that there is an online non-expanding stochastic embeddings with distortion (i.e.,  contraction) at most $\rho$.
	In other words, for every metric $(X,d_X)\in\cM_n$, let $x_{1},\dots,x_{m}$ be the metric points with respect to the order above in which they are reveled to the online algorithm. The online algorithm produces a random embedding $f$ that is
	\begin{enumerate}
		\item non-expanding: for every $x_{j},x_{q}$ and every
		embedding $f$ in the support of the distribution, $\|f(x_{j})-f(x_{q})\|_{2}\le d_{X}(x_{i},x_{j})$;
        and
		\item has bounded expected contraction: for every $x_{j},x_{q}$, $\mathbb{E}\left[\|f(x_{j})-f(x_{q})\|_{2}\right]\ge\frac{1}{\rho}\cdot d_{X}(x_{j},x_{q})$.
	\end{enumerate}
	For every $j,q\in[m]$, set $D_{j,q}=\mathbb{E}\left[\|f(x_{j})-f(x_{q})\|_{2}^{2}\right]$.
	On the one hand, as all embeddings in the support are non-contractive, it holds
	that
	\[
	D_{j,q}=\mathbb{E}\left[\|f(x_{j})-f(x_{q})\|_{2}^{2}\right]\le d_{X}^{2}(x_{j},x_{q})\,.
	\]
	On the other hand, from Jensen's inequality
	it follows that
	\[
	\sqrt{D_{j,q}}=\left(\mathbb{E}\left[\|f(x_{j})-f(x_{q})\|_{2}^{2}\right]\right)^{\frac{1}{2}}\ge\mathbb{E}\left[\|f(x_{j})-f(x_{q})\|_{2}\right]\ge\frac{1}{\rho}\cdot d_{X}(x_{i},x_{j})\,,
	\]
	or $D_{j,q}\ge\frac{1}{\rho^{2}}\cdot d_{X}^{2}(x_{i},x_{j})$. By
	\Cref{clm:EuclideanMatrix}, $\{D_{j,q}\}_{j,q\in[m]}$ is an
	Euclidean distance matrix. Note that the numbers $\{D_{j,q}\}_{j,q\in[m]}$
	are deterministically defined. Following the discussion after
	\Cref{clm:EuclideanMatrix}, there is a deterministic online algorithm
	producing vectors $y_{1},\dots,y_{n}\in\ell_{2}$ such that for every
	$j,q\in[m]$, $\|y_{j}-y_{q}\|_{2}^{2}=D_{j,q}$. In particular, $\frac{1}{\rho}\cdot d_{X}(x_{i},x_{j})\le\|y_{j}-y_{q}\|_{2}\le d_{X}(x_{i},x_{j})$.
	Thus we obtain a online deterministic embedding ($g(x_{j})=y_{j}$) with
	distortion at most $\rho$ for every metric in $\cM_n$.
	As the online embedding is deterministic and $\rho$ is fixed, a suitable scaling yields
    a deterministic dominating online embedding with distortion at most $\rho$.
	Now part~(a) of \Cref{thm:LBEuclideanDoubling} implies that $\rho\ge\sqrt{\frac{n+4}{4}}$,
    completing the proof of part~(b).
\end{proof}

\section{Light Perfect Matchings for Points in a Line}
\label{sec:line}

In this section, we consider a fully dynamic point set $S$ on the real line (i.e., both insertions and deletions are allowed). Our point set $S$ is not always even, and so we maintain a \emph{near-perfect matching}, which covers all but at most one point in $S$. We establish \Cref{thm:1D}, which we restate for convenience.

\begin{theorem}\label{thm:1D}
There is a data structure that maintains, for a dynamic finite point set $S\subset \mathbb{R}$, a near-perfect matching of weight $O(\log |S|)\cdot \diam(S)$ such that each point insertion or deletion incurs $O(\log^2 |S|)$ edge deletions and insertions in the matching.
\end{theorem}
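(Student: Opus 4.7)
The plan is to reduce the weight bound to a depth bound on a \emph{laminar} matching, and then solve the resulting combinatorial problem on a totally ordered set. Recall that a matching $M$ on $S\subset\mathbb{R}$ is laminar if no two of its edges interleave in the order on the line, and its depth is the maximum number of edges whose intervals contain a common point. For any laminar $M$,
\[
\sum_{\{a,b\}\in M}|b-a| \;=\; \int_{\mathbb{R}}\bigl|\{e\in M : x\in e\}\bigr|\,dx \;\le\; \mathrm{depth}(M)\cdot \diam(S),
\]
since the integrand is supported on the convex hull of $S$ and pointwise bounded by $\mathrm{depth}(M)$. So it suffices to build a data structure maintaining a laminar near-perfect matching of depth $O(\log|S|)$ while changing only $O(\log^2|S|)$ edges per update---a statement that no longer depends on the numerical coordinates, only on the order of $S$.

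For this I would maintain a balanced BST (say, a weight-balanced tree) whose leaves are the points of $S$ in sorted order, and associate to every internal node $v$ at most one \emph{virtual edge} $\vir_v$. The rule is driven by the parities of $v$'s subtrees: if both children's subtrees currently have an unmatched ``leftover'' representative, then $\vir_v$ pairs those two representatives; otherwise $\vir_v$ is empty and any remaining leftover is promoted to $v$'s parent. The realized matching $M$ is the set of nonempty $\vir_v$. Laminarity is automatic: each $\vir_v$ spans a subinterval of $v$'s subtree range, and ranges of different subtrees at the same level are disjoint while ranges along an ancestor chain are nested.

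The heart of the construction is the pair of invariants hinted at in \Cref{subsec:techIdeas}: along any root-to-leaf path, (a) the nonempty virtual edges form a nested chain, and (b) the sizes of the subtrees owning consecutive virtual edges in this chain grow by at least a constant factor as one ascends. Invariant (b) is what yields the depth bound: any point $x\in\mathbb{R}$ is covered only by virtual edges owned by nodes on the single root-to-leaf path ending at $x$'s position in sorted order, and along such a chain geometric growth in subtree sizes caps the number of nonempty virtual edges at $O(\log|S|)$. Combined with the reduction above, this gives weight $O(\log|S|)\cdot\diam(S)$ at all times.

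The main obstacle is restoring the invariants after an update while changing only $O(\log^2|S|)$ matching edges. An insertion or deletion flips the parity at each of the $O(\log|S|)$ ancestors of the new leaf and triggers rebalancing operations that affect $O(1)$ additional nodes per level. At each affected node the owned virtual edge may have to be replaced using new leftover candidates promoted from below, and the replacement must be chosen so that the geometric-growth invariant (b) is re-established. The key point to prove here is that choosing the leftover at each level as the extremal exposed representative of the child---together with the tree's balance guarantee---provides enough ``slack'' in the subtree-size scales to absorb lower-level repairs without cascading uncontrollably upwards. I would argue by a level-by-level amortized analysis that each of the $O(\log|S|)$ affected nodes contributes at most $O(\log|S|)$ matching insertions and deletions, for a total recourse of $O(\log^2|S|)$ per update, matching the stated bound.
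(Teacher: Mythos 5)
Your reduction of the weight bound to a depth bound on a laminar matching is exactly the paper's reduction, and your integral argument for $\text{weight}\le\text{depth}\cdot\diam(S)$ is fine. Beyond that, however, you take a genuinely different route. The paper does \emph{not} use a balanced BST. It maintains a collection of contiguous integer ranges under \texttt{split}/\texttt{merge} and attaches to each matching edge $ab$ a \emph{separate} interval, its ``virtual edge'' $\vir(ab)\supseteq ab$ (a different object from your $\vir_v$, which \emph{is} a matching edge pinned to a BST node). The paper's invariant (I4) requires the $\vir$-lengths along any nested chain of matching edges to double; the depth bound comes from that, and the recourse bound comes from a greedy \texttt{repair} routine together with a potential argument showing that creating $k$ pairwise non-overlapping violators costs only $O(k\log n)$ repair iterations, giving $O(\log n)$ per merge and $O(\log^2 n)$ per split. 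Your scheme is a tournament matching over the leaves of a balanced BST: a conceptually clean alternative in which the depth bound falls out of balance for free (your invariant (b) is just a restatement of balance, not something that needs to be ``re-established''), and which avoids the paper's explicit repair loop entirely. If the recourse analysis could be nailed down, this could even plausibly beat the paper's bound, since a single rotation only changes $O(1)$ matching edges.

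That said, two places where the paper does the actual work are exactly where you wave your hands, and I would call these genuine gaps. First, your laminarity claim — ``laminarity is automatic [because] ranges along an ancestor chain are nested'' — is not a proof: nested subtree ranges do not by themselves forbid interleaving. If $u$ is an ancestor of $v$ and both own edges, the endpoint of $\vir_u$ that lies in $u$'s child containing $v$ could a priori fall strictly inside the span of $\vir_v$. You need the extra observation that when $v$ owns an edge, $v$'s subtree has no exposed leftover, so no endpoint of any ancestor edge can come from inside $v$'s range; without it, laminarity is unestablished. Second, and more seriously, the recourse bound is asserted, not argued. The claim ``each of the $O(\log|S|)$ affected nodes contributes at most $O(\log|S|)$ matching insertions and deletions'' is the crux of the theorem; you offer ``I would argue by a level-by-level amortized analysis'' in its place. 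In your formulation the hard step is to show that a rotation changes only $O(1)$ edges — in particular, that the change of \emph{identity} of an exposed leftover propagated up from the rotated subtree does not trigger a cascade of edge replacements — and then to combine this with the rotation count of whatever balance scheme you use. The paper sidesteps rotations entirely by working in the \texttt{split}/\texttt{merge} model and proving a concrete iteration bound for \texttt{repair}; you would need an analogous concrete argument, and presently there is none.
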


We reduce the problem to a purely combinatorial setting. We need some definitions. Let $E$ be a set of edges on a finite set $S\subset \mathbb{R}$. We say that $E$ is \emph{laminar} if there are no two edges $a_1b_1$ and $a_2b_2$ such that $a_1<a_2<b_1<b_2$ (i.e., no two \emph{interleaving} or \emph{crossing} edges). Containment defines a partial order: We say that $a_1b_1\preceq a_2b_2$ (resp., $a_1b_1\prec a_2b_2$) if the interval $a_2b_2$ contains (resp., properly contains) the interval $a_1b_1$. The Hasse diagram of a laminar set $E$ of edges is a forest of rooted trees $F(E)$ on $E$, where a directed edge $(a_1b_1,a_2b_2)$ in $F(E)$ means that $a_2b_2$ is the shortest interval that contains $a_1b_1$. The \emph{depth} of $M$ is the depth of the forest $F(E)$. (Equivalently, the depth of $E$ is the maximum number of pairwise overlapping edges in $E$.)
With this terminology, we establish the following theorem.

\begin{theorem}\label{thm:laminar}
There is a data structure that maintains, for a dynamic point set $S$ on the Euclidean line $\mathbb{R}$, a laminar near-perfect matching of depth $O(\log |S|)$ such that it modifies (adds or deletes) at most $O(\log^2 |S|)$ edges in each step.
\end{theorem}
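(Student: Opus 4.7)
The plan is to link the dynamic point set $S=\{p_1<p_2<\cdots<p_n\}$ to a weight-balanced binary search tree $T$ (e.g.\ a scapegoat or $\mathrm{BB}[\alpha]$-tree) whose leaves are the $p_i$ in sorted order. A laminar near-perfect matching $M$ is defined bottom-up on $T$: for each internal node $v$ with children $u,w$, recursively match $S_u$ and $S_w$ (each leaving at most one unmatched point), and if both subtrees leave an unmatched point, add one new edge pairing them. Every matching edge $e$ is thus \emph{born} at a unique internal node $v(e)\in T$; since its two endpoints lie on opposite sides of the split at $v(e)$, the matching is laminar by construction and its Hasse diagram $F(M)$ is a contraction of $T$.

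To bound the depth of $M$, assign to each matched edge $e$ its \emph{virtual edge} $\bar e=[\min S_{v(e)},\max S_{v(e)}]$. Whenever $e\prec e'$ in $F(M)$, the node $v(e)$ is a strict descendant of $v(e')$ in $T$, so weight-balance gives $|S_{v(e')}|\geq(1+\alpha)\,|S_{v(e)}|$. Hence virtual edges are nested and their spans grow geometrically along any chain, bounding the depth of $M$ by $\log_{1+\alpha}|S|=O(\log|S|)$ as required. This matches the depth invariant and exponentially-growing virtual-edge length invariant advertised in the technical overview.

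For the recourse: each insertion or deletion affects $O(\log n)$ ancestor nodes of the updated leaf, and weight-balance violations are repaired by standard rotations or partial rebuilds. At each of the $O(\log n)$ touched levels, restoring $M$ requires re-pairing the two unmatched points of the two children and possibly propagating a new unmatched point upward; a careful implementation touches only $O(\log n)$ matching edges per level, yielding $O(\log^2 n)$ edge changes per update. The main obstacle is making this bound \emph{worst-case} rather than merely amortized: a naive rebuild of a subtree $T_v$ of $|S_v|=k$ leaves would cost $\Omega(k)$ edge changes, which is unacceptable when $v$ sits near the root. This is where the virtual-edge machinery is essential—one maintains invariants ensuring that only edges whose virtual edges actually contain the inserted/deleted point need to be modified; by the depth bound just established, there are $O(\log n)$ such edges at each of $O(\log n)$ scales, giving the worst-case $O(\log^2 n)$ recourse. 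Formulating these invariants precisely, and proving that they can be restored by local edge swaps that avoid any cascading re-match of an entire subtree, is the technical heart of the argument.
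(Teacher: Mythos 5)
Your approach is genuinely different from the paper's: you define the matching bottom-up from a weight-balanced binary search tree, so laminarity and the virtual edges are derived from the tree shape, whereas the paper works directly with a flat collection of intervals and maintains intrinsic invariants (I1)--(I4) on the matching, repairing violations via the explicit \texttt{repair} procedure of \Cref{lem:repair}. Your laminarity and depth arguments are sound: by induction the leftover of a subtree is never straddled by an internal edge, and nested matching edges are born at strictly nested subtree spans whose sizes grow by a constant factor per level of nesting under weight-balance, giving depth $O(\log |S|)$.

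The gap is the recourse bound, and you flag it yourself. Because your matching is \emph{defined} by the tree shape, any rebalancing redefines it, and you never bound the resulting churn. You correctly observe that a scapegoat-style partial rebuild of a subtree of size $k$ near the root would redefine $\Theta(k)$ matching edges in a single update, so scapegoat trees cannot give a worst-case guarantee; you are therefore forced to commit to rotation-based rebalancing (e.g.\ $\mathrm{BB}[\alpha]$) and then must prove (a) that a single rotation changes only $O(1)$ edges locally plus a bounded cascade of endpoint updates caused by the rotated subtree's leftover changing identity, and (b) that the up to $O(\log |S|)$ rotations triggered by one insertion compose to $O(\log^2 |S|)$ total edge changes. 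Neither step is carried out. Your stated mechanism --- that ``only edges whose virtual edges contain the inserted point need to be modified'' --- identifies \emph{candidate} edges but does not bound how many actually change; indeed, every edge born at an ancestor of the new leaf has a virtual edge containing it, so this observation alone gives nothing. You concede the point when you write that formulating the invariants and showing no cascading re-match of an entire subtree ``is the technical heart of the argument''; that heart is exactly what is missing. The paper sidesteps the whole issue by never building a search tree on the points: \texttt{merge} and \texttt{split} create $O(\log |S|)$ pairwise non-overlapping violations of invariant (I4), and \Cref{lem:repair} shows each is repaired with $O(\log |S|)$ edge modifications, yielding the $O(\log^2 |S|)$ worst-case bound with no rebalancing mechanism to analyze.
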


Clearly, every laminar matching of depth $d$ on a point set $S$ has weight at most $d\cdot \diam(S)$, and so \Cref{thm:laminar} immediately implies \Cref{thm:1D}. Importantly, the laminar property and the depth of the matching depend only on the order of the points in $S$, and the real coordinates do not matter. We would like to represent a set of $k$ points on the line by integers $[k]:=\{1,2,\ldots, k\}$. However, this representation does not easily support the insertion of new points.

To support insertions, we maintain a collection of finite sets $\mathcal{C}=\{A_1,\ldots, A_t\}$, where each set consists of consecutive integers; and a laminar near-perfect matching of depth $O(\log |A_j|)$ on each set $A_j$ (that is, we allow an unmatched point in each \emph{odd} set in the collection). Our data structure supports four operations: create, delete, merge, and split. Specifically, we define the operations
\begin{itemize}
    \item \texttt{create}: insert a 1-element set into $\mathcal{C}$;
    \item \texttt{delete}: remove a 1-element set from $\mathcal{C}$;
    \item \texttt{merge}$(A_j,A_{j'})$: concatenate the sets $A_j=\{1,\ldots , |A_j|\}$ and $A_{j'}=\{1,\ldots , |A_{j'}|\}$ %\simeq \{|A_j|+1,\ldots , |A_j|+|A_{j'}|\}$
    into the set $\{1,\ldots , |A_j|+|A_{j'}|\}$;
    \item \texttt{split}$(A_j,k)$: split the set $A_j=\{1,\ldots , |A_j|\}$ into
    two sets, $\{1,\ldots , k\}$ and $\{k+1,\ldots , |A_j|\}$;
\end{itemize}

It is easy to see how these operations support point insertions and deletions:
Suppose that we want to insert a new point into $A_j=\{1,\ldots , |A_j|\}$ between elements $k$ and $k+1$. Then the \texttt{split}$(A_j,i)$ operation splits $A_j$ between $k$ and $k+1$ into $[k]$ and $[|A_j|-k]$; the \texttt{create} operation creates a singleton $[1]$, and finally two \texttt{merge} operations concatenate $[k]$, $[1]$, and $[|A_j|-k]$ into a single set $[|A_j|+1]$.
It remains to maintain a near-perfect matching on each set in $\mathcal{C}$.

\begin{lemma}\label{lem:laminar}
There is a data structure that maintains a laminar near-perfect matching $M(A_j)$ of depth $O(\log |A_j|)$ for each set $A_j$ in the collection $\mathcal{C}=\{A_1,\ldots , A_t\}$ of intervals.
For each \texttt{merge}$(A_j,A_{j'})$ operation, it modifies $O(\log |A_j|+|A_{j'}|)$ edges; and for each \texttt{split}$(A_j,k)$ operation, it modifies $O(\log^2 |A_j|)$ edges.
\end{lemma}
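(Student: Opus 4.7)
The plan is to augment $M(A_j)$ with, for each matching edge $e$, a \emph{virtual edge} $\vir(e)\supseteq e$ that is a dyadic interval: $\vir(e)=[\ell,r]$ with $r-\ell=2^i$ for some integer level $i$, and $\ell$ a multiple of $2^i$ in the canonical indexing of $A_j$. I will maintain the invariants that (i) $\vir$ respects nesting, i.e., $e_1\prec e_2$ implies $\vir(e_1)\prec \vir(e_2)$; (ii) virtual edges of nested matching edges have strictly increasing levels, so lengths at least double along any chain; and (iii) at each dyadic slot at most one matching edge is assigned a virtual edge of exactly that slot. These invariants immediately yield depth $O(\log|A_j|)$, since a nested chain of matching edges induces a nested chain of dyadic intervals with doubling lengths all bounded by $|A_j|$, which forces the chain length to be $O(\log|A_j|)$.

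For \texttt{create} and \texttt{delete} on a singleton the matching is empty and nothing needs to be done. For \texttt{merge}$(A_j,A_{j'})$ I will focus on the \emph{right spine} of $M(A_j)$ (the matching edges whose virtual edges touch the right end of $A_j$) and the symmetric \emph{left spine} of $M(A_{j'})$; by the depth invariant each spine has length $O(\log)$. I first pair any unmatched endpoints of the two intervals across the seam into a single new matching edge, and then sweep the two spines level by level in the style of merging two binomial heaps: at each virtual level, either a lone edge survives and is carried upward by widening its virtual dyadic interval so as to cover the seam, or two colliding dyadic blocks of the same level consolidate into a single matching edge whose virtual edge sits one level higher. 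Altogether $O(\log|A_j|+\log|A_{j'}|)$ matching edges are touched.

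For \texttt{split}$(A_j,k)$ I will first delete all matching edges that straddle $k$; by the depth bound there are only $O(\log|A_j|)$ of these, so the deletions are within budget. Their endpoints become temporarily unmatched and are naturally organized into two spines on either side of the cut, each of length $O(\log|A_j|)$. Working independently on each side, I re-absorb orphans from innermost (nearest the cut) outward: each orphan is either paired with another orphan at the same virtual level into a new matching edge with an appropriate virtual slot, or cascaded up the spine by swapping its partner with a higher-level matching edge whose virtual interval must then rise by one level, possibly continuing the cascade. Each orphan incurs an $O(\log|A_j|)$-length cascade, giving $O(\log^2|A_j|)$ modifications in total.

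The main difficulty is bounding the cascade length in the split operation. The crucial observation I will rely on is that on the surviving half of a split, the virtual edges of the remaining matching still form a prefix of the dyadic hierarchy; removing only the edges that crossed the cut leaves the dyadic scaffold intact on each side. Consequently each orphan can be routed into the unique dyadic slot directly above its current level that is either vacant or occupied by an edge whose own level must increment, and the cascade terminates in $O(\log|A_j|)$ steps because each step strictly raises the level of the involved virtual edge and the level is capped by $\log|A_j|$.
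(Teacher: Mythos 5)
Your proposal replaces the paper's free-form virtual intervals with \emph{dyadic} intervals — $\vir(e)$ must be $[\ell,\ell+2^i]$ with $\ell$ a multiple of $2^i$ in the current indexing of $A_j$ — and tries to run the merge like a binomial-heap carry and the split like a cascade through dyadic slots. The idea is attractive, but there is a genuine gap: the dyadic alignment does not survive the re-indexing that \texttt{split} and \texttt{merge} inherently perform. After \texttt{split}$(A_j,k)$, the right half is re-indexed as $\{1,\ldots,|A_j|-k\}$, and if $k$ is not a power of two, intervals that were dyadic in $A_j$ are no longer dyadic in the new indexing; the claim that ``the dyadic scaffold [remains] intact on each side'' is false. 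The same problem hits \texttt{merge}: the dyadic boundaries of $\{1,\ldots,|A_j|+|A_{j'}|\}$ do not extend the dyadic boundaries of $A_{j'}$ when $|A_j|$ is not a power of two, so there is no well-defined way to ``widen'' a virtual dyadic interval across the seam, and the level-by-level carry has no consistent grid to carry within. There is also a structural existence issue even in a single static interval: a short matching edge straddling a high-order dyadic boundary (e.g., $\{2^{k-1},2^{k-1}+1\}$ in an interval of size $2^k$) can only be assigned the top-level slot, which by your invariant (ii) forbids any enclosing matching edge — a constraint the matching itself may not respect, and which forces extra deletions you haven't accounted for.

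The paper's mechanism is deliberately position-agnostic: virtual edges are arbitrary intervals, and the only quantitative invariant (their I4) constrains the \emph{ratio} of virtual-edge lengths along a nested chain, not their absolute positions. Lengths and containment are translation-invariant, so all of (I1)--(I4) survive the re-indexing in \texttt{split} and \texttt{merge}; you then restore (I4) with a local \texttt{repair} routine that replaces a violating nested pair $a_1b_1\prec a_2b_2$ by $a_2a_1$ and $b_1b_2$, reassigning virtual edges so that the repair potential (the violator's virtual length) strictly halves. That halving argument gives $O(\log|A_j|)$ iterations per initial violator, which yields the merge and split bounds directly. If you want to pursue the dyadic route, you would need either to restrict splits to dyadic positions (which the application does not allow), or to maintain an offset and allow ``shifted dyadic'' intervals, at which point invariant (iii) loses the uniqueness of slots that your counting argument relies on — you would essentially be re-deriving the paper's length-ratio invariant.
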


\paragraph{Virtual edges.}
The remainder of this section is dedicated to the proof of \Cref{lem:laminar}. It is not difficult to maintain a laminar near-perfect matching. However, controlling the depth is challenging. Virtual edges are the key technical tool for maintaining logarithmic depth. For each edge $ab \in M(A_j)$, with $a<b$, we maintain a \emph{virtual edge} $\vir(ab)$ with $ab\preceq \vir(ab)$.
We define the \emph{length} of a virtual edge $\vir(ab)=cd$ as $\ell(ab):=|c-d|$.
Then the length of a virtual edge is an integer in $\{1,\ldots , |A_j|-1\}$.
Invariants (I1)--(I4) below will ensure that for a nested sequence of edges $a_1b_1\prec \ldots \prec a_kb_k$ yields a nested sequence virtual edges $\vir(a_1b_1)\preceq \ldots \preceq \vir(a_kb_k)$ with the additional property that they have exponentially increasing lengths. For new edges we usually set $\vir(ab):=ab$; and $ab\neq \vir(ab)$ indicates intuitively that the algorithm created edge $ab$ as a replacement for some previous edge $\vir(ab)$, and we keep a record of the deleted edge $\xi(ab)$ for accounting purposes.

Note that every edge and virtual edge is an interval in $\R$; we say that an edge $ab$, where $a<b$, \emph{contains} a vertex $c$ if $a\leq c\leq b$.

\paragraph{Data Structure.}
Consider a collection $\mathcal{C}=\{A_1,\ldots , A_t\}$ of sets of consecutive integers.
For each set $A_j$, our data structure maintains a laminar matching $M(A_j)$
(which is not necessarily perfect or near-perfect); and a virtual edge $\vir(ab)$
for each edge $ab\in M(A_j)$. The set of all virtual edges is $\Vir(A_j)=\{\vir(ab): ab\in M(A_j)\}$.
Furthermore, the matching $M(A_j)$ and its virtual edges satisfy the following invariants.
\begin{enumerate}\itemsep 2pt
    \item[(I1)] $M(A_j)\cup \Vir(A_j)$ is laminar.
    \item[(I2)] For every $ab\in M(A_j)$, there is no edge $e\in M(A_j)\cup \Vir(A_j)$ such that $ab\prec e\prec \vir(ab)$.
    \item[(I3)] None of the unmatched points in $A_j$ is contained in any edge in $M(A_j)\cup \Vir(A_j)$.
    \item[(I4)]\label{i4} If $a_1b_1, a_2b_2\in M(A_j)$ and $a_1b_1\prec a_2b_2$,
        %and $\vir(a_1b_1)\prec \vir(a_2b_2)$,
        then $\ell(a_1b_1)\leq \frac12 \, \ell(a_2b_2)$.
\end{enumerate}

\begin{lemma}\label{lem:inv}
If a matching $M(A_j)$ with virtual edges satisfies Invariants (I1)---(I4),
then the depth of $M(A_j)$ is $O(\log |A_j|)$.
\end{lemma}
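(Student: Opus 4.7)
The plan is to reduce the depth bound to the longest chain length in the Hasse diagram $F(M(A_j))$, since by definition the depth of $M(A_j)$ is exactly the length of the longest $\prec$-chain of edges in $M(A_j)$. So it suffices to show that every such chain has length $O(\log |A_j|)$, and the whole argument will come out of Invariant (I4) combined with a trivial lower bound on virtual-edge length.

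Concretely, I would fix an arbitrary maximal nested chain $e_1 \prec e_2 \prec \cdots \prec e_k$ in $M(A_j)$ and track the lengths $\ell(e_i)$ of their virtual edges. Applying (I4) to each consecutive pair gives $\ell(e_i) \leq \tfrac{1}{2}\,\ell(e_{i+1})$, so by telescoping
\[
\ell(e_k) \;\geq\; 2^{k-1}\cdot \ell(e_1).
\]
For the base of the induction I use that $e_1 = ab$ is an edge between two distinct integer points of $A_j$, so $|b-a| \geq 1$; and since $ab \preceq \vir(ab)$ we get $\ell(e_1) \geq |b-a| \geq 1$. For the cap at the top I use that the virtual edge $\vir(e_k)$ is an interval inside $A_j = \{1,\ldots,|A_j|\}$, so $\ell(e_k) \leq |A_j|-1$. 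Combining,
\[
2^{k-1} \;\leq\; \ell(e_k) \;\leq\; |A_j|-1,
\]
which yields $k \leq 1 + \log_2(|A_j|-1) = O(\log |A_j|)$, as required.

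The only subtle points are the two endpoint estimates on $\ell(\cdot)$, both of which are essentially by definition: the lower bound $\ell(e_1)\geq 1$ follows from $e_1 \preceq \vir(e_1)$ and integrality, and the upper bound $\ell(e_k)\leq |A_j|-1$ follows because every virtual edge is an interval in $A_j$. I do not expect any genuine obstacle here — Invariants (I1)--(I3) are not even used; the whole statement is a direct consequence of the geometric growth condition (I4). If one wanted to be pedantic, one could note that the argument in fact bounds the depth of $M(A_j)\cup \Vir(A_j)$ together, not just of $M(A_j)$, which will be convenient when the update procedures in the proof of \Cref{lem:laminar} need to reason about virtual edges alongside matching edges.
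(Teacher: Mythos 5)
Your proof is correct and follows essentially the same route as the paper: use (I4) to get exponential growth of virtual-edge lengths along a nested chain, bound the endpoints by $1$ and $|A_j|-1$, and conclude. One small quibble: you say (I1)--(I3) are not used, but (I1) (laminarity of $M(A_j)$) is what makes the depth equal to the length of the longest $\prec$-chain in the first place, so the paper explicitly cites it; everything else matches.
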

\begin{proof}
By invariant (I1), $M(A_j)$ is laminar. If its depth is $k$, then $M(A_j)$ contains a nested sequence of edges $a_1b_1\prec \ldots \prec a_kb_k$.
By (I4), the lengths of the corresponding virtual edges increase exponentially (by factors of 2). The maximum (resp., minimum) length of a virtual edge is $|A_j|-1$ (resp., 1). Consequently, $k\leq \lceil \log |A_j|\rceil$, as claimed.
\end{proof}

For the implementation of our \texttt{merge} and \texttt{split} operations,
maintaining invariant (I4) requires some attention. Indeed, if we add a new edge $a_2b_2$ between two unmatched points, such that there are no other unmatched points between $a_2$ and $b_2$, and set $\vir(a_2b_2):=a_2b_2$, then invariants (I1)--(I3) are automatically maintained, but (I4) might be violated. We say that an edge $a_2b_2\in M(A_j)$ \emph{violates} invariant (I4) if there exists another edge $a_1b_1\in M(A_j)$ such that $a_1b_1\prec a_2b_2$ but $\ell(a_1b_1)> \frac12 \, \ell(a_2b_2)$.
The following algorithm greedily replaces edges that violate invariant (I4).

\paragraph{Algorithm \texttt{repair}$(M)$.} Input: $A_j=\{1,\ldots, |A_j|\}$, matching $M=M(A_j)$ with virtual edges satisfying (I1)--(I3). While $M$ does not satisfy (I4), do: Let $a_2b_2\in M$ be a maximal edge that violates (I4). Furthermore, let $a_1b_1\in M$ be an edge with the longest virtual edge such that $a_1b_1\prec a_2b_2$ but $\ell(a_1b_1)> \frac12 \, \ell(a_2b_2)$. We may assume w.l.o.g.\ that $a_1<b_1$ and $a_2<b_2$.
Modify $M$ as follows: delete both $a_1b_1$ and $a_2b_2$ from $M$, and add new edges $a_2a_1$ and $b_1b_2$. If $|a_1-a_2|\leq |b_1-b_2|$, then let $\vir(a_2a_1):=\vir(a_2b_2)$ and $\vir(b_1b_2):=b_1b_2$, see \Cref{fig:repair}; else let $\vir(a_2a_1):=a_2a_1$ and $\vir(b_1b_2):=\vir(a_2b_2)$.

	\begin{figure}[htbp]
		\centering
		\includegraphics[width=0.75\textwidth]{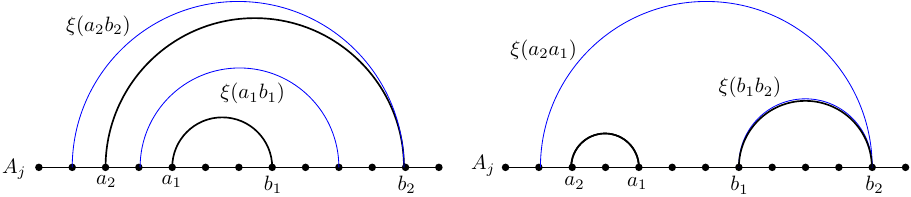}
		\caption{One iteration of \texttt{repair}$(M)$ replaces edges $a_1b_1\prec a_2b_2$ with $a_2a_1$ and $b_1b_2$.}
		\label{fig:repair}
	\end{figure}

\begin{lemma}\label{lem:repair}
Algorithm \texttt{repair}$(M)$ maintains invariants (I1)--(I3).
If $M$ contains $k$ edges that violate (I4) and they are pairwise non-overlapping, then the algorithm terminates after $O(k\log |A_j|)$ iterations; in particular it modifies  $O(k\log |A_j|)$ edges in $M$.
\end{lemma}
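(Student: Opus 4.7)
The plan is to verify invariants (I1)--(I3) after a single iteration and then bound the iteration count. For (I3), both new edges $[a_2,a_1]$ and $[b_1,b_2]$ are subsets of the removed $[a_2,b_2]$, so any previously-unmatched point (which lay outside $[a_2,b_2]$ by the prior (I3) for $a_2b_2$) remains outside every edge of $M'$. Invariant (I2) for $\vir'(b_1b_2)=b_1b_2$ is trivial, while (I2) for $\vir'(a_2a_1)=\vir(a_2b_2)$, together with (I1), reduces to a single key claim: no element of $M\cup\Vir$ lies strictly between $a_1b_1$ and $a_2b_2$ in the $\prec$ order.

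To establish this key claim, suppose first that $e\in M$ satisfies $a_1b_1\prec e\prec a_2b_2$. By laminarity, $\vir(a_1b_1)$ and $e$ are comparable; by (I2) applied to $a_1b_1$, no element lies strictly between $a_1b_1$ and $\vir(a_1b_1)$, so $\vir(a_1b_1)\preceq e$. This gives $\ell(a_1b_1)=|\vir(a_1b_1)|\le|e|\le|\vir(e)|=\ell(e)$, hence $\ell(e)>\tfrac12\ell(a_2b_2)$, so $e$ is itself a violator of $a_2b_2$. Choosing $a_1b_1$ as a longest-$\ell$ violator with ties broken by preferring a $\prec$-maximal such edge, we get $\ell(e)\le\ell(a_1b_1)$, hence $\ell(e)=\ell(a_1b_1)$ and $\vir(a_1b_1)=e$; but then $e$ would have been chosen in place of $a_1b_1$, contradicting $\prec$-maximality. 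For $e\in\Vir\setminus M$, write $e=\vir(f)$ with $f\in M$; laminarity and (I2) applied to $f$ force $f=a_1b_1$, so $e=\vir(a_1b_1)$, and a parallel argument shows no other $g\in M$ has $\vir(g)=e$, whence $e$ drops out of $\Vir'$ after the repair. Laminarity (I1) then follows because the new edges $a_2a_1$ and $b_1b_2$ are nested inside $a_2b_2$, and any edge of $M'\cup\Vir'$ crossing one of them would have to be precisely such a forbidden strict intermediate.

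For the iteration bound, the key geometric observation is that $[a_2,a_1]$, $[a_1,b_1]$, and $[b_1,b_2]$ partition $[a_2,b_2]$, and the algorithm assigns the inherited virtual edge to the shorter of $[a_2,a_1]$ and $[b_1,b_2]$; hence that new edge has interval length at most $\tfrac12|a_2b_2|$. Any chain of future iterations descending through inherited-$\vir$ edges therefore halves its supporting interval each step, and so terminates within $O(\log|A_j|)$ iterations. The self-virtual new edge has $\ell$ equal to its own interval length, controlling its own future iterations by a parallel argument. Since the $k$ initial violators are pairwise non-overlapping and every newly-created violator lies strictly inside its parent, the violators throughout the execution partition into $k$ disjoint ``violator subtrees'', and a potential function of the form $\Phi(M):=\sum_{v\text{ violator}}\bigl(\lceil\log|A_j|\rceil-\lceil\log\ell(v)\rceil+\textsf{depth}(v)\bigr)$ can be tuned to decrease by $\Omega(1)$ per iteration and is $O(k\log|A_j|)$ initially. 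This yields $O(k\log|A_j|)$ iterations and, since each modifies $O(1)$ edges, $O(k\log|A_j|)$ edge modifications overall.

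The main technical obstacle will be the tie-breaking needed for the key claim for (I1): an arbitrary longest-$\ell$ violator is not enough, since a tied sibling violator could play the role of the intermediate $e$. A secondary obstacle is verifying that the potential function strictly decreases each iteration, particularly on the self-virtual branch where the bad-child threshold $\tfrac12\ell(v)$ may drop after the repair and spawn new violators that must still be charged to the original $\log|A_j|$ budget.
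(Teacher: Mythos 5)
Your treatment of invariants (I1)--(I3) takes essentially the paper's route, but you add a tiebreak for the choice of $a_1b_1$ (among the longest-$\ell$ candidates, prefer a $\prec$-maximal one) that the paper does not use. This is a genuine strengthening rather than mere overcaution. The paper argues laminarity by assuming the problematic crossing edge $e$ lies outside $\{a_1b_1,a_2b_2,\vir(a_1b_1),\vir(a_2b_2)\}$, but never justifies why that exclusion is safe: if the interval $\vir(a_1b_1)$ happens to coincide with a surviving matching edge $e'\in M\setminus\{a_1b_1,a_2b_2\}$, then $e'$ remains in $M'\cup\Vir'$ even though $\vir(a_1b_1)$ is dropped from $\Vir$, and nothing in the paper's case analysis prevents $e'$ from crossing the new edge $a_2a_1$. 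Your tiebreak closes precisely this hole, since in that situation $e'$ is a tied candidate with $a_1b_1\prec e'$ and would have been picked instead; your $e\in\Vir\setminus M$ subcase is a correct formalization of the same point. So on (I1)--(I3) your argument is at least as strong as, and arguably cleaner than, the paper's.

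On the iteration bound, your proof is incomplete, and you say so yourself; your self-diagnosis is accurate, and the gap is real on the paper's side as well. The paper asserts that in each iteration either the violator count drops or a violator $e$ is replaced by a violator $e'$ with $\ell(e')\le\tfrac12\,\ell(e)$, but this halving is never proved and can in fact fail. Take $A_j=\{1,\dots,20\}$ with matching edges $\{1,20\},\{2,3\},\{4,5\},\dots,\{18,19\}$, where $\vir(\{2,3\})=\{1,11\}$ (so $\ell(\{2,3\})=10$), $\vir(\{12,13\})=\{11,20\}$ (so $\ell(\{12,13\})=9$), and all other edges are self-virtual. One checks that (I1)--(I3) hold and that the unique (I4)-violator is $\{1,20\}$ with $\ell=19$ (since $10>9.5$). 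The repair iteration deletes $\{2,3\},\{1,20\}$ and adds $\{1,2\}$ (inherits $\vir=\{1,20\}$, not a violator) and $\{3,20\}$ (self-virtual, $\ell=17$), and $\{3,20\}$ is again a violator because $\ell(\{12,13\})=9>8.5=\tfrac12\cdot17$; yet $17>\tfrac12\cdot19$, so the paper's claimed halving does not occur. Your proposed potential $\Phi(M)=\sum_v\bigl(\lceil\log|A_j|\rceil-\lceil\log\ell(v)\rceil+\mathsf{depth}(v)\bigr)$ does not obviously rescue this: the first term grows as $\ell(v)$ shrinks, and it is unclear what controls the depth term when the self-virtual branch keeps spawning a new violator with only slightly smaller $\ell$. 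In short, your concern about the self-virtual branch is exactly the right worry; the $O(k\log|A_j|)$ iteration bound requires an argument that neither the paper's stated halving step nor your sketched potential currently supplies.
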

\begin{proof}
First we show that each iteration of Algorithm \texttt{repair}$(M)$ maintains invariants (I1)--(I3).
Consider one iteration of the algorithm where edges $a_1b_1$ and $a_2b_2$, with  $a_1b_1\prec a_2b_2$, are replaced by $a_2a_1$ and $b_1b_2$. Assume that $M$ satisfies (I1)--(I3) before the replacement. Then $M\cup \Vir$ is laminar at the beginning of the iteration. The two new edges, $a_2a_1$ and $b_2b_1$, do not cross. Suppose, for the sake of contradiction, that one of them crosses an existing edge $e\in M\cup \Vir$, $e\notin \{a_1b_1,a_2b_2,\vir(a_1b_1),\vir(a_2b_2)\}$. Since $M\cup \Vir$ is laminar, then $e$ crosses neither $a_1b_1$ nor $a_2b_2$.
It follows that $a_1b_1\prec e\prec a_2b_2$. By (I2), we may further assume that $\vir(a_1b_1)\prec e\prec a_2b_2$.

We distinguish between two cases: If $e\in M$, then it has a virtual edge $\vir(e)$ with $e\preceq \vir(e)$. Transitivity yields $\vir(a_1b_1)\prec \vir(e)$, and so the virtual edge $\vir(e)$ is longer than $\vir(a_1b_1)$, contradicting the choice of $a_1b_1$.
Otherwise, $e\in \Vir$, and $e$ is the virtual edge of some edge $e'\in M$ with $e'\prec \vir(e')=e$. In this case, the virtual edge $\vir(e')=e$ is longer than $\vir(a_1b_1)$, contradicting the choice of $a_1b_1$ again. We conclude that $M\cup \Vir$ is laminar at the end of the iteration, and invariant (I1) is maintained.

For invariant (I2), assume w.l.o.g.\ that $|a_1-a_2|\leq |b_1-b_2|$, and so the algorithm assigned virtual edges $\vir(a_2a_1)=\vir(a_2b_2)$ and $\vir(b_1b_2)=b_1b_2$. Then $b_1b_2$ clearly satisfies (I2). Since $M\cup \Vir$ remains laminar after one iteration, there is no edge $e\in M\cup \Vir$ such that $a_1b_1\prec e\prec a_2b_2$. By invariant (I2) for edge $a_2b_2$, there is no edge $e\in M\cup \Vir$ with $a_2b_2\prec e\prec \vir(a_2b_2)$. This implies that at the end of the iteration, there is no edge $e\in M\cup \Vir$ such that $a_2a_1\prec e\prec \vir(a_2b_2)=\vir(a_2 a_1)$, as required.

For invariant (I3), notice that both new edges, $a_2a_1$ and $b_1b_2$, are contained the $a_2b_2$, which in turn does not contain any unmatched point by (I3). The set of unmatched points remains the same, hence the new edges do not contain unmatched points, either.
We have shown that each iteration maintains invariants (I1)--(I3).

We now prove the second statement in \Cref{lem:repair}, about the number of iterations. Assume that $M$ contains $k$ edges that violate (I4) and they are pairwise non-overlapping. Consider again one iteration where edges $a_1b_1$ and $a_2b_2$ are replaced by $a_2a_1$ and $b_1b_2$; and w.l.o.g.\ $|a_1-a_2|\leq |b_1-b_2|$. We claim that $b_1b_2$ is the only edge that may become a violator in this iteration. Indeed,  $|a_1-a_2|\leq |b_1-b_2|$ implies that $|a_1-a_2|\leq \frac12\, |a_2-b_2|\leq \ell(a_2b_2)=\ell(a_2 a_1)$. For any edge $e\in M$ with $e\prec a_1a_2$, we have $e\preceq \vir(e)\preceq a_2a_1$ by invariant (I2), which implies $\ell(e)\leq |a_1-a_2|$. Overall, $\ell(e)\leq \frac12\, \ell(a_2a_1)$ for all edges $e\in M$ where $e\prec a_2a_1$, that is, $a_2a_1$ is not a violator.

It follows that in the course of Algorithm \texttt{repair}$(M)$, the number of violators can only decrease and the violators remain pairwise non-overlapping. Furthermore, in each iteration, either the number of violators decreases by one, or one violator $e$ is replaced by another violator $e'$ such that $\ell(e')\leq \frac12 \ell(e)$.
Since $\ell(e)$ is an integer in $\{1,\dots ,|A_j|\}$, a violator may recursively be replaced $O(\log |A_j|)$ times. Summation over $k$ initial violators yields an overall bound of $O(k\log |A_j|)$ on the number of iterations.
\end{proof}

Now, we can describe how Algorithm \texttt{repair}$(M)$ supports split and merge operations.

\paragraph{Merge operation.}
Assume that $M(A_j)$ and $M(A_{j'})$ are near-perfect matchings with virtual edges satisfying invariants (I1)--(I4). Let $A=\texttt{merge}(A_j,A_{j'})$.
If $A_j$ or $A_{j'}$ is even, then let $M(A)=M(A_j)\cup M(A_{j'})$ with the same virtual edges. Then $M(A)$ is a near-perfect matching; and it is easily verified that it satisfies invariants (I1)--(I4).

Assume now that both $A_j$ and $A_{j'}$ are odd, which means that they each contain one unmatched point, say $a\in A_j$ and $b\in A_{j'}$ with $a<b$ in $A=\{1,\ldots, |A_j|+|A_{j'}|\}$. Let $M(A)=M(A_j)\cup M(A_{j'})\cup \{ab\}$ with the same virtual edges for all edges in $M(A_j)\cup M(A_{j'})$ and with $\vir(ab):=ab$.
Note that $M(A)$ satisfies invariants (I1)--(I3). Indeed, by invariant (I3), $a$ and $b$ are not contained in any edge in $M(A_j)\cup \Vir(A_j)$ and $M(A_{j'})\cup \Vir(A_{j'})$, and so the set of edges remains laminar when we add $ab$. Since $\vir(ab)=ab$, the new edge satisfies (I2); and (I3) holds vacuously since $M(A)$ is a perfect matching.
Furthermore, the only possible violator to (I4) is the new edge $ab$.
By \Cref{lem:repair}, Algorithm \texttt{repair}$(M(A))$ returns
a perfect matching that satisfies (I1)--(I4), and modifies only $O(\log |A|)$ edges.

\paragraph{Split operation.}
Assume that $M(A_j)$ is a near-perfect matching with virtual edges satisfying invariants (I1)--(I4), and $1<k<|A_j|$ is an integer. For an operation \texttt{split}$(A_j,k)$, we delete all edges $ab\in M(A_j)$ such that the virtual edge $\vir(ab)$ is between $\{1,\ldots , k\}$ and $\{k+1,\ldots ,|A_j|\}$. Since the depth of $M(A_j)\cup \Vir(A_j)$ is $O(\log |A_j|)$, there are $O(\log |A_j|)$ such edges. Sort the unmatched vertices in $\{1,\ldots , k\}$ and $\{k+1,\ldots ,|A_j|\}$, respectively, in increasing order and greedily match consecutive pairs (leaving at most one vertex unmatched in each set). Denote by $M_1$ and $M_2$ the resulting matchings. For each new edge $ab\in M_1\cup M_2$, we define the virtual edge as $\vir(ab)=ab$; and let $\Vir_1$ and $\Vir_2$ denote the corresponding virtual edges.

After the first step, when we delete edges from $M(A_j)$, the unmatched vertices are not contained in any remaining edges in $M(A_j)\cup \Vir(A_j)$ due to invariants (I1) and (I3). Then the greedy algorithm on the unmatched points in $\{1,\ldots , k\}$ and $\{k+1,\ldots ,|A_j|\}$, resp., produces pairwise noncrossing edges that do not cross any surviving edges in $M(A_j)\cup \Vir(A_j)$, either. Consequently, both $M_1\cup \Vir_1$ and $M_2\cup \Vir_2$ are lateral. This establishes (I1). The new edges clearly satisfy (I2), and any remaining unmatched point satisfies (I3), as well. Finally, note that only the new edges may violate invariant (I4). There are $O(\log |A_j|)$ new edges and they are pairwise non-crossing by construction. By \Cref{lem:repair}, Algorithms \texttt{repair}$(M_1)$ and  \texttt{repair}$(M_2)$ return near-perfect matchings that satisfy (I1)--(I4), and modify only $O(\log |A_j|\cdot (\log k+ \log (|A_j|-k) )\leq O(\log^2 |A_j|)$ edges.

This completes the proof of \Cref{lem:laminar}. Since we can handle an insertion or deletion with a constant number of split and merge operations, our data structure can dynamically maintain a matching satisfying invariant (I1)--(I4) with recourse $O(\log^2 n)$. By \Cref{lem:inv}, this matching has depth $O(\log n)$, as required;
completing the proof of \Cref{thm:laminar}.

\section{Light Perfect Matchings for General Metrics}
\label{sec:light}

Recall that the online algorithm by Gu et al.~\cite{GuG016} maintains a spanning tree of weight\\ $O(\cost(\MST))$ with constant recourse per point for a sequence of points in a metric space. Combined with \Cref{thm:1D}, we obtain the proof of \Cref{thm:Tree}. We restate the theorem for convenience.

\MatchingGeneralMetrics*

\begin{proof}
Let $S_i=\{s_1,\ldots, s_{2i}\}$ be the set of the first $2i$ points in $(X,d)$. As noted above, Gu et al.~\cite{GuG016} maintains a spanning tree $T_i$ of weight $O(\cost(\MST(S_i)))$ for the set $S_i$, such that $T_i$ is obtained from $T_{i-1}$ by deleting and adding $O(1)$ edges. W.l.o.g.\ we may assume that the update involves $O(1)$ edge deletions, followed by $O(1)$ edge insertions. In intermediate stages of the update, we maintain a spanning forest of $O(1)$ trees.

A DFS traversal of a tree $T$  (starting from an arbitrary root $\varrho$) defines an \emph{Euler tour} $\mathcal{E}(T)$ that traverses every edge of $T$ precisely twice (once in each direction). By omitting repeated vertices in $\mathcal{E}(T)$, we also obtain a Hamilton path $\mathcal{P}(T)$ on the vertices of $T$, starting from the root $\varrho$. The triangle inequality implies $\cost(\mathcal{P}(T)) \leq \cost(\mathcal{E}(T)) \leq 2\, \cost(T)$.

Consider the dynamic forest $F$ produced by the algorithm by Gu et al.~\cite{GuG016}. We wish to maintain an Euler tour $\mathcal{E}(T)$ and a Hamilton path $\mathcal{P}(T)$ for each tree in $F$. We need to show that each edge deletion and each edge insertion in the forest $F$ incurs $O(1)$ edge insertions or deletions in the tours $\mathcal{E}(T)$ and paths $\mathcal{P}(T)$.

\smallskip\noindent\textbf{Edge deletion.}
Suppose an edge $uv$ is deleted from a tree $T$ rooted at $\varrho$, where $u$ is closer to the root than $v$. The deletion of $uv$ splits $T$ into two trees, say, $T_1$ and $T_2$, rooted at $\varrho_1=\varrho$ and $\varrho_2=v$, respectively. By deleting both occurrences of $uv$ from $\mathcal{E}(T)$, the tour breaks into two paths; we can recover $\mathcal{E}(T_1)$ and $\mathcal{E}(T_2)$ by identifying the two endpoints of each path. The deletion of both occurrences of $uv$ from $\mathcal{P}(T)$ breaks it into up to three paths, corresponding to the following subpaths along $\mathcal{E}(T)$: (1) from $\varrho$ to $u$, (2) from $v$ to $v$, and (3) from $v$ to $\varrho$. The union of the 1st and 3rd paths is $\mathcal{P}(T_1)$, and the 2nd path is $\mathcal{P}(T_2)$.

\smallskip\noindent\textbf{Edge insertion.}
Suppose an edge $uv$ is inserted between trees $T_1$ and $T_2$, with roots $\varrho_1$ and $\varrho_2$, respectively. We may choose the root of the resulting tree $T$ to be $\varrho=\varrho_1$. We can merge $\mathcal{E}(T_1)$ and $\mathcal{E}(T_2)$ into $\mathcal{E}(T)$ by adding edges $uv$ and $vu$. To construct the Hamilton path $\mathcal{P}(T)$, we break $\mathcal{P}(T_1)$ (resp., $\mathcal{P}(T_2)$) into two paths at $u$ (resp., $v$); and  concatenate the resulting four paths into $\mathcal{P}(T)$.

The data structure in \Cref{sec:line} maintains a near-perfect matching of depths $O(\log n)$ on the paths $\mathcal{P}(T)$; using $O(\log^2 n)$ changes in the matching for each edge deletion and insertion in the forest.
Since $T_i$ is a spanning tree on an even vertex set $S_i$, then the data structure produces a perfect matching of weight $O(\log n)\cdot\cost(\MST(T_i))$ using $O(\log^2 n)$ changes in the matching, hence it uses recourse $O(\log^2 n)$.
\end{proof}

\section{Competitive Ratio --- Oblivious Adversary}
\label{sec:comp}

\subsection{Ultrametrics}

Consider a set $A\subseteq X$ of $2n$ points $x_1,x_2,\dots,x_{2n}$ in an ultrametric represented by a 1-HST $T$. A matching $M$ is called \emph{inward} if for every node $v\in T$, the number of internal edges in $X_v$ (i.e., edges of $M$ induced by $X_v$) is exactly $\left\lfloor \frac{|X_{v}\cap A|}{2}\right\rfloor$.
In particular, this implies that at most one point of $X_v\cap A$ is matched to a point outside $X_v$ (if and only if $|X_{v}\cap A|$ is odd).
We argue that every inward solution is optimal.
To simplify our argument, we will analyze also the case where the set $A$ has an odd size. Here, a perfect matching has to be of size $\left\lfloor \frac{|A|}{2}\right\rfloor$.

\begin{lemma} \label{lem:inward}
    Consider a 1-HST $(X,d_X)$ with an associated tree $T$ and a set
    $A=\{x_1,x_2,\dots,x_{l}\}$ of $l$ points in $X$. Then every inward matching is a minimum-weight perfect matching.
\end{lemma}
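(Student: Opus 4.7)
The plan is to express the weight of an arbitrary near-perfect matching $M$ on $Y$ as a telescoping sum over the tree $T$ that cleanly separates a fixed term from a sum of non-positively-weighted quantities that the inward condition simultaneously maximizes. For a node $v \in T$, write $\text{in}(M,v)$ for the number of edges of $M$ with both endpoints in $X_v$, and $e(M,v)$ for the number of edges whose least common ancestor is exactly $v$. Because every edge contained in $X_v$ either has lca equal to $v$ or is entirely contained in some child subtree $X_c$, we have $e(M,v) = \text{in}(M,v) - \sum_{c\text{ child of }v}\text{in}(M,c)$. Since $d_X(u,w)=\Gamma(\lca(u,w))$ in a $1$-HST (Definition~\ref{def:HST}), the total weight of $M$ equals $\sum_{v} \Gamma(v)\, e(M,v)$.

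Substituting the identity for $e(M,v)$ and reindexing the double sum by the children (each non-root node $c$ appearing once, with coefficient $\Gamma(\parent(c))$), I obtain the telescoping identity
\begin{equation*}
    w(M) \;=\; \text{in}(M,r)\,\Gamma(r) \;+\; \sum_{v \neq r} \text{in}(M,v)\bigl(\Gamma(v) - \Gamma(\parent(v))\bigr),
\end{equation*}
where $r$ is the root of $T$. The first term is the same for every near-perfect matching, since $\text{in}(M,r) = \lfloor l/2\rfloor$. For each non-root $v$, we always have $\text{in}(M,v)\le \lfloor |X_v\cap Y|/2\rfloor$, as the edges of $M$ inside $X_v$ form a matching on the points of $Y\cap X_v$.

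Now I would invoke the $1$-HST property: for every non-root node $v$, $\Gamma(\parent(v))\ge \Gamma(v)$, so the coefficient $\Gamma(v)-\Gamma(\parent(v))$ appearing with $\text{in}(M,v)$ is non-positive. Minimizing $w(M)$ therefore amounts to maximizing $\text{in}(M,v)$ term-by-term, i.e., achieving $\text{in}(M,v) = \lfloor |X_v\cap Y|/2\rfloor$ for every node $v$. But that is precisely the defining condition of an inward matching, so every inward matching attains the minimum of each telescoped term simultaneously, and is therefore a minimum-weight (near-)perfect matching. The only pieces that require care are the bookkeeping of the telescoping at the root, and the observation that leaves contribute trivially (since $\Gamma$ vanishes there and $\text{in}(M,v)=0$); I do not anticipate a genuine obstacle beyond this routine accounting.
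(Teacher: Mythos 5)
Your proof is correct, and it takes a genuinely different route from the paper. The paper argues by induction on the height of the HST: the base case is a uniform metric, and for the inductive step it surgically modifies an arbitrary optimal matching $M^{\opt}$ into one whose pieces are directly comparable to the inward matching, invoking the induction hypothesis on each child subtree. Your argument instead derives a closed-form telescoping identity
\[
w(M) = \left\lfloor \tfrac{l}{2}\right\rfloor\Gamma(r) + \sum_{v \neq r} \mathrm{in}(M,v)\bigl(\Gamma(v) - \Gamma(\parent(v))\bigr),
\]
valid for \emph{every} near-perfect matching $M$, and then observes that the first summand is matching-independent while each remaining coefficient $\Gamma(v)-\Gamma(\parent(v))$ is non-positive by the HST label-monotonicity, so the weight is pointwise minimized by maximizing each $\mathrm{in}(M,v)$ to its ceiling $\lfloor|X_v\cap Y|/2\rfloor$ --- which is exactly the inward condition. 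This is cleaner than the paper's exchange argument: it requires no induction, no case analysis on odd/even subtree sizes, and it proves optimality globally in one step by exhibiting a lower bound on $w(M)$ that the inward matching saturates. It also makes transparent \emph{why} the inward matching is optimal (it simultaneously maximizes every term of a sum with non-positive coefficients), something the inductive proof obscures. The only thing worth saying explicitly, which you gesture at, is that the inequality direction flips correctly: $\mathrm{in}(M,v)\le\lfloor|X_v\cap Y|/2\rfloor$ together with the non-positive coefficient yields $\mathrm{in}(M,v)(\Gamma(v)-\Gamma(\parent(v)))\ge\lfloor|X_v\cap Y|/2\rfloor(\Gamma(v)-\Gamma(\parent(v)))$, giving a lower bound on $w(M)$ that every inward matching attains with equality.
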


\begin{proof}
    The proof is by induction on the height of the HST $T$. In the base case where the height is $1$, the metric $d_X$ is uniform, and all matchings have the same weight. Consider an HST $T$ of height $h+1$, with root vertex $v$, and children $v_1,v_2,\dots,v_k$.
    Assume w.l.o.g.\ that $|X_{v_i}\cap A|$ is odd for $i\le s$, and even for $i\ge s+1$ ($s\in[0,k]$).
    Let $M^{\opt}$ be a perfect matching of minimum weight, and $M^{\inw}$ be an arbitrary inward matching.
    By definition, $M^{\inw}$ consist of $\left\lfloor \frac{s}{2}\right\rfloor$ edges $e_1,\dots,e_{\left\lfloor \frac{s}{2}\right\rfloor }$ with one endpoint in each of the sets $X_{v_i}$ for $i\le s$ (perhaps one set missing), and inward matchings $M^{\inw}_i$ restricted to $X_{v_i}$. Note that $M^{\inw}_i$ matches all the points (other than exactly $1$ for $i\le s$) in $X_{v_i}$ internally.
    From the other hand, $M^{\opt}$ consist of $k$ matchings $M^{\opt}_i=M^{\opt}\cap X_{v_i}$ restricted to the sets $X_{v_i}$, and in addition edges $e'_1,e'_2,\dots,e'_r$ between different sets, where $r\ge \left\lfloor \frac{s}{2}\right\rfloor $. For every set $X_{v_i}$, let $A_i\subseteq X_{v_i}$ be the subset of points matched to points outside of $X_{v_i}$ by $M^{\opt}$.

    We create a new matching $M$ from $M^{\opt}$ as follows: we delete all edges $e'_1,e'_2,\dots,e'_r$, and add new edges $\tilde{e}_1,\tilde{e}_2,\dots,\tilde{e}_r$ which are a perfect matching over $\bigcup_{i=1}^k A_i$, where each set $A_i$ has at most one point matched out of $A_i$. All the other edges in $M^{\opt}$ stay intact. Clearly $\cost(M)\le\cost(M^{\opt})$, as we replaced only edges of maximal possible weight.
    Denote by $M_i =M\cap X_{v_i}$ the matching restricted to the sets $X_{v_i}$. Note that $M_i$ is a perfect matching over $A\cap X_{v_i}$.
    By the induction hypothesis, $\cost(M^{\inw}_i)\le\cost(M_i)$ (as both are perfect matchings on $X_i\cap A$). Denote by $\Delta$ the label of $v$. We conclude:
    \[
    \cost(M{}^{\inw})=s\cdot\Delta+\sum_{i=1}^{k}\cost(M{}_{i}^{\inw})\le s\cdot\Delta+\sum_{i=1}^{k}\cost(M{}_{i})=\cost(M)\le\cost(M^{\opt})~,
    \]
    the lemma now follows.
\end{proof}

Next, we show that an inward solution can be maintained with recourse proportional to the height of the HST.

\begin{lemma} \label{lem:ultrametricOptimal}
Given an inward matching on a point set $A$ in a 1-HST $(X,d_X)$ of height $h$, for any new point inserted into $A$, one can maintain an inward matching with recourse at most $2h$.
\end{lemma}
\begin{proof}
    Consider an HST represented by a labeled tree $T$ of height $h$.
    A non-inward matching $M$ on $A$ is called \emph{$i$-problematic due to a node $v$ at height $i$} if the following holds:
    \begin{itemize}
        \item  For every internal node $u$, at most one point in $X_u\cap A$ is matched to a point outside of $X_u$.
        \item For every internal node $u$ which is not an ancestor of $v$, the number of internal edges in $X_u$ is $\left\lfloor \frac{|X_{u}\cap A|}{2}\right\rfloor$.
        In particular, at most one point $X_u\cap A$ is not matched to any point in $X_u$.
        \item The number of internal edges in $X_v$ is $\left\lfloor \frac{|X_{v}\cap A|}{2}\right\rfloor-1$. Moreover, $X_v\cap A$ contains exactly two points not matched to any point in $X_v\cap A$ (implying that $|X_{v}\cap A|$ is even).
        \item For every internal node $u$ which is an ancestor of $v$, the number of internal edges in $X_u$ is at least $\left\lfloor \frac{|X_{u}\cap A|}{2}\right\rfloor-1$. Moreover, $X_u\cap A$ contains at most two points not matched by any other point in $X_u$.
    \end{itemize}
	Note that an $i$-problematic matching is almost inward matching. In particular, consider an inward matching $M$ over a set $A$, and let $A'=A\cup\{x\}$. Let $v$ be the ancestor of $x$ of minimal height $i$ such that $|X_{v}\cap A|$ is odd. Then the matching $M$ will be $i$-problematic w.r.t.\ $A'$ due to $v$. If there is no such a node $v$, then the matching $M$ will remain inward.
	
    We show, by induction on $h-i$, that an $i$-\emph{problematic} matching can be transformed into an inward matching by deleting at most $h-i$ edges.
    The base case is when we are given an $h$-problematic matching $M$. Let $r$ be the root of $T$. Then the matching $M$  contains $\left\lfloor \frac{|X_{r}\cap A|}{2}\right\rfloor-1=\left\lfloor
    \frac{|A|}{2}\right\rfloor-1$ edges, while each child $v$ of $r$ contains $\left\lfloor \frac{|X_{v}\cap A|}{2}\right\rfloor$ edges. We can simply add an edge between a pair of unmatched points and obtain an inward matching, as required.

    In general, consider an $i$-problematic matching due to a node $v$ at height $i$. Then $X_v$ contains exactly two points $x,y\in X_v$ such that $M$ does not match $x$ and $y$ to points inside $X_v$. We continue by cases analysis:
    \begin{itemize}
        \item If both $x$ and $y$ are unmatched in $M$, we simply add the edge $\{x,y\}$ to $M$. As a result, $M$ is an inward matching, and we are done.
        \item If $x$ is matched to some point $\hat{x}$ while $y$ is unmatched. Let $\hat{v}$ be the minimal height internal node such that $x,\hat{x}\in X_{\hat{v}}$, and note that $\hat{v}$ is an ancestor of $v$.
        Delete the edge $\{x,\hat{x}\}$ from $M$ and add the edge $\{x,y\}$ to $M$. Denote the new matching by $\hat{M}$.
        The matching $\hat{M}$ is either an inward matching or it is $i'$-problematic due to a node $v'$ of height $i'>i$ which is an ancestor of $\hat{v}$.
        By the induction hypothesis, by deleting $h-i'$ edges from $\hat{M}$ (and adding others), we obtain an inward matching as required.
        \item If $y$ is matched to some point $\hat{y}$ while $x$ is unmatched, the solution is symmetric to the previous case.
    \end{itemize}
    Note that the fourth case where both $x$ and $y$ are matched to points outside of $X_v$ is impossible.

    Next, we turn to the actual algorithm for ultrametrics.
    We maintain an inward matching, which is optimal by \Cref{lem:inward}. When a new pair of points $x,y$ arrive, we first insert $x$. This might cause the matching to be $i$ problematic for some $1\leq i\leq h$, which we can fix with at most $h-i\le h$ deletions. Then we add $y$, which again can cause the matching to be $i'$ problematic for some $1\leq i'\leq h$, and we fix it again. Overall, the algorithm used recourse $2h$.
\end{proof}

Note that inward matchings remain inward wehn new points are added to an ultrametric.
\begin{lemma}\label{lem:stability}
    Let $(X,d_X)$ and $(Y,d_Y)$ be ultrametrics such that $X$ is a subspace of $Y$. If a matching $M$ on a set $A\subset X$ is inward w.r.t.\ $(X,d_X)$, then it is also inward w.r.t.\ $(Y,d_Y)$.
\end{lemma}

\begin{proof}
Let $T_X$ and $T_Y$ be the 1-HSTs associated with $(X,d_X)$ and $(Y,d_Y)$, where the leaves correspond to the elements of $X$ and $Y$, respectively. Since an ultrametric is associated with a unique HST~\cite{BartalLMN04}, $T_X$ is the tree induced by the leaves of $T_Y$ corresponding to $X$. In particular, for every node $v\in T_Y$, there is a node $u\in T_X$ such that $Y_v\cap X=X_u$. Consequently, for every node $v\in T_Y$, $M$ has exactly   $\left\lfloor \frac{|Y_{v}\cap A|}{2}\right\rfloor = \left\lfloor \frac{|X_u\cap A|}{2}\right\rfloor$ edges in $Y_v$, and so $M$ is inward w.r.t.\ $(Y,d_Y)$.
\end{proof}

\subsection{Heavy-Path Decomposition to Support Updates with Recourse $O(\log^3 n)$}
\label{ssec:HP}

\emph{Heavy-path decomposition} of rooted trees was introduced by Sleator and Tarjan~\cite{SleatorT83}. Let $T$ be a rooted tree, where the weight $w(v)$ of a node $v$ equals the number of leaves in the subtree rooted at $v$. An edge of $T$ between a parent $u$ and child $v$ is \emph{heavy} if $w(v)>\frac12 w(u)$, otherwise it is \emph{light}. The heavy edges form a set of descending paths in $T$, called \emph{heavy paths}; the collection of heavy paths forms the \emph{heavy-path decomposition} of $T$. If $T$ has $n$ nodes, every descending path intersects $O(\log n)$ heavy paths; in particular, if we contract all heavy edges, the height of the resulting tree $\tilde{T}$ is $O(\log n)$.

\paragraph{Heavy-Path Inward Matching.}
We define a ``relaxed'' inward matching on an HST, and then show that it is a $2$-approximate minimum-weight perfect matching, and it can be maintained with recourse $O(\log^3 n)$. Let $T$ be the HST, associated with the ultrametric $X$ on $n$ points. Let $\mathcal{H}$ be the collection of heavy paths in $T$, and for every heavy path $P\in \mathcal{H}$, let $t(P)$ be the vertex of $P$ closest to the root (i.e., the \emph{top} node of $P$).

Consider a set $A\subseteq X$ of $l$ points $A=\{x_1,x_2,\dots,x_{l}\}$. We may assume w.l.o.g.\ that every point in $A$ corresponds to a unique leaf in $T$. A matching $M$ on $A$ is \emph{heavy-path inward} (\emph{HP-inward}, for short) if the following conditions are satisfied:
\begin{enumerate}
\item for a heavy path $P=(u_1,u_2,\ldots , u_m)$ with $u_1=t(P)$, at most one point in $X_{u_1}\cap A$ is matched to a point outside $X_{u_1}$  (if and only if $|X_{u_1}\cap A|$ is odd); and
\item if a point $p\in X_{u_1}\cap A$ is matched to a point outside $X_{u_1}$, then $p\in X_{u_1}\setminus X_{u_{i^*}}$ for the minimum integer $i^*\geq 2$ such that $(X_{u_1}\setminus X_{u_{i^*}})\cap A$ is odd (where $X_{u_{m+1}}:=\emptyset$).
\end{enumerate}
Note that every inward matching is HP-inward, but an HP-inward matching is not necessarily inward. In particular, there is no restriction on the matching along a heavy path.

\begin{figure}[ht]
	\centering
	\includegraphics[width=0.8\textwidth]{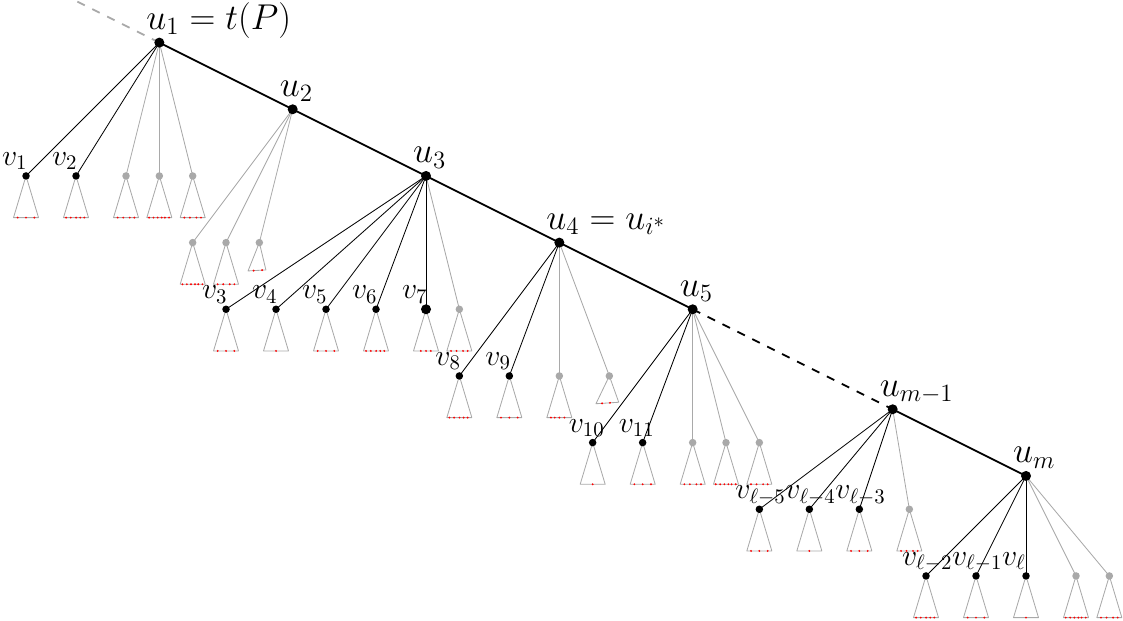}
	\caption{Illustration of the proof of \Cref{lem:heavy-inward}. The heavy path $P=(u_1,\dots ,u_m)$ is depicted using a bold line. The light edges from $P$ node $u_i$ towards children with an odd number of leaves are depicted using black lines (and the children denoted $v_j$), while edges towards children with an even number of leaves are depicted using gray lines (children are unnamed). The leaves are depicted in red. $u_{i^*}$ is the first node along $P$ such that $(X_{u_1}\setminus X_{u_{i^*}})\cap A$ is odd.
		\label{fig:InwardMatching}}
\end{figure}

\begin{lemma} \label{lem:heavy-inward}
    Consider an ultrametric $(X,d_X)$ with an associated tree $T$ and a set $A=\{x_1,x_2,\dots,x_{l}\}$ of $l$ points in $X$. For every HP-inward matching $M^{\HPI}$, we have $\cost(M^{\HPI}) \leq 2\,\cost(M^\opt)$, where $M^\opt$ is a minimum weight near-perfect matching on $A$.
\end{lemma}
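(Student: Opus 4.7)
The plan is to partition the contribution of each matching along the heavy paths of $T$ and prove a per-heavy-path factor-$2$ approximation; summing over heavy paths then yields the lemma. Concretely, for every heavy path $P=(u_1,\dots,u_m)$, let $C^M(P)=\sum_{i=1}^m o_i^M\,\Delta_{u_i}$, where $o_i^M$ is the number of edges of $M$ whose LCA equals $u_i$. Since each internal node of $T$ lies on exactly one heavy path and each matching edge has a unique LCA, $\cost(M)=\sum_P C^M(P)$. Fixing $P$ and writing $\alpha_i:=o_i^{M^{\HPI}}$, $o_i:=o_i^{M^{\opt}}$, together with the prefix sums $A_k:=\sum_{i\le k}\alpha_i$ and $B_k:=\sum_{i\le k}o_i$, the per-heavy-path goal reduces to $\sum_i\alpha_i\Delta_{u_i}\le 2\sum_i o_i\Delta_{u_i}$.

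First I would observe that $A_m=B_m$. Every light child $v$ of some $u_i$ is the top of its own heavy path, so condition~(1) applied at $v$ forces both matchings to contain exactly $\lfloor n_v/2\rfloor$ edges internal to $X_v$; together with condition~(1) at $u_1$ (which permits at most one escape from $X_{u_1}$), both matchings induce exactly $\lfloor n_{u_1}/2\rfloor$ edges inside $X_{u_1}$. Subtracting the common number of edges internal to light-child subtrees shows that the number of edges with LCA on $P$ coincides in the two matchings.

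The core step is the prefix inequality $A_k\le 2B_k$ for every $k\in[m]$. I would prove it by analyzing the points of $R_k:=X_{u_1}\setminus X_{u_{k+1}}$: every edge counted in $A_k$ has at least one endpoint in $R_k$ and is not contained in a single light-child subtree, so it either crosses the boundary of $X_{u_{k+1}}$ or sits entirely inside $R_k$ across distinct light children of some $u_i$ ($i\le k$) or across the heavy-path stratification. Conditions~(1) and~(2) at $u_1$ restrict the unique escape from $X_{u_1}$ (if any) to lie in the highest odd stratum $R_{i^*-1}$, which together with the parity forced by $M^{\HPI}$ being a near-perfect matching on $X_{u_1}$ allows bounding the cross-$X_{u_{k+1}}$ count in terms of $[n_{u_{k+1}}\text{ odd}]$ and the odd-child counts $s_{u_i}=\sum_{w\text{ child of }u_i}[n_w\text{ odd}]$ for $i\le k$---precisely the quantities that determine $B_k=\sum_{i\le k}\lfloor s_{u_i}/2\rfloor$. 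The factor~$2$ absorbs the fact that HP-inward may lift an edge whose inward LCA would lie at some $u_j$ with $j>k$ up to a shallower $u_i$ with $i\le k$; each such lift consumes an odd-parity ``partner'' that already contributes to $B_k$.

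Given the prefix bound, the remainder is routine Abel summation: setting $\Delta_{u_{m+1}}:=0$,
\begin{equation*}
C^{M^{\HPI}}(P)=\sum_{i=1}^m\alpha_i\,\Delta_{u_i}=\sum_{k=1}^m A_k\bigl(\Delta_{u_k}-\Delta_{u_{k+1}}\bigr),
\end{equation*}
and likewise for $M^{\opt}$. Since labels are non-increasing along $P$ (so $\Delta_{u_k}-\Delta_{u_{k+1}}\ge 0$) and $A_k\le 2B_k$, one obtains $C^{M^{\HPI}}(P)\le 2\,C^{M^{\opt}}(P)$; summing over all heavy paths finishes the proof. The hard part is clearly the prefix bound: the naive inequality $A_k\le B_k$ already fails (witnessed by the gadget with $P=(u_1,u_2)$, two light children of $u_1$, and even $n_{u_2}$, where $\alpha_1=2$ but $o_1=1$), and the tight factor~$2$ requires a careful charging argument combining HP-inward conditions~(1) and~(2) with the heavy-path property $n_{u_{i+1}}>n_{u_i}/2$.
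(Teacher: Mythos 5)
Your high-level plan---decomposing $\cost(M)$ along heavy paths, proving a per-path factor-$2$, and then summing---is the same as the paper's, and the passage from a prefix inequality $A_k\le 2B_k$ to the per-path bound by Abel summation (using that labels are non-increasing along a descending path) is sound. The claim $A_m=B_m$ is also correct, granted that $M^\opt$ is taken WLOG to be inward (which you should state, since this is what makes your formula $B_k=\sum_{i\le k}\lfloor s_{u_i}/2\rfloor$ valid).

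However, the proof is not complete. The load-bearing step, $A_k\le 2B_k$ for every prefix $k$, is exactly the hard part, and what you give is a sketch of ingredients rather than an argument. Several of the things you invoke are either unnecessary or unaddressed. (i) You appeal to the heavy-path property $n_{u_{i+1}}>n_{u_i}/2$, but the factor-$2$ bound does not depend on the paths being \emph{heavy}; the paper's argument works for an arbitrary path decomposition, and heaviness is only needed to control recourse. This is a sign the charging you have in mind is not the right one. (ii) Your description does not pin down which near-perfect matching $M^\opt$ you are comparing against. When $|N|$ is odd this matters: for the ``wrong'' optimal matching (wrong unmatched node in the odd stratum), a naive count can give $A_k=\ell_k$ and $B_k=(\ell_k-1)/2$, i.e., $A_k>2B_k$. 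The paper handles this by explicitly re-choosing $M^\opt$ so that its unmatched light child lives in the same stratum $u_{i^*-1}$ as the one condition~(2) forces for $M^{\HPI}$; without that step the prefix inequality can fail. (iii) You never actually bound the ``cross-$X_{u_{k+1}}$ count''; you only assert that conditions~(1), (2), and parity ``allow bounding'' it.

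The paper's route is more concrete: it passes to the set $N$ of odd light children of $P$, identifies the per-path cost of both matchings with the cost of induced matchings on the linearly ordered set $N$, and then charges each edge $\{v_i,v_j\}\in M_N^{\HPI}$ to the unique edge of $M_N^{\opt}$ incident to $v_{\min\{i,j\}}$; since the cost of an $N$-edge is $\Gamma$ of the parent of the shallower endpoint and labels are non-increasing, the charged-to edge is at least as costly, and each $M_N^{\opt}$ edge receives at most two charges. (Note the paper has a typo here---it writes $v_{\max}$ where it should be $v_{\min}$; the correct cost of $\{v_i,v_j\}$ is $\Gamma(\parent(v_{\min\{i,j\}}))$.) This charging, after the odd-case adjustment of $M^\opt$, in fact implies your prefix inequality, but you would need to supply the charging explicitly to close the gap; your current sketch does not.
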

\begin{proof}
Let $M^\opt$ be a minimum-weight near-perfect matching for $A$, and let $M^{\mathrm{HPI}}$ be an HP-inward matching. Recall that for a matched pair $\{x,y\}\subset A$, we have $\cost(\{x,y\})=d_X(x,y)=\Gamma_{\lca(x,y)}$, where $\lca(x,y)$ is the least common ancestor of $x$ and $y$ in $T$. We say that a matched pair $\{x,y\}$ is \emph{associated} with the node $\lca(x,y)$ of $T$.

For every node $u$ of $T$, let $M_u^{\opt} =\left\{ \{x,y\}\in M^\opt: \lca(x,y)=u \right\}$, that is, the pairs in $M^{\opt}$ associated with $u$; and similarly let $M_u^{\HPI} =\left\{ \{x,y\}\in M^{\HPI}: \lca(x,y)=u \right\}$.
For every heavy path $P\in \mathcal{H}$, let $M_P^{\opt} =\bigcup_{u\in P} M_u^{\opt}$ and $M_P^{\HPI} =\bigcup_{u\in P} M_u^{\HPI}$.  These are the pairs in the two matchings whose least common ancestors are in $P$. We claim that for every $P\in \mathcal{H}$, we have
\begin{equation}\label{eq:HPI2}
    \cost(M_P^{\HPI}) \leq 2\, \cost(M_P^{\opt})~.
\end{equation}
Summation of inequality \eqref{eq:HPI2} overall heavy paths $P\in \mathcal{H}$ will immediately imply the lemma.
To prove \eqref{eq:HPI2}, consider a heavy path $P=(u_1,\ldots , u_m)$ with $u_1=t(P)$; refer to \Cref{fig:InwardMatching}. Let $N=\{v_1, \ldots ,v_\ell\}$ be the set of nodes of $T$ such that for every $i\in \{1,\ldots ,\ell\}$, there exists a light edge $u_jv_i$ for some $j\in \{1,\ldots ,m\}$, where $v_i$ is a child of $u_j$ and $|X_{v_i}\cap A|$ is odd. Assume that the nodes in $N$ are labeled such that if $u_jv_i$ and $u_{j'}v_{i'}$ are light edges and $i<i'$, then $j<j'$. Every node $v_i\in N$ is the top node of some heavy path in $\mathcal{H}$, and both $M^{\opt}$ and $M^{\HPI}$ matches exactly one point in $X_{v_i}\cap A$ to some point outside of $X_{v_i}$.  Let $M_N^{\opt}$ be a matching on $N$ such that $\{v_i,v_j\}\in M_N^{\opt}$ if and only if $M^{\opt}$ matches a point in $X_{v_i}$ to a point in $X_{v_j}$; and define $M_N^{\HPI}$ analogously. Note that both $M_N^{\opt}$ and $M_N^{\HPI}$ are near-perfect matchings on $N$. We also define the cost function $\cost(\{v_i,v_j\}):=\Gamma(u)$, where $u=\lca(v_i,v_j)$.
Note that $u=\parent(v_{\max\{i,j\}})$. With these costs, we have
$\cost(M_P^{\HPI}) = \cost(M_N^{\HPI})$ and
$\cost(M_P^{\opt}) =\cost(M_N^{\opt})$.

First assume that $|X_{u_1}\cap A|$ is even. Then $|N|$ is even, and so both $M_N^{\opt}$ and $M_N^{\HPI}$ are perfect matchings on $N$.
We compare
$\cost(M_N^{\HPI})$ and $\cost(M_N^{\opt})$ with the following charging scheme:
We \emph{charge} each matched pair $\{v_i,v_j\}\in M_N^{\HPI}$ to the pair $\{v_i',v_j'\}\in M_N^{\opt}$ that includes $v_{\max\{i,j\}}$. Note that
$\cost(\{v_i,v_j\}) = \Gamma(\parent(\max\{i,j\})) \leq \Gamma(\parent(\max\{i',j'\}))$, that is, we charge each edge in $M_N^{\HPI}$ to an edge in $M_N^{\opt}$ of the same or largest cost. Clearly, each edge in $M_N^{\opt}$ receives charges from at most two edges of $M_N^{\HPI}$. Consequently, we obtain
\begin{align} \label{eq:HIP-OPT}
 \cost(M_N^{\HPI})
 & = \sum_{\{v_i,v_j\}\in M_N^{\HPI}} \cost(\{v_i,v_j\})\\
 &=\sum_{\{v_i,v_j\}\in M_N^{\HPI}}
 \Gamma(\parent(v_{\max\{i,j\}}))\nonumber\\
 & \leq 2\, \sum_{\{v_i,v_j\}\in M_N^{\opt}}
 \Gamma(\parent(v_{\max\{i,j\}}))\nonumber\\
 &= 2\, \sum_{\{v_i,v_j\}\in M_N^{\opt}}
 \cost(\{v_i,v_j\})
 =2\, \cost(M_N^{\opt}) \nonumber
\end{align}

Next assume that  $|X_{u_1}\cap A|$ is odd, and so $|N|$ is odd, as well. By condition~1 in the definition of HP-inward matching, both $M^{\opt}$ and $M^{\HPI}$ matches exactly one point in $X_{u_1}\cap A$ to a point outside of $X_{u_1}$. That is, both $M_N^{\opt}$ and $M_N^{\HPI}$ keep one node in $N$ unmatched. If the same node is unmatched in both $M_N^{\opt}$ and $M_N^{\HPI}$,
then the charging scheme and inequality \eqref{eq:HIP-OPT} carries over.

Let $i^*\geq 1$  be the smallest integer such that $(X_{u_1}\setminus X_{u_{i^*}})\cap A$ is odd.
By condition~2 in the definition of HP-inward matching, $M^{\opt}$ and $M^{\HPI}$ each match exactly one point in $(X_{u_1}\setminus X_{u_{i^*}})\cap A$ to a point outside of $X_{u_1}$. In particular, a child of some node $u_i$, $1\leq i< i^*$, is unmatched in $M_N^{\HPI}$. We show that for all $i< i^*-1$, node $u_i$ has an even number of children in $N$. Indeed, suppose otherwise and consider the smallest $i<i^*$ where $u_i$ has an odd number of children; then $(X_{u_1}\setminus X_{u_{i+1}})\cap A$ is odd, contradicting the choice of $i^*$). Similarly, node $u_{i^*-1}$ has an odd number of children (or else $i^*$ would not be minimal).
Note that a minimum-weight near-perfect matching $M_N^\opt$ is not necessarily unique: There may be several choices for the unmatched node. Let us construct $M_N^{\opt}$ such that a child of $u_{i^*-1}\in P$ is unmatched, and $M_N^\opt$ is perfect matching among the (evenly many) children of $u_i$ for all $i<i^*-1$.
Now inequality \eqref{eq:HIP-OPT} carries over and completes the proof of \eqref{eq:HPI2}.

Summation of \eqref{eq:HPI2} over all heavy paths in $\mathcal{H}$ yields
\[
\cost(M^{\HPI})
  = \sum_{P\in \mathcal{H}} \cost(M_P^{\HPI})
  \leq \sum_{P\in \mathcal{H}} 2\, \cost(M_P^{\opt})
  =2\, \cost(M^{\opt}).
    \qedhere
\]
\end{proof}

It remains to show that we can maintain an HP-inward matching with recourse $O(\log^3 n)$ per point in the evolving ultrametric (cf.~\Cref{rem:HST}). Let $H$ denote the set of heavy edges in a tree $T$ with $n$ leaves. As noted above, the tree obtained by contracting all heavy edges has height $O(\log n)$. We show that the arrival of a new node in $T$ incurs only $O(\log n)$ changes in $H$.

\begin{lemma} \label{lem:heavyedges}
The addition of a new point to an ultrametric on $n$ points incurs $O(\log n)$ insertions and deletions in the set $H$ of heavy edges of the associated HST.
\end{lemma}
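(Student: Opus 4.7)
The plan is to characterize which edges can change heavy/light status and then count them along the root path of the new leaf. Let $u_0$ be the newly inserted leaf and let $u_0, u_1, \dots, u_k$ be the path from $u_0$ to the root. Since $u_0$ is a leaf, the insertion increments $w(u_i)$ by exactly $1$ for every $i \ge 1$ and leaves every other weight unchanged, so any edge whose status changes must be incident to some $u_i$ on this path (together with the at most one new edge $u_0 u_1$, which contributes $O(1)$).

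At each $u_i$ with $i \ge 1$, I would split the possible changes into two types. For the path edge $e_i = u_{i-1} u_i$, both endpoints gain one unit of weight, so the quantity $2w(u_{i-1}) - w(u_i)$ increases by $1$, meaning $e_i$ can only transition from light to heavy; a short arithmetic check shows this transition requires $w(u_i) = 2w(u_{i-1})$ before insertion, so $e_i$ was in particular light. For any sibling edge $u_i v$ with $v \neq u_{i-1}$, only $w(u_i)$ changes, so $2w(v) - w(u_i)$ decreases by $1$, and the edge can only transition from heavy to light; at most one such $v$ can change at a given $u_i$, since a node has at most one heavy child.

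Next I would argue that a sibling-edge change at $u_i$ also forces $e_i$ to be light before insertion. If $u_i v^*$ is heavy before, then $v^*$ is the unique heavy child of $u_i$, so $e_i$ is light before. A brief calculation combining $w(v^*) + w(u_{i-1}) \le w(u_i)$ with $2w(v^*) > w(u_i)$ further yields $w(u_i) \ge 2w(u_{i-1}) + 1$, which also prevents $e_i$ from becoming heavy after the insertion. Hence at each $u_i$ the two types of changes are mutually exclusive and together contribute $O(1)$ modifications to $H$.

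To finish, I would invoke the classical heavy-path fact: a root-to-leaf path contains at most $\lfloor \log_2 n \rfloor$ light edges, since traversing a light edge upward at least doubles the subtree weight ($w(u_i) \ge 2w(u_{i-1})$) while weights are bounded by $n$. Only positions $i$ where $e_i$ is light before insertion can contribute a change, yielding at most $O(\log n)$ such positions and therefore $O(\log n)$ total edge insertions/deletions in $H$. The main technical obstacle is the case analysis in the third step — verifying that a sibling-edge change implies the neighbouring path edge is light — but once this is established, the standard doubling argument for heavy-path decompositions closes out the count.
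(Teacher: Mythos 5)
Your proof is correct and follows the same overall strategy as the paper's: observe that only edges touching the leaf-to-root path can change status, that heavy path edges stay heavy, and that the $O(\log n)$ light path edges govern the count. The main place where you add value is the third step: the paper's proof-sketch only charges a sibling edge going light to the event that the adjacent path edge becomes heavy, and does not explicitly handle a sibling edge going light at a node where the path edge \emph{stays} light (which can happen, e.g.\ when $2w(v^*) = w(u_i)+1$). Your argument closes this by showing that a heavy sibling child at $u_i$ forces $e_i$ to be light before the insertion and to remain light afterward, so all status changes — path-edge or sibling — are localized at the $O(\log n)$ light positions along the path, and the two kinds are even mutually exclusive at each such position. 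This is a tighter and more self-contained version of the same counting argument; the standard doubling bound on light edges then finishes it exactly as in the paper.
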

\begin{proof}
Let $T$ be the HST associated with an ultrametric $(X,d_X)$ with $|X|=n$. Then $T$ has $n$ leaves, and at most $2n-1$ nodes. When a new point is added to the ultrqametric, the tree is augmented by one new leaf and at most one additional node, as outlined in \Cref{rem:HST}> Specifically, the new tree is created by the following operations: Attaching a new leaf to a node, adding a new root above the current root, or subdividing an edge.
We show that each of these operations, a heavy-path decomposition can be updated using $O(\log n)$ insertions and deletions of heavy edges.

If a new root is added to $T$, it is attached to the heavy path that ends at the current root. Assume now that a new node $\ell$ is added to $T$ as either a new leaf or a subdivision node. The weight of a subtreee of $T$ can only increase; and the weight of a subtree $T_u$ rooted at a node $u$ increases if and only if $u$ is an ancestor of $\ell$. If both $u$ and $v$ are ancestors of $\ell$, and $uv$ is a heavy edge (i.e., in $uv\in H$), then $uv$ remains a heavy. The path from $\ell$ to the root contains $O(\log n)$ edges that are \emph{not} in $H$; any of these edges may become heavy. If an edge $uv$ becomes heavy, where $u$ is a parent of $v$, then at most one edge incident to $u$ may become light. Overall, the set $H$ of heavy edges can be updated with $O(\log n)$ insertions and deletions.
\end{proof}

\begin{lemma} \label{lem:2HSTAppprox}
In a $k$-HST with $n$ nodes, where new nodes are added in an online fashion, one can maintain a
%$2$-approximate minimum-weight perfect
HP-inward matching with recourse $O(\log^3 n)$.
\end{lemma}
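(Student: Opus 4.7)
The plan is to maintain a heavy-path-inward (HP-inward) matching on $A$, which is a $2$-approximation of the minimum-weight perfect matching by \Cref{lem:heavy-inward}. The central idea is to combine the dynamic heavy-path decomposition of $T$ (\Cref{lem:heavyedges}) with one instance of the dynamic near-perfect matching data structure of \Cref{lem:laminar} per heavy path. For each heavy path $P=(u_1,\dots,u_m)$ I will maintain the ordered sequence $N(P)$ of its ``odd light children''---children $v$ of some $u_j\in P$ reached by a light edge for which $|X_v\cap A|$ is odd---indexed by the position $j$ of their parent on $P$. Into $N(P)$ I plug the data structure of \Cref{lem:laminar}, which maintains a near-perfect matching supporting insert, delete, \texttt{merge}, and \texttt{split} with $O(\log^2 n)$ recourse per operation. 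An abstract pair $\{v_i,v_j\}$ in this matching is realized as a pair $\{\mathrm{rep}(v_i),\mathrm{rep}(v_j)\}\subseteq A$, where $\mathrm{rep}(v)\in X_v\cap A$ is the (recursively defined) unique outgoing point of $X_v$; since $X_{v_i}$ and $X_{v_j}$ are disjoint, $\lca(\mathrm{rep}(v_i),\mathrm{rep}(v_j))=\lca(v_i,v_j)\in P$, so the cost of the $A$-matching agrees exactly with the abstract cost on $N(P)$.

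When a new leaf $\ell$ arrives I first refresh the heavy-path decomposition. By \Cref{lem:heavyedges} only $O(\log n)$ edges toggle between heavy and light, and each such toggle corresponds to a \texttt{split} or \texttt{merge} on some $N(P)$. Second, only the $O(\log n)$ heavy paths intersecting the root-to-$\ell$ path in $T$ can see their $N(P)$ change: if $t(P_i)$ is the top of such a path, the parity of $|X_{t(P_i)}\cap A|$ flips, so $t(P_i)$ is inserted into or deleted from $N(P_{i+1})$, where $P_{i+1}$ is the heavy path containing the parent of $t(P_i)$. All together this produces $O(\log n)$ elementary operations on the collection of $N(P)$ data structures; each costs $O(\log^2 n)$ recourse by \Cref{lem:laminar}, giving a total of $O(\log^3 n)$ per arrival.

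The hard part is Condition~2 of the HP-inward definition: when $|X_{t(P)}\cap A|$ is odd, the unmatched node of the $N(P)$ matching must lie in the prefix $X_{u_1}\setminus X_{u_{i^*}}$, where $i^*$ is the minimum index with $(X_{u_1}\setminus X_{u_{i^*}})\cap A$ odd. The data structure of \Cref{lem:laminar} returns a near-perfect matching but gives no control over which node is left unmatched. I will patch this with an $O(1)$-cost local swap after each elementary update: exchange the unmatched node returned by the data structure with a matched neighbour whose parent index is below $i^*$; such a neighbour exists because by minimality of $i^*$ the node $u_{i^*-1}$ contributes an odd number of members to $N(P)$. A secondary concern is cascading: a change of the unmatched node inside $N(P)$ changes $\mathrm{rep}(t(P))$, which must be propagated as a relabel of the matched pair containing $t(P)$ in the parent heavy path; each propagation is $O(1)$ replacements and the cascade has length $O(\log n)$, well within the overall $O(\log^3 n)$ budget.
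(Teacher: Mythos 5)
Your argument follows essentially the same route as the paper: maintain a dynamic heavy-path decomposition (Lemma~\ref{lem:heavyedges}), run an instance of the line data structure of Lemma~\ref{lem:laminar} on the odd-light-child sequence $N(P)$ of each heavy path, and translate the $O(\log n)$ split/merge operations per arrival into $O(\log^3 n)$ total recourse. Your $\mathrm{rep}$ function makes explicit how an abstract pair $\{v_i,v_j\}$ on $N(P)$ is realized in $A$ --- something the paper leaves implicit --- and your justification that the costs agree because $\lca(\mathrm{rep}(v_i),\mathrm{rep}(v_j))=\lca(v_i,v_j)$ is correct and worth stating.

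The one place you diverge from the paper is the enforcement of Condition~2 of HP-inwardness. The paper bakes this into the data structure: whenever $|N(P)|$ is odd, a post-processing step splits off the prescribed node $v_{j}$ (with $j<i^*$) via $N^-,N^0=(v_j),N^+$ and then merges $N^-$ with $N^+$ to pair up any unmatched vertices, all within the split/merge interface and hence $O(\log^2 n)$ recourse per operation with the invariants (I1)--(I4) intact. You instead leave the data structure oblivious to Condition~2 and apply a constant-size \emph{overlay} swap on its output, plus an $O(\log n)$-length cascade of $\mathrm{rep}$ relabels up the contracted tree. That is asymptotically no worse and is a valid alternative, but it is under-specified on one point: the edge created by your swap (say $\{v,w'\}$, where $v$ was the unmatched node the data structure produced and $w'$ was the old partner of the node you now leave unmatched) in general violates the laminar/depth invariants that Lemma~\ref{lem:laminar} maintains. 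So you cannot insert it into the underlying data structure as a real edge; you must keep the overlay strictly separate, store it as a constant-size delta per affected heavy path, and recompute it after every underlying operation (and likewise redo the $\mathrm{rep}$ cascade). Once you make that discipline explicit, your version is correct and achieves the same $O(\log^3 n)$ bound.
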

\begin{proof}
    We maintain a heavy path decomposition of $T$ (i.e., the collection $\mathcal{H}$ of heavy paths in $T$). By~\Cref{lem:heavyedges}, $\mathcal{H}$ can be maintained by $O(\log n)$ \emph{operations} per new node, where each operation either splits a heavy path into two, or merges two heavy paths into one.

    For each $P\in \mathcal{H}$, we maintain a near-perfect matching using the 1D data structure in~\Cref{sec:line}. Specifically, for every heavy path $P\in \mathcal{H}$, we maintain the sequence $N(P)=(v_1,\ldots ,v_\ell)$ of children of the nodes in $P$ that $|X_{v_j}\cap A|$ is odd, sorted in increasing order along $P$.

    By \Cref{lem:laminar}, we can maintain near-perfect matchings of depth $O(\log n)$ on the lists $\mathcal{N}=\{N(P) : P\in \mathcal{H}\}$, with $O(\log n)$ changes in the matching per \texttt{merge} operation, and $O(\log^2 n)$ changes per \texttt{split} operation. Specifically, when $|N(P)|$ is even, we maintain a perfect matching on $N(P)$, which satisfies both conditions of HP-inward matchings for $P$. When $|N(P)|$ is odd, then condition~2 of HP-inward matchings specifies a node $v_j\in N$ that must be matched to some node outside of $N(P)$.  We can easily modify the \texttt{merge} and \texttt{split} operations described in \Cref{sec:line} to specify the unmatched node in $N(P)$ when $|N(P)|$ is odd, using the following post-processing step: Split $N=(v_1,\ldots ,v_\ell)$ into three parts $N^{-1}=(v_1,\ldots , v_{j-1})$, $N^0=(v_j)$, and $N^+=(v_{j+1},\ldots , v_{\ell})$, and then combine the near-perfect matchings on $N^-$ and $N^+$ into a perfect matching on $N^-\cup N^+=(v_1,\ldots , v_{j-1}, v_{j+1},\ldots , v_{\ell})$ by pairing up any unmatched nodes in $N^-$ and $N^+$.
    With this modification, the data structure in \Cref{sec:line} maintains an HP-inward matching on $T$ with recourse $O(\log^2 n)$ per operation.

    A node insertion in $T$ incurs $O(\log n)$ merge and split operations on the heavy paths in $\mathcal{H}$, by \Cref{lem:heavyedges}, hence on the lists $\mathcal{N}=\{N(P): P\in \mathcal{H}\}$. Consequently, we can maintain an HP-inward matching with recourse $O(\log^3 n)$.
\end{proof}

\subsection{Oblivious Solution for General Metric}

Our main tool in this section will be our dominating stochastic metric embeddings
into ultrametrics, specifically,
\Cref{thm:DoublingOnlineEmbedding,thm:EuclidineOnlineEmbeddingIntoHST}.
We will use the following observation in our algorithm:

\begin{observation}\label{obs:DiameterHSTbound}
	Consider a metric space $(X,d_X)$, and let $U$ be an ultrametric produced using \Cref{thm:DoublingOnlineEmbedding}.
	Then the diameter of the resulting HST is equal to the diameter of $\{x_1,\dots,x_n\}$, up to an $O(1)$ factor.\\
	The same observation holds for \Cref{thm:EuclidineOnlineEmbeddingIntoHST}, as well.
\end{observation}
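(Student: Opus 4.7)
The plan is to establish matching bounds in both directions. The lower direction is immediate: the online embedding of \Cref{thm:DoublingOnlineEmbedding} is dominating, so $d_U(x_j,x_{j'})\ge d_X(x_j,x_{j'})$ for every pair, whence $\diam(U)\ge\diam(\{x_1,\dots,x_n\})$. The entire content of the observation is therefore the upper bound $\diam(U)=O(\diam(\{x_1,\dots,x_n\}))$.

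For the upper bound, let $D=\diam(\{x_1,\dots,x_n\})$ and set $i_0=\lceil\log(4D)\rceil$, so that $\Delta_{i_0}=2^{i_0}\ge 4D$. First I would exploit the ``last bullet'' of \Cref{thm:LDDonline} (which is invoked once per scale in the construction of \Cref{thm:DoublingOnlineEmbedding}): for every $i\in\Z$, the ball $B_X(x_1,\Delta_i/4)$ is contained in a single cluster of $\cP_i$. Since every input point lies within distance $D\le \Delta_i/4$ from $x_1$ whenever $i\ge i_0$, this implies that for every $i\ge i_0$ the entire point set $\{x_1,\dots,x_n\}$ lies in a single cluster of $\cP_i$.

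Next I would pass from $\cP_i$ to the nested sequence $\tilde{\cP}_i$ used to define the HST. Unrolling the recursive definition ``$x,y$ lie in the same cluster of $\tilde{\cP}_i$ iff they do so in both $\cP_i$ and $\tilde{\cP}_{i+1}$'', one sees that $x$ and $y$ are in the same cluster of $\tilde{\cP}_{i_0}$ iff they lie together in $\cP_j$ for every $j\ge i_0$. By the previous paragraph this holds for every pair from the input, so all points share a single cluster of $\tilde{\cP}_{i_0}$. Consequently, for every pair $x_k,x_{k'}$, the maximal level $i^\star$ at which they are separated by $\tilde{\cP}_{i^\star}$ satisfies $i^\star<i_0$. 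Recalling from the proof of \Cref{thm:DoublingOnlineEmbedding} that $d_U(x_k,x_{k'})=2^{i^\star+1}$, we get
\[
d_U(x_k,x_{k'})\le 2^{i_0}\le 2^{\log(4D)+1}=8D=O(\diam(\{x_1,\dots,x_n\})),
\]
so $\diam(U)=O(D)$, as required.

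For the Euclidean case covered by \Cref{thm:EuclidineOnlineEmbeddingIntoHST}, the only property I used about the per-scale partitions $\cP_i$ was that the ball $B(x_1,\Delta_i/4)$ is fully contained in a single cluster. This property is stated verbatim as the last bullet of \Cref{thm:LDDonlineEuclidean}, so the identical argument applies and gives $\diam(U)=\Theta(\diam(\{x_1,\dots,x_n\}))$ there as well. There is no genuine obstacle here; the only subtle point is the passage from $\cP_i$ (where the single-cluster property is stated) to the forced-nested refinement $\tilde{\cP}_i$ actually used in the HST, but the recursive unrolling above handles it cleanly.
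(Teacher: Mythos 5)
Your proof is correct and follows the same route as the paper's (which simply notes that by the last bullet of Theorem~\ref{thm:LDDonline}, once $\Delta_i\ge 4\diam(X)$ the partition $\cP_i$ is a single cluster, hence the ultrametric diameter is $O(\diam(X))$, and the same holds for Theorem~\ref{thm:LDDonlineEuclidean}). You merely spell out the passage from $\cP_i$ to the forced-nested $\tilde{\cP}_i$, which the paper leaves implicit.
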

\begin{proof}
	The ultametric in \Cref{thm:DoublingOnlineEmbedding} was created by constructing a padded decomposition for every scale $\Delta_i=2^i$ using \Cref{thm:LDDonline}.
	By the properties of \Cref{thm:LDDonline}, for every scale $i$ such that $\Delta_i\ge4\cdot\diam(X)$, it holds that the partition at scale $i$ contains only a single cluster containing all the points (centered in $x_1$). In particular the resulting diameter of the ultrametric is bounded by $4\cdot\diam(X)$.
	
	The same property for \Cref{thm:LDDonlineEuclidean}, and thus for \Cref{thm:EuclidineOnlineEmbeddingIntoHST}, as required.
\end{proof}

Note that \Cref{obs:DiameterHSTbound} implies that the height of the resulting ultrametric is bounded by the logarithm of the aspect ratio.
We are now ready to prove~\Cref{thm:ObliviousGeneralMetric} (restated for convenience):

\ObliviousGeneralMetricUB*

\begin{proof}
As we cope with an oblivious adversary, the metric space $(X,d_{X})$,
as well as the order of arriving points $x_{1},x_{2},\dots,x_{2n-1},x_{2n},\dots$
is fixed in advance (even though unknown to the algorithm). Denote by $\Phi_{2n}$
the aspect ratio of the metric induced by $x_{1},x_{2},\dots,x_{2n-1},x_{2n}$.
Using \Cref{thm:DoublingOnlineEmbedding} we will maintain a probabilistic embedding $f_{2n}$
of $x_{1},\dots,x_{2n}$ into an ultrametric $U_{2n}$. Note that
$U_{2n}$ is dominating: for all $x$ and $y$, $d_{X}(x,y)\le d_{U_{2n}}(f_{2n}(x),f_{2n}(y))$,
has small expected distortion: for all $x$ and $y$, $\mathbb{E}_{(f_{2n},U_{2n})}\left[d_{U_{2n}}(f_{2n}(x),f_{2n}(y))\right]\le O(\ddim\cdot\log\Phi_{2n})\cdot d_{X}(x,y)$,
and that $(f_{2n},U_{2n})$ is an extension of $(f_{2n-2},U_{2n-2})$.
We will run the algorithm from \Cref{lem:ultrametricOptimal} or \Cref{lem:2HSTAppprox} on the evolving ultrametric. That is, in the $n$-th step we will feed
it with $f_{2n}(x_{2n-1}),f_{2n}(x_{2n})$.
Let ${\cal M}_{U_{2n}}=\left\{ \left\{ y_{i},z_{i}\right\} \right\} _{i=1}^{n}$
be the matching produced by the algorithm of \Cref{lem:ultrametricOptimal} or \Cref{lem:2HSTAppprox} for the ultrametric $U_{2n}$. This matching is inward or HP-inward w.r.t.\ $U_{2n}$ by \Cref{lem:stability} and \Cref{lem:2HSTAppprox}.
We will maintain a matching for the first $2n$ points in the metric space $(X,d_{X})$: ${\cal M}_{2n}=\left\{ \left\{ f_{2n}^{-1}(y_{i}),f_{2n}^{-1}(z_{i})\right\} \right\} _{i=1}^{n}$.
Clearly, it is a perfect matching, and in addition, uses at most the same
amount of recourse as the algorithm on the ultrametric.
In \Cref{lem:ultrametricOptimal} the recourse is bounded by the height of the ultrametric, which is $O(\log\Phi_{i})$ by \Cref{obs:DiameterHSTbound}; and in \Cref{lem:2HSTAppprox} the recourse is bounded by $O(\log^3 n)$ using a heavy-path decomposition of the associated 2-HST.

Let $\cost_{X}$ and $\cost_{U_{2n}}$ be weight functions
of the matchings in the metric space and ultrametric, respectively. It is simply
the sum of pairwise distances between the matched points. Next we bound
the competitive ratio.
Let ${\cal M}_{2n}^{{\opt}}=\left\{ \left\{ a_{i},b_{i}\right\} \right\} _{i=1}^{n}$
be a minimum-weight perfect matching for $x_{1},x_{2},\dots,x_{2n-1},x_{2n}$.
That is, ${\cal M}_{2n}^{{\opt}}$ minimizes $\cost_{X}\left({\cal M}_{2n}\right)$ over all perfect matchings $\mathcal{M}_{2n}$.
Consider the matching ${\cal M}_{2n}^{{\opt},f}=\left\{ \left\{ f_{2n}(a_{i}),f_{2n}(b_{i})\right\} \right\} _{i=1}^{n}$ in the ultrametric.
Its expected weight is bounded by
\begin{align*}
\mathbb{E}_{(f_{2n},U_{2n})}\left[\cost_{U_{2n}}({\cal M}_{2n}^{{\opt},f})\right] & =\sum_{i=1}^{n}\mathbb{E}_{(f_{2n},U_{2n})}\left[d_{U_{2n}}(f_{2n}(a_{i}),f_{2n}(b_{i}))\right]\\
 & \le O(\ddim\cdot\log\Phi_{2n})\cdot\sum_{i=1}^{n}d_{X}(a_{i},b_{i})\\
 & =O(\ddim\cdot\log\Phi_{2n})\cdot\cost_{X}\left({\cal M}_{2n}^{{\opt}}\right)\,.
\end{align*}
The matching produced by \Cref{lem:ultrametricOptimal} has minimum
weight in the ultrametric, and the matching produced by \Cref{lem:2HSTAppprox} is a 2-approximation of the minimum-weight matching. In both cases,
\[
\cost_{U_{2n}}\left({\cal M}_{U_{2n}}\right)
\le 2\cdot \cost_{U_{2n}}\left({\cal M}_{2n}^{{\opt},f}\right) .
\]
We conclude that
\begin{align*}
\mathbb{E}_{(f_{2n},U_{2n})}\left[\cost_{X}\left({\cal M}_{U_{2n}}\right)\right] & =\mathbb{E}_{(f_{2n},U_{2n})}\left[\sum_{i=1}^{n}d_{X}(f_{2n}^{-1}(y_{i}),f_{2n}^{-1}(z_{i}))\right]\\
 & \le\mathbb{E}_{(f_{2n},U_{2n})}\left[\sum_{i=1}^{n}d_{U_{2n}}(y_{i},z_{i})\right]\\
 &=\mathbb{E}_{(f_{2n},U_{2n})}\left[\cost_{U_{2n}}\left({\cal M}_{U_{2n}}\right)\right]\\
 & \le\mathbb{E}_{(f_{2n},U_{2n})}\left[ 2\cdot \cost_{U_{2n}}\left({\cal M}_{2n}^{{\opt},f}\right)\right]\\
 &\le O(\ddim\cdot\log\Phi_{2n})\cdot \cost_{X}\left({\cal M}_{2n}^{{\opt}}\right)\,. \qedhere
\end{align*}
\end{proof}

Following the exact same lines (while replacing \Cref{thm:DoublingOnlineEmbedding} with \Cref{thm:EuclidineOnlineEmbeddingIntoHST}), we conclude:

\begin{theorem}\label{thm:ObliviousEuclideanMetric}
	There is a randomized algorithm such that, given points $x_1,\dots,x_{2n}\in\R^d$ revealed by an oblivious adversary in an online fashion with aspect ratio $\Phi=\frac{\max_{i,j}\|x_i-x_j\|_2}{\min_{i,j}\|x_i-x_j\|_2}$ (unknown in advance), maintains a perfect matching of expected competitive ratio $O(\sqrt{d}\cdot\log\Phi)$ with recourse $O(\log\Phi)$.
	Alternatively, the recourse can be bounded by $O(\log^3 n)$.
\end{theorem}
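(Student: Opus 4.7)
The plan is to proceed by a straightforward reduction to the ultrametric case, mirroring the argument of \Cref{thm:ObliviousGeneralMetric} but replacing the doubling-dimension embedding with the Euclidean-specific one. Since the adversary is oblivious, the sequence $x_1,x_2,\dots,x_{2n}\in\R^d$ is fixed in advance (though unknown to the algorithm), and hence the aspect ratio $\Phi$ appearing in the bound is well-defined. We first invoke \Cref{thm:EuclidineOnlineEmbeddingIntoHST} to produce, in an online fashion, a dominating random embedding $f_{2n}$ of $\{x_1,\dots,x_{2n}\}$ into an ultrametric $U_{2n}$ with expected distortion $O(\sqrt{d}\cdot\log\Phi)$, and such that $(f_{2n},U_{2n})$ extends $(f_{2n-2},U_{2n-2})$ for every $n$.

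Next, I would feed the embedded pairs $f_{2n}(x_{2n-1}),f_{2n}(x_{2n})$ into either the inward-matching algorithm of \Cref{lem:ultrametricOptimal} or the heavy-path algorithm of \Cref{lem:2HSTAppprox} running on $U_{2n}$, yielding an (approximate) minimum-weight matching $\mathcal{M}_{U_{2n}}$ in the ultrametric. The matching on the original points is then obtained by pulling back through $f_{2n}^{-1}$. Since $f_{2n}$ is dominating, no recourse operation on $U_{2n}$ causes extra changes on the original side, and the per-step recourse is bounded by the height of $U_{2n}$ in the first case, and by $O(\log^3 n)$ in the second case. By \Cref{obs:DiameterHSTbound}, the height of $U_{2n}$ is $O(\log\Phi)$, giving the two alternative recourse bounds in the statement.

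For the competitive ratio, let $\mathcal{M}^{\opt}_{2n}=\{\{a_i,b_i\}\}_{i=1}^n$ denote an offline minimum-weight perfect matching in $(\R^d,\|\cdot\|_2)$, and consider its image $\mathcal{M}^{\opt,f}_{2n}=\{\{f_{2n}(a_i),f_{2n}(b_i)\}\}_{i=1}^n$ in $U_{2n}$. Linearity of expectation together with the expected distortion bound of \Cref{thm:EuclidineOnlineEmbeddingIntoHST} gives
\[
\mathbb{E}\left[\cost_{U_{2n}}(\mathcal{M}^{\opt,f}_{2n})\right]
 \le O(\sqrt{d}\cdot\log\Phi)\cdot\cost_{\R^d}(\mathcal{M}^{\opt}_{2n}).
\]
Since the algorithm on $U_{2n}$ returns a matching of cost at most a factor $2$ above the minimum-weight matching in $U_{2n}$, we have $\cost_{U_{2n}}(\mathcal{M}_{U_{2n}})\le 2\cdot\cost_{U_{2n}}(\mathcal{M}^{\opt,f}_{2n})$. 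Finally, the dominating property $\|x-y\|_2\le d_{U_{2n}}(f_{2n}(x),f_{2n}(y))$ yields $\cost_{\R^d}(\mathcal{M}_{U_{2n}})\le\cost_{U_{2n}}(\mathcal{M}_{U_{2n}})$, and chaining these inequalities produces the claimed $O(\sqrt{d}\cdot\log\Phi)$ competitive ratio in expectation.

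There is no real obstacle here beyond verifying that \Cref{thm:EuclidineOnlineEmbeddingIntoHST} plugs cleanly into the framework of \Cref{thm:ObliviousGeneralMetric}; the only point that warrants checking is that \Cref{obs:DiameterHSTbound} applies to the Euclidean construction (which is explicitly noted in its proof via \Cref{thm:LDDonlineEuclidean}), so that the height-to-aspect-ratio translation, and hence the $O(\log\Phi)$ recourse bound, remains valid without prior knowledge of $\Phi$.
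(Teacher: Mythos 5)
Your proposal is correct and follows exactly the paper's own reasoning: the paper proves \Cref{thm:ObliviousEuclideanMetric} by stating ``Following the exact same lines (while replacing \Cref{thm:DoublingOnlineEmbedding} with \Cref{thm:EuclidineOnlineEmbeddingIntoHST}),'' and you have carried out precisely that substitution within the argument of \Cref{thm:ObliviousGeneralMetric}, including the recourse bound via \Cref{obs:DiameterHSTbound} and the chaining of the dominating, distortion, and $2$-approximation inequalities.
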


\section{Lower Bounds for Competitive Ratio and Recourse}
\label{sec:LB}

\subsection{One Recourse per Point Pair is Not Enough}

It is clear that without any recourse, the competitive ratio for the online minimum-weight perfect matching is unbounded (see an example in \Cref{sec:intro}).
Can we bound the competitive ratio if we allow one recourse per point pair? That is, for each new point pair, we allow the algorithm to delete one edge when it updates a perfect matching.
For online MST, for example, Gu et al.~\cite{GuG016} achieve a competitive ratio of $\Omega(\frac{1}{k})$ with one recourse for every $k$ new points.

\begin{proposition}\label{pp:one}
Given a single recourse per vertex pair, there is no competitive online algorithm.
This already holds for a sequence of $8$ points in $\R$, even if the sequence is known in advance.
\end{proposition}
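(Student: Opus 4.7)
The plan is to exhibit, for every integer $W \ge 3$, a fixed sequence of eight points in $\R$ on which every online algorithm with recourse $1$ (at most one deletion per step) is forced to incur competitive ratio at least $W/2$; letting $W\to\infty$ then yields Proposition~\ref{pp:one}. The sequence consists of the four pairs, in order of arrival,
\[
(p_1,p_2)=(0,W),\ (p_3,p_4)=(W^2,W^2{+}W),\ (p_5,p_6)=(1,W^2{+}1),\ (p_7,p_8)=(W{+}1,W^2{+}W{-}1).
\]
The eight points sort along $\R$ as $0<1<W<W{+}1<W^2<W^2{+}1<W^2{+}W{-}1<W^2{+}W$, so since the minimum-weight perfect matching on points of the line is the consecutive pairing, the offline optima read off as $\opt_1=W$, $\opt_2=2W$, $\opt_3=W^2-W+2$, and $\opt_4=4$. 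The drastic drop from $\opt_3=\Theta(W^2)$ down to $\opt_4=4$ in the last round is exactly the non-monotonicity of MWPM illustrated in \Cref{fig:examples}(b), and is the feature the single-recourse budget cannot absorb.

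I would first argue that in rounds 1 and 2 the algorithm has essentially no freedom. $M_1=\{(0,W)\}$ trivially. In round~2 the two nontrivial rematches of $\{0,W,W^2,W^2{+}W\}$ both have cost $2W^2$ against $\opt_2=2W$, immediately giving round-2 ratio $W$. To avoid that, the algorithm must keep $M_1$ and commit to $M_2=\{(0,W),(W^2,W^2{+}W)\}$. From this ``good'' $M_2$ there are exactly $1+2\cdot 2=5$ reachable matchings $M_3$ (``keep'', or break one of the two edges of $M_2$ and rematch in one of two nontrivial ways); a direct computation shows each has cost in $[W^2{+}2,\ W^2{+}2W]$, so every round-3 ratio is $1+o(1)$ and the round-3 choice only matters insofar as it sets up round~4.

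The crux of the plan is the round-4 analysis. The unique cost-$4$ matching realizing $\opt_4=4$ is
\[
\mathcal{E}=\{(0,1),\ (W,W{+}1),\ (W^2,W^2{+}1),\ (W^2{+}W{-}1,W^2{+}W)\}.
\]
Inspecting the five listed $M_3$'s shows $|M_3\cap\mathcal{E}|\le 1$ in every case. A round-4 recourse replaces at most one edge $(v_1,v_2)\in M_3$ by two new edges matching $\{v_1,v_2,p_7,p_8\}$; for both of these new edges to lie in $\mathcal{E}$ we would need $\{v_1,v_2\}=\{W,W^2{+}W\}$ (since the only $\mathcal{E}$-edges incident to $p_7=W{+}1$ and $p_8=W^2{+}W{-}1$ are $(W,W{+}1)$ and $(W^2{+}W{-}1,W^2{+}W)$), but the edge $(W,W^2{+}W)$ is absent from every reachable $M_3$. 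Hence $|M_4\cap\mathcal{E}|\le 2$, so $M_4$ always contains at least two non-$\mathcal{E}$ edges; a one-line check per trajectory then confirms that their total length is at least $2W-O(1)$ (either the inherited $M_3\setminus\mathcal{E}$ already contains an edge of length $\ge W{-}1$, or the rematch on $\{v_1,v_2,p_7,p_8\}$ forces a long edge geometrically). Therefore $\cost(M_4)\ge 2W$ and round-4 ratio is at least $W/2$. Combining: any algorithm either suffers ratio $W$ at round~2 (by picking the bad $M_2$) or ratio $\ge W/2$ at round~4 (by picking the good $M_2$), so its competitive ratio on this sequence is $\ge W/2$, which is unbounded as $W\to\infty$. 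The main obstacle is the casework across the $5\cdot 7=35$ trajectories in rounds 3--4, but the structural observation $|M_4\cap\mathcal{E}|\le 2$ short-circuits essentially all of the bookkeeping.
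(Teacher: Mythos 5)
Your proof is correct and takes essentially the same approach as the paper's: both exhibit a fixed $8$-point line instance in which $\opt$ collapses when the last pair arrives, and both argue that a $1$-recourse-per-pair algorithm cannot assemble the cheap final matching because too few of its edges are reachable from $M_3$ (you use scales $1,W,W^2$; the paper uses $\eps,s$ with $\eps\to 0$). Two cosmetic notes: your ``one-line check per trajectory'' can be dispensed with entirely---every edge outside $\mathcal{E}$ joins two distinct pair-blocks and hence has length at least $W-2$, so $|M_4\cap\mathcal{E}|\le 2$ immediately gives $\cost(M_4)\ge 2(W-2)$---and $\opt_3=W^2$ (consecutive pairing) rather than $W^2-W+2$, though neither affects the conclusion.
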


\begin{figure}[htbp]
\begin{center}
		\includegraphics[width=0.75\textwidth]{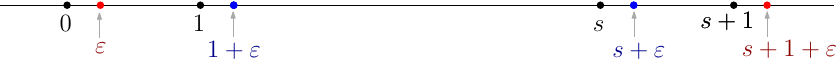}
\end{center}
\caption{A sequence of eight points: $0$, $1$, $s$, $s+1$, followed by $1+\epsilon$, $s+\epsilon$, and then $\epsilon$ $s+1+\epsilon$.\label{fig:OnlioneMatchingLB}}
\end{figure}

\begin{proof}
Suppose for the sake of contradiction that there is an online algorithm with one recourse pair vertex pair, and competitive ratio $k\ge 1$.  Fix $s=k+1$ and $\epsilon<\frac{1}{4k}$.
The first four points presented by the adversary presents are $0,1,s,s+1$; see \Cref{fig:OnlioneMatchingLB}. The algorithm must then have the perfect matching $M_2=\{\{0,1\},\{s,s+1\}\}$ of weight $2$, or else the weight of the matching would be at least $2s>2k$, a contradiction.
Next, the adversary presents the points pair $1+\epsilon,s+\epsilon$.
The algorithm updates its matching. However, as the algorithm is allowed to delete only one edge, the new matching can contain at most one edge from $\{1,1+\epsilon\},\{s,s+\epsilon\}$.
Finally, the adversary introduces the final pair of points: $\epsilon,s+1+\epsilon$. As the algorithm is allowed to add only two new edges, the resulting matching $M_4$ can contain at most three edges out of the optimum matching $\{\{0,\eps\},\{1,1+\eps\},\{s,s+\eps\},\{s+1,s+1+\eps\}\}$. In particular, $M_4$ will contain at least one other edge of weight at least $1-\eps$. It follows that the weight of $M_4$ is at least $1-\eps+3\cdot\eps>1$, while the optimum matching has weight $4\eps$. We conclude that the competitive ratio is greater than $\frac{1}{4\eps}>k$, a contradiction.
\end{proof}

\subsection{Lower Bounds for Competitive Ratio and Lightness: Proof of \Cref{thm:LowerBound}}

In this section, we prove \Cref{thm:LowerBound}.
we first present a strategy for an adaptive adversary that runs in $O(\log_r n)$ stages (\Cref{thm:LB}). Afterward, we will strengthen it so that it can be used by an oblivious adversary as well.

\begin{restatable}[]{lemma}{LowerBoundMatching}
\label{thm:LB}
Let $r\geq 2$ be an integer, and let $\alg$ be a deterministic online perfect matching algorithm with recourse $r$ for each arriving point pair. Then for every integer $n\geq 10\, r$, an adaptive adversary can construct a sequence $S_n$ of $2n$ points in $\mathbb{R}$ such that $\alg$ returns a matching of weight
$\Omega\left( \frac{\diam(S_n)\,\log n}{r\, \log r}\right)$.
\end{restatable}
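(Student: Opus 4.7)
\textbf{Plan and construction.} The plan is to design an adaptive adversary presenting $2n$ points on the line in $k=\Theta(\log n/(r\log r))$ \emph{rounds}, such that the algorithm's matching weight grows by $\Omega(\opt)$ per round, where $\opt=\Theta(n)=\Theta(\diam(S_n))$. In round $0$ the adversary reveals the consecutive integers $\{1,\ldots,2n\}$, so $\opt=\Theta(n)$. For $i\ge 1$, round $i$ reveals $m_i\approx n/B^i$ new point pairs at scale $d_i\approx B^i$, where $B=\Theta(r)$ is a block size: the line is partitioned into $m_i$ disjoint sub-intervals of length $d_i$ each, and inside each sub-interval the adversary plants one new pair near the two endpoints. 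Since $m_i d_i=\Theta(\opt)$, matching every round-$i$ pair to itself already costs $\Omega(\opt)$ in that round; the exact positions inside each sub-interval are chosen after observing $M_{i-1}$, which is what makes the adversary adaptive.

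\textbf{Per-round charging.} Executing round $i$ grants the algorithm $r m_i$ total units of recourse. I plan to classify each new pair as (a) \emph{internal} --- both new points matched to each other, contributing $\Omega(d_i)$ to the matching; (b) \emph{long} --- at least one new point matched to a point outside its sub-interval, also contributing $\Omega(d_i)$; or (c) \emph{absorbed} --- both new points matched to already-present points within the same sub-interval. The adversary positions each new pair inside a locally ``rigid'' substructure of $M_{i-1}$ so that any absorption must propagate through $\Omega(r)$ cascading alternating swaps; hence under the budget $r m_i$ at most $O(m_i)$ pairs can be absorbed. The remaining $\Omega(m_i)$ pairs fall into class (a) or (b) and collectively contribute $\Omega(m_i d_i)=\Omega(\opt)$ to $\cost(M_i)-\cost(M_{i-1})$.

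\textbf{Adaptive twist and wrap-up.} The per-round charging breaks precisely when $M_{i-1}$ already contains $\Omega(m_i)$ edges of length $\Omega(d_i)$ crossing the planned boundaries, since in that case absorption is truly cheap; but such long edges already contribute $\Omega(m_i d_i)=\Omega(\opt)$ to $\cost(M_{i-1})$, so the adversary \emph{skips} round $i$ and credits the accumulated weight to it. To avoid double counting I plan to maintain a potential $\Phi_i$ equal to the number of scales $s\le i$ at which $M_i$ has $\Omega(m_s)$ long edges of length $\Omega(d_s)$, and show that $\cost(M_i)+c\cdot\opt\cdot\Phi_i$ grows by $\Omega(\opt)$ in every round, whether executed or skipped. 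Summing over $k$ rounds yields $\cost(M_k)=\Omega(k\cdot\opt)=\Omega\!\left(\diam(S_n)\cdot\tfrac{\log n}{r\log r}\right)$, as claimed.

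\textbf{Main obstacle.} The hard part --- and what produces the $r\log r$ factor in the denominator --- is the ``$\Omega(r)$ recourses per absorption'' bound of case (c). I expect to establish it via a structural dichotomy inside each sub-interval: either $M_{i-1}$ restricted to the interval is a zig-zag alternating matching of length $\Omega(r)$, forcing any absorption to cascade through $\Omega(r)$ swaps; or the interval already contains a long edge, which triggers the skip case. Correctly calibrating the block size $B=\Theta(r)$, setting up the dichotomy, and ensuring that $\Phi_i$ pays for skipped rounds without double-counting the same edge across scales will be the main calculation.
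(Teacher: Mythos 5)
Your proposal has the right coarse shape (multi-scale rounds, an adaptive skip when the matching already has many long edges, surviving-edge accounting) but the central quantitative claim is miscalibrated in a way that makes the argument fail, and the numbers don't cohere internally.

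The key overclaim is the per-round contribution of $\Omega(\opt)$. With recourse $r$ per arriving pair, the algorithm \emph{cannot} be forced to add $\Omega(\opt)$ weight per round when the new pairs are at scale $d_i$ with $m_i d_i = \Theta(\opt)$. Consider the alternating path in $M_{i-1}\triangle M_i$ starting at a new point $p$: it must terminate at another new point $p'$ of the same round, which is at distance $\geq d_i$ away, and the path has at most $\Theta(r)$ edges (the per-interval recourse budget). So the best you can force is one edge of length $\Theta(d_i/r)$ per sub-interval, giving $\Omega(m_i\cdot d_i/r)=\Omega(\opt/r)$ per round, not $\Omega(\opt)$. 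This is exactly what the paper establishes (each good interval yields one new edge of length $\geq q^i/(4r)$); there is no ``rigid substructure'' an adversary can lay down that forces a step of length $\Omega(d_i)$, because the algorithm always has the option of a length-$\Theta(r)$ cascade with step length $\Theta(d_i/r)$.

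This error propagates into an internal inconsistency: you set $B=\Theta(r)$, which gives $k=\log_B n=\Theta(\log n/\log r)$ rounds, yet you claim $k=\Theta(\log n/(r\log r))$ rounds — the extra $1/r$ does not come from the round count but from the per-round bound. The paper's calibration is $\Theta(\log n/\log r)$ rounds $\times$ $\Omega(\diam/r)$ per round, which yields the target $\Omega(\diam\cdot\log n/(r\log r))$. Your calibration ($\Theta(\log n/(r\log r))$ rounds $\times$ $\Omega(\diam)$ per round) also multiplies out to the target, but the first factor is wrong given $B=\Theta(r)$ and the second factor is unachievable.

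Finally, the step ``under budget $rm_i$ at most $O(m_i)$ pairs can be absorbed, so $\Omega(m_i)$ remain'' is vacuous as written: $rm_i$ recourse divided by $\Omega(r)$ per absorption is $O(m_i)$, which is consistent with all $m_i$ pairs being absorbed. To conclude $\Omega(m_i)$ survive you would need each absorption to cost strictly more than $r$ — say $\geq 2r$ — recourse, with an explicit constant; merely $\Omega(r)$ does not close the counting. The paper sidesteps this by not reasoning about ``absorptions'' at all: it counts, per good interval, one long edge in $M_i\setminus M_{i-1}$ produced by the alternating path, and then controls how many such edges can be deleted in subsequent rounds (since the recourse budget available \emph{after} round $i$ is geometrically bounded by $|Q_{i+1}|$). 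I'd suggest reworking your argument along those lines: fix the round count at $\Theta(\log n/\log r)$, aim for $\Omega(\diam/r)$ per round via the alternating-path length bound, and replace the absorption-counting with a surviving-edge count across rounds.
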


\begin{proof}
Let $q=10\, r$. Fix an integer $k$ such that $q^k\leq n<q^{k+1}$. Note that $k=\Theta(\log_q n)=\Theta(\log n/\log q)=\Theta(\log n/\log r)$.
An adaptive adversary will present at most $n$ points to $\alg$.
For $i=0,1,\ldots , k$, let $Q_i=\{j\cdot q^i: j=1,\ldots ,q^{k-i}\}$, that is, $Q_i$ contains positive multiples of $q^i$ up to $q^k=q^{k-i}\cdot q^i$. Note that $Q_0\supset Q_1\supset \ldots \supset Q_k$
and  $|Q_i|=q^{k-i}$ for all $i$.

\paragraph{General strategy.}
The adversary proceeds in $k$ \emph{rounds}: In round 0, it presents the points $Q_0\setminus Q_1$ in an arbitrary order. In round $i\in \{1,\ldots , k-1\}$, in general, it will present either no new points or the points in $Q_i\setminus Q_{i+1}$, as described below.
The diameter of the point set is at most $\Delta=q^k-1=\Theta(q^k)$ at all times.
Note also that all new points that arrive \emph{after} round $i$ are in $Q_{i+1}$. That is, at most $|Q_{i+1}|=q^{k-(i+1)} = \frac{q^{k-i}}{10r}$ points arrive after round $i$, and so $\alg$ can delete at most $\frac{r}{2}\cdot |Q_{i+1}|=\frac{q^{k-i}}{20}$ edges after round $i$.

For the analysis, let $M_i$ denote the perfect matching maintained by $\alg$ at the end of round $i$; and let $M=M_{k-1}$ be the matching at the end of the process.
We will specify edge sets $E(0),\ldots , E(k-1)$ with the following properties:
\begin{itemize}
    \item $E(i)\subseteq M_i$,
    \item $|E(i)|\geq \frac{q^{k-i}}{10}$,
    \item every edge $e\in E(i)$ has weight $\cost(e)\geq \frac{q^i}{4r}$.
    \end{itemize}
As noted above, $\alg$ can delete at most $\frac{q^{k-i}}{20}$ edges \emph{after} round $i$. Consequently, for all $i\in \{0,\ldots , k-1\}$, we can find subsets $E'(i)\subset E(i)$ of size $|E'(i)|=\frac{q^{k-i}}{20}$ such that the edges in $E'(i)$ survive until the end of the process and are in $M$. Note that $E'(0),\ldots, E'(k-1)$ need not be disjoint. We can choose disjoint subsets as follows: For $i=0,\ldots , k-2$, let $E''(i)=E'(i)\setminus \left(\bigcup_{i<j} E'(j)\right)$, and $E''(k-1)=E'(k-1)$.
Since the cardinalities of the sets $E'(i)$ decay exponentially,
then $|E''(i)|> \frac{4}{5}\, |E'(i)|> \frac{1}{25}\, q^{k-i}$.
Since $E''(0),\ldots, E''(k-1)$ are disjoint subsets of $M$, then
\begin{equation}\label{eq:LB}
    \cost(M)\geq \sum_{i=0}^{k-1}\cost(E''(i))
    \geq k\cdot \frac{q^{k-i}}{25} \cdot \frac{q^i}{4r}
    = \Omega(k\cdot q^{k-1})
    = \Omega\left(\frac{\log n}{\log r}\cdot \frac{\Delta}{r}\right)
    = \Omega\left( \frac{\diam(S_n)\,\log n}{r\, \log r}\right).
\end{equation}

It remains to present the adversary's strategy in rounds $i=1,\ldots , k-1$, and choose sets $E(0), \ldots , E(k-1)$ that satisfy the three properties above.

\paragraph{Round $i=0$.}
At the end of round 0, algorithm $\alg$ has a perfect matching $M_0$ on the current point set $Q_0\setminus Q_1$. The size of this matching is $|M_0|=\frac12 (q^k-q^{k-1})=\frac{1}{2}\left(1-\frac{1}{q}\right)\cdot q^{k}> \frac{q^{k}}{32}$. Each edge in $M_0$ has a weight of at least 1. Let $E(0)=M_0$. Clearly, $E(0)$ satisfies all three required properties.

\paragraph{Strategy in round $i=1,\ldots ,k-2$.}
As noted above, the adversary will present either no new points or the points in $Q_i\setminus Q_{i+1}$, based on $M_{i-1}$. We distinguish between two cases:

\smallskip\noindent \emph{Case~1: $M_{i-1}$ contains at least  $\frac{q^{k-i}}{10}$ edges of length at least $\frac{q^i}{4r}$.}
Then the adversary does not present any new points in round $i$. Therefore, there are no recourse, and $M_i=M_{i-1}$. Let $E(i)$ be the set of all edges in $M_i$ of length at least $\frac{q^i}{4r}$. It is clear that $E(i)$ satisfies all three required properties.

\smallskip\noindent \emph{Case~2: $M_{i-1}$ contains fewer than $\frac{q^{k-i}}{10}$ edges of length at least $\frac{q^i}{4r}$.} Then the adversary presents the points in $Q_i\setminus Q_{i+1}$ in an arbitrary order, and $\alg$ computes $M_i$. It remains to specify $E(i)$ and show that it satisfies the three required properties.

\begin{figure}[h]
\begin{center}
		\includegraphics[width=0.95\textwidth]{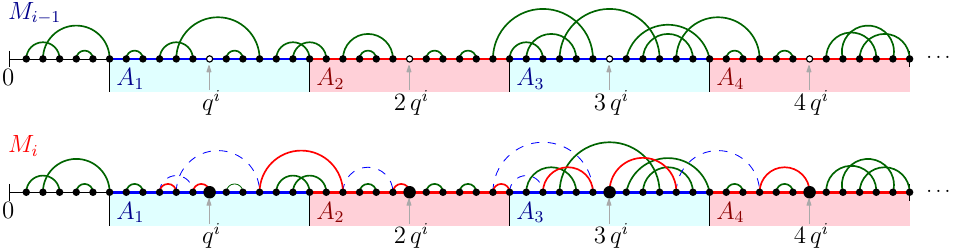}
\end{center}
\caption{A schematic figure of matchings $M_{i-1}$ and $M_i$ with $r=2$.  In round $i$, dashed edges are deleted ($M_{i-1}\setminus M_i$), and red edges are inserted ($M_i\setminus M_{i-1}$). \label{fig:IntervalsLB}}
\end{figure}

In round $i$, $\alg$ may delete up to $\frac{r}{2}\, (q^{k-i}-q^{k-(i+1)})$ edges, but it has to cover the endpoints of all deleted edges by the end of round $i$---we say that these points are \emph{re-matched} in round $i$.
For each points $p\in Q_i\setminus Q_{i+1}$, we define the open intervals $A_p=\left(p - \frac{q^i}{2}, p+\frac{q^i}{2}\right)$, centered at the points $p$; see \Cref{fig:IntervalsLB}. Note that these intervals are pairwise disjoint. We call an interval $A_p$ \emph{good} if it contains less than $2r$  re-matched points \emph{and} it does not contain the endpoint of any edge in $M_{i-1}$ that has length $q^i/(4r)$ or more; otherwise the interval $A_p$ is \emph{bad}. By the assumption in Case~2, fewer than $2\cdot \frac{1}{10}\,q^{k-i} = \frac{1}{5}\,q^{k-i}$ intervals $A_p$ contain an endpoint of some edge of $M_{i-1}$ of length at least $q^i/(4r)$.
Since $q^{k-i}-q^{k-(i+1)}$ new points arrive in round $i$, then this is the number of  intervals $A_p$, and there are at most $r\cdot (q^{k-i}-q^{k-(i+1)})$ re-matched points. On average, an interval $A_p$ contains at most $r$ re-matched points. By Markov's inequality, at most $\frac12(q^{k-i}-q^{k-(i+1)})$ intervals $A_p$ contain $2r$ or more re-matched points. By the union bound, fewer than  $\frac{1}{5}\,q^{k-i}+\frac{1}{2}(q^{k-i}-q^{k-(i+1)}) =\frac{7}{10} q^{k-i}-\frac12 q^{k-(i+1)}$ intervals are bad; hence more than $\frac{3}{10}\,q^{k-i} -\frac12 q^{k-(i+1)} = \left(\frac{3}{10}-\frac{1}{20r}\right)q^{k-i}>\frac{q^{k-i}}{5}$ intervals are good.

The symmetric difference $M_{i-1}\triangle M_i$
is the union of alternating cycles and paths: Each path connects two points inserted in round $i$ (i.e., centers of intervals). All interior vertices of a path are points that are re-matched in round $i$; and the edges along each path alternate between $M_i\setminus M_{i-1}$ (i.e., edges inserted) and $M_{i-1}\setminus M_i$ (i.e., edges deleted).

Let $A_p$ be a good interval centered at $p$, and consider the path in $M_{i-1}\triangle M_i$ that starts from $p$. This path ends at the center $p'$ of another interval $A_{p'}$, $p\neq p'$. Consider the initial part of this path, from $p$ to the first vertex outside of $A_p$: It has at most $2r$ edges, since $A_p$ is good, and its length is at least $\frac12\, \diam(A_p)=\frac{q^i}{2}$; and so it must have an edge of length at least $\frac{q^i}{4r}$. This edge cannot be in $M_{i-1}\setminus M_i$ since $A_p$ is good.
In summary, for each good interval $A_p$, the initial part of the path starting at $p$ contains an edge in $M_i\setminus M_{i-1}$ of length at least $\frac{q^i}{4r}$. Each such edge belongs to at most two good intervals. Summation over all good intervals yields at least $\frac12\cdot \frac{q^{k-i}}{5}= \frac{q^{k-i}}{10}$ such edges. Let $E(0)$ be the set of all such edges. It is now clear that $E(i)$ satisfies all three required properties. This completes the analysis in Case~2 and, hence, the proof of the theorem.
\end{proof}

Next, we extend the strategy of \Cref{thm:LB}, so that it could be used by an oblivious adversary. The idea is to perform each stage according to a random bit string. \Cref{thm:LB2} implies, already for $n$ points on a line, that any $O(1)$-competitive algorithm (against an oblivious adversary) requires recourse $\Omega(\log n/\log \log n)$; and any algorithm with recourse $O(1)$ must have a competitive ratio (resp., lightness) $\Omega(\log n)$.

\begin{restatable}[]{lemma}{LowerBoundTheoremTwo}
\label{thm:LB2}
Let $r\geq 2$ be an integer; and let $\alg$ be an online perfect matching algorithm with recourse $r$ for each arriving point pair. Then for every $n\geq 10\, r$, an oblivious adversary can construct a collection $\mathcal{S}$ of $n^{O(1/\log r)}$ sequences, each of $2n$ points in $\mathbb{R}$ such that
for a sequence $S$ selected uniformly at random from $\mathcal{S}$,
$\alg$ returns a matching of the expected weight
$\Omega\left( \frac{\diam(S)\,\log n}{r\, \log r}\right)$.
\end{restatable}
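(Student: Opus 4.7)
The plan is to randomize the skip-or-present decision of the adaptive adversary of \Cref{thm:LB}. Fix $q = 10r$ and the integer $k$ with $q^k \leq n < q^{k+1}$, so $k = \Theta(\log n/\log r)$. The adversary always presents the points of $Q_0 \setminus Q_1$ in round~$0$ and, for each round $i \in \{1, \ldots, k-1\}$, independently flips a fair coin $Z_i$; if $Z_i = 1$ round~$i$ presents the points of $Q_i \setminus Q_{i+1}$, otherwise it is skipped. Each resulting sequence can be padded up to $2n$ points by appending $O(n)$ very closely spaced pairs clustered near one fixed location inside $[0,q^k]$, which neither changes $\diam(S)$ nor interferes with the analysis. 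Since only $k-1$ bits are used, $|\mathcal{S}| \leq 2^{k-1} = n^{O(1/\log r)}$, as required. By Yao's principle I reduce to a fixed deterministic online algorithm $\alg$, and for each coin string $R = (Z_1, \ldots, Z_{k-1})$ I write $M(R)$ for the final matching produced by $\alg$ on the corresponding sequence.

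The central observation is that the recourse bound used in \Cref{thm:LB} is oblivious-friendly: every point presented after round~$i$ lies in $Q_{i+1}$ regardless of $R$, so at most $|Q_{i+1}|/2 = q^{k-i-1}/2$ pairs arrive after round~$i$ and $\alg$ deletes at most $(r/2)\cdot q^{k-i-1} = q^{k-i}/20$ edges thereafter---exactly the bound used in the adaptive proof. Conditioning on $R_{<i} := (Z_1, \ldots, Z_{i-1})$ fixes $M_{i-1}$, and I split into the same two cases as in \Cref{thm:LB}. In Case~A, where $M_{i-1}$ already contains $\geq q^{k-i}/10$ edges of length $\geq q^i/(4r)$, the recourse bound guarantees $\geq q^{k-i}/20$ of them persist in $M(R)$, irrespective of the remaining coin flips. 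In Case~B, on the sub-event $\{Z_i = 1\}$ (probability~$1/2$), the points of $Q_i \setminus Q_{i+1}$ are actually presented, and the good-interval argument from Case~2 of \Cref{thm:LB} applies verbatim---it only uses the Case~B hypothesis that $M_{i-1}$ has few long edges---producing $\geq q^{k-i}/10$ new long edges in $M_i$, of which $\geq q^{k-i}/20$ survive to $M(R)$. Thus, conditional on $R_{<i}$, the ``round-$i$ success'' of extracting $q^{k-i}/20$ surviving edges of length $\geq q^i/(4r)$ in $M(R)$ happens with probability at least $1/2$.

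When round~$i$ succeeds I pick such a set $E'(i,R)$, and disjointify via $E''(i,R) := E'(i,R) \setminus \bigcup_{j>i} E'(j,R)$ as in \Cref{thm:LB}; geometric decay of the sizes $|E'(j,R)| \leq q^{k-j}/20$ together with $q = 10r \geq 20$ gives $|E''(i,R)| \geq q^{k-i}/25$, contributing at least $q^k/(100\,r)$ to $\cost(M(R))$. Summing over $i$ and taking expectation,
\begin{align*}
\E_R\!\bigl[\cost(M(R))\bigr]
\;\geq\; \sum_{i=0}^{k-1} \Pr\!\bigl[\text{round }i\text{ succeeds}\bigr]\cdot\frac{q^k}{100\,r}
\;\geq\; \frac{k}{2}\cdot\frac{q^k}{100\,r}
\;=\; \Omega\!\left(\frac{\diam(S)\,\log n}{r \log r}\right),
\end{align*}
which is the desired bound.

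The main obstacle, and the place where an oblivious variant could most plausibly fail, is precisely the recourse bound in the second paragraph: it depends crucially on the nesting $Q_{i+1} \supset Q_{i+2} \supset \cdots$ so that the set of points arriving after round~$i$ is contained in $Q_{i+1}$ no matter which subsequent rounds happen to be presented. Without this nesting the total recourse beyond round~$i$ would scale with the total number of presented pairs rather than with $|Q_{i+1}|$, and Case~A would no longer guarantee the survival of enough long edges from $M_{i-1}$. A secondary concern is the padding to $2n$ points, which must be done so that the padding pairs cannot plausibly be matched against the adversarial core and thereby hand the algorithm free recourse; clustering the padding pairs at mutual distance much smaller than $1/(4r)$ inside a small interval disjoint from $Q_0$ suffices.
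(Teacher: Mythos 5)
Your proposal follows the same route as the paper's proof: randomize the skip-or-present decision via independent fair coins, condition on the prefix $R_{<i}$ to fix $M_{i-1}$, split into the same two cases, argue that round~$i$ is ``heavy'' with probability $\geq 1/2$, disjointify the surviving long-edge sets, and sum. The structure is correct, but two steps need fixing.

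First, your Case~A claim is false as stated. You assert that the $\geq q^{k-i}/10$ long edges of $M_{i-1}$ yield $\geq q^{k-i}/20$ survivors in $M(R)$ ``irrespective of the remaining coin flips,'' citing that only $|Q_{i+1}|/2$ pairs arrive after round~$i$. That bound controls deletions from $M_i$ onward, but says nothing about deletions \emph{during} round~$i$ itself. If $Z_i=1$, then round~$i$ presents roughly $q^{k-i}/2$ pairs, giving $\alg$ on the order of $rq^{k-i}/2 \gg q^{k-i}/10$ recourse during round~$i$ --- more than enough to delete every long edge of $M_{i-1}$; and the good-interval argument cannot rescue this, since it relies on the Case~B hypothesis (few long edges in $M_{i-1}$) to count good intervals. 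So Case~A success requires $Z_i=0$, exactly as the paper's Case~1 explicitly conditions on $x_i=0$. Your final computation is unaffected (you already only use $\Pr[\text{success}]\geq 1/2$, which the $Z_i=0$ subevent supplies), but the unconditional survival claim must be retracted.

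Second, the padding. Appending $O(n)$ closely-spaced pairs \emph{at the end} of the sequence hands $\alg$ on the order of $rn$ additional recourse after every round, dwarfing the $q^{k-i}/20$ budget that the entire argument rests on --- those late pairs could be used to wipe out all the accumulated long edges, and ``clustering them away from the core'' does not prevent $\alg$ from spending their recourse on the core. The padding must instead arrive before or during round~$0$, so that the number of pairs arriving after round~$i$ stays bounded by $|Q_{i+1}|/2$. (The paper's proof does not pad to exactly $2n$ points; your proposal merely surfaces that looseness, but the proposed fix as worded would break the argument.)
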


\begin{proof}
As in the proof of \Cref{thm:LB}, let $q=10\, r$ and fix $k$ such that $q^k\leq n<q^{k+1}$. Note that $k=\Theta(\log n/\log r)$. Let $x=(x_1,\ldots , x_{k-1})\in \{0,1\}^{k-1}$ be vector of $k$ bits. Each vector $x$ encodes a sequence $S(x)$ of points in $\mathbb{R}$ in $k-1$ rounds as follows: $S(x)$ starts with the elements of $Q_0\setminus Q_1$ in arbitrary order; and in round $i=1,\ldots , k-1$, we append the elements of $A_i\setminus A_{i+1}$ iff $x_i=1$. Overall, we construct a collection $\mathcal{S}$ of  $2^{k-1}=n^{\Theta(1/\log r)}$ sequences.

Let $x\in \{0,1\}^{k-1}$ be a uniformly random vector. Denote by $M_i$ the perfect matching computed by $\alg$ after round $i$, for $i=0,\ldots , k-1$; with $M=M_{k-1}$ the matching at the end.

We say that a round $i\in \{0,\ldots , k-1\}$ is \emph{heavy} if we can find an edge set $E(i)\subset M_i$ such that $|E(i)|\geq \frac{q^{k-i}}{10}$ and every edge $e\in E(i)$ has weight $\cost(e) \geq \frac{q_i}{4r}$. We show below that the expected number of heavy rounds is $\Omega(k)$. Similarly to the proof of \Cref{thm:LB}, from all heavy rounds, we can choose disjoint subsets $E''(i)\subset E(i)\cap M$ of weight $\cost(E''(i))\geq \Omega(\diam(S(x))/r)$. Summation over the heavy rounds yields
$\mathbb{E} [\cost(M)]
\geq \Omega\left(k\cdot \frac{\diam(S(x))}{r}\right)
=\Omega\left( \frac{\diam(S_n)\,\log n}{r\, \log r}\right)$.

It remains to bound the expected number of heavy rounds. In round~0, we can choose $E(0)=M_0$, so this round is always heavy. Consider rounds $i=1,\ldots , k-1$. To find edge sets $E(i)\subset M_i$, we consider the two cases from the proof of \Cref{thm:LB}:

Case 1: $M_{i-1}$ contains at least $\frac{q^{k-i}}{10}$ edges of length at least $\frac{q^i}{4r}$. If $x_i=0$, then the sequence $S(x)$ does not contain the points in $Q_i\setminus Q_{i+1}$, and we find a subset $E(i)\subset M_{i-1}=M_i$.

Case 2: $M_{i-1}$ contains fewer than $\frac{q^{k-i}}{10}$ edges of length at least $\frac{q^i}{4r}$. If $x_i=1$, then the sequence $S(x)$ contains all the points in $Q_i\setminus Q_{i+1}$, and we find a subset $E(i)\subset M_i\setminus M_{i-1}$.

In both cases, round $i$ is heavy with probability at least $\frac12$, as $\mathrm{Pr}(x_i=0)=\frac12$ and $\mathrm{Pr}(x_i=1)=\frac12$. By linearity of expectation, at least half of the rounds $i=1,\ldots ,k-1$ are heavy, so the expected number of heavy rounds is $\Omega(k)$, as required.
\end{proof}

\Cref{thm:LowerBound} is now an immediate corollary of \Cref{thm:LB2} (restated for convenience).
\MainNewLowerBoundTheorem*

\begin{proof}
For a multiset $S_n$ of $2n$ points in the plane, the MST is a path of weight $\diam(S_n)$,
and the minimum weight of a perfect matching is trivially bounded by $\opt\leq \diam(S_n)$.
\end{proof}

\Cref{thm:LowerBound} shows that with recourse $r=O(1)$, the lightness and competitive ratio of any online algorithm against oblivious adversary is $\Omega(\log n)$; and an $O(1)$-competitive algorithm would require recourse at least  $r=\Omega(\log n/\log \log n)$.

\section{Conclusions}
\label{sec:con}

We introduced the problem of online minimum-weight perfect matchings for a sequence of $2n$ points in a metric space. In contrast to the online MST and TSP, where $O(1)$-competitive algorithms with recourse $O(1)$ are known, we showed that the competitive ratio \emph{or} the recourse must be at least polylogarithmic. We also devised polylogarithmic upper bounds for the competitive ratio and lightness, resp., against oblivious and adaptive adversaries, using polylogarithmic recourse. Closing the gaps between the upper and lower bounds are obvious open problems, both in general metrics and in special cases such as Euclidean spaces or in ultrametrics. We  highlight a few specific open problems.

\begin{enumerate}
    \item We have shown (\Cref{pp:one}) that recourse $r=1$ per point pair is not enough for a competitive algorithm. Is there a competitive algorithm with recourse $r=O(1)$?\label{ques:ConstatRecourse}
    \item What is the minimum recourse for an $O(1)$-competitive algorithm against an adaptive (resp. oblivious) adversary?
    Our \Cref{thm:LB2} gives a lower bound of $\Omega(\log n/\log \log n)$, but we are unaware of any nontrivial upper bound.
    \item Are the optimal trade-offs different for competitive ratio and for lightness? Does it take more recourse to maintain a perfect matching of weight $O(\varrho\cdot \opt)$ than one of weight $O(\varrho \cdot \mst)$ for any ratio $\varrho\geq 1$? Our lower bound (\Cref{thm:LB2}) do not distinguish between competitive ratio and lightness, but in general the ratio $\frac{\mst}{\opt}$ is unbounded.
    \item For maintaining a minimum-weight near-perfect matching on a fully dynamic point set (with insertions \emph{and} deletions), what are the best possible trade-offs between the approximation ratio (or lightness) and the number of changes in the matching? Our 1D data structure  (\Cref{thm:1D}) can handle both point insertions and deletions, and maintains a matching of lightness $O(\log n)$, but the problem remains open in other metric spaces.

    \item %Prefix model -
    Consider the adversarial model, sometimes called \emph{prefix-model}, which is weaker than the oblivious model. Here, the metric space $(X,\delta)$ and the entire sequence of arriving points $x_1,x_2,\dots,x_{2n}$ are known in advance. The goal is to construct a sequence of $n$ matchings $M_1,M_2,\dots,M_n$, where $M_i$ is a perfect matching for $S_i=\{x_1,\dots,x_{2i}\}$, while minimizing the competitive ratio  $\max_i\frac{\cost(M_i)}{\opt(S_i)}$, and the maximum recourse $\max_i|M_i\setminus M_{i+1}|$.
    Note that our lower bound from \Cref{pp:one} holds in this model, as well, and thus a single recourse is not enough for a competitive algorithm.
    Clearly, we can use the same algorithm we used for oblivious routing (\Cref{thm:ObliviousGeneralMetric}), while using the embedding of \cite{FRT04} instead of \cite{BFU20}, thus improving the competitive ratio to $O(\log n)$.
    Is it possible to further improve on either the competitive ratio or the recourse?

    \item Online Decomposition of Minor-Free Graphs. Consider the shortest path metric of a fixed minor-free graph (e.g., planar graphs). Such metrics enjoy a good padded decomposition scheme \cite{KPR93,AGGNT19,Fil19Approx}, and therefore also better embeddings into Euclidean space, compared to general metric spaces \cite{Rao99}.
    However, no online version of such a decomposition is known, even for subfamilies such as planar graphs, and bounded treewidth/pathwidth graphs.
    Note that, given such a decomposition, the framework in this paper will imply good online metric embeddings into both HST and Euclidean spaces. Thus we find the construction of such decompositions to be a fascinating open problem.
    One might be tempted to think that the KPR \cite{KPR93} decomposition can be easily implemented in an online fashion.  This would make sense as the choice of a center in every ring in the decomposition is arbitrary. However, unfortunately, only the first rings/chops are constructed w.r.t.\ the original metric. All the other steps are performed w.r.t.\ induced subgraphs. It is unclear how to obtain distances or even an approximation in an induced subgraph defined by a chop in an online fashion.

    \item Krauthgamer \etal \cite{KLMN04} showed that every metric space with doubling dimension $\ddim$ can be embedded into Euclidean space with distortion $O(\sqrt{\ddim\cdot\log n})$; and this bound is known to be tight \cite{JLM11}.
    In contrast, our online embedding has distortion $O(\ddim\cdot\sqrt{\log\Phi})$. In \Cref{thm:LBEuclideanDoubling} we showed that the dependence on the aspect ratio $\Phi$ is tight. It would be interesting to see whether the dependence on the doubling dimension can be improved to $\sqrt{\ddim}$.
   \end{enumerate}

\section*{Acknowledgments}
In a previous version of this paper, \Cref{thm:LBEuclideanDoubling} was stated as a lower bound w.r.t.\ dominating stochastic embeddings. We are grateful to Christian Coester for pointing out an error in that proof. In this version of the paper this theorem is stated w.r.t.\ non-expansive stochastic embeddings (as well as deterministic embeddings).
It is currently open whether a dominating stochastic embedding into $\ell_2$ with expected distortion better than that in \Cref{thm:DoublingOnlineEmbeddingtoEuclidean} exist.

\bibliographystyle{alphaurl}
\bibliography{matching,OnlineDoubling}

\appendix

\section{Network Design Problems}\label{sec:Network-Design-Problems}
Bartal \etal \cite{BFU20} used their dominating stochastic online embedding to design competitive online algorithms for network design problems. Surprisingly, they showed that in many cases, the dependence on the aspect ratio can be avoided.
Specifically, Bartal \etal \cite{BFU20} showed that in cases where the problem admits a min-operator \cite{ABM93}, and has $\alpha$-competitive solutions on ultrametrics, one can obtain an algorithm with competitive ratio $O\left(\alpha\cdot\log k\cdot\min\left\{ \log(k\alpha\lambda_{\rho}),\log k\cdot\log(k\alpha)\right\} \right)$ against an oblivious adversary, where $\lambda_\rho$ is the level of subadditivity of the target function, and $k$ is the number of points. We refer to \cite{BFU20} for the definition of the network design problem (and the parameter $\lambda_\rho$).
In particular, it follows \cite{Umboh23} from the proof in \cite{BFU20}, that a similar embedding into ultrametric with distortion $\beta\cdot\log\Phi$ (coming from paying $\beta$ in $\log\Phi$ different scales) will yield an algorithm with competitive ratio $O\left(\alpha\cdot\beta\cdot\min\left\{ \log(k\alpha\lambda_{\rho}),\log k\cdot\log(k\alpha)\right\} \right)$ against an oblivious adversary.
\begin{corollary}\label{cor:NetworkDesign}
	Consider an abstract network design problem. If it admits a min operator and
	if there exists an algorithm that is $\alpha$-competitive on instances where the input graph is an ultrametric,
	then there exists a randomized algorithm that, on every instance induces a metric space of doubling dimension $\ddim$, has competitive ratio $O\left(\alpha\cdot\ddim\cdot\min\left\{ \log(k\alpha\lambda_{\rho}),\log k\cdot\log(k\alpha)\right\} \right)$ against an oblivious adversary.
\end{corollary}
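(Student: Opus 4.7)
The plan is to invoke the black-box reduction of Bartal, Fandina, and Umboh~\cite{BFU20} with our new online stochastic embedding (\Cref{thm:DoublingOnlineEmbedding}) substituted in place of theirs. Their reduction takes as input an online stochastic embedding into an ultrametric whose expected distortion decomposes as a ``per-scale'' padding parameter $\beta$ times the number of relevant scales $\log\Phi$, together with an $\alpha$-competitive offline algorithm on ultrametrics, and returns a randomized online algorithm with competitive ratio $O(\alpha\cdot\beta\cdot\min\{\log(k\alpha\lambda_\rho),\log k\cdot\log(k\alpha)\})$. So the only task is to certify that our embedding fits this template with $\beta=O(\ddim)$.

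First, I would inspect the proof of \Cref{thm:DoublingOnlineEmbedding} and isolate the scale-wise guarantee. Recall that the ultrametric is produced by stitching together the online padded decompositions $\{\cP_i\}_{i\in\Z}$ of \Cref{thm:LDDonline}, one per distance scale $\Delta_i=2^i$, and then forcing laminarity. The crucial property is that for every pair $x_k,x_{k'}$, the probability they are separated at scale $i$ is bounded by $c\cdot\ddim_k\cdot d_X(x_k,x_{k'})/\Delta_i$. Thus $O(\ddim)$ is precisely the per-scale padding parameter, and the $\log\Phi$ factor in the final expected distortion arises from summing over the $O(\log\Phi)$ scales that can nontrivially separate $x_k,x_{k'}$, exactly as in the calculation at the end of the proof of \Cref{thm:DoublingOnlineEmbedding}.

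Second, I would plug $\beta=O(\ddim)$ into the BFU reduction. The reduction requires (i) that the network design problem admits a min-operator, so that the algorithm can combine solutions from multiple scales, and (ii) an $\alpha$-competitive offline algorithm on ultrametrics to handle each scale. Both hypotheses are assumed in the corollary. By the remark of Umboh~\cite{Umboh23} cited just above the corollary, the reduction goes through verbatim whenever the online embedding has expected distortion $\beta\cdot\log\Phi$ with $\beta$ the per-scale padding parameter, yielding competitive ratio $O(\alpha\cdot\beta\cdot\min\{\log(k\alpha\lambda_\rho),\log k\cdot\log(k\alpha)\})$. Substituting $\beta=O(\ddim)$ gives the claimed bound.

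The main (minor) obstacle is verifying that the BFU framework actually consumes only the per-scale padding parameter rather than the aggregate expected distortion; this is what allows $\log\Phi$ to be replaced by $\min\{\log(k\alpha\lambda_\rho),\log k\cdot\log(k\alpha)\}$ in the final bound, and it is precisely the observation attributed to Umboh~\cite{Umboh23}. Once this is confirmed by re-reading the relevant lemma of \cite{BFU20} and checking that its only use of the embedding is through the per-scale separation probability (and not through any global distortion bound), the corollary follows with no further work.
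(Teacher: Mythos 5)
Your proposal matches the paper's argument: the corollary is obtained by substituting $\beta = O(\ddim)$ from \Cref{thm:DoublingOnlineEmbedding} into the refined reading of the Bartal--Fandina--Umboh reduction (attributed to Umboh), which consumes only the per-scale padding parameter rather than the aggregate $\beta\log\Phi$ distortion. No substantive difference from the paper's treatment.
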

\sloppy Note that if the points in the network design problem are coming from a $d$-dimensional Euclidean space, the competitive ratio can be further improved to    $O\left(\alpha\cdot\sqrt{d}\cdot\min\left\{ \log(k\alpha\lambda_{\rho}),\log k\cdot\log(k\alpha)\right\} \right)$.
Bartal \etal \cite{BFU20} showed several applications for their meta-theorem.  \Cref{cor:NetworkDesign} implies improvements in all of them for the case where the input metric has a doubling dimension $o(\log n)$.
One example is the Subadditive Constrained Forest problem \cite{GW95}, where  \Cref{cor:NetworkDesign} improves the competitive ratio from $O(\log^2 k)$ to $O(\ddim\cdot\log k)$, (or $O(\sqrt{d}\cdot\log k)$ for points in Euclidean $d$-space).

\section{Parallelogram inequality - proof of inequality (\ref{eq:parallelogramExtended})}\label{sec:ParallelogramProof}
In this appendix we will prove inequality (\ref{eq:parallelogramExtended}).
\begin{claim}\label{clm:parallelogramExtended}
	For every six vectors $\vec{s},\vec{t},\vec{a},\vec{b},\vec{c},\vec{d}\in\ell_{2}$, it holds that
	\begin{equation*}
		\|\vec{s}-\vec{t}\|_{2}^{2}+\|\vec{b}-\vec{d}\|_{2}^{2}\le4\cdot\left(\|\vec{a}-\vec{s}\|_{2}^{2}+\|\vec{c}-\vec{t}\|_{2}^{2}\right)+2\cdot\left(\|\vec{a}-\vec{b}\|_{2}^{2}+\|\vec{b}-\vec{c}\|_{2}^{2}+\|\vec{c}-\vec{d}\|_{2}^{2}+\|\vec{d}-\vec{a}\|_{2}^{2}\right)~.
	\end{equation*}	
\end{claim}
\begin{center}
	\includegraphics[width=.45\textwidth]{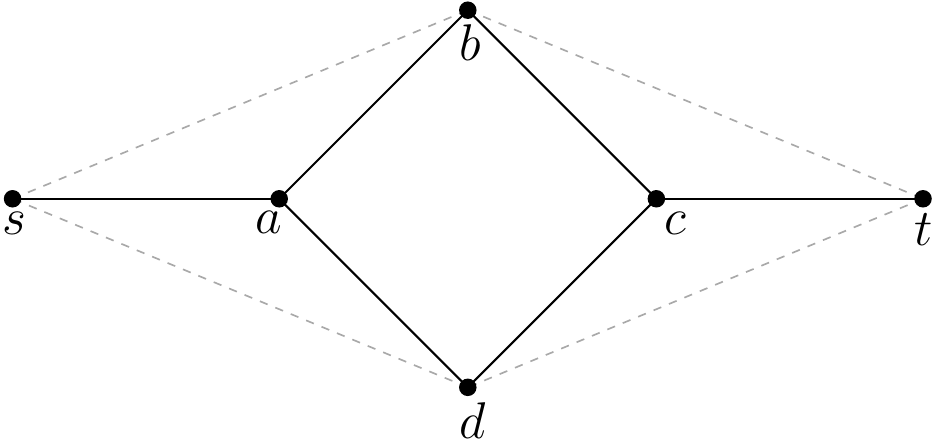}
\end{center}

\begin{proof}
First, recall the Parallelogram inequality (or Euler's quadrilateral theorem), that states that for every four vectors $\vec{a},\vec{b},\vec{c},\vec{d}\in\ell_{2}$, it holds that
\begin{equation}
	\|\vec{a}-\vec{c}\|_{2}^{2}+\|\vec{b}-\vec{d}\|_{2}^{2}\le\|\vec{a}-\vec{b}\|_{2}^{2}+\|\vec{b}-\vec{c}\|_{2}^{2}+\|\vec{c}-\vec{d}\|_{2}^{2}+\|\vec{d}-\vec{a}\|_{2}^{2}~\label{eq:parallelogram}
\end{equation}
(The inequality holds with equality holds if and only if $\vec{a},\vec{b},\vec{c},\vec{d}$ form a parallelogram.)
Next, note that for every $3$ vectors $\vec{x},\vec{y},\vec{z}\in\ell_{2}$ it holds that
\begin{align}
	\|\vec{x}-\vec{y}\|_{2}^{2} & \le\left(\|\vec{x}-\vec{z}\|_{2}+\|\vec{z}-\vec{y}\|_{2}\right)^{2}\nonumber \\
	& =\|\vec{x}-\vec{z}\|_{2}^{2}+\|\vec{z}-\vec{y}\|_{2}^{2}+2\cdot\|\vec{x}-\vec{z}\|_{2}\cdot\|\vec{z}-\vec{y}\|_{2}\nonumber \\
	& \le2\|\vec{x}-\vec{z}\|_{2}^{2}+2\|\vec{z}-\vec{y}\|_{2}^{2}~,\label{eq:QuadraticPythagoras}
\end{align}
%where the first inequality holds by the triangle inequality, and the third inequality holds
where we used the triangle inequality followed by the Cauchy-Schwarz inequality
(alternatively, we use $0\le\left(\|\vec{x}-\vec{z}\|_{2}-\|\vec{z}-\vec{y}\|_{2}\right)^{2}=\|\vec{x}-\vec{z}\|_{2}^{2}+\|\vec{z}-\vec{y}\|_{2}^{2}-2\cdot\|\vec{x}-\vec{z}\|_{2}\cdot\|\vec{z}-\vec{y}\|_{2}$).
Inequality (\ref{eq:parallelogramExtended}), now follows:
\begin{align*}
	\|\vec{s}-\vec{t}\|_{2}^{2}+\|\vec{b}-\vec{d}\|_{2}^{2} & \le\|\vec{s}-\vec{b}\|_{2}^{2}+\|\vec{b}-\vec{t}\|_{2}^{2}+\|\vec{t}-\vec{d}\|_{2}^{2}+\|\vec{d}-\vec{s}\|_{2}^{2}\\
	& \le4\cdot\left(\|\vec{a}-\vec{s}\|_{2}^{2}+\|\vec{c}-\vec{t}\|_{2}^{2}\right)+2\cdot\left(\|\vec{a}-\vec{b}\|_{2}^{2}+\|\vec{b}-\vec{c}\|_{2}^{2}+\|\vec{c}-\vec{d}\|_{2}^{2}+\|\vec{d}-\vec{a}\|_{2}^{2}\right)~,
\end{align*}
where we first applied the Parallelogram inequality (\ref{eq:parallelogram}) w.r.t.\ the vectors $\vec{s},\vec{b},\vec{t},\vec{d}\in\ell_{2}$, and then applied inequality (\ref{eq:QuadraticPythagoras}) w.r.t.\ the four triplets: $(\vec{s},\vec{a},\vec{b})$, $(\vec{b},\vec{c},\vec{t})$,  $(\vec{t},\vec{c},\vec{d})$, and $(\vec{d},\vec{a},\vec{s})$.
\end{proof}	

\end{document}